\theoremstyle{plain}
\newtheorem{thm}{Theorem}[section]
\newtheorem{theorem}[thm]{Theorem}
\newtheorem{definition}[thm]{Definition}
\newtheorem{lemma}[thm]{Lemma}
\newtheorem{corollary}[thm]{Corollary}
\newtheorem{claim}[thm]{Claim}
\theoremstyle{definition}
\newtheorem{remark}[thm]{Remark}
\newtheorem{example}[thm]{Example}
\newcommand{\Prn}{\mathrm{Pr}}
\newcommand{\putinbox}[1]{\begin{mdframed}#1\end{mdframed}}
\newcommand{\cC}{C}
\newcommand{\N}{\mathbb{N}}
\newcommand{\F}{\mathbb{F}}
\newcommand{\K}{\mathbb{K}}
\newcommand{\E}{\mathbb{E}}
\newcommand{\ba}{{\mathbf{a}}}
\newcommand{\bb}{{\mathbf{b}}}
\newcommand{\bx}{{\mathbf{x}}}
\newcommand{\bY}{{\mathbf{Y}}}
\newcommand{\by}{{\mathbf{y}}}
\newcommand{\bu}{{\mathbf{u}}}
\newcommand{\bi}{{{\mathbf{i}}}}
\newcommand{\bj}{{{\mathbf{j}}}}
\newcommand{\bX}{{\mathbf{X}}}
\newcommand{\vi}{{{\mathbf{i}}}}
\newcommand{\MULT}{\mathsf{MULT}}
\newcommand{\mult}{\mathsf{mult}}
\newcommand{\wt}{\mathsf{wt}}
\newcommand{\inparen}[1]{\left(#1\right)}
\newcommand{\calL}{\mathcal{L}}
\newcommand{\tilds}{{\tilde{s}}}
\newcommand{\ALGPRUNE}{\mathsf{PruneListFRS}}
\newcommand{\ALGPRUNEMULT}{\mathsf{PruneListMULT}}
\newcommand{\RecCand}{\mathsf{RecoverCandidates}}
\newcommand{\MultimultMain}{\mathsf{LocalListRecoverMULT}}
\newcommand{\FRS}{\mathrm{FRS}}
\newcommand{\poly}{\mathsf{poly}}
\newcommand{\polylog}{\mathsf{polylog}}
\newcommand{\vspan}{\mathsf{span}}
\newcommand{\ii}{\textbf{i}}
\renewcommand{\char}{\textsf{char}}
\newcommand{\Cool}{{$(X^q, d)$-closed}}
\newcommand{\qdim}{{\mathrm{qdim}}}
\newcommand{\MultEnc}{{\mathsf{MultEnc}}}
\newcommand{\FRSEnc}{{\mathsf{FRSEnc}}}
\newcommand{\Domain}{{\mathcal{D}}}
\def\multiset#1#2{\ensuremath{\left(\kern-.3em\left(\genfrac{}{}{0pt}{}{#1}{#2}\right)\kern-.3em\right)}}
\newcommand{\dist}{\ensuremath{\operatorname{dist}}}
\newcommand{\eps}{\varepsilon}
\renewcommand{\epsilon}{\varepsilon}
  \newcommand{\TODO}[1]{\textcolor{red}{\textbf{TODO: #1}}}
  \newcommand{\mkw}[1]{\textcolor{blue}{\footnotesize\textbf{[#1 --mary]}}}
\newcommand{\PRUNE}{\mathsf{PRUNE}}
\begin{document}

\title{Improved decoding of Folded Reed-Solomon and Multiplicity Codes}
\author{
Swastik Kopparty\thanks{Department of Mathematics and Department of Computer Science, Rutgers University. Research supported in part by NSF grants CCF-1253886 and CCF-1540634. \texttt{swastik.kopparty@gmail.com}.} 
\and  
Noga Ron-Zewi\thanks{Department of Computer Science, Haifa University. \texttt{noga@cs.haifa.ac.il}.} 
\and 
Shubhangi Saraf\thanks{Department of Mathematics and Department of Computer Science, Rutgers University. Research supported in part by NSF grants
CCF-1350572 and CCF-1540634. \texttt{shubhangi.saraf@gmail.com}.} 
\and 
Mary Wootters\thanks{Department of Computer Science and Department of Electrical Engineering, Stanford University.  \texttt{marykw@stanford.edu}.  Research supported in part by NSF grant CCF-1657049.} }
\maketitle

\begin{abstract}
In this work, we
show new and improved error-correcting properties of folded Reed-Solomon codes and multiplicity codes.
Both of these families of codes are based on polynomials over finite fields, and both have been the sources of recent advances
in coding theory.  Folded Reed-Solomon codes were the first explicit constructions of codes known to achieve list-decoding capacity; multivariate multiplicity codes were the first constructions of high-rate locally correctable codes; and univariate multiplicity codes are also known to achieve list-decoding capacity.

However, previous analyses of the error-correction properties of these codes did not yield optimal results.  In particular, in the list-decoding setting, the guarantees on the list-sizes were polynomial in the block length, rather than constant; and for multivariate multiplicity codes, local list-decoding algorithms could not go beyond the Johnson bound.

In this paper, we show that Folded Reed-Solomon codes and multiplicity codes are in fact better than previously known in the context of list-decoding and local list-decoding. More precisely, we first show that Folded RS codes achieve list-decoding capacity with {\em constant} list sizes, independent of the block length; and that high-rate univariate multiplicity codes can also be list-recovered with constant list sizes.
Using our result on univariate multiplicity codes, we show that multivariate multiplicity codes are high-rate, {\em locally} list-recoverable codes. 
Finally, we show how to combine the above results with standard tools to obtain capacity achieving locally list decodable codes with query complexity significantly lower than was known before.
\end{abstract}

\setcounter{page}{0}
\thispagestyle{empty}
\newpage

\section{Introduction}\label{sec:Intro}

An error correcting code $C \subset \Sigma^n$ is a collection of \em codewords \em $c$ of length $n$ over an alphabet $\Sigma$.
The goal in designing $C$ is to enable the recovery of a codeword $c \in C$ given a corrupted version $\tilde{c}$ of $c$, while at the same time making $C$ as large as possible.
In the classical unique decoding problem, the goal is to efficiently recover $c$ from any $\tilde{c} \in \Sigma^n$ so that $c$ and $\tilde{c}$ differ in at most $\alpha n$ places; this requires that the \em relative distance \em $\delta$ of the code
 (that is, the fraction of places on which any two codewords differ) to be at least $2\alpha$.  

Modern applications of error correcting codes, both in coding theory
and theoretical computer science, have highlighted the importance
of variants of the unique decoding problem, incuding \em list decoding, \em and \em local decoding. \em  
In list-decoding, the amount of error $\alpha$ is large enough that unique recovery of the codeword $c$ is impossible (that is, $\alpha > \delta/2$), and instead the goal is to return a short list $\mathcal{L} \subset C$ with the guarantee that $c \in \mathcal{L}$.
In local decoding, we still have $\alpha < \delta/2$, but the goal is to recover a single symbol $c_i$ of a codeword $c$, after querying not too many positions of the corrupted codeword $\tilde{c}$.  
In a variant known as \em local list-decoding, \em we seek local information about a symbol even when $\alpha > \delta/2$.
List-decoding, local decoding, and local list-decoding are important primitives in error correcting codes, with applications in coding theory, complexity theory, pseudorandomness and cryptography.

Algebraic codes have been at the heart of the study of list-decoding, local-decoding and local list-decoding.
One classical example of this is Reed-Solomon (RS) codes, whose codewords are comprised of evaluations of low-degree polynomials.\footnote{That is, a codeword of an RS code has the form $(f(x_0), f(x_1), \ldots, f(x_{n-1})) \in \mathbb{F}^n$ for some low-degree polynomial $f \in \mathbb{F}[X]$.}  In the late 1990's, Guruswami and Sudan~\cite{Sudan,GS-list-dec} gave an algorithm for efficiently list-decoding Reed-Solomon codes well beyond half the distance of the code, and this kicked off the field of algorithmic list-decoding. 
A second example is Reed-Muller (RM) codes, the multivariate analogue of Reed-Solomon codes.
The structure of Reed-Muller codes is very amenable to local algorithms: a codeword of a Reed-Muller code corresponds to a multivariate low-degree polynomial, and considering the restriction of that polynomial to a line yields a univariate low-degree polynomial, \em a.k.a. \em  a Reed-Solomon codeword.  This local structure is the basis for Reed-Muller codes being
locally testable~\cite{RubinfeldSudan} and locally decodable~\cite{Lip90, BFLS91}. Using this locality in concert with the Guruswami-Sudan algorithm leads to local list-decoding schemes~\cite{AS03,STV01} for these codes.

More recently, variants of Reed-Solomon and Reed-Muller codes have emerged to obtain improved list-decoding and local-decoding properties.  Two notable examples, which are the focus of this work, are \em Folded Reed-Solomon \em (FRS) and \em multiplicity codes. \em  Both of these constructions have led to recent advances in coding theory.  We introduce these codes informally here, and give formal definitions in Section~\ref{sec:Preliminaries}.

Folded Reed-Solomon codes, introduced by Guruswami and Rudra in~\cite{GR08_folded_RS}, are a simple variant of Reed-Solomon codes.  If the codeword of a Reed-Solomon code is $(c_0, c_2, \ldots, c_{n-1}) \in \Sigma^n$, then the folded version (with folding parameter $s$) is
\[ \left( \begin{bmatrix} c_0 \\ c_1 \\ \vdots \\ c_{s-1} \end{bmatrix},  \begin{bmatrix} c_s \\ c_{s+1} \\ \vdots \\ c_{2s - 1} \end{bmatrix}, \ldots, \begin{bmatrix} c_{n-s} \\ c_{n-s+1} \\ \vdots \\ c_{n-1}\end{bmatrix} \right) \in (\Sigma^s)^{n/s}. \]
The main property of these codes that makes them interesting is that they admit much better list-decoding algorithms~\cite{GR08_folded_RS} than the original Guruswami-Sudan algorithm: more precisely, it allows for the error tolerance $\alpha$ to be much larger for a code of the same rate,\footnote{The \em rate \em of a code $C \in \Sigma^n$ is defined as $R = \frac{1}{n} \log_{|\Sigma|}(|C|)$ and quantifies how much information can be sent using the code.  We always have $R \in (0,1)$, and we would like $R$ to be as close to $1$ as possible.}
asymptotically obtaining the {\em optimal} trade-off.

Multiplicity codes, introduced in the univariate setting by Rosenbloom and Tsfasman in~\cite{RT-m-metric}
and in the multivariate setting by Kopparty, Saraf and Yekhanin in~\cite{KSY14}, are variants of polynomial codes that also include evaluations of derivatives.  
That is, while a symbol of a RS codeword is of the form $f(x) \in \F$ for some low-degree polynomial $f \in \mathbb{F}[X]$ and some $x \in \F$, a symbol in a univariate multiplicity code codeword is of the form $(f(x), f^{(1)}(x), f^{(2)}(x), \ldots, f^{(s-1)}(x)) \in \F^s$, where $s$ is the \em multiplicity parameter. \em  Similarly, while a symbol of an RM codeword is of the form $f(\bx)$ for $\bx \in \F^m$ for some low-degree multivariate polynomial $f \in \mathbb{F}[X_1,\ldots,X_m]$, a symbol in a multivariate multiplicty code includes all partial derivatives of order less than $s$.
Multivariate multiplicity codes were shown in \cite{KSY14} to have strong locality properties, and were the first constructions known of high-rate locally decodable codes.
Meanwhile, univariate multiplicity codes were shown in~\cite{Kop15, GW13} to be list-decodable in the same parameter regime as folded Reed-Solomon codes\footnote{They were previously shown to be list-decodable up to the Johnson bound by Nielsen~\cite{Nielsen}.}, also achieving asymptotically optimal trade-off between rate and error-tolerance.

In this work, we show that Folded Reed-Solomon codes, univariate multiplicity codes, and multivariate multiplicity codes are even more powerful than was previously known in the context of list-decoding and local list-decoding.  Our motivations for this work are threefold:
\begin{enumerate}
	\item First, FRS codes and multiplicity codes are basic and natural algebraic codes, central to many recent results in coding theory (\cite{GR08_folded_RS, KSY14, Kop15, GW13, DL12, KMRS, GKORS17}, to name a few) and understanding their error-correcting properties is important in its own right.
	\item Second, by composing our new results with known techniques, we obtain capacity-achieving locally list-decodable codes with significantly improved query complexity than previously known.
	\item Third, while there have been improved constructions of list-decodable and locally list-decodable codes building on FRS and multiplicity codes (discussed more below), those constructions involve significant additional pseudorandom ingredients.
Our results give simpler constructions of capacity achieving list-decodable and locally list-decodable codes with the best known parameters.
In particular, we give the first constructions of linear\footnote{Many codes in
this paper have alphabet $\Sigma = \F_q^s$, where $\F_q$ is a finite field. For such ``vector alphabet" codes, we use the term ``linear" to 
mean ``$\F_q$-linear".} 
capacity-achieving list-decodable codes with constant alphabet size and constant output list size.
\end{enumerate}
We will state our results and contributions more precisely in Section~\ref{ssec:contributions} after setting up a bit more notation and surveying related work.

\subsection{Related work}

\paragraph{List-recoverable codes.}
While the discussion above focused on the more well-known problem of list-\emph{decoding}, in this work we actually focus on a generalization of list-decoding known as list-\emph{recovery}.
Given a code $C \subseteq \Sigma^n$, an $(\alpha, \ell, L)$-list-recovery algorithm for $C$ takes as input a sequence of lists $S_1,\ldots,S_n \subseteq \Sigma$, each of size at most $\ell$, and returns a list $\mathcal{L}$ of all of the codewords $c \in C$ so that $c_i \in S_i$ for all but an $\alpha$ fraction of the coordinates $i$; the combinatorial requirement is that $|\mathcal{L}| \leq L$.  List-decoding is the special case of list-recovery when $\ell=1$.

Both list-recovery and list-decoding have been important in coding theory, especially in theoretical computer science, for the past several decades (see~\cite{Sudan,Vad12} for overviews).  Initially, the generalization to list recovery was used as a building block towards constructions of list decodable and uniquely decodable codes \cite{GI02,GI03,GI04,GI05,KMRS,GKORS17,HRW17}, although it has since found additional applications in algorithm design~\cite{INR10,NPR12,GNPRS13}.

The Guruswami-Sudan algorithm, mentioned above, is in fact a list-recovery algorithm as well as a list-decoding algorithm, and can efficiently list-recover Reed-Solomon codes up to radius $\alpha = 1 - \sqrt{\ell\cdot R}$, with polynomial list sizes $L$; this trade-off is known as the \em Johnson bound.\em\ 
It is a classical result that there are codes that go beyond the Johnson bound while keeping the output list size polynomial in $n$, or even constant: for large alphabet sizes, the ``correct" limit (called the \em list-decoding \em or \em list-recovering capacity\em), is $\alpha = 1 - R$, provided $q$ is sufficiently larger than $\ell$, and this is achieved by uniformly random codes.   There is a big difference between $1 - \sqrt{\ell \cdot R}$ and $1 - R$, especially when $\ell > 1$.  In particular, the Guruswami-Sudan algorithm requires Reed-Solomon codes to have rate $R < 1/\ell$ to be $(\alpha, \ell, L)$-list-recoverable for nontrivial $\alpha$, while a completely random code can achieve rates arbitrarily close to $1$ (of course, without efficient decoding algorithms).  
For a decade it was open whether or not one could construct explicit codes which efficiently achieve list-decoding capacity.

In a breakthrough result, Guruswami and Rudra~\cite{GR08_folded_RS} (building on the work of Parvaresh and Vardy~\cite{PV05}) showed that the folding operation described above can make RS codes approach capacity with polynomial list-sizes.  For some time, this was the only known route to capacity-achieving codes, until it was shown in~\cite{GW13,Kop15} that univariate multiplicity codes also do the job (again, with polynomial list sizes).  
Since then there has been a great deal of work aimed at reducing the list size and alphabet size of these constructions, both of which were polynomial in $n$ (and both of which would ideally be independent of $n$).  To reduce the alphabet size to constant,
two high-level strategies are known to work: (1) swapping out the standard polynomial codes for Algebraic Geometry (AG) codes~\cite{GX12,GX13,GK14},
and (2) concatenation and \em distance amplification \em using expander graphs~\cite{AEL95,GI04,HW15,GKORS17,HRW17}.
To reduce the list-size to constant, the known strategies involve passing to 
 carefully constructing {\em subcodes} of Folded Reed-Solomon codes and univariate multiplicity codes, via pseudorandom objects such as \em subspace evasive sets \em or \em subspace designs\em~\cite{DL12,GW13,GX12,GX13,GK14}. 

In this work, we show that in fact both folded Reed-Solomon codes and univariate multiplicity codes are \em already \em list-recoverable with constant list-sizes, with no additional modification needed!  The resulting codes still have large alphabet sizes, but this can be ameliorated by using the same expander-based techniques described above.

We summarize the state of affairs for list-recovery in Table~\ref{tab:listrecovery}, and discuss our contributions in more detail below in Section~\ref{ssec:contributions}.
\begin{center}
\begin{table}
\footnotesize
\begin{tabular}{|m{4cm}|m{2.3cm}|m{2cm}|m{2cm}|c|m{1.5cm}|m{2.5cm}|}
\hline 
Code & Alphabet size $|\Sigma|$ & List size $L$ & Explicit? & Linear? & Decoding time & Notes \\ \hline
Completely random code & $\ell^{O(1/\eps)}$ & $O(\ell/\eps)$ & No & No & - &   \\\hline
Random linear code \cite{RW18} & $\ell^{O(1/\eps)}$ & $q^{O_\eps(\log^2(\ell))}$ & No & Yes & - & \\\hline
Folded RS codes \cite{GR08_folded_RS} & $\left( \frac{n}{\eps^2} \right)^{O(\log(\ell)/\eps^2)}$& $\left( \frac{n}{\eps^2} \right)^{O(\log(\ell)/\eps^2)}$ & Yes & Yes & $n^{O(\log(\ell)/\eps)}$ & \\ \hline 
Univariate Multiplicity \cite{Kop15} & $\left( \frac{n}{\eps^2} \right)^{O(\log(\ell)/\eps^2)}$& $\left( \frac{n}{\eps^2} \right)^{O(\log(\ell)/\eps^2)}$ & Yes & Yes & $n^{O(\log(\ell)/\eps)}$ & \\ \hline 
Folded RS/Univariate Multiplicity \cite{GW13} & $\left( \frac{n \ell}{\eps^2} \right)^{O(\ell/\eps^2)}$ & $\left( \frac{n \ell}{\eps} \right)^{O(\ell/\eps)}$ & Yes & Yes & $O( \ell \cdot n^2 / \eps )$& Output is a small subspace containing all nearby codewords. \\ \hline
Folded RS codes (This work, Theorem~\ref{thm:frs-list-rec-const-list}) & $\left( \frac{ n \ell}{\eps^2} \right)^{O(\ell/\eps^2)}$& $\left( \frac{ \ell}{\eps} \right)^{O(\frac{1}{\eps} \log(\ell/\eps))}$ & Yes & Yes & $\poly(n,L)$ &  \\ \hline 
Univariate Multiplicity codes (This work, Theorem~\ref{thm:unimult-list-rec-const-list-smalld}) & $\left( \frac{ n \ell }{\eps^2} \right)^{O(\ell/\eps^2)}$& $\left( \frac{ \ell}{\eps} \right)^{O(\frac{1}{\eps} \log(\ell/\eps))}$ & Yes & Yes & $\poly(n,L)$& For $d < q$ only.\\ \hline \hline
Folded RS subcodes (via subspace
evasive)~\cite{DL12} & $\left( \frac{ n \ell }{\eps^2 } \right)^{O(\ell/\eps^2)}$ & $O\left(\frac{\ell}{\eps}\right)^{O(\ell/\eps)}$ & Yes & No & $O_{\ell, \eps}(n^2)$ & \\ \hline
Folded AG (via subspace evasive)~\cite{GX12} & $\exp\left( \frac{ \ell \log(\ell/\eps) }{\eps^2} \right)$ & $O\left( \frac{\ell}{\eps} \right)$ & No & No & $\poly_{\ell,\eps}(n)$ &\\ \hline
Folded AG (via subspace designs)~\cite{GX13, GK14} & $\exp\left( \frac{ \ell \log(\ell/\eps) }{\eps^2} \right)$ & $2^{2^{2^{O_{\eps,\ell}(\log^*(n))}}}$ & Yes & Yes & $O_{\ell,\eps}(n^{O(1)})$ &\\ \hline
Tensor products of AG subcodes, plus expander techniques~\cite{HRW17} & $\exp(\ell/\eps^2)$ & $2^{2^{2^{O_{\eps,\ell}(\log^*(n))}}}$ & Yes & Yes & $O_{\ell,\eps}(n^{1.01})$ &  \\ \hline
Folded RS codes, plus expander techniques (This work, Corollary~\ref{cor:FRS_AEL}) & $(1 + \ell)^{O(1/\eps^5)}$& $O_{\eps, \ell}(1)$ & Yes & Yes & $\poly_{\ell,\eps}(n)$ & \\ \hline
\end{tabular}
\caption{Constructions of $(\alpha, \ell, L)$-list-recoverable codes of rate $R^* - \eps$, where
$R^* = 1 - \alpha$ is list-recovering capacity (when $|\Sigma| \geq (1+\ell)^{\Omega(1/\epsilon)}$).
 The top part of the table focuses on ``simple" algebraic constructions; the bottom part has constructions which are involved.  We assume that $R^* \in (0,1)$ is constant (independent of $n, \eps, \ell$).}\label{tab:listrecovery}
\end{table}

\begin{table}
\footnotesize
\begin{tabular}{|m{3cm}|c|c|c|c|}
\hline 
Code & Alphabet size $|\Sigma|$ & List size $L$ & Locality $t$ & Explicit?  \\ \hline
Tensor products of AG Subcodes, plus expander techniques~\cite{HRW17} & $\exp(\ell/\eps^2)$ & $2^{2^{2^{O_{\eps,\ell}(\log^*(n))}}}$ & $O_{\ell, \eps}(n^{0.01})$ & Yes   \\ \hline
Multivariate Multiplicity codes, plus expander techniques (This work, Theorem~\ref{thm:multimult-smallalpha2}) & $(1 + \ell)^{O(\ell/\eps^{11})}$ & $O_{\ell,\eps}(1) $ & $O_{\ell,\eps}(n^{0.01})$ & Yes  \\ \hline
Multivariate Multiplicity codes, plus expander techniques (This work, Theorem~\ref{thm:multimult-smallalpha}) & $(1 + \ell)^{O(1/\eps^{10})}$ & $\exp( \sqrt{ \log(n) \log\log(n) })$ & $\exp( \log^{3/4}(n) ( \log\log(n) )^{1/4} )$ & Yes  \\ \hline
\end{tabular}
\caption{Constructions of $(t, \alpha, \ell, L)$-locally-list-recoverable codes of rate $R^* - \eps$, where
$R^* = 1 - \alpha$ is list-recovering capacity (when $|\Sigma| \geq (1+\ell)^{\Omega(1/\epsilon)}$).
 We assume that $R^* \in (0,1)$ is constant (independent of $n, \eps, \ell$).}\label{tab:locallistrecovery}
\end{table}

\end{center}

\paragraph{Locally list-recoverable codes.}
As mentioned above, local decoding has been an important theme in coding theory for the past several decades.
Locality makes sense in the context of list-recovery as well.
The definition of local list-recovery (given formally below as Definition~\ref{defn:local-list-recover} below) is a bit involved, but intuitively the idea is as follows.  
As with list-recovery, we have input lists $S = (S_1,\ldots,S_n)$, so that each $S_i$ is of size at most $\ell$.  
The goal is to obtain information about a single symbol $c_i$ of a codeword $i$, given query access to $S$.
More precisely,
we will require that the decoder output a short list of randomized algorithms $A_1,\ldots,A_L$, each of which corresponds to a codeword $c$ with $|\{i \,:\, c_i \not\in S_i\}| \leq \alpha n$.  The requirement is that if $A_r$ corresponds to a codeword $c$, then on input $i$, $A_r(i)$ outputs $c_i$ with high probability, and using no more than $t$ queries to $S$.  If such a decoder exists, we say that the code is $(t, \alpha, \ell, L)$-locally-list-recoverable.  Local list-decoding is the case special case where $\ell = 1$.

This definition may seem a bit convoluted, but it turns out to be the ``right" definition for a number of settings.
For example, local list-decoding algorithms are at the heart of algorithms in cryptography~\cite{GL89}, learning theory~\cite{KM93}, and hardness amplification and derandomization~\cite{STV01}.
Locally list-\emph{recoverable} codes have been desirable as a step towards obtaining efficient capacity-achieving local list-decoding algorithms.  In particular, high-rate locally list-recoverable codes, combined with standard techniques, yield {\em capacity-achieving} locally list-decodable and locally list-recoverable codes.

However, until recently, we did not know of \em any \em high-rate locally list-recoverable codes.
The first such construction was given recently in \cite{HRW17}.   The approach of \cite{HRW17} is as follows: it takes a folded AG subcode from~\cite{GX13,GK14} (which uses subspace designs to find the subcode); applies tensor products many times; and concatenates the result with a locally correctable code.  Finally, to obtain capacity-achieving locally list-decodable/recoverable, codes, that work applies an expander-based technique of~\cite{AEL95} to pseudorandomly scramble up the symbols of the codewords to amplify the amount of error tolerated.  

The reason that so much machinery was used in \cite{HRW17} is that despite a great deal of effort, the ``natural" algebraic approaches did not seem to work.  Perhaps the most natural algebraic approach is via Reed-Muller codes, which have a natural local structure.  As discussed above, a Reed-Muller codeword corresponds to a low-degree multivariate polynomial, and restricting such a polynomial to a line yields a low-degree univariate polynomial, which corresponds to a Reed-Solomon codeword.    Using this connection, along with the Guruswami-Sudan algorithm for Reed-Solomon codes, Arora and Sudan~\cite{AS03} and Sudan, Trevisan and Vadhan~\cite{STV01} gave algorithms for locally list-decoding Reed-Muller codes up the the Johnson bound\footnote{Technically these algorithms only came within a factor $\sqrt{2}$ of the Johnson bound. To go all the way to the Johnson bound, one needs some additional ideas~\cite{BK-planes}; see~\cite{GK-lifted, Kop15} for further variations on this.}.  This algorithm also extends naturally to local list-recovery up to the Johnson bound~\cite{GKORS17}, but this means that for large values of $\ell$ one cannot obtain high-rate codes.  

One might hope to use a similar approach for multivariate multiplicity codes; after all, the univariate versions are list-recoverable to capacity.  However, the fact that the list sizes were large was an obstacle to this approach, and again previous work on the local list-decodability of multivariate multiplicity codes also only worked up to the Johnson bound~\cite{Kop15}.

In this work, we return to this approach, and---using our results on univariate multiplicity codes---show that in fact high-rate multivariate multiplicity codes are locally list-recoverable.  
Using our construction, combined with some expander-based techniques, we obtain capacity-achieving locally list-recoverable codes which improve on the state-of-the-art.  The quantitative results are stated in Table~\ref{tab:locallistrecovery}, and we discuss them in more detail in the next section.

\subsection{Our contributions}\label{ssec:contributions}
The main contribution of this work improved results on the (local)-list-recoverability of FRS codes and multiplicity codes.   We discuss a few of the concrete outcomes below.
\begin{itemize}
	\item \textbf{Constant list sizes for folded Reed-Solomon codes.}
Theorem~\ref{thm:frs-list-rec-const-list} says that a folded RS code of rate $R$ and alphabet size $q^{O(\ell/\eps^2)}$ is $(1 - R - \eps, \ell, L)$-list-recoverable with $L = \left( \ell/\eps \right)^{O\left( \frac{1}{\eps} \log(\ell/\eps) \right)}$.  This improves over the previous best-known list size for this setting, which was $\left( n/\eps \right)^{O\left( \frac{1}{\eps^2} \log(\ell) \right) }$.  In particular, when $\eps, \ell$ are constant, the list size $L$ improves from polynomial in $n$ to a constant.

\item \textbf{Constant list sizes for univariate multiplicity codes.} 
Theorem~\ref{thm:unimult-list-rec-const-list-smalld} recovers the same quantitative results as Theorem~\ref{thm:frs-list-rec-const-list} for univariate multiplicity codes with degree $d$ smaller than the characteristic of the underlying field.

When the degree $d$ is larger than the characteristic, which is what is relevant for the application to multivariate multiplicity codes, we obtain a weaker result.  We no longer have capacity-achieving codes, but we obtain high-rate list-recoverable codes with constant list sizes.  More precisely, Theorem~\ref{thm:full-unimult-list-rec} implies that rate $R$ univariate multiplicity codes are efficiently $(\alpha, \ell, L)$-list-recoverable for $L = \ell^{O(\ell \log(\ell))}$ and $\alpha = O( (1 -R)^2/\ell)$.  In particular, Theorem~\ref{thm:full-unimult-list-rec} is nontrivial even for high-rate codes, while the Johnson bound only gives results for $R < 1/\ell$.

	\item \textbf{High-rate multivariate multiplicity codes are locally list-recoverable.}
One reason to study the list-recoverability of univariate multiplicity codes is because list-recovery algorithms for univariate multiplicity codes can be used in local list-recovery algorithms for multivariate multiplicity codes.  Theorems~\ref{thm:main-multimult} and \ref{thm:main-multimult2} show that high-rate multivariate multiplicity codes are locally list-recoverable.   More precisely, in Theorem~\ref{thm:main-multimult}, we show that for constant $\ell, \epsilon$, a multivariate multiplicity code of length $n$ with rate $1 - \epsilon$: is efficiently $(t, \alpha, \ell, L)$-locally-list-recoverable for $\alpha = 1/\polylog(n)$, with list size $L$ and query complexity $t$ that are sub-polynomial in the block length $n$.   In Theorem~\ref{thm:main-multimult2}, we instantiate the same argument with slightly different parameters to show a similar result where $\alpha$ and $L$ are constant, but the query complexity $t$ is of the form $t = O(n^{0.01})$.

	\item \textbf{Capacity-achieving locally list-recoverable codes over constant-sized alphabets.}
Theorems~\ref{thm:main-multimult} and \ref{thm:main-multimult2} give high-rate locally-list-recoverable codes; however, these codes do not achieve capacity, and the alphabet sizes are quite large.  Fortunately, following previous work, we can apply a series of by-now-standard expander-based techniques to obtain capacity-achieving locally list-recoverable codes over constant-sized alphabets.
We do this in Theorems~\ref{thm:multimult-smallalpha} and \ref{thm:multimult-smallalpha2}, respectively. 

The only previous construction of capacity-achieving locally list-recoverable codes (or even high-rate locally list-recoverable codes) is due to \cite{HRW17}, which achieved arbitrary polynomially small query complexity (and even subpolynomial query complexity $n^{O(1/\log\log n)}$) with slightly superconstant list size.

Our codes in Theorem~\ref{thm:multimult-smallalpha} achieve subpolynomial
query complexity $\widetilde{\exp}(\log^{3/4} n)$ and subpolynomial list size.
This brings the query complexity for capacity achieving local list-decodability
close to the best known query complexity for locally decodable codes~\cite{KMRS}, which is $\widetilde{\exp}(\log^{1/2} n)$ (for the same codes).

Our codes in Theorem~\ref{thm:multimult-smallalpha2}
have arbitrary polynomially small query complexity, and constant list-size.
This improves upon the codes of~\cite{HRW17}.

The quantitative details are shown in Table~\ref{tab:locallistrecovery}.

	\item \textbf{Deterministic constructions of capacity-achieving list-recoverable codes with constant alphabet size and list size.}
	Our result in Theorem~\ref{thm:frs-list-rec-const-list} for Folded Reed-Solomon codes give capacity-achieving list-recoverable codes with constant list size, but with polynomial alphabet size.  By running these through some standard techniques, we obtain in Corollary~\ref{cor:FRS_AEL} efficient deterministic constructions of $\F_q$-linear, capacity-achieving, list-recoverable codes with constant alphabet size and list size, with a decoding algorithm that runs in time $n^{O(1)} \cdot \log(n)^{O_{\ell, \epsilon}(1)}$.
	
Codes with these properties do not seem to have been written down anywhere in the literature. Prior to our work, the same standard techniques could have also been applied to the codes of \cite{DL12} (which are nonlinear subcodes of Folded Reed-Solomon codes) to construct nonlinear codes with the same behavior.

\end{itemize}

\subsection{Overview of techniques}

In this subsection, we give an overview of the proofs of our main results.

\subsubsection{List recovery of folded Reed-Solomon and univariate multiplicity codes with constant output list size}

Let $C \subseteq \Sigma^n$ be either a folded Reed-Solomon code or a univariate multiplicity code
with constant distance $\delta > 0$.
Suppose that $s$ is the ``folding parameter" or ``multiplicity parameter," respectively, so that $\Sigma = \F_q^s$.
We begin with a warm-up by describing an algorithm for {\em zero-error} list-recovery; that is, when $\alpha = 0$.
Here we are given ``received lists" $S \in {\Sigma \choose \ell}^n$, and we want to find
the list $\calL$ of all codewords $c \in C$ such that $c_i \in S_i$ for each $i$.
The groundbreaking work of~\cite{GR08_folded_RS} showed that for constant $\ell$
and large but constant $s$, $\calL$ has size at most $q^{O_{\ell}(1)}$, and
can be found in time $q^{O_{\ell}(1)}$. We now show that $\calL$ is in fact of
size at most $L = O_{\ell, \delta}(1)$, and can be found in time $\poly(q, L)$.

The starting point for our improved list-recovery algorithms for 
folded Reed-Solomon and univariate multiplicity codes is the 
{\em linear-algebraic approach} to list-recovering these codes that was taken in~\cite{GW13}.
The main punchline of this approach is that the list $\calL$ is contained
in  an $\F_q$ affine-subspace $v_0 + V$ of dimension at most $O_\eps(\ell)$, and further that this subspace can be found in time $\poly(q)$ (this immediately
leads to the previously known bound on $\calL$).
Armed with this insight, we now bring the received lists $S$ back into play.
How many elements $c$ of the affine space $v_0 + V \subseteq C$ can 
have $c_i \in S_i$ for all $i \in [n]$?
We show that there cannot be too many 
such $c$.

The proof is algorithmic: we will give a randomized algorithm $\PRUNE$,
which when given the low dimensional affine space $v_0 + V$,
outputs a list of $K = O(1)$ elements of $C$, such that
such that for any $c \in \calL$, $c$ is included in the output of $A$
with high probability. This implies that $|\calL| \leq O(K) = O(1)$.

The algorithm $\PRUNE$ works as follows. For some parameter $\tau = O(1)$,
we pick coordinates $i_1, i_2, \ldots, i_\tau \in [n]$ uniformly at random.
Then the algorithm iterates over all the $\ell^\tau$ choices of $(y_1. \ldots, y_{\tau}) \in \prod_{j=1}^\tau S_{i_j}$.
For each such $(y_1,\ldots, y_{\tau})$, $\PRUNE$
checks if there is a unique element $w$ of $v_0 + V$ such that $w_{i_j} = y_j$ for all $j \in [\tau]$.
If so, we output that unique element $w$; otherwise (i.e., either there are either zero or greater than one 
such $w$'s) we do nothing.
Thus the algorithm $\PRUNE$ outputs at most $\ell^{\tau} = O(1)$ elements of $C$.

It remains to show that for any $c \in \calL$, the algorithm outputs $c$ with high probability. Fix such a $c$.
By assumption, for every $i \in [n]$, $c_i \in S_i$. Thus there will be
an iteration where the algorithm $\PRUNE$ takes $(y_1, \ldots, y_{\tau}) = (c_{i_1}, \ldots, c_{i_\tau})$.
In this iteration, there will be at least one $w$ (namely $c$) which has the
desired property. Could there be more? If there was another $c' \in v_0 + V$ with this property,
then the nonzero vector $c-c' \in V$ would have the property that $c-c'$ vanishes 
on all coordinates $i_1, \ldots, i_{\tau}$.
It turns out that this can only happen with very low probability. Lemma 2 from~\cite{Saraf-Yekhanin} shows that that
for any linear space $V$ with dimension $k$ and distance at least $\delta$, for $\tau$ a large enough constant ($\tau = \Omega(k/\delta)$),
it is very unlikely that there exists a nonzero element of $V$ that vanishes at
$\tau$ random coordinates $i_1, \ldots, i_{\tau}$.
Thus with high probability,
$c$ is the unique $w$ found in that iteration, and is thus included in the output
of $\PRUNE$. This completes the description and analysis of the algorithm $\PRUNE$,
and thus of our zero-error list-recovery algorithm.

One way to prove (a version of) Lemma 2 from~\cite{Saraf-Yekhanin} is as follows.
First we note the following simple but important lemma:
\begin{lemma}
\label{lem:general-subspace-design}
Let $\Sigma = \F_q^s$. Let $W \subseteq (\Sigma)^n$ be an $\F_q$-subspace
with $\dim(W) = t \geq 1$. Suppose $W$ has minimum distance at least $\delta$.
Then:
$$ \mathbb E_{i \in [n]} [ \dim(W \cap H_i) ] \leq t - \delta,$$
where $H_i = \{v \in \Sigma^n \mid v_i = 0 \}$.
\end{lemma}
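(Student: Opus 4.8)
The plan is to reduce the statement to a one-line application of rank--nullity together with a weight count on a single nonzero codeword.

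First I would introduce, for each coordinate $i \in [n]$, the projection map $\phi_i \colon W \to \Sigma$ onto the $i$-th block, $\phi_i(v) = v_i$. This is $\F_q$-linear, and by definition of $H_i$ we have $\ker(\phi_i) = W \cap H_i$. So rank--nullity applied to $\phi_i$ on the domain $W$ gives $\dim(W \cap H_i) = t - \dim(\phi_i(W))$ for every $i$. Averaging over $i$,
\[ \mathbb E_{i \in [n]}\left[ \dim(W \cap H_i) \right] = t - \mathbb E_{i \in [n]}\left[ \dim(\phi_i(W)) \right], \]
so it suffices to prove $\mathbb E_{i \in [n]}[\dim(\phi_i(W))] \geq \delta$.

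Next I would use the crude bound $\dim(\phi_i(W)) \geq \mathbf 1\left[ \phi_i(W) \neq \{0\} \right]$, which holds since $\dim(\phi_i(W))$ is a nonnegative integer, and note that $\phi_i(W) \neq \{0\}$ is equivalent to $W \not\subseteq H_i$. Hence $\mathbb E_{i}[\dim(\phi_i(W))] \geq \Pr_{i \in [n]}\left[ W \not\subseteq H_i \right]$. To lower-bound this probability, fix any nonzero $w \in W$, which exists because $t \geq 1$. Since $W$ has relative minimum distance at least $\delta$, the vector $w$ is nonzero in at least $\delta n$ of its $n$ blocks; and for each such block $i$ we have $w \in W \setminus H_i$, so $W \not\subseteq H_i$. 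Therefore $\Pr_{i \in [n]}\left[ W \not\subseteq H_i \right] \geq \delta$, and chaining the three inequalities proves the lemma.

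I do not expect a genuine obstacle here: the whole content is that a single nonzero codeword of weight $\geq \delta n$ already witnesses $W \not\subseteq H_i$ on a $\delta$-fraction of coordinates, and that $\dim$ of a projection drops the nullity by rank--nullity. The only points requiring a moment of care are the convention that distance and weight over the vector alphabet $\Sigma = \F_q^s$ are measured in blocks indexed by $[n]$ (which is exactly what makes the count $\delta n$ valid), and the fact that we are intentionally discarding information by replacing $\dim(\phi_i(W))$ with its indicator — enough for the stated $t - \delta$ bound, even though $\dim(\phi_i(W))$ can be as large as $s$.
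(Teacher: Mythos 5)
Your proof is correct and complete. Note that the paper states Lemma~\ref{lem:general-subspace-design} in the technical overview and describes it as "simple but important" but never actually supplies a proof, so there is no argument of theirs to compare against; your rank--nullity reduction followed by the single-codeword weight count is exactly the natural route. A small observation: one can avoid introducing $\phi_i$ by noting directly that $W\cap H_i$ is a \emph{proper} subspace of $W$ (hence of dimension at most $t-1$) precisely when $W\not\subseteq H_i$, which gives $\dim(W\cap H_i)\le t-\mathbf 1[W\not\subseteq H_i]$ immediately; taking expectations and lower-bounding $\Pr_{i}[W\not\subseteq H_i]\ge\delta$ via your fixed nonzero codeword finishes in the same way. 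Your rank--nullity phrasing costs nothing, though, and has the mild advantage of mirroring the sharper analyses in the paper (e.g.\ the use of $\dim(W\cap H_b)=\dim(W)-\dim(\Phi_b(W))$ inside the proof of Theorem~\ref{thm:cool-subspace-design}), where the full image dimension, not just its indicator, is what gets exploited.
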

Lemma~\ref{lem:general-subspace-design} says that for any subspace $W \subseteq \Sigma^n$ of good distance, fixing a coordinate
to $0$ reduces the dimension a little in expectation.
Iterating this, we see that fixing many coordinates is very likely to reduce the dimension down to zero, and this proves the result that we needed above.

With our warm-up complete, we turn to
our main theorem on the list-recoverability of Folded Reed-Solomon codes (Theorem~\ref{thm:frs-list-rec-const-list}), which shows that the output list size is small even in the presence of an $\alpha = \delta - \epsilon$ fraction of errors (for small $\epsilon > 0$).  Our approach generalizes the $\alpha = 0$ case described above.
Let $\calL$ be the list of $(\delta-\epsilon)$-close codewords.
Again, the linear-algebraic list decoder of~\cite{GW13} can produce a low
dimensional affine subspace $v_0 + V$ such that $\calL \subseteq v_0 + V$.
Next, we show that the very same algorithm $\PRUNE$ described above (with a different
setting of the parameter $\tau$) does the desired list-recovery with at least some small constant probability $p_0$.
This will imply that $|\calL| \leq \frac{\ell^{\tau}}{p_0}$.

To see why this works, 
fix a codeword $c \in \mathcal{L}$.
First observe that if we pick $i_1, \ldots, i_{\tau}$ uniformly at random,
the probability that $c_{i_j} \in S_{i_j}$ for all $j = 1,\ldots, \tau$
is at least $p' = (1- \delta + \epsilon)^{\tau}$.  This is small, but not too small; thus, there is some chance that at least one $w$ (the correct one) is found by $\PRUNE$.

Following the previous analysis, we now have to bound the probability that
for random $i_1, \ldots, i_{\tau} \in [n]$, the space of codewords from $V$ that vanish on all of $i_1,
\ldots, i_{tau}$ has dimension at least one.  This is the probability that strictly greater than one $w$ is found by $\PRUNE$.
This time we will need a stronger (and much more specialized) version of Lemma~\ref{lem:general-subspace-design},
which shows that for subspaces $W$ of the Folded Reed-Solomon code, 
fixing a random coordinate to $0$ reduces the dimension by a lot: much more than the $\delta$ that we got from Lemma~\ref{lem:general-subspace-design}. 
Such a lemma was proved in~\cite{GK14}, although in a different language, and for a very different purpose.
This lemma roughly shows that the expected dimension of $W \cap H_i$, for a random $i \in [n]$,
is at most $(1 - \delta)\dim(W)$. Setting $\tau = O(\log(\dim(V))/\delta)$,
with $\tau$ applications of this lemma, we get that the probability
that the space of codewords from $V$ that vanish on all of $i_1, \ldots, i_{\tau}$ has dimension at least one is at most $p'' = (1-\delta)^{\tau}\dim(V)$.
Note that this probability is tiny compared to $p'$, and thus
the probability that the algorithm $\PRUNE$ succeeds in finding $c$
is at least $p' - p'' \approx p'$, as desired.

The description above was for folded RS codes, but 
 same method works for univariate multiplicity codes whose
degree $d$ is smaller than the characteristic of the field $\F_q$.
We state this in Theorem~\ref{thm:unimult-list-rec-const-list-smalld}.
The proof follows the same outline, using a different but analogous lemma from~\cite{GK14}.

For application to local list-recovery of multivariate multiplicity codes, however, we need to deal with univariate multiplicity codes where the degree $d$ is larger than $q$.
In Theorem~\ref{thm:full-unimult-list-rec}, we show how to accomplish this when the fraction of errors $\alpha$ is very small.
The algorithm and the outline of the analysis described above can again do the job for this setting, although the analysis is much more involved. The proof, which we give in Section~\ref{sec:Unimult}, gives better quantitative bounds than the previous approach, and requires us to open up the relevant lemma from \cite{GK14}.  At the end of the day, we are able to prove a reasonable version of this lemma for the case when $d > q$, and this allows the analysis to go through.

\subsubsection{Local list-recovery of multivariate multiplicity codes}

We now describe the high-level view of our local list-recovery algorithms. 
Our algorithm for local list-recovery of multivariate multiplicity codes
follows the general paradigm for local list-decoding of Reed-Muller codes
by Arora and Sudan~\cite{AS03} and Sudan, Trevisan and Vadhan~\cite{STV01}.
In addition to generalizing various aspects of the paradigm, we need to introduce some further ideas to account for the fact that we are in the high rate setting\footnote{These ideas can also be used to improve the analysis of the 
\cite{AS03} and \cite{STV01} local list-decoders for Reed-Muller codes. In particular, they can remove the restriction that the degree $d$ needs to be
at most $1/2$ the size of the field $\F_q$ for the local list-decoder to work.}.

Local list-decoding of Reed-Muller codes is the following problem: we are given
a function $r: \F_q^m \to \F_q$ which is promised to be close to the evaluation
table of some low degree polynomial $Q(X_1, \ldots, X_m)$.
At the high level, the local list-decoding algorithm of~\cite{STV01} for Reed-Muller
codes has two phases: generating advice, and decoding with advice. To generate the advice,
we pick a uniformly random $\ba \in \F_q^m$ and ``guess" a value $z \in \F_q$ (this guessing
can be done by going over all $z \in \F_q$). Our hope for this guess is that $z$ equals $Q(\ba)$.

Once we have this advice, we see how to decode. We define an oracle machine $M^r[\ba, z]$,
which takes as advice $[\ba, z]$, has query access to $r$, and given an input $\bx \in \F_q^m$, tries to compute $Q(\bx)$.
The algorithm first considers the line $\lambda$ passing through $\bx$ and the advice point $\ba$,
and list-decode the restriction of $r$ to this line to obtain a list $\calL_\lambda$ of univariate polynomials.
These univariate polynomials are candidates for $Q|_\lambda$. Which of these univariate polynomials
is $Q|_{\lambda}$? We use our guess $z$ (which is suppose to be $Q(\ba)$): if there is a unique
univariate polynomial in the list with value $z$ at $\ba$, then we deem that to be our candidate for
$Q|_\lambda$, and output its value at the point $\bx$ as our guess for $Q(\bx)$. This algorithm 
will be correct on the point $\bx$ if (1) there are not too many errors on the line through $\bx$ and $\ba$,
and (2) no other polynomnial in $\calL_\lambda$ takes the same value at $\ba$ as $Q|_\lambda$ does.
The first event is low probability by standard sampling bounds, and the second is low probability
using the random choice of $\ba$ and the fact that $\calL_\lambda$ is small.
This algorithm does not succeed on all $\bx$, but one can show that for random $\ba$ and $z = Q(\ba)$,
this algorithm does succeed on most $\bx$. Then we can run a standard local correction algorithm for
Reed-Muller codes to then convert it to an algorithm that succeeds on all $\bx$ with high probability.

We are trying to locally list-recover a multivariate multiplicity code; the codewords are of the form $(Q^{(<s)}(\by))_{\by \in \F_q^m}$, where $Q^{(<s)}(\by) \in \F_q^{{m + s - 1 \choose m}} =: \Sigma_{m,s}$ is a tuple that consists of all partial derivatives of $Q$ of order less than $s$, evaluated at $\by$.
We are given query access to a function $S:\F_q^m \to { \Sigma_{m,s} \choose \ell }$, 
where $S(\by) \subset \Sigma_{m,s}$ is the received list for the coordinate indexed by $\by$. 
Suppose for the following discussion that $Q(\bX) \in \F_q[X_1,\ldots, X_m]$ is a low-degree multivariate polynomial so that $|\{\by \,:\, Q^{(<s)}(\by) \not\in S(\by) \}| \leq \alpha q^m$.   We want to describe an algorithm that, with high probability will output a randomized algorithm $A_j:\F_q^m \to \Sigma_{m,s}$ that will approximate $Q^{(<s)}$.

There are two main components to the algorithm again: generating the advice, and decoding with advice.
The advice is again a uniformly random point $\ba \in \F_q^m$, and a guess $z$
which is supposed to equal $Q^{(<s^*)}(\ba)$, {\em a very high order evaluation of $Q$ at $\ba$}, for some $s^* \gg s$.
We discuss how to generate $z$ later, let us first see how to use this advice to decode.

To decode using the advice $[\ba, z]$, we give an oracle machine $M^S[\ba, z]$ which takes advice $[\ba, z]$ and has query access to $S$. 
If $z = Q^{(<s^*)}(\ba)$, then $M^S[\ba,z](\bx)$ will be equal to $Q^{(<s)}(\bx)$ with high probability over $\bx$ and $\ba$.  This algorithm
is discussed in Section~\ref{ssec:oracle}.   Briefly, the idea is to consider the line $\lambda$ through $\bx$ and $\ba$ and again run the
univariate list-recovery algorithm on the restrictions of $S$ to this line to obtain a list $\mathcal{L}_\lambda$. 
We hope that $Q|_\lambda$ is in this list, and that $Q|_\lambda$ does not have the same {\em order $s^*$} evaluation\footnote{This is
why we take $s^*$ large: it is much more unlikely that there will be a collision of higher order evaluations at the random point $\ba$.}
on $\ba$ as any other element of $\mathcal L_\lambda$ -- this will allow us to identify it with the help of the advice $z = Q^{(<s^*)}(\ba)$.
Once we identify $Q|_\lambda$, we output its value at $\bx$ as our guess for $Q(\bx)$.

To generate the advice $z$, we give an algorithm $\RecCand$, which takes as input a point $\ba \in \F_q^m$,
has query access to $S$, and returns a short list $Z \subset \Sigma_{m, s^*}$ of guesses for $Q^{(<s^*)}(\ba)$.
Recall that we have $s^*$ quite a bit larger than $s$.  This algorithm is discussed in Section~\ref{ssec:reccand}.  
Briefly, $\RecCand$ works by choosing random lines through $\ba$ and running the (global) list-recovery algorithm
for univariate multiplicity codes on the restriction of the lists $S$ to these lines.
Then it aggregates the results to obtain $Z$. This aggregation turns out to be a list-recovery problem
for Reed-Muller codes evaluated on product sets. We describe this algorithm for list-recovery in Appendix~\ref{app:rm}.

Summarizing, our local list-recovery algorithm works as follows.  
First, we run $\RecCand$ on a random point $\ba \in \F_q^m$ to generate a short list $Z \subseteq \Sigma_{m,s^*}$  of possibilities for $Q^{(<s^*)}(\ba)$.  Then, for each $z \in Z$, we will form the oracle machine $M^S[\ba,z]$.  We are not quite done even if the advice $z$ is good, since $M^S[\ba,z](\bx)$ may not be equal to $Q^{(<s)}(\bx)$; we know this probably happens for most $\bx$'s, but not necessarily for the one that we care about.  Fortunately, $M^S[\ba,z]$ will agree with $Q^{(<s)}$ for many inputs $\by$, and so we can use the fact that multivariate multiplicity codes are locally correctable to finish the job~\cite{KSY14}.  When we iterate over the advice $z \in Z$, this will give the list of randomized algorithms $A_1,\ldots, A_L$ that the local list-recovery algorithm returns.

\subsubsection{Organization}
We begin in Section~\ref{sec:Preliminaries} with notation and preliminary definitions.  Once these are in place, we will prove Theorem~\ref{thm:frs-list-rec-const-list} about Folded RS codes in Section~\ref{sec:FRS}.  In Section~\ref{sec:Unimult}, we extend our analysis of Folded RS codes to univariate multiplicity codes, and prove Theorems~\ref{thm:unimult-list-rec-const-list-smalld} and \ref{thm:full-unimult-list-rec} for small and large degrees $d$ respectively.  In Section~\ref{sec:Multimult}, we present our local list-recovery algorithm for multivariate multiplicity codes, and state Theorems~\ref{thm:main-multimult} and \ref{thm:main-multimult2} about high-rate local list-recovery of multivariate multiplicity codes.  Finally in Section~\ref{sec:Smallalpha} we run our results through the expander-based machinery of~\cite{AEL95}, to obtain Theorems~\ref{thm:multimult-smallalpha} and \ref{thm:multimult-smallalpha2} which give capacity-achieving locally list-recoverable codes over constant-sized alphabets.

\section{Notation and Preliminaries}\label{sec:Preliminaries}
We begin by formally defining the coding-theoretic notions we will need, and by setting notation.
We denote by $\F_{q}$ the finite field of $q$ elements. 
For any pair of strings $x,y\in\Sigma^{n}$, the \textsf{relative
distance} between $x$ and $y$ is the fraction of coordinates on which $x$ and $y$
differ, and is denoted by $\dist(x,y):= \left|\left\{ i\in\left[n\right]:x_{i}\ne y_{i}\right\} \right|/n$. For a positive integer $\ell$ we denote 
by ${\Sigma \choose \ell}$ the set containing all subsets of $\Sigma$ of size $\ell$, and for any pair of strings $x \in \Sigma^n$ 
and $S \in  {\Sigma \choose \ell}^n$ we denote by $\dist(x,S)$ the fraction of coordinates $i \in [n]$ for which $x_i \notin S_i$, that is, 
$\dist(x,S):= \left|\left\{ i\in\left[n\right]:x_{i}\notin S_i \right\} \right|/n$. Throughout the paper, we use $\exp(n)$ to denote $2^{\Theta(n)}$. Whenever we use $\log$, it is to the base $2$.
The notation $O_a(n)$ and $\poly_a(n)$ means that we treat $a$ as a constant; that is, $\poly_a(n) = n^{O_a(1)}$.

\subsection{Error-correcting codes}

Let $\Sigma$ be an alphabet and let $n$ be a positive integer (the
\textsf{block length}). A code is simply a subset $C\subseteq\Sigma^{n}$. The elements of a code $C$ are called \textsf{codewords}. 
If $\F$ is a finite field and $\Sigma$ is a vector space over $\F$,
we say that a code $C\subseteq\Sigma^{n}$ is \textsf{$\F$-linear}
if it is an $\F$-linear subspace of the $\F$-vector space $\Sigma^{n}$.
In this work most of our codes will have alphabets $\Sigma = \F^s$, and we will use \textsf{linear} to mean $\F$-linear.
The \textsf{rate}
of a code is the ratio $\frac{\log|C|}{\log(|\Sigma|^{n})}$, which
for $\F$-linear codes equals $\frac{\dim_{\F}(C)}{n\cdot\dim_{\F}(\Sigma)}$.
The \textsf{relative distance} $\dist(C)$ of $C$ is the minimum $\delta >0$ such that for
every pair of distinct codewords $c_{1},c_{2}\in C$ it holds that
$\dist(c_{1},c_{2})\ge\delta$.

Given a code $C \subseteq \Sigma^n$, we will occasionally abuse notation and think of $c \in C$ as a map $c: \mathcal{D} \to \Sigma$, where $\mathcal{D}$ is some domain of size $n$.  With this notation, the map $c:\mathcal{D} \to \Sigma$ corresponds to the vector $(c(x))_{x \in \mathcal{D}} \in \Sigma^n$.

For a code $C \subseteq \Sigma^n$ of relative
distance $\delta$, a given parameter $\alpha<\delta/2$, and a string
$w\in\Sigma^{n}$, the \textsf{problem of decoding from $\alpha$~fraction
of errors} is the task of finding the unique $c\in C$ (if any) which
satisfies $\dist(c,w)\leq\alpha$. 

\subsection{List-decodable and list-recoverable codes}

List decoding is a paradigm that allows one to correct more than a $\delta/2$ fraction of errors by returning a small list of close-by codewords. 
More formally, for $\alpha \in [0,1]$ and an integer $L$ we say that a code $C \subseteq \Sigma^n$ is \textsf{$(\alpha,L)$-list-decodable} if for any $w \in \Sigma^n$ there are at most $L$ different codewords $c\in C$ which satisfy that $\dist(c,w)\leq \alpha$. 

\textsf{List recovery} is a more general notion
where one is given as input a small list of candidate symbols for each of the coordinates and is required to output a list of codewords that are consistent with many of the input lists. 
Formally we say that a code $C\subseteq \Sigma^n$ is \textsf{$(\alpha,\ell,L)$-list-recoverable} if for any $S \in {\Sigma \choose \ell}^{n}$ there are at most $L$ different codewords $c\in C$ which satisfy that $\dist(c,S)\leq \alpha$. Note that list decoding corresponds to the special case of $\ell=1$. 

\subsection{Locally correctable and locally list-recoverable codes}

\paragraph{Locally correctable codes.}

Intuitively, a code is said to be \textsf{locally correctable}~\cite{BFLS91,STV01,KT00}
if, given a codeword $c\in C$ that has been corrupted by some errors,
it is possible to decode any coordinate of $c$ by reading only a
small part of the corrupted version of $c$. Formally, it is defined
as follows. 
\begin{definition}[Locally correctable code (LCC)]
We say that a code $C\subseteq\Sigma^{n}$ is $(t,\alpha)$-\textsf{locally correctable}
if there
exists a randomized algorithm $A$ that satisfies the following requirements: 
\begin{itemize}
\item \textbf{Input:} $A$ takes as input a coordinate $i\in\left[n\right]$
and also gets oracle access to a string $ w\in\Sigma^{n}$ that is
$\alpha$-close to a codeword $c\in C$. 
\item \textbf{Query complexity:} $A$ makes at most $t$ queries to the
oracle $w$. 
\item \textbf{Output:} $A$ outputs $c_{i}$ with probability at least $\frac{2}{3}$. 
\end{itemize}
\end{definition}
\begin{remark}
By definition it holds that $\alpha<\dist(C)/2$. The above success
probability of~$\frac{2}{3}$ can be amplified using sequential repetition,
at the cost of increasing the query complexity. Specifically, amplifying
the success probability to $1-e^{-t}$ requires increasing the query
complexity by a multiplicative factor of $O(t)$. 
\end{remark}

\paragraph{Locally list-recoverable codes.}

The following definition generalizes the notion of locally correctable codes to the setting of list decoding / recovery.
In this setting the algorithm $A$ is required to find all the nearby codewords in an implicit sense.

\begin{definition}[Locally list-recoverable code]\label{defn:local-list-recover}
We say that a code $C\subseteq\Sigma^{n}$ is $(t,\alpha,\ell,L)$-\textsf{locally list-recoverable}
if there exists a randomized algorithm $A$ that satisfies the following requirements: 
\begin{itemize}
\item \textbf{Input:} $A$ gets oracle access to a string $S \in {\Sigma \choose \ell}^{n}$.
\item \textbf{Query complexity:} $A$ makes at most $t$ queries to the oracle $S$.
\item \textbf{Output:} $A$ outputs $L$ randomized algorithms $A_1,\ldots,A_L$, where each $A_j$ takes as input a coordinate $i\in\left[n\right]$, makes at most $t$ queries to the oracle $S$, and outputs a symbol in $\Sigma$. 
\item \textbf{Correctness:} For every codeword $c\in C$ for which $\dist(c,S)\leq \alpha$, with probability at least $\frac{2}{3}$ over the randomness of $A$ the following event happens:
there exists some $j\in [L]$ such that for all $i \in [n]$,  $$ \Pr[A_j(i) = c_i] \geq \frac{2}{3},$$ where the probability is over the internal randomness of $A_j$.
\end{itemize}
\end{definition}
We say that $A$ has running time $T$  if $A$ outputs the description of the algorithms $A_1,\ldots,A_L$ in time at most $T$ and each $A_j$ has running time at most $T$. 
We say that a code is \textsf{$(t,\alpha,L)$-locally list-decodable} if it is $(t,\alpha,1,L)$-locally list-recoverable.

\subsection{Polynomials and derivatives}

Let $\F_q[X]$ be the space of univariate polynomials over $\F_q$.
We will often be working with linear and affine subspaces
of  $\F_q[X]$. We will denote linear subspaces of $\F_q[X]$ by the letters
$U, V, W$, and  affine subspaces of $\F_q[X]$ as 
$v_0 + V$, where $v_0 \in \F_q[X]$ and $V$ is a linear subspace.

For polynomials $P_1,\ldots,P_s \in \F_q[X]$, we define their \textsf{Wronskian}, $W(P_1,\ldots,P_s)$, by
\[ W(P_1,\ldots,P_s)(X) = \begin{pmatrix} P_1(X) & \cdots & P_s(X) \\
P_1^{(1)}(X) & \cdots & P_s^{(1)}(X) \\
\vdots & & \vdots \\
P_1^{(s-1)}(X) & \cdots & P_s^{(s-1)}(X) 
\end{pmatrix}. \]

For $i \in \mathbb{N}$, we define the \textsf{$i$'th (Hasse) derivative} $P^{(i)}(X)$ as the coefficient of $Z^i$ in the expansion
\[ P(X + Z) = \sum_i P^{(i)}(X)Z^i. \]

For multivariate polynomials $P \in \F_q[X_1,\ldots,X_m]$, we use the notation $\mathbf{X} = (X_1, \ldots, X_m)$ and $\mathbf{X}^{\mathbf{i}} = \prod_{j} X_j^{i_j}$ where $\mathbf{i} = (i_1,\ldots,i_m) \in \mathbb{Z}^m$.  
For $\mathbf{i} \in \mathbb{Z}^m$, we define the \textsf{$\mathbf{i}$'th (Hasse) derivative} $P^{(\mathbf{i})}(\mathbf{X})$ by
\[ P(\mathbf{X} + \mathbf{Z}) = \sum_{\mathbf{i}} P^{(\mathbf{i})}(\mathbf{X}) \mathbf{Z}^{\mathbf{i}}. \]

\subsection{Some families of polynomial codes}
In this section, we formally define the families of codes we will study: folded Reed-Solomon codes~\cite{GR08_folded_RS}, univariate multiplicity codes~\cite{RT-m-metric,KSY14,GW13}, and multivariate multiplicity codes~\cite{KSY14}.

 \paragraph{Folded Reed-Solomon codes.}
Let $q$ be a prime power, and let $s,d,n$ be nonnegative integers such that $n \leq (q-1)/s$. Let $\gamma \in \F_q$ be a primitive element of $\F_q$, and let $a_1,a_2,\ldots,a_n$ be distinct elements in $\{\gamma^{si} \mid 0 \leq i \leq (q-1)/s -1 \}$. Let $\Domain = \{a_1, \ldots, a_n\}$. 

For a polynomial $P(X) \in \F_q[X]$ and $a \in \F_q$, let $P^{[s]}(a) \in \F_q^{s}$ 
denote the vector:
$$ P^{[s]}(a) = \left[  \begin{matrix} P(a) \\  P(\gamma a) \\ \vdots \\ P(\gamma^{s-1} a)  \end{matrix}  \right].$$

 The \textsf{folded Reed-Solomon code $\FRS_{q,s}(n,d)$} is a code over alphabet $\F_q^s$.
To every polynomial $P(X) \in \F_q[X]$ of degree at most $d$, there corresponds a codeword
$c$:
$$ c: \Domain \to \F_q^s,$$
where for each $a \in \Domain$:
$$ c(a) = P^{[s]}(a).$$
Explicitly,
\begin{align*}
P(x) &\mapsto \left(  P^{[s]}(a_1), P^{[s]}(a_2), \ldots, P^{[s]}(a_n) \right) \\
&=\left(
 \left[ \begin{matrix} P(a_1) \\  P(\gamma a_1) \\ \vdots \\ P(\gamma^{s-1} a_1)  \end{matrix} \right],
 \left[ \begin{matrix} P(a_2) \\  P(\gamma a_2) \\ \vdots \\ P(\gamma^{s-1} a_2)  \end{matrix} \right], \ldots,
 \left[ \begin{matrix} P(a_n) \\  P(\gamma a_n) \\ \vdots \\ P(\gamma^{s-1} a_n)  \end{matrix} \right]
 \right).
\end{align*}
We denote the codeword of $\FRS_{q,s}(n,d)$ corresponding to the polynomial $P(X)$
by $\FRSEnc_s(P)$ (when the parameters $q,n$ are clear from the context).

Note that \textsf{Reed-Solomon codes} correspond to the special case of $s=1$. 
The following claim summarizes the basic properties of folded Reed-Solomon codes. 
 \begin{claim}[\cite{GR08_folded_RS}]
 The folded Reed-Solomon code $\FRS_{q,s}(n,d)$ is an $\F_q$-linear code over alphabet $\F_q^s$ of block length $n$, rate $(d+1)/(sn)$, and relative distance at least $1-d/(sn)$. 
 \end{claim}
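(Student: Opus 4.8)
The plan is to verify the three claimed parameters of $\FRS_{q,s}(n,d)$ in turn: $\F_q$-linearity, rate $(d+1)/(sn)$, and relative distance at least $1 - d/(sn)$. All three follow directly from the definition of the encoding map $P \mapsto \FRSEnc_s(P)$ and basic facts about univariate polynomials, so this is a routine bookkeeping argument rather than a deep one.

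\textbf{Linearity.} First I would observe that the map $\Phi \colon P \mapsto \left(P^{[s]}(a_1), \ldots, P^{[s]}(a_n)\right)$ is $\F_q$-linear as a map from the $\F_q$-vector space $\{P \in \F_q[X] : \deg P \le d\}$ to $(\F_q^s)^n$. Indeed, for each $a \in \Domain$, each coordinate of $P^{[s]}(a)$ is a polynomial evaluation $P(\gamma^j a)$, and evaluation at a fixed point is $\F_q$-linear in $P$ (and additive and scaling-compatible: $(P+Q)(\gamma^j a) = P(\gamma^j a) + Q(\gamma^j a)$ and $(cP)(\gamma^j a) = c \cdot P(\gamma^j a)$). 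Hence the image $\FRS_{q,s}(n,d) = \Phi(\{P : \deg P \le d\})$ is an $\F_q$-linear subspace of $(\F_q^s)^n = \Sigma^n$, and the alphabet $\Sigma = \F_q^s$ is a vector space over $\F_q$, so the code is $\F_q$-linear in the sense defined in the preliminaries.

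\textbf{Rate.} Next I would compute the dimension. The domain $\{P \in \F_q[X] : \deg P \le d\}$ has $\F_q$-dimension $d+1$ (basis $1, X, \ldots, X^d$). To get the rate I must show $\Phi$ is injective, equivalently that $\FRSEnc_s(P) = 0$ forces $P = 0$; this is the kernel computation that also yields the distance bound, so I would fold it into the next step. Granting injectivity, $\dim_{\F_q}(C) = d+1$, and by the definition of rate for $\F_q$-linear codes, $R = \frac{\dim_{\F_q}(C)}{n \cdot \dim_{\F_q}(\Sigma)} = \frac{d+1}{n \cdot s}$, as claimed.

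\textbf{Distance (the main point).} The one substantive step is the minimum-distance bound; by linearity it suffices to show every nonzero codeword has at most $d/s$ coordinates (out of $n$) equal to zero, i.e. relative distance at least $1 - d/(sn)$. Take $P \ne 0$ with $\deg P \le d$. A coordinate $a \in \Domain$ of $\FRSEnc_s(P)$ is the zero vector of $\F_q^s$ iff $P(\gamma^j a) = 0$ for all $j \in \{0, 1, \ldots, s-1\}$, i.e. iff $a, \gamma a, \ldots, \gamma^{s-1} a$ are all roots of $P$. The key observation is that since $\gamma$ is a primitive element and the $a_i$ lie in $\{\gamma^{si} : 0 \le i \le (q-1)/s - 1\}$, the multiset $\bigcup_{a \in \Domain} \{a, \gamma a, \ldots, \gamma^{s-1} a\}$ consists of \emph{distinct} nonzero elements of $\F_q$: the cosets $\{\gamma^{si}, \gamma^{si+1}, \ldots, \gamma^{si + s - 1}\}$ for distinct $i$ are disjoint (they are distinct blocks of consecutive powers of $\gamma$ within one period of length $q-1$). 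So if $P$ vanished on the zero-coordinates of more than $d/s$ values of $a \in \Domain$, then $P$ would have more than $(d/s) \cdot s = d$ distinct roots in $\F_q$, contradicting $\deg P \le d$. Hence a nonzero codeword has fewer than $\lceil$ (more than $d/s$) $\rceil$ zero coordinates — precisely, at most $d/s$ — so its relative Hamming weight is at least $1 - d/(sn)$. This argument simultaneously shows $\Phi$ is injective (a nonzero $P$ of degree $\le d < sn$ cannot give the all-zero codeword, since that would need $sn > d$ roots), completing the rate computation as well.

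The only place requiring care is the disjointness claim about the cosets $\{\gamma^{si}a' : \ldots\}$, which relies on $ns \le q-1$ so that these windows of consecutive powers of the primitive element $\gamma$ do not wrap around and overlap; once that is nailed down, the root-counting is immediate from $\deg P \le d$. I do not anticipate any genuine obstacle — this is the standard Reed-Solomon distance argument adapted to the folded setting.
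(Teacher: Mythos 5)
Your proof is correct, and it is the standard argument; the paper itself does not supply a proof here but simply cites~\cite{GR08_folded_RS}, so there is nothing in the paper to compare against. Two small points are worth flagging for polish. First, your disjointness argument for the cosets $\{\gamma^{si},\ldots,\gamma^{si+s-1}\}$ does indeed hinge on $ns \le q-1$ (so the exponent windows stay inside $\{0,\ldots,q-2\}$ and cannot overlap), and this is exactly the hypothesis $n \le (q-1)/s$ from the definition of $\FRS_{q,s}(n,d)$ — it would be cleaner to cite that hypothesis explicitly rather than describe the constraint as something to be ``nailed down.'' Second, the phrase ``fewer than $\lceil$ (more than $d/s$) $\rceil$ zero coordinates --- precisely, at most $d/s$'' is a bit garbled; what you mean is that an integer $k$ satisfying $ks \le d$ satisfies $k \le \lfloor d/s\rfloor \le d/s$, so the number of nonzero coordinates is at least $n - d/s$, giving relative weight at least $1 - d/(sn)$. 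With those cosmetic fixes the argument is complete: linearity is immediate from linearity of evaluation, the distance bound comes from counting at least $ks$ distinct roots of a degree-$\le d$ polynomial, and injectivity (hence the rate $(d+1)/(sn)$) is the $k = n$ case of the same root count under the implicit assumption $d < sn$.
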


\paragraph{Univariate multiplicity codes.}
Let $q$ be a prime power, and let $s,d,n$ be nonnegative integers such that $n\leq q$. 
Let $a_1,a_2,\ldots,a_n$ be distinct elements in $\F_q$. Let $\Domain = \{a_1, \ldots, a_n\}$.

For a polynomial $P(X) \in \F_q[X]$, let $P^{(<s)}(x) \in \F_q^{s}$ 
denote the vector:
$$ P^{(<s)}(a) = \left[  \begin{matrix} P(a) \\  P^{(1)}(a) \\ \vdots \\ P^{(s-1)}(a)  \end{matrix}  \right].$$

The \textsf{univariate multiplicity code $\MULT_{q,s}^{(1)}(n,d)$} is a code over alphabet $\F_q^s$.
To every polynomial $P(X) \in \F_q[X]$ of degree at most $d$, there corresponds a codeword
$c$:
$$ c: \Domain \to \F_q^s,$$
where for each $a \in \Domain$:
$$ c(a) = P^{(<s)}(a).$$
Explicitly,  
\begin{align*}
P(x) &\mapsto \left(P^{(<s)}(a_1), P^{(<s)}(a_2), \ldots, P^{(<s)}(a_n)\right) \\
&= \left(
 \left[ \begin{matrix} P(a_1) \\  P^{(1)}(a_1) \\ \vdots \\ P^{(s-1)}(a_1)  \end{matrix} \right],
 \left[ \begin{matrix} P(a_2) \\  P^{(1)}(a_2) \\ \vdots \\ P^{(s-1)}(a_2)  \end{matrix} \right], \ldots,
 \left[ \begin{matrix} P(a_n) \\  P^{(1)}(a_n) \\ \vdots \\ P^{(s-1)}(a_n) \end{matrix} \right]
 \right).
\end{align*}
We denote the codeword of $\MULT^{(1)}_{q,s}(n,d)$ corresponding to the polynomial $P(X)$
by $\MultEnc_s(P)$ (when the parameters $q,n$ are clear from the context).

Once again, Reed-Solomon codes correspond to the special case of $s=1$. 

 \begin{claim}[\cite{KSY14}, Lemma 9]
 The univariate multiplicity code $\MULT_{q,s}^{(1)}(n,d)$ is an $\F_q$-linear code over alphabet $\F_q^s$ of block length $n$, rate $(d+1)/(sn)$, and relative distance at least $1-d/(sn)$. 
 \end{claim}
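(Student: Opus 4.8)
The plan is to establish the three claimed properties — $\F_q$-linearity, rate $(d+1)/(sn)$, and relative distance at least $1 - d/(sn)$ — each by a short direct argument, mirroring the structure of the analogous claim for folded Reed-Solomon codes. First I would set up the encoding map $\MultEnc_s \colon \{P \in \F_q[X] : \deg P \le d\} \to (\F_q^s)^n$ and observe that it is $\F_q$-linear: since the Hasse derivative operators $P \mapsto P^{(i)}$ are each $\F_q$-linear maps on $\F_q[X]$ (this is immediate from the defining expansion $P(X+Z) = \sum_i P^{(i)}(X) Z^i$ together with uniqueness of coefficients), each coordinate $c(a) = P^{(<s)}(a)$ depends $\F_q$-linearly on $P$, and hence so does the whole codeword. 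The message space $\{\deg P \le d\}$ has $\F_q$-dimension $d+1$, so once we know the encoding is injective (which follows from the distance bound below), the code $C = \MULT^{(1)}_{q,s}(n,d)$ is an $\F_q$-subspace of $(\F_q^s)^n$ of dimension $d+1$, giving rate $\frac{\dim_{\F_q}(C)}{n \cdot \dim_{\F_q}(\F_q^s)} = \frac{d+1}{sn}$.

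The heart of the matter, and the main obstacle, is the relative distance bound. By $\F_q$-linearity it suffices to show that every nonzero codeword has relative weight at least $1 - d/(sn)$, i.e.\ that for any nonzero $P$ of degree at most $d$, the number of $a \in \Domain$ with $P^{(<s)}(a) = 0$ is at most $d/s$. The key fact is that $P^{(<s)}(a) = 0$ means $P(a) = P^{(1)}(a) = \dots = P^{(s-1)}(a) = 0$, and over any field this is exactly the statement that $(X-a)^s$ divides $P$ — here one must be slightly careful and invoke the standard property of Hasse derivatives (valid in all characteristics, unlike ordinary derivatives): $P$ vanishes to order $\ge s$ at $a$ if and only if $(X - a)^s \mid P$. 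Summing over the distinct elements $a \in \Domain$ at which $P$ has a zero of multiplicity $\ge s$, the polynomial $\prod_{a} (X-a)^s$ divides $P$, so the number of such $a$ times $s$ is at most $\deg P \le d$. Hence at most $d/s$ coordinates vanish, giving relative distance $\ge 1 - d/(sn)$ as claimed. This also shows $\MultEnc_s$ is injective whenever $d < sn$ (so that a nonzero message cannot map to the zero codeword), which is needed to pin down the dimension and hence the rate.

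I would present this as: (i) a one-line statement that Hasse derivatives are $\F_q$-linear, yielding linearity of the code; (ii) the multiplicity-divisibility lemma for Hasse derivatives, cited from \cite{KSY14} or proved in two lines; (iii) the weight computation via $\prod_a (X-a)^s \mid P$; and (iv) the conclusion on injectivity, dimension, and rate. The only genuinely delicate point is to make sure the characteristic-independent version of "zero of multiplicity $s$ $\iff$ divisible by $(X-a)^s$" is used (the whole point of Hasse derivatives), rather than the ordinary-derivative version which would fail when $\chr(\F_q) \le s$; everything else is bookkeeping. Since this is stated as a direct quotation of \cite[Lemma 9]{KSY14}, in the paper itself the "proof" is simply the citation, but the argument above is the content behind it.
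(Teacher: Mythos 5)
Your argument is correct and is exactly the standard proof from \cite{KSY14}; the paper itself merely cites that lemma without reproducing the argument, so there is no divergence to note. The one thing worth keeping explicit in a final write-up is the implicit hypothesis $d < sn$ (so that the divisibility bound $\prod_a (X-a)^s \mid P$ forces injectivity of $\MultEnc_s$ and the rate formula $(d+1)/(sn)$ is meaningful), which you do flag.
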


Of particular importance is the setting where $q = n$ and $\Domain$ equals
the whole field $\F_q$. We refer to this code as the {\em whole-field univariate multiplcity code}, and denote it by $\MULT_{q,s}^{(1)}(d)$. 
This will be relevant to multivariate multiplicity codes, which we define next.

\paragraph{Multivariate multiplicity codes.}
Multivariate multiplicity codes are a generalization of whole-field univariate multiplicity codes to the multivariate setting.

Let $q$ be a prime power, and let $s, d, m$ be nonnegative integers.
Let $U_{m,s}$ denote the set $\{ \vi \in \N^m \mid \wt(\vi) < s \}$.
Note that $|U_{m,s}| = {s + m - 1 \choose m}$.
Let $\Sigma_{m,s} = \F_q^{U_{m,s}}$.

For a polynomial $P(X_1, \ldots, X_m) \in \F_q[X_1, \ldots, X_m]$,
and a point $\ba \in \F_q^m$, define $P^{(<s)}(\ba) \in \Sigma_{m,s}$
by:
$$ P^{(<s)}(\ba) = ( P^{(\ii)}(\ba) )_{\ii \in U_{m,s}}.$$

The \textsf{multiplicity code $\MULT_{q,s}^{(m)}(d)$} is a code over alphabet $ \Sigma_{m,s}$.
To every polynomial $P(X_1,\ldots,x_m) \in \F_q[X_1,\ldots,X_m]$ of (total) degree at most $d$,
there corresponds a codeword  as  
$$ c: \F_q^m \to \Sigma_{m,s},$$
where for each $\ba \in \F_q^m$,
$$c(\ba) = P^{(<s)}(\ba).$$

Note that \textsf{Reed-Muller codes} correspond to the special case of $s=1$.
 \begin{claim}[\cite{KSY14}, Lemma 9]\label{claim:multparams}
 The multivariate multiplicity code $\MULT_{q,s}^{(m)}(d)$ is an $\F_q$-linear code over alphabet $\F_q^{m+s -1 \choose m}$ of block length $q^m$, rate at least $(1- {m^2}/{s})(d/(sq))^m$, and relative distance at least $1-d/(sq)$. 
 \end{claim}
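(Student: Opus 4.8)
The plan is to verify each of the three claimed properties — $\F_q$-linearity, rate, and relative distance — by unwinding the definitions and invoking a multivariate version of the Schwartz–Zippel lemma for Hasse derivatives. Linearity is immediate: the map $P \mapsto (P^{(<s)}(\ba))_{\ba \in \F_q^m}$ is $\F_q$-linear because each Hasse derivative $P \mapsto P^{(\ii)}$ is an $\F_q$-linear operator on $\F_q[X_1,\ldots,X_m]$ and evaluation at a point $\ba$ is linear; hence the code is the image of an $\F_q$-linear subspace (the degree-$\le d$ polynomials) under a linear map, so it is an $\F_q$-linear subspace of $\Sigma_{m,s}^{q^m}$. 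The alphabet is $\Sigma_{m,s} = \F_q^{U_{m,s}}$ with $|U_{m,s}| = \binom{s+m-1}{m}$ by the standard stars-and-bars count, and the block length is $|\F_q^m| = q^m$, both as claimed.

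For the rate, I would first argue that the encoding map is injective, so that $\dim_{\F_q}(C)$ equals the dimension of the space of polynomials of total degree $\le d$; injectivity follows from the distance bound (a nonzero codeword cannot be the all-zero vector once we show the distance is positive, which needs $d < sq$). The dimension of the degree-$\le d$ part of $\F_q[X_1,\ldots,X_m]$ is $\binom{d+m}{m}$. Then
\[
R = \frac{\dim_{\F_q}(C)}{q^m \cdot \binom{s+m-1}{m}} = \frac{\binom{d+m}{m}}{q^m \binom{s+m-1}{m}},
\]
and the remaining task is the elementary estimate $\binom{d+m}{m} \ge d^m/m!$ and $\binom{s+m-1}{m} \le (s+m-1)^m/m! \le s^m \cdot (1 + (m-1)/s)^m$, from which one extracts the stated lower bound $R \ge (1 - m^2/s)(d/(sq))^m$ after bounding $(1+(m-1)/s)^{-m} \ge 1 - m(m-1)/s \ge 1 - m^2/s$. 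This is a routine binomial manipulation; I would just present the inequalities.

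For the relative distance, fix two distinct codewords corresponding to polynomials $P \ne P'$ of total degree $\le d$, and let $R = P - P' \ne 0$, also of total degree $\le d$. The codewords agree at a coordinate $\ba$ exactly when $R^{(\ii)}(\ba) = 0$ for all $\ii \in U_{m,s}$, i.e., when $R$ vanishes at $\ba$ with multiplicity $\ge s$. The key input is the multiplicity Schwartz–Zippel lemma: a nonzero polynomial of total degree $\le d$ over $\F_q$ can vanish with multiplicity $\ge s$ at no more than $\frac{d}{s} q^{m-1} = \frac{d}{sq} q^m$ points of $\F_q^m$ (this is the statement underlying \cite{KSY14}, and follows by restricting $R$ to a generic line, reducing to the univariate multiplicity bound $\mult(R|_\ell, \cdot) \le \deg \le d$ summed over $s$, then integrating over the $q^{m-1}$ lines in a fixed direction). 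Hence the codewords agree on at most a $d/(sq)$ fraction of coordinates, so they differ on at least a $1 - d/(sq)$ fraction, giving $\dist(C) \ge 1 - d/(sq)$.

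The main obstacle is really just packaging the multiplicity Schwartz–Zippel bound correctly — making sure the notion of "vanishing with multiplicity $\ge s$" (all Hasse derivatives of order $< s$ vanish) matches the combinatorial count, and handling the base case carefully when $m = 1$ (where it reduces to: a nonzero univariate polynomial of degree $\le d$ has $\sum_a \mult(R,a) \le d$, so at most $d/s$ points of multiplicity $\ge s$). Everything else — linearity, the alphabet and length counts, and the rate estimate — is bookkeeping. Since the claim is attributed to \cite{KSY14}, I would keep the proof brief, citing the multiplicity Schwartz–Zippel lemma and filling in only the binomial estimate for the rate.
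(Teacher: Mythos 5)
Your proposal is correct, and since the paper simply cites \cite{KSY14}, Lemma 9, for this claim rather than proving it, there is nothing in the paper itself to diverge from; your reconstruction matches the argument in the cited source (multiplicity Schwartz--Zippel for the distance, injectivity of the encoding for the dimension count, and the binomial estimate for the rate). The only small things I would tighten: make explicit that the claim is vacuous (distance $\le 0$) unless $d < sq$, so injectivity and hence the dimension count hold in the regime where the statement has content; and be a little more careful with the sketch of the multiplicity Schwartz--Zippel lemma, since the "restrict to lines" heuristic needs some care (the multiplicity of $R$ at a point on a line can exceed the multiplicity of the restriction $R|_\ell$), whereas the clean proof is by induction on $m$ via the Hasse-derivative chain rule as in Dvir--Kopparty--Saraf--Sudan. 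These are expository points, not gaps.
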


\section{List recovering folded Reed-Solomon codes with constant output list size}\label{sec:FRS}

Our first main result shows that folded Reed-Solomon codes are list-recoverable (and in particular, list-decodable) up to capacity with constant output list size, independent of $n$.

\begin{theorem}[List recovering FRS with constant output list size]\label{thm:frs-list-rec-const-list}
Let $q$ be a prime power, and let $s,d,n$ be nonnegative integers such that $n \leq (q-1)/s$. 
Let $\epsilon >0$ and $\ell \in \N$ be such that $16 \ell /\epsilon^2 \leq s$. Then the folded Reed-Solomon code $\FRS_{q,s}(n,d)$ is $(\alpha,\ell,L)$-list-recoverable for $\alpha = 1- d/(sn) -\epsilon$ and $L =  \left( \frac {\ell} {\epsilon} \right)^{O\left(\frac{1}{\epsilon} \log \frac{\ell}{\epsilon}\right) }$.

Moreover, there is a randomized algorithm that list recovers $\FRS_{q,s}(n,d)$ with the above parameters in time $\poly(\log q, s,d,n,(\ell/\epsilon)^{ \log(\ell/\epsilon)/\epsilon})$. 
\end{theorem}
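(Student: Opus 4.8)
The proof follows the outline sketched in the techniques section. Let $C = \FRS_{q,s}(n,d)$, which by the claim of Guruswami--Rudra has relative distance $\delta \geq 1 - d/(sn)$, so $\alpha = \delta' - \epsilon$ where $\delta' = 1 - d/(sn) \leq \delta$. The first ingredient is the linear-algebraic list decoder of~\cite{GW13}: given received lists $S \in \binom{\Sigma}{\ell}^n$, in time $\poly(q)$ it produces an $\F_q$-affine subspace $v_0 + V \subseteq \F_q[X]$ (identified with a subspace of codewords via $\FRSEnc_s$) of dimension $k = O(\ell/\epsilon)$ such that every codeword $c$ with $\dist(c,S) \leq \alpha$ lies in $\FRSEnc_s(v_0 + V)$. (One should record the precise dimension bound that comes out of~\cite{GW13} with the folding parameter $s \geq 16\ell/\epsilon^2$; this is where the hypothesis on $s$ is consumed.) So it suffices to bound, and algorithmically list, the set $\calL$ of $c \in \FRSEnc_s(v_0+V)$ with $\dist(c,S) \leq \alpha$.

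The second ingredient is the pruning algorithm $\PRUNE$. Fix a parameter $\tau = \Theta\!\left(\frac{1}{\delta'}\log k\right) = O\!\left(\frac{1}{\epsilon}\log(\ell/\epsilon)\right)$. The algorithm picks coordinates $i_1,\dots,i_\tau \in [n]$ independently and uniformly, iterates over all $\ell^\tau$ tuples $(y_1,\dots,y_\tau) \in \prod_j S_{i_j}$, and for each tuple checks whether there is a \emph{unique} $w \in \FRSEnc_s(v_0+V)$ with $w_{i_j} = y_j$ for all $j$ (a linear-algebra computation over $\F_q$); if so it outputs that $w$. This runs in time $\poly(q,s,d,n) \cdot \ell^\tau = \poly(\log q, s, d, n, (\ell/\epsilon)^{\log(\ell/\epsilon)/\epsilon})$ and outputs at most $\ell^\tau$ codewords. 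The crux is to show that each fixed $c \in \calL$ is output with probability at least some $p_0$; this yields $|\calL| \leq \ell^\tau / p_0$, giving the claimed bound $L = (\ell/\epsilon)^{O(\frac1\epsilon \log(\ell/\epsilon))}$, and repeating $\PRUNE$ enough times boosts the success probability to complete the algorithm.

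For the probability analysis, fix $c \in \calL$. Since $\dist(c,S) \leq \alpha = \delta' - \epsilon$, the probability over the random $i_j$ that $c_{i_j} \in S_{i_j}$ for all $j$ is at least $p' := (1-\delta'+\epsilon)^\tau$. Condition on this good event: then $\PRUNE$ reaches the iteration $(y_1,\dots,y_\tau) = (c_{i_1},\dots,c_{i_\tau})$, and $c$ is one valid $w$; $c$ fails to be \emph{the} unique $w$ only if some nonzero $u = c - c' \in \FRSEnc_s(V)$ vanishes on all of $i_1,\dots,i_\tau$. To bound this, I would use the strengthened dimension-decay lemma for subspaces of folded Reed-Solomon codes (the ``subspace design'' statement implicit in~\cite{GK14}), which says that for an $\F_q$-subspace $W$ of the folded RS code, $\E_{i\in[n]}[\dim(W\cap H_i)] \leq (1-\delta)\dim(W)$ where $H_i = \{v : v_i = 0\}$. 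Iterating $\tau$ times starting from $W = \FRSEnc_s(V)$ of dimension $k$ gives that the expected dimension of the subspace of $V$-codewords vanishing on $i_1,\dots,i_\tau$ is at most $(1-\delta)^\tau k$, hence by Markov the probability it is nonzero is at most $p'' := (1-\delta)^\tau k \leq k e^{-\delta \tau}$. Choosing $\tau = \Theta(\frac{1}{\delta'}\log k)$ with a large enough constant makes $p'' \leq p'/2$, so by a union bound $\PRUNE$ outputs $c$ with probability at least $p' - p'' \geq p'/2 =: p_0$. I expect the main obstacle to be bookkeeping: one must (i) verify that the dimension bound $k$ coming out of~\cite{GW13} under the hypothesis $s \geq 16\ell/\epsilon^2$ is indeed $O(\ell/\epsilon)$ (rather than $O(\ell/\epsilon^2)$, which would worsen $\tau$ and hence $L$), and (ii) apply the~\cite{GK14} dimension-decay lemma, which was stated there for a different purpose, to the relevant subspaces of $\FRS_{q,s}(n,d)$ with the distance parameter $\delta$ in place — reconciling the two sources' conventions is the delicate part, and is exactly what will be carried out carefully in the full proof below.
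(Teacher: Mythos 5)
Your proposal follows the paper's proof almost exactly: find the low-dimensional affine subspace $v_0+V$ via the linear-algebraic decoder of \cite{GW13} (this is Theorem~\ref{thm:output-list-low-dim}), run the random-restriction pruning algorithm $\ALGPRUNE$, and bound the failure probability of pruning via the subspace-design theorem of \cite{GK14} (Theorem~\ref{thm:subspace-design}), iterated $\tau$ times and combined with Markov.

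One concrete slip: your choice $\tau = \Theta\!\left(\frac{1}{\delta'}\log k\right)$ is not large enough in general, and the parenthetical $O\!\left(\frac{1}{\epsilon}\log(\ell/\epsilon)\right)$ does not follow from it as a tight estimate. The condition $p'' \le p'/2$ unwinds to $2k \le \left(1 + \frac{\epsilon}{1-\delta'}\right)^{\tau}$, which forces $\tau \gtrsim \frac{(1-\delta')\log k}{\epsilon}$; this is $\Theta\!\left(\frac{1}{\epsilon}\log k\right)$ in the regime of interest, not $\Theta\!\left(\frac{1}{\delta'}\log k\right)$. (For instance, at $\delta'=1/2$ and small $\epsilon$, your $\tau\approx 2\log k$ is off by a factor of roughly $1/(4\epsilon)$.) The paper takes $\tau = O\!\left(\frac{1}{\epsilon}\log(\ell/\epsilon)\right)$ directly.

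Also worth being precise about the dimension-decay lemma you invoke: Theorem 14 of \cite{GK14} gives $\E_{i\in[n]}\bigl[\dim(W\cap H_i)\bigr] \le \frac{d}{(s-t+1)n}\cdot t$, not $(1-\delta)\cdot t$. These agree only up to a factor $\frac{s}{s-t+1}$, and controlling that factor is exactly what the hypothesis $s \ge 16\ell/\epsilon^2$ buys: with $r = 4\ell/\epsilon$ one has $r \le \epsilon s/4$, so the factor is at most $\frac{1}{1-\epsilon/4}$, and the $(1+\epsilon/4)/(1-\epsilon/4)$ slop is what $\tau$ must defeat. You flagged this reconciliation as the delicate part, and indeed it is where the constants in $r$ and $\tau$ are determined.
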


In particular, the $\ell=1$ case yields the following statement about list-decoding. 

\begin{corollary}[List decoding FRS with constant output list size]\label{cor:frs-list-dec-const-list}
Let $q$ be a prime power, and let $s,d,n$ be nonnegative integers such that $n \leq (q-1)/s$. 
Let $\epsilon >0$ be such that $16/\epsilon^2 \leq s$. Then the folded Reed-Solomon code $\FRS_{q,s}(n,d)$ is $(\alpha,L)$-list decodable
for  $\alpha = 1- d/(sn) -\epsilon$ and $L =  \left(\frac 1 \epsilon\right)^{O\left(\frac{1}{\epsilon} \log \frac{1}{\epsilon}\right) }$.

Moreover, there is a randomized algorithm that list decodes $\FRS_{q,s}(n,d)$ with the above parameters in time $\poly(\log q, s,d,n,(1/\epsilon)^{ \log(1/\epsilon)/\epsilon})$. 
\end{corollary}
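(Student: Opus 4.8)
The plan is to follow the two-phase strategy sketched in the overview: first use the linear-algebraic list decoder of~\cite{GW13} to confine the list $\calL$ of $(1 - d/(sn) - \epsilon)$-close codewords to a low-dimensional affine subspace $v_0 + V$ of $\FRS_{q,s}(n,d)$, where $\dim(V) = O(\ell/\epsilon)$, computable in time $\poly(\log q, s, d, n)$; and then run the randomized pruning algorithm $\PRUNE$ on $v_0 + V$ together with the received lists $S$ to show that only $L = (\ell/\epsilon)^{O((1/\epsilon)\log(\ell/\epsilon))}$ elements of that subspace can be $\alpha$-close to $S$. Concretely, $\PRUNE$ picks $\tau$ random coordinates $i_1,\ldots,i_\tau \in [n]$, enumerates the $\ell^\tau$ tuples $(y_1,\ldots,y_\tau) \in \prod_j S_{i_j}$, and for each tuple outputs the unique $w \in v_0 + V$ agreeing with it on those coordinates (if such a $w$ is unique). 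Since the running time and output list of $\PRUNE$ are both at most $\ell^\tau \cdot \poly(q,\dim V)$, it suffices to show every fixed $c \in \calL$ is output with probability at least some $p_0 > 0$; then $|\calL| \le \ell^\tau / p_0$, and the decoding algorithm is $\PRUNE$ itself with enough independent repetitions to catch all of $\calL$ with high probability.

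For the probabilistic analysis, I would fix $c \in \calL$ and bound two events. The ``good'' event is that $c_{i_j} \in S_{i_j}$ for every $j$; since $\dist(c,S) \le \alpha = 1 - d/(sn) - \epsilon \le 1 - \delta + \epsilon$ where $\delta = 1 - d/(sn)$ is a lower bound on the relative distance, this happens with probability at least $p' = (1 - \delta + \epsilon)^\tau$ — small but bounded below. On that event $c$ itself is a valid $w$ for the tuple $(c_{i_1},\ldots,c_{i_\tau})$. The ``bad'' event is that some other $c' \in v_0 + V$ also agrees with $c$ on all $\tau$ coordinates, i.e. the nonzero vector $c - c' \in V$ vanishes at $i_1,\ldots,i_\tau$; call the probability of this $p''$. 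If $p'' < p'$ then $c$ is output with probability at least $p' - p'' =: p_0 > 0$, which is exactly what we need. To take $\tau$ as small as $O((1/\epsilon)\log(\dim V)) = O((1/\epsilon)\log(\ell/\epsilon))$ — which keeps $L = \ell^\tau / p_0$ at the claimed bound — I cannot afford the weak decrement from Lemma~\ref{lem:general-subspace-design} (which only gives a $\delta$ dimension drop per coordinate and would force $\tau = \Omega(\dim(V)/\delta)$, blowing up the list size). Instead I would invoke the stronger folded-RS-specific lemma (the reformulated statement from~\cite{GK14}), which says that for a subspace $W$ of $\FRS_{q,s}(n,d)$, $\E_{i \in [n]}[\dim(W \cap H_i)] \le (1 - \delta)\dim(W)$ as long as the folding parameter $s$ is large enough relative to $\dim(W)/\epsilon$ — this is where the hypothesis $16\ell/\epsilon^2 \le s$ enters. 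Iterating this bound $\tau$ times and applying Markov gives $p'' \le (1 - \delta)^\tau \dim(V)$, which is $\ll p'$ once $\tau = \Theta((1/\epsilon)\log(\dim V))$, since $(1-\delta)^\tau / (1 - \delta + \epsilon)^\tau$ decays geometrically.

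The main obstacle is getting the quantitative form of the folded-RS dimension-decrement lemma right and verifying that its hypotheses are met under $16\ell/\epsilon^2 \le s$. The lemma from~\cite{GK14} was stated in the language of subspace designs / ``periodic subspaces'' for a different application, so the work is in translating it into the statement ``fixing a random coordinate to zero multiplies the expected dimension by $(1-\delta)$'' and tracking the precise dependence of the admissible subspace dimension on $s$, $\epsilon$, and $\ell$. Once that lemma is in hand, the rest is bookkeeping: choosing $\tau = \Theta((1/\epsilon)\log(\ell/\epsilon))$ so that $p'' \le \tfrac12 p'$, setting $p_0 = \tfrac12 p' = \tfrac12(1 - \delta + \epsilon)^\tau$, concluding $|\calL| \le \ell^\tau / p_0 = 2(\ell/(1-\delta+\epsilon))^\tau = (\ell/\epsilon)^{O((1/\epsilon)\log(\ell/\epsilon))}$, and repeating $\PRUNE$ $O(L \log L)$ times (or $\poly(1/p_0, \log L)$ times) so that with high probability every codeword in $\calL$ appears in some run's output — giving total running time $\poly(\log q, s, d, n, (\ell/\epsilon)^{\log(\ell/\epsilon)/\epsilon})$ as claimed. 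The $\ell = 1$ specialization to Corollary~\ref{cor:frs-list-dec-const-list} is then immediate.
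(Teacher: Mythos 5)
Your proposal is correct and follows the paper's route exactly: it proves the general list-recovery statement of Theorem~\ref{thm:frs-list-rec-const-list} via the two-step scheme (confine $\calL$ to a low-dimensional affine subspace using the linear-algebraic decoder of~\cite{GW13}, then prune with random coordinates, bounding the failure probability via the~\cite{GK14} coordinate-restriction lemma) and then specializes to $\ell=1$. This is precisely how the paper obtains Corollary~\ref{cor:frs-list-dec-const-list}, which is stated there simply as the $\ell=1$ instance of Theorem~\ref{thm:frs-list-rec-const-list}.
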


The proof of Theorem \ref{thm:frs-list-rec-const-list} consists of two main steps. The first step, from \cite{GW13}, shows that the output list is contained in a low dimensional subspace.
The second step, which relies on results from \cite{GK14}, shows that the output list cannot contain too many codewords from a low dimensional subspace, and therefore is small. The two steps are presented in Sections \ref{subsec:FRS-1} and \ref{subsec:FRS-2}, respectively, followed by the proof of Theorem \ref{thm:frs-list-rec-const-list} in Section \ref{subsec:FRS-main}.

\subsection{Output list is contained in a low dimensional subspace}\label{subsec:FRS-1}

The following theorem from~\cite{GW13}  shows that
the output list is contained in a low dimensional subspace, which can also be found efficiently.

\begin{theorem}[\cite{GW13}, Theorem 7]\label{thm:output-list-low-dim}
Let $q$ be a prime power, and let $s,d,n, \ell, r$ be nonnegative integers such that $n \leq (q-1)/s$ and $r \leq s$. 
Let $S : \Domain \to {\F_q^s \choose \ell}$ be an instance of the list-recovery problem for $\FRS_{q,s}(n,d)$.
Suppose the decoding radius $\alpha$ satisfies:
\begin{equation}\label{eqdecoderadius}
  \alpha \leq 1- \frac{\ell}{r+1} - \frac{r}{r+1} \cdot \frac{s}{s-r+1} \cdot \frac{d} {sn}. 
 \end{equation}
Let
$$\calL = \{ P(X) \in \F_q[X] \mid \deg(P) \leq d \mbox{ and } \dist(\FRSEnc_s(P), S) \leq \alpha \}.$$

There is a (deterministic) algorithm that given $S$, runs in time $\poly(\log q, s,d,n,\ell)$, and computes an affine subspace $v_0 + V \subseteq \F_q[X]$ such that:
\begin{enumerate}
\item $\calL \subseteq V$,
\item $\dim(V) \leq r-1$.
\end{enumerate}
\end{theorem}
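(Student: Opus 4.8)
The plan is to follow the linear-algebraic framework of Guruswami--Wang \cite{GW13}, which already appears as Theorem~\ref{thm:output-list-low-dim} in the excerpt, so strictly speaking I only need to explain how that theorem is proved. The starting point is the standard interpolation-plus-root-finding paradigm for list decoding polynomial codes, but with a carefully chosen \emph{linear-algebraic} interpolation polynomial. First I would set up the interpolation: look for a nonzero polynomial of the special multivariate shape $Q(X, Y_1, \ldots, Y_r) = A_0(X) + A_1(X) Y_1 + A_2(X) Y_2 + \cdots + A_r(X) Y_r$, where $\deg A_0 \le D + d$ and $\deg A_i \le D$ for $i \ge 1$, with $D$ a degree parameter to be chosen. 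The key observation (from the folding structure) is that consecutive coordinates of an FRS codeword are evaluations of $P(X)$ at $a, \gamma a, \ldots, \gamma^{s-1}a$; so for each $a \in \Domain$ and each window of $r$ consecutive shifts inside the block, one imposes the equation $Q(\gamma^j a,\, P(\gamma^j a),\, P(\gamma^{j+1} a), \ldots, P(\gamma^{j+r-1} a)) = 0$. I would impose these constraints not against an unknown $P$ but against the \emph{received} symbols $y_{a,j}$, for each of the $s - r + 1$ valid windows per block and each block $a$.

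Next I would do the parameter count. The number of homogeneous linear constraints is $n \cdot (s - r + 1)$, while the number of free coefficients is $(D + d + 1) + r(D+1)$. Choosing $D$ so that the number of unknowns strictly exceeds the number of constraints guarantees a nonzero solution $Q$; the right choice is roughly $D \approx \frac{n(s-r+1) - d - 1}{r+1}$. Then I would argue the agreement step: if $P$ is a codeword-polynomial of degree $\le d$ that agrees with the received word on more than $\alpha n$ blocks (i.e.\ $\dist(\FRSEnc_s(P), S) \le \alpha$, so at least $(1-\alpha)n$ blocks are good and within each good block at least one coordinate lies in the list --- here one has to be a little careful, the Guruswami--Wang counting uses that for a block that is ``fully good'' all $s$ shifted evaluations match, and one pays for the list size $\ell$ in the $\alpha$ bound), then the univariate polynomial $R(X) := Q(X, P(X), P(\gamma X), \ldots, P(\gamma^{r-1} X))$ has degree at most $D + d$ but at least $(1-\alpha)n(s-r+1)$ roots (counted over the evaluation points and their shifts); when the decoding radius \eqref{eqdecoderadius} holds, the number of roots exceeds the degree, forcing $R \equiv 0$, i.e.\ $A_0(X) + \sum_{i=1}^r A_i(X) P(\gamma^{i-1} X) = 0$.

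The final and most interesting step --- and the one I expect to be the main obstacle --- is extracting the affine subspace from the functional equation $A_0(X) + \sum_{i=1}^{r} A_i(X)\, P(\gamma^{i-1}X) = 0$. Writing $P(X) = \sum_{k=0}^{d} p_k X^k$, one wants to show that the coefficient vector $(p_0, \ldots, p_d)$ is constrained to an affine subspace of dimension $\le r-1$. The trick in \cite{GW13} is to look at the lowest-degree coefficient: since $P(\gamma^{i-1}X) = \sum_k p_k \gamma^{(i-1)k} X^k$, comparing the coefficient of $X^k$ in the identity gives, for each $k$, a relation expressing a new coefficient in terms of earlier ones, \emph{provided} the relevant leading coefficient of $A_1$ (or the appropriate $A_i$) is nonzero. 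One normalizes so that $A_r$, say, has a unit leading term, and then shows the recursion $p_{k}$ is determined by $p_{k-1}, \ldots, p_{k - \deg(\text{stuff})}$ — more precisely that once $p_0, \ldots, p_{r-2}$ are fixed, all remaining $p_k$ are forced, because the map $X^k \mapsto \gamma^{(i-1)k}$ separates the contributions (this is where the primitivity of $\gamma$ and the bound $n \le (q-1)/s$ enter, ensuring the relevant $\gamma$-powers are distinct and the ``Moore-like'' system is nonsingular). This yields $\dim(V) \le r - 1$. The algorithmic claim is then immediate: solving the interpolation linear system and then solving the (triangular) linear recursion for the subspace both take $\poly(\log q, s, d, n, \ell)$ time. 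The subtlety I would flag is matching the precise radius \eqref{eqdecoderadius} with the factor $\frac{s}{s-r+1}$ and the $\frac{\ell}{r+1}$ term — tracking exactly how the list parameter $\ell$ degrades the agreement count across a ``partially good'' block is the bookkeeping that needs the most care.
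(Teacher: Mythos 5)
The paper does not itself prove this theorem: it is imported verbatim from Guruswami--Wang \cite{GW13} (Theorem~7 there), and the only thing the paper adds is a short remark observing that the GW13 proof carries over to arbitrary evaluation points in $\{\gamma^{si}\}$ and to $\ell > 1$. So what you have written is a reconstruction of the GW13 argument, and on that level the high-level structure (interpolate a linear polynomial $Q = A_0 + \sum_i A_i Y_i$, count parameters, argue the restricted polynomial $R(X) = A_0(X) + \sum_i A_i(X)P(\gamma^{i-1}X)$ must vanish identically, and then read off an affine subspace from the resulting linear recurrence) is exactly right.

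Two places where the bookkeeping is off. First, the location of the list-size factor $\ell$: you place $n(s-r+1)$ constraints and remark that $\ell$ enters in the agreement count for ``partially good'' blocks, but in FRS list recovery a block $a$ is all-or-nothing --- either the full tuple $P^{[s]}(a)$ is in $S(a)$ or it is not, since each element of $S(a)$ is an entire $s$-tuple. There is no partial agreement within a block. Instead, the $\ell$ enters at the interpolation step: one must impose the vanishing constraint against \emph{every} candidate tuple in the list, so the constraint count is $n\ell(s-r+1)$, which pushes $D$ up by a factor of $\ell$ and produces exactly the $\frac{\ell}{r+1}$ term in \eqref{eqdecoderadius}. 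Second, in the extraction step it is not that $p_0,\ldots,p_{r-2}$ serve as the free variables with a unit-leading-term normalization on $A_r$. Rather, one isolates the smallest exponent $e$ at which some $A_i$ ($1 \le i \le r$) has a nonzero coefficient; comparing coefficients of $X^{e+k}$ gives, for each $k$, a linear relation in which the coefficient of $p_k$ is $\sum_{i=1}^r A_{i,e}\gamma^{(i-1)k}$, i.e.\ the evaluation of a nonzero polynomial of degree $\le r-1$ at $\gamma^k$. Since the $\gamma^k$ for $0 \le k \le d$ are distinct (this is where $d < q-1$ enters, which $n \le (q-1)/s$ together with $d < sn$ ensures), at most $r-1$ of these coefficients vanish, and \emph{those} $r-1$ indices $k$ --- which need not be $0,\ldots,r-2$ --- are the free coordinates of the affine subspace. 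With those corrections your reconstruction matches the cited argument.
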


\begin{remark}
Theorem 7 of \cite{GW13}  only deals with the case where $a_i= \gamma^{s(i-1)}$ for all $i=1,\ldots ,n$, and $\ell=1$. However, it can be verified that the proof goes through for any choice of distinct $a_1,a_2,\ldots,a_n$ in $\{\gamma^{si} \mid 0 \leq i \leq (q-1)/s -1 \}$, and $\ell \in \N$ (for the latter see discussion at end of Section 2.4 of \cite{GW13}).
\end{remark}

\subsection{Output list cannot contain many codewords from a low dimensional subspace}\label{subsec:FRS-2}

To show that the output list $\calL$ cannot contain too many elements from a low dimensional subspace (and to find $\calL$ in the process), we 
 first give a preliminary randomized algorithm $\ALGPRUNE$ that outputs a constant size list $\calL'$  such that any codeword of $\calL$ appears in 
$\calL'$ with a constant probability $p_0$.
This implies that $|\calL| \leq  |\calL'|/p_0$, proving the first part of Theorem \ref{thm:frs-list-rec-const-list}.
Now that we know that $|\calL|$ is small, our final algorithm simply runs $\ALGPRUNE$ $O(\frac{1}{p_0} \log|\calL|)$ times 
and returns the union of the output lists.
By a union bound, all elements of $\calL$ will appear in the union of the output lists with high probability.
This will complete the proof of the second part of Theorem \ref{thm:frs-list-rec-const-list}. 

We start by describing the algorithm $\ALGPRUNE$ and analyzing it. 
The algorithm  is given as input $S : \Domain \to {\F_q^s \choose \ell}$, an $\F_q$-affine subspace $v_0 + V \subseteq \F_q[X]$ consisting of polynomials of degree at most $d$ and of dimension at most $r$, and a parameter $\tau \in \N$.

\medskip
\putinbox{
\noindent {\bf Algorithm $\ALGPRUNE(S,v_0 + V,\tau)$}
\begin{enumerate}
\item Initialize $\calL' = \emptyset$.
\item Pick $b_1, b_2, \ldots, b_\tau \in \Domain$ independently and uniformly at random.
\item For each choice of $y_1 \in S(b_1), y_2 \in S(b_2), \ldots, y_\tau \in S(b_\tau)$:
\label{lookforP}
\begin{itemize}
\item  If there is exactly one codeword $P(X) \in v_0 + V$ such that $P^{[s]}(b_j) = y_{j}$ for all $j \in [\tau] $, then:
$$ \calL' \leftarrow \calL' \cup \{ P(X) \}.$$
\end{itemize}
\item Output $\calL'$.
\end{enumerate}
}
\begin{lemma}\label{lem:algprune}

The algorithm $\ALGPRUNE$ runs in time $\poly(\log q, s, n, \ell^{\tau})$, and outputs a list $\calL'$ containing at most $\ell^{\tau}$ polynomials, such that any polynomial  $P(X) \in v_0 + V$ 
with $\dist(\FRSEnc_s(P), S)\leq \alpha$
appears in $\calL'$ with probability at least 
$$(1-\alpha)^\tau - r  \left(\frac{d} {(s-r)n}\right)^\tau.$$ 

\end{lemma}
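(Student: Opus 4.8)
The plan is to analyze the three claims separately: running time, output size, and the success probability. The first two are immediate from the structure of the algorithm. For the running time: there are at most $\ell^\tau$ tuples $(y_1,\ldots,y_\tau)$ to iterate over, and for each one, the condition ``there is exactly one $P(X)\in v_0+V$ with $P^{[s]}(b_j)=y_j$ for all $j$'' is a system of $s\tau$ affine-linear equations in the $\le r$ coordinates parametrizing $v_0+V$, which can be solved (and its solution set's dimension computed) by Gaussian elimination in time $\poly(\log q, s, r, \tau)$; picking the $b_j$'s and evaluating $P^{[s]}$ costs $\poly(\log q,s,n)$ per step. Since $r\le s$, the total is $\poly(\log q,s,n,\ell^\tau)$. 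For the output size: $\calL'$ gains at most one polynomial per tuple $(y_1,\ldots,y_\tau)$, and there are at most $\ell^\tau$ tuples, so $|\calL'|\le \ell^\tau$.

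The substance is the probability bound. Fix a polynomial $P(X)\in v_0+V$ with $\dist(\FRSEnc_s(P),S)\le\alpha$. First I would observe that $P$ fails to be added to $\calL'$ only if one of two bad events occurs: (a) some $b_j$ lands on a coordinate where $P^{[s]}(b_j)\notin S(b_j)$, so the tuple $(P^{[s]}(b_1),\ldots,P^{[s]}(b_\tau))$ is never considered; or (b) that tuple \emph{is} considered, but there is another polynomial $P'(X)\in v_0+V$, $P'\ne P$, with $P'^{[s]}(b_j)=P^{[s]}(b_j)$ for all $j\in[\tau]$, so the ``exactly one'' test fails. (If neither (a) nor (b) happens, then in the iteration with $y_j=P^{[s]}(b_j)$, $P$ is the unique codeword matching, and it gets added.) By a union bound, $\Pr[P\notin\calL']\le \Pr[\text{(a)}] + \Pr[\text{(b)}]$.

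For event (a): since each $b_j$ is uniform and independent in $\Domain$ and the fraction of coordinates where $P^{[s]}$ disagrees with $S$ is at most $\alpha$, we have $\Pr[\neg\text{(a)}] = \Pr[\forall j:\ P^{[s]}(b_j)\in S(b_j)] \ge (1-\alpha)^\tau$. For event (b): conditioned on \emph{any} value that makes (a) relevant or not, note $P'\ne P$ with $P'^{[s]}(b_j)=P^{[s]}(b_j)$ for all $j$ means the nonzero polynomial $P-P'\in V$ (using part~(1) of Theorem~\ref{thm:output-list-low-dim}, $\calL\subseteq V$, but here we only need that $V$ is a linear space so $P-P'\in V$) satisfies $(P-P')^{[s]}(b_j)=0$ for all $j$; equivalently $P-P'$ is divisible by $\prod_{k=0}^{s-1}(X-\gamma^k b_j)$ for each $j$. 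Since $P-P'$ has degree at most $d$ and is nonzero, it has at most $d$ roots, so $P-P'$ can vanish (to order $s$, via $P^{[s]}$) on at most $\floor{d/s}$ of the points $b_j$... more precisely, the set of points $a\in\Domain$ on which some fixed nonzero $g\in V$ has $g^{[s]}(a)=0$ has size at most $d/s$, hence fraction at most $d/(sn)$. Taking a union bound over a basis-free argument: I would bound $\Pr[\text{(b)}]$ by summing, over the at most $r$ ``new'' directions, the probability that a nonzero element of $V$ vanishes on all $\tau$ chosen points. The cleanest route is to mimic the Lemma~\ref{lem:general-subspace-design}-style counting: iteratively, after conditioning on $b_1,\ldots,b_{j-1}$, the space $W_j$ of $g\in V$ with $g^{[s]}(b_1)=\cdots=g^{[s]}(b_{j-1})=0$ consists of polynomials of degree $\le d$ each vanishing to order $s$ at $(j-1)$ prescribed... this caps $\dim W_j$; and the probability a random $b_j$ further satisfies $g^{[s]}(b_j)=0$ for some nonzero $g\in W_j$ is at most $(\dim W_j)\cdot \frac{d}{(s-r)n}$ — the $(s-r)$ in the denominator accounting for the worst case that $W_j$ had dimension up to $r$ so the relevant nonzero polynomials have ``at most $d$ roots but might be forced to use $s-r$ of the multiplicity budget elsewhere,'' giving $\le d/(s-r)$ usable zero-locations. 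Multiplying these $\tau$ conditional bounds, and summing over the $\le r$ dimensions, gives $\Pr[\text{(b)}]\le r\left(\frac{d}{(s-r)n}\right)^\tau$. Hence $\Pr[P\in\calL']\ge (1-\alpha)^\tau - r\left(\frac{d}{(s-r)n}\right)^\tau$, as claimed.

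**Main obstacle.** The delicate point is pinning down the exact bound $r\left(\tfrac{d}{(s-r)n}\right)^\tau$ for event (b): getting the right denominator $(s-r)n$ rather than $sn$, and the right leading factor $r$ rather than something like $q^{r}$. This requires carefully combining the degree bound on elements of $V$ (they have degree $\le d$), the order-$s$ vanishing encoded by $P^{[s]}(a)=0$, and the dimension bound $\dim V\le r$; I expect the right tool is a version of the argument behind Lemma~2 of \cite{Saraf-Yekhanin} / Lemma~\ref{lem:general-subspace-design} adapted to the folded setting, tracking how vanishing to order $s$ at a point reduces the dimension of $V$ by roughly a $\frac{d}{(s-r)n}$-fraction in probability. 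Everything else (running time, output size, event (a)) is routine.
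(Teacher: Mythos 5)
Your high-level structure matches the paper's: fix $\hat P \in v_0 + V$ close to $S$, observe that $\hat P$ lands in $\calL'$ unless either (a) some $b_j$ falls where $\hat P$ disagrees with $S$, or (b) the tuple $(\hat P^{[s]}(b_1),\ldots,\hat P^{[s]}(b_\tau))$ is matched by more than one element of $v_0 + V$, and then union-bound. Your treatments of the running time, the bound $|\calL'|\le\ell^\tau$, and $\Pr[\text{(a)}]$ are all correct and routine. The paper phrases (b) as the failure of the event $E_2$ that no nonzero $Q\in V$ has $Q^{[s]}(b_j)=0$ for all $j$, which is slightly cleaner (it does not reference $\hat P$), but is equivalent for the purpose of the union bound.

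The gap is in your bound on $\Pr[\text{(b)}]$, which is the heart of the lemma. You correctly sense that you need a ``multiplicative'' dimension-decrease statement — that conditioning on vanishing at a random coordinate should shrink $\dim$ by a factor of roughly $\frac{d}{(s-r)n}$ — but you point to Lemma~\ref{lem:general-subspace-design} and Lemma 2 of~\cite{Saraf-Yekhanin} as the source, and those only give the \emph{additive} bound $\E_i[\dim(W\cap H_i)]\le t-\delta$, which is nowhere near strong enough. The ingredient you are missing is Theorem~\ref{thm:subspace-design} (Theorem 14 of~\cite{GK14}), which for a degree-$\le d$ subspace $W\subseteq\F_q[X]$ with $\dim W=t\le s$ gives the much sharper $\sum_{i=1}^n \dim(W\cap H_i)\le \frac{d}{s-t+1}\cdot t$, i.e.\ $\E_i[\dim(W\cap H_i)]\le t\cdot\frac{d}{(s-t+1)n}$. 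Your heuristic explanation for why the denominator should be $(s-r)$ (a ``multiplicity budget'' argument) does not actually establish this; the genuine mechanism in~\cite{GK14} is a Wronskian-determinant argument. Moreover, your composition of the per-step bounds is not quite right either: you propose ``multiplying $\tau$ conditional probability bounds and summing over the $\le r$ dimensions,'' but there is no clean way to make that rigorous (nonzero elements of $V$ are not confined to $r$ ``directions,'' and if you were literally multiplying the conditional bounds $\Pr[\text{step $j$ doesn't kill $W_j$}]\le \dim(W_j)\cdot\frac{d}{(s-r)n}$ you would pick up an unwanted factor of $r^\tau$ rather than $r$). The paper instead tracks $t_j:=\dim(V\cap H_{b_1}\cap\cdots\cap H_{b_j})$, iterates the \emph{expectation} bound $\E[t_{j+1}\mid t_j]\le t_j\cdot\frac{d}{(s-r)n}$ from Theorem~\ref{thm:subspace-design} to get $\E[t_\tau]\le r\left(\frac{d}{(s-r)n}\right)^\tau$, and then applies Markov's inequality a single time to conclude $\Pr[t_\tau\ge 1]\le r\left(\frac{d}{(s-r)n}\right)^\tau$. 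That single Markov application at the end is what produces both the prefactor $r$ and the $\tau$-th power cleanly.
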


\medskip
\noindent
\begin{proof}
We clearly have that $|\calL'| \leq \ell^{\tau}$, and that the algorithm has the claimed running time. 
Fix a polynomial $\hat P \in v_0 + V$ such that $\dist(\FRSEnc_s(\hat P) ,S)\leq \alpha$, we shall show below that $\hat P$ belongs to $\calL'$ with probability at least 
$$(1-\alpha)^\tau - r  \left(\frac{d} {(s-r)n}\right)^\tau.$$

Let $E_1$ denote the event that $\hat P^{[s]}(b_j) \in S(b_j)$ for all $j \in [\tau]$. 
Let $E_2$ denote the event that for all nonzero polynomials $Q \in V $
there exists some $j \in [\tau]$ such that $Q^{[s]}(b_j) \neq 0$.
By the assumption that $\dist(\FRSEnc_s(\hat P) ,S)\leq \alpha$, we readily have that 
$$\Pr[E_1] \geq (1-\alpha)^{\tau}.$$
Claim \ref{clm:2} below also shows that 
$$ \Pr[E_2] \geq 1 - r \left( \frac{d} {(s-r)n}\right)^{\tau}.$$
So both $E_1$ and $E_2$ occur with probability at least
$$(1-\alpha)^\tau - r \left(\frac{d} {(s-r)n}\right)^\tau.$$

If $E_2$ occurs, then for every choice of $y_1 \in S(b_1), y_2 \in S(b_2), \ldots, y_\tau \in S(b_2)$,
there can be at most one polynomial $P(X) \in v_0 + V$ such that $P^{[s]}(b_j) = y_j$ for all $j \in [\tau]$
(otherwise, the difference $Q = P_1 - P_2 \in V$ of two such distinct polynomials would
have $Q^{[s]}(b_j) = 0$ for all $j\in [\tau]$, contradicting $E_2$).
If $E_1$ also occurs, then in the iteration of Step~\ref{lookforP} where $y_{j} = \hat P^{[s]}(b_j)$ for each $j \in [\tau]$, the algorithm will
take  $P = \hat P$, and thus $\hat P$ will be included in $\calL'$. This completes the proof of the lemma.
\end{proof}

It remains to prove the following claim.
\begin{claim}\label{clm:2}
$$ \Pr[E_2] \geq 1 - r  \left(\frac{d} {(s-r)n}\right)^{\tau}.$$
\end{claim}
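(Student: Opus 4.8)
The plan is to reformulate $E_2$ via a decreasing chain of polynomial subspaces and to reduce the claim to a one-step ``dimension contraction'' estimate of the type established in~\cite{GK14}. (Throughout I assume $r<s$, which is the relevant regime; otherwise the bound is vacuous.) Set $W_0 = V$ and, for $j=1,\dots,\tau$, let $W_j = \{Q\in W_{j-1} : Q^{[s]}(b_j)=0\}$, so that $V = W_0 \supseteq W_1 \supseteq \cdots \supseteq W_\tau$ and $\dim W_j \le \dim V \le r$ for every $j$. By definition the event $E_2$ fails exactly when there is a nonzero $Q\in V$ with $Q^{[s]}(b_j)=0$ for all $j\in[\tau]$, i.e.\ exactly when $\dim W_\tau\ge 1$. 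Thus by Markov's inequality it is enough to prove $\E[\dim W_\tau] \le r\cdot (d/((s-r)n))^\tau$, where the expectation is over the i.i.d.\ uniform choices $b_1,\dots,b_\tau\in\Domain$.

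The core of the argument is the following one-step bound: for every $\F_q$-subspace $W\subseteq\F_q[X]$ consisting of polynomials of degree at most $d$, with $k:=\dim W\le r$,
$$ \E_{b\in\Domain}\Big[\dim\{Q\in W : Q^{[s]}(b)=0\}\Big] \;\le\; \frac{kd}{(s-r)n}. $$
I would prove this using the folded Wronskian. Fix a basis $P_1,\dots,P_k$ of $W$ and let $G(X)=\det\big(P_j(\gamma^{i-1}X)\big)_{1\le i,j\le k}$; since $P_1,\dots,P_k$ are $\F_q$-linearly independent and $k\le r\le s$, $G$ is a nonzero polynomial (this non-vanishing of the folded Wronskian is the algebraic input we borrow from~\cite{GK14}), and clearly $\deg G\le kd$. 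Now fix $b\in\Domain$ and let $m=\dim\{Q\in W:Q^{[s]}(b)=0\}$. Replacing $P_1,\dots,P_k$ by a basis whose first $m$ elements span this vanishing subspace only multiplies $G$ by a nonzero scalar; and for each $c\in\{0,1,\dots,s-k\}$ the matrix $\big(P_j(\gamma^{c+i-1}X)\big)_{1\le i,j\le k}$, whose determinant equals $G(\gamma^c X)$, has each of its first $m$ columns divisible by $X-b$, because $\gamma^{c+i-1}b$ lies in the window $\{b,\gamma b,\dots,\gamma^{s-1}b\}$ for every $i\in[k]$ (as $0\le c+i-1\le s-1$). Hence $(X-b)^m\mid G(\gamma^c X)$, i.e.\ $G$ vanishes to order at least $m$ at each of the $s-k+1$ points $b,\gamma b,\dots,\gamma^{s-k}b$. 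Because the elements of $\Domain$ are $s$-th powers $\gamma^{si}$, these length-$(s-k+1)$ windows are pairwise disjoint over $b\in\Domain$, so summing the order estimate over all $b$ gives $(s-k+1)\sum_{b\in\Domain}\dim\{Q\in W:Q^{[s]}(b)=0\}\le\deg G\le kd$. Dividing by $(s-k+1)n$ and using $s-k+1>s-r$ (valid since $k\le r$) yields the stated one-step bound.

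Finally I would chain this bound along the filtration generated by $b_1,\dots,b_\tau$: since $W_{j-1}$ depends only on $b_1,\dots,b_{j-1}$ and $b_j$ is independent and uniform on $\Domain$, applying the one-step bound to $W=W_{j-1}$ (whose dimension is $\le r$) gives $\E[\dim W_j\mid b_1,\dots,b_{j-1}]\le \dim(W_{j-1})\cdot\frac{d}{(s-r)n}$; taking expectations and iterating over $j=\tau,\dots,1$ gives $\E[\dim W_\tau]\le \big(d/((s-r)n)\big)^\tau\cdot\dim V\le r\big(d/((s-r)n)\big)^\tau$, and Markov's inequality then gives $\Pr[\dim W_\tau\ge 1]\le r\big(d/((s-r)n)\big)^\tau$, which is exactly $\Pr[E_2]\ge 1-r\big(d/((s-r)n)\big)^\tau$. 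The main obstacle is the one-step bound, and within it the only genuinely nontrivial ingredient is the non-vanishing of the folded Wronskian of $\F_q$-linearly independent polynomials of degree $<\mathrm{ord}(\gamma)=q-1$; the multiplicity bookkeeping (vanishing to order $m$) and the chaining via the tower property and Markov are routine.
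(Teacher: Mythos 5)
Your proof is correct and follows essentially the same route as the paper: you build the same decreasing filtration $V = W_0 \supseteq W_1 \supseteq \cdots \supseteq W_\tau$, chain a one-step dimension-contraction estimate via the tower property, and finish with Markov's inequality. The only difference is that the paper invokes Theorem 14 of~\cite{GK14} as a black box for the one-step bound $\sum_{b\in\Domain}\dim(W\cap H_b)\le \frac{\dim(W)\,d}{s-\dim(W)+1}$, whereas you unpack it: you introduce the folded Wronskian $G(X)=\det\bigl(P_j(\gamma^{i-1}X)\bigr)$, observe that a basis vector of $W$ vanishing on the whole $s$-window at $b$ forces $(X-b)^m\mid G(\gamma^cX)$ for each of the $s-k+1$ shifts $c$, and then compare the total multiplicity to $\deg G\le kd$, using the fact that distinct $b\in\Domain$ produce disjoint windows because $\Domain\subseteq\{\gamma^{si}\}$. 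This reproduces the content of~\cite{GK14}'s Theorem 14 (modulo the nonvanishing of the folded Wronskian, which you still cite); it is not a different approach, just a more self-contained rendering of the same estimate.
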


The proof of the claim relies on the following theorem from~\cite{GK14}. 
\begin{theorem}[\cite{GK14}, Theorem 14]\label{thm:subspace-design}
Let $W \subseteq \F_q[X]$ be a linear subspace of polynomials
of degree at most $d$. Suppose $\dim(W) = t \leq s$. 
Let $a_1,a_2, \ldots,a_n$  be distinct elements in $\{\gamma^{si} \mid 0 \leq i \leq (q-1)/s -1 \}$, and for $i \in [n]$ let
$$H_i=\{P(X) \in \F_q[X] \mid P(\gamma^j a_i)=0 \;\; \forall j \in \{0,1,\ldots,s-1\}\}.$$
 Then
$$ \sum_{i=1}^n  \dim(W\cap H_i) \leq \frac{d}{s-t+1}\cdot t.$$
\end{theorem}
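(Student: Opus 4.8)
The plan is to prove the bound by a \emph{folded Wronskian} argument: attach to a basis of $W$ a single nonzero polynomial $\mathcal{W}_\gamma(X)$ of degree at most $td$, show that for each $i$ it is divisible by $g_i(X)^{d_i}$ for a polynomial $g_i$ of degree $s-t+1$ whose roots lie in the $i$-th block, note that the $g_i$ are pairwise coprime, and compare degrees. Throughout I assume $d\le q-2$ (equivalently $d<q-1$), which is automatic in our applications since $d<sn\le q-1$; this hypothesis is what forces the folded Wronskian to be nonzero.

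\emph{Setting up $\mathcal{W}_\gamma$.} Choose a basis $P_1,\dots,P_t$ of $W$ with pairwise distinct degrees $e_1<e_2<\dots<e_t\le d$ (such a basis exists by the usual degree reduction applied to any spanning set), and let $c_j\neq 0$ be the leading coefficient of $P_j$. Define
\[
\mathcal{W}_\gamma(X)\ :=\ \det\bigl(P_j(\gamma^{\ell-1}X)\bigr)_{1\le \ell,\,j\le t}\ \in\ \F_q[X].
\]
Every term $\prod_{\ell=1}^t P_{\pi(\ell)}(\gamma^{\ell-1}X)$ of the Leibniz expansion has degree exactly $\sum_j e_j$, so the coefficient of $X^{\sum_j e_j}$ in $\mathcal{W}_\gamma$ equals $\bigl(\prod_j c_j\bigr)\cdot\det\bigl((\gamma^{e_j})^{\ell-1}\bigr)_{1\le\ell,\,j\le t}$; the latter is the Vandermonde determinant $\prod_{1\le j<j'\le t}(\gamma^{e_{j'}}-\gamma^{e_j})$, which is nonzero since the $e_j$ are distinct and at most $d\le q-2<\mathrm{ord}(\gamma)$. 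Hence $\mathcal{W}_\gamma\not\equiv0$, and $\deg\mathcal{W}_\gamma\le td$ because each entry $P_j(\gamma^{\ell-1}X)$ has $X$-degree at most $d$.

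\emph{The divisibility.} Fix $i\in[n]$ and put $d_i:=\dim(W\cap H_i)$; assume $d_i\ge1$. Replace the basis of $W$ by a basis $Q_1,\dots,Q_t$ whose first $d_i$ members span $W\cap H_i$; then $\mathcal{W}_\gamma(X)=c\cdot\det\bigl(Q_j(\gamma^{\ell-1}X)\bigr)_{\ell,j}$ for a nonzero scalar $c$ (the determinant of the change-of-basis matrix). For $j\le d_i$ the polynomial $Q_j$ vanishes at all $s$ points $\gamma^k a_i$ with $0\le k\le s-1$, so $\prod_{k=0}^{s-1}(X-\gamma^k a_i)$ divides $Q_j(X)$, and substituting $X\mapsto\gamma^{\ell-1}X$ shows that $\prod_{m=-\ell+1}^{s-\ell}(X-\gamma^m a_i)$ divides $Q_j(\gamma^{\ell-1}X)$. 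Intersecting these index ranges over $\ell=1,\dots,t$ --- this is where the hypothesis $t\le s$ enters --- shows that
\[
g_i(X)\ :=\ \prod_{k=0}^{s-t}(X-\gamma^k a_i),\qquad \deg g_i = s-t+1,
\]
divides $Q_j(\gamma^{\ell-1}X)$ for \emph{every} row $\ell\in\{1,\dots,t\}$ and every column $j\le d_i$. Thus $g_i$ divides every entry in the first $d_i$ columns of $\bigl(Q_j(\gamma^{\ell-1}X)\bigr)_{\ell,j}$, so its determinant --- and hence $\mathcal{W}_\gamma$ --- is divisible by $g_i(X)^{d_i}$.

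\emph{Conclusion.} Writing each $a_i=\gamma^{s\sigma_i}$ with the $\sigma_i$ distinct, the roots of $g_i$ are the powers $\gamma^{\,s\sigma_i+k}$ with $0\le k\le s-t$, which lie in the pairwise disjoint length-$s$ windows $\{s\sigma_i,\dots,s\sigma_i+s-1\}$; hence the $g_i$ are pairwise coprime and $\prod_{i=1}^n g_i(X)^{d_i}$ divides $\mathcal{W}_\gamma(X)$. Since $\mathcal{W}_\gamma\not\equiv0$, comparing degrees gives
\[
(s-t+1)\sum_{i=1}^n d_i\ =\ \deg\Bigl(\prod_{i=1}^n g_i^{d_i}\Bigr)\ \le\ \deg\mathcal{W}_\gamma\ \le\ td,
\]
so $\sum_{i=1}^n\dim(W\cap H_i)=\sum_i d_i\le \frac{d}{s-t+1}\cdot t$, as claimed. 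The step I expect to be the main obstacle is identifying the correct sub-product $g_i$: one has to intersect, over all $t$ rows, the root-support windows of $Q_j(\gamma^{\ell-1}X)$, and it is exactly this intersection that has size $s-t+1$ (rather than merely $1$), producing the factor $s-t+1$ in the denominator; establishing $\mathcal{W}_\gamma\not\equiv0$ --- which uses $d<q-1$ together with the distinct-degree basis --- is the other point that needs care.
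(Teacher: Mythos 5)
Your argument is correct. Note that the paper never proves this statement itself --- it is imported as a black box from \cite{GK14} --- so there is no in-paper proof to compare against; your folded-Wronskian argument (distinct-degree basis, Vandermonde computation of the leading coefficient to get $\mathcal{W}_\gamma\not\equiv 0$, basis adapted to $W\cap H_i$ to extract a factor $g_i^{d_i}$ of degree $(s-t+1)d_i$, coprimality of the $g_i$ via the disjoint exponent windows, and the final degree count) is precisely the original proof of Theorem 14 in \cite{GK14}, and it is the folded analogue of the ordinary-Wronskian argument that the paper does spell out for Theorem~\ref{thm:cool-subspace-design}. The hypothesis $d\le q-2$ that you flag is genuinely needed for the nonvanishing step and holds wherever the theorem is invoked here, since positive decoding radius forces $d<sn\le q-1$.
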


\begin{proof}[Proof of Claim \ref{clm:2}]
For $0 \leq j \leq \tau$, let $$ V_j :=  V \cap H_{i_1}\cap H_{i_2} \cap \ldots \cap H_{i_j},$$ and $t_j: =  \dim( V_j ).$
Observe that $r = t_0 \geq t_1 \geq \ldots \geq t_\tau$, and that event $E_2$ holds if and only if $t_{\tau} =0$. 

By Theorem \ref{thm:subspace-design}, 
$$
\E[t_{j+1} \mid t_j = t] 
= \E_{i \in [n]}[\dim(V_j \cap H_i) \mid \dim(V_j)=t]
\leq \frac{t}{s-t+1} \cdot \frac{d}{n}
\leq t \cdot \frac{d}{(s-r) n}. 
$$

Thus
$$ \E[t_{j+1}] \leq \E[t_j] \cdot \frac{d} {(s-r)n},$$
and
$$\E[t_\tau] \leq \E[t_0] \cdot \left(\frac{d} {(s-r)n}\right)^\tau = r \left( \frac{d} {(s-r)n}\right)^\tau.$$

Finally, by Markov's inequality this implies in turn that 
$$\Pr[E_2]=\Pr[t_{\tau}=0] = 1- \Pr[t_{\tau}\geq 1] \geq 1- r \left( \frac{d} {(s-r)n}\right)^\tau.$$
\end{proof}

\subsection{Proof of Theorem \ref{thm:frs-list-rec-const-list}}\label{subsec:FRS-main}

We now prove Theorem \ref{thm:frs-list-rec-const-list} based on Theorem \ref{thm:output-list-low-dim} and Lemma \ref{lem:algprune}.

\begin{proof}[Proof of Theorem \ref{thm:frs-list-rec-const-list}]
Let $S : \Domain \to {\F_q^s \choose \ell}$ be the received sequence of input lists. We would like to find a list $\calL$ of size $ \left(\frac {\ell} {\epsilon}\right)^{O\left(\frac 1 \epsilon \log(\ell/\epsilon)\right)}$ that contains all polynomials $P(X)$ of degree at most $d$ with $\dist(\FRSEnc_s(P), S) \leq \alpha$.

Let $v_0 + V$ be the subspace found by the algorithm of
 Theorem \ref{thm:output-list-low-dim}   for $S$ and $r = \frac{4\ell}{\epsilon}$ (so $r \leq \frac{1}{4} \epsilon s$ by assumption that $s \geq 16\ell /\epsilon^2$). Note that for this choice of $r$ the RHS of (\ref{eqdecoderadius}) is at least
$$1- \frac{\epsilon}{4} -  \frac{1}{1 - \epsilon/4} \cdot \frac{d} {sn} \geq 1-  \frac{d} {sn} - \epsilon = \alpha,$$ and so all polynomial $P(X)$ of degree
at most $d$ with $\dist(\FRSEnc_s(P),S)\leq \alpha$ are included in $V$.

Next we invoke Lemma \ref{lem:algprune} with $S$, $v_0 + V$ and $\tau = O(\frac 1 \epsilon \log (\ell/\epsilon))$. Then the algorithm $\ALGPRUNE$ 
returns a list $\calL'$ of size at most
$\ell^\tau$ such that
 each polynomial $P(X)$ of degree at most $d$ with $\dist(\FRSEnc_s(P), S) \leq \alpha$ is included in $\calL'$ with probability $p_0$, which is at least
\begin{eqnarray*}
(1-\alpha)^\tau - r  \left( \frac{d} {(s-r)n}\right)^\tau 
&\geq&  (1-\alpha)^\tau - r \left( \frac{1}{1-\epsilon/4} \cdot \frac{d} {sn}\right)^\tau \\
&\geq& \left(1-\alpha\right)^\tau - \frac 1 2  \left(  \frac{1+\epsilon/4}{1-\epsilon/4} \cdot (1-\alpha-\epsilon)\right)^\tau \\
&\geq& \frac{1}{2}(1-\alpha)^{\tau},
\end{eqnarray*}
where the first inequality follows since $r \leq \frac{1}{4} \epsilon s$,  and the second inequality holds since 
$r  = \frac{4 \ell} {\epsilon} \leq \frac 1 2 \cdot (1+\frac{\epsilon}{4})^\tau$ and $\alpha = 1-  \frac{d} {sn} - \epsilon$.

The above implies in turn that 
$$|\calL| \leq \frac {|\calL'|}  {p_0} \leq 2  \left(\frac {\ell} {1-\alpha}\right)^{\tau} \leq \left(\frac {\ell} {\epsilon}\right)^{O\left(\frac 1 \epsilon \log(\ell/\epsilon)\right)}.$$   

Moreover, by running the algorithm $\ALGPRUNE$  $O(\frac 1 {p_0}  \log|\calL|) $ times and returning the union of all output lists, by a union bound, all elements of $\calL$ will appear in the union of the output lists with high probability (say, at least $0.99$). This gives a randomized list recovery algorithm with output list size $\left(\frac {\ell} {\epsilon}\right)^{O\left(\frac 1 \epsilon \log(\ell/\epsilon)\right)}$  and running  time $\poly(\log q, s,d,n,(\ell/\epsilon)^{ \log(\ell/\epsilon)/\epsilon})$. 

\end{proof}

\section{List recovering high-rate univariate multiplicity codes with constant output list size}\label{sec:Unimult}

In this section, we show that univariate multiplicity codes
of high rate can be list recovered from constant-sized input lists
with constant-sized output lists.

\subsection{Small $d$}
If the degree $d$ of the univariate multiplicity code is less than $\char(\F_q)$, the characteristic of the field
$\F_q$, then the proof from the previous section works verbatim. The only changes needed are as follows.
\begin{itemize}
\item First, use Theorem 17 from~\cite{GW13} instead of Theorem 7 from that paper, to show that the list is contained in a low-dimensional subspace.
\item Second, use Theorem 17 from~\cite{GK14} instead of Theorem 14  from that paper, to show that for a low dimensional subspace $W$, at a typical $a \in \Domain$ we have $\dim(\{P(X) \in W \mid P^{(<s)}(a) = 0\})$ is small. 
\end{itemize}
The condition $d < \char(\F_q)$ is used in both steps.
These changes lead to the following theorem.
\begin{theorem}[List recovering univariate multiplicity codes over prime
fields with $d < \char(\F_q)$]\label{thm:unimult-list-rec-const-list-smalld}
Let $q$ be a prime power, and let $s,d,n$ be nonnegative integers such that $d < \char(\F_q)$ 
and $n < \char(\F_q)/s$.

Let $\epsilon >0$ and $\ell \in \N$ be such that $16 \ell /\epsilon^2 \leq s$. Then the univariate muliplicity code $\MULT^{(1)}_{q,s}(n,d)$ is $(\alpha,\ell,L)$-list recoverable for $\alpha = 1- d/(sn) -\epsilon$ and $L =  \left( \frac {\ell} {\epsilon} \right)^{O\left(\frac{1}{\epsilon} \log \frac{\ell}{\epsilon}\right) }$.

Moreover, there is a randomized algorithm that list recovers $\MULT^{(1)}_{q,s}(n,d)$ with the above parameters in time $\poly(\log q, s,d,n,(\ell/\epsilon)^{ \log(\ell/\epsilon)/\epsilon})$. 
\end{theorem}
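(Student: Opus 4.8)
The plan is to follow the exact structure of the proof of Theorem~\ref{thm:frs-list-rec-const-list}, which itself has two ingredients, and to observe that each ingredient has an analogue for univariate multiplicity codes when $d < \chr(\F_q)$. The first ingredient is the linear-algebraic list-recovery step: instead of Theorem~\ref{thm:output-list-low-dim} (Theorem~7 of~\cite{GW13}), I would invoke the analogous result for univariate multiplicity codes (Theorem~17 of~\cite{GW13}), which again produces an affine subspace $v_0 + V \subseteq \F_q[X]$ of dimension at most $r - 1$ containing every polynomial $P$ of degree $\leq d$ with $\dist(\MultEnc_s(P), S) \leq \alpha$, provided $\alpha$ satisfies the same kind of inequality~(\ref{eqdecoderadius}); the hypothesis $d < \chr(\F_q)$ is what makes the derivative evaluations behave like distinct linear functionals, which the proof of Theorem~17 needs.

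The second ingredient is the pruning step. I would define an algorithm identical to $\ALGPRUNE$, except that the consistency check uses $P^{(<s)}(b_j) = y_j$ in place of $P^{[s]}(b_j) = y_j$. The analysis of Lemma~\ref{lem:algprune} then carries over word for word: event $E_1$ (that $\hat P^{(<s)}(b_j) \in S(b_j)$ for all $j$) has probability at least $(1-\alpha)^\tau$ by the distance assumption, and event $E_2$ (that no nonzero $Q \in V$ has $Q^{(<s)}(b_j) = 0$ for all $j$) has high probability by the multiplicity-code version of Claim~\ref{clm:2}. To prove that version I would use Theorem~17 of~\cite{GK14} in place of Theorem~14: it gives, for a linear subspace $W \subseteq \F_q[X]$ of polynomials of degree $\leq d$ with $\dim(W) = t \leq s$, a bound of the form $\sum_{i} \dim(W \cap H_i) \leq \frac{d}{s-t+1} \cdot t$ where now $H_i = \{P : P^{(<s)}(a_i) = 0\}$. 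Iterating this over $\tau$ random coordinates and applying Markov's inequality exactly as in the proof of Claim~\ref{clm:2} gives $\Pr[E_2] \geq 1 - r\left(\frac{d}{(s-r)n}\right)^\tau$, and hence $\hat P \in \calL'$ with probability at least $(1-\alpha)^\tau - r\left(\frac{d}{(s-r)n}\right)^\tau$.

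Finally I would assemble the two pieces exactly as in Section~\ref{subsec:FRS-main}: set $r = 4\ell/\epsilon$ (legitimate since $s \geq 16\ell/\epsilon^2$ forces $r \leq \epsilon s / 4$), check that the resulting decoding radius in the multiplicity-code analogue of~(\ref{eqdecoderadius}) is at least $\alpha = 1 - d/(sn) - \epsilon$, invoke the pruning lemma with $\tau = O\!\left(\frac{1}{\epsilon}\log(\ell/\epsilon)\right)$ to get success probability $p_0 \geq \frac{1}{2}(1-\alpha)^\tau$, conclude $|\calL| \leq |\calL'|/p_0 \leq \left(\frac{\ell}{\epsilon}\right)^{O\left(\frac{1}{\epsilon}\log(\ell/\epsilon)\right)}$, and then amplify by running the pruning algorithm $O\!\left(\frac{1}{p_0}\log|\calL|\right)$ times and taking the union. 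The running time bookkeeping is identical. The only real content here is checking that the cited theorems from~\cite{GW13} and~\cite{GK14} do hold in the stated generality — distinct evaluation points $a_1,\dots,a_n$ rather than a structured set, general $\ell$, and the derivative-based $H_i$ — and that the condition $d < \chr(\F_q)$ is exactly what both of those theorems require; I do not expect any genuinely new obstacle, since the algebraic structure is parallel to the folded case. Thus:
\begin{proof}[Proof sketch of Theorem~\ref{thm:unimult-list-rec-const-list-smalld}]
Identical to the proof of Theorem~\ref{thm:frs-list-rec-const-list}, replacing Theorem~\ref{thm:output-list-low-dim} by Theorem~17 of~\cite{GW13}, replacing $P^{[s]}$ by $P^{(<s)}$ throughout $\ALGPRUNE$, and replacing Theorem~\ref{thm:subspace-design} by Theorem~17 of~\cite{GK14} in the proof of Claim~\ref{clm:2}. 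The hypotheses $d < \chr(\F_q)$ and $n < \chr(\F_q)/s$ guarantee the applicability of the cited results.
\end{proof}
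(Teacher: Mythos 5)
Your proposal matches the paper's own proof exactly: Section~4.1 of the paper states that the proof from Section~3 works verbatim after replacing Theorem~7 of \cite{GW13} by Theorem~17 of \cite{GW13}, replacing Theorem~14 of \cite{GK14} by Theorem~17 of \cite{GK14}, and noting that both substitutions rely on $d < \chr(\F_q)$. Your account of how the pruning step and final parameter settings carry over is also the same as the paper's implicit argument, so there is nothing to add.
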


\begin{remark}[Fields of characteristic zero.]
The exact same techniques also work over fields of characteristic $0$.
We state the the analogous combinatorial statement over $\mathbb C$, which may be of independent interest.
\begin{theorem}
Let $\epsilon > 0$.
Let $d, n > 0 $ be an integer.
Let $\Domain \subseteq \mathbb C$ with $|\Domain| = n$.
Let $s > \frac{16\ell}{\epsilon^2}$ be an integer.
Let $\alpha \leq 1 - \frac{d}{sn} - \epsilon$.

Let $S: \Domain \to { \mathbb C^s \choose \ell}$ be arbitrary.
Then:
$$ \left|\left\{ P(X) \in \mathbb C[X] \mid \deg(P) \leq d  \mbox{ s.t. } | \{ a \in \Domain \mid P^{(<s)}(a) \in S(a) \}| \geq (1-\alpha) n \right\}\right| \leq \left(\frac{\ell}{\epsilon}\right)^{O\left(\frac{1}{\epsilon} \log \frac{\ell}{\epsilon}\right)}.$$
\end{theorem}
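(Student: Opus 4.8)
The plan is to transcribe, essentially verbatim, the two-step proof of Theorems~\ref{thm:frs-list-rec-const-list} and~\ref{thm:unimult-list-rec-const-list-smalld} to the field $\mathbb{C}$, replacing each finite-field ingredient by its characteristic-zero counterpart; since we only need a combinatorial bound on the list size and not an efficient algorithm, we may discard the running-time claims throughout. Write $\calL$ for the set of polynomials $P \in \mathbb{C}[X]$ of degree at most $d$ with $|\{a \in \Domain : P^{(<s)}(a) \in S(a)\}| \ge (1-\alpha)n$, which is the list we must bound.

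\textbf{Step 1: $\calL$ lies in a low-dimensional subspace.} The linear-algebraic list decoder of Guruswami--Wang (Theorem~17 of~\cite{GW13}), the multiplicity-code analogue of Theorem~\ref{thm:output-list-low-dim} that underlies Theorem~\ref{thm:unimult-list-rec-const-list-smalld}, uses only field arithmetic together with the hypothesis $d < \char(\F_q)$ from the finite-field setting; over $\mathbb{C}$ the characteristic is $0$, so there is nothing to check, and the same construction yields an affine subspace $v_0 + V \subseteq \mathbb{C}[X]$ of polynomials of degree at most $d$ with $\calL \subseteq V$ and $\dim V \le r-1$, valid as long as the decoding radius obeys an inequality of the shape~\eqref{eqdecoderadius}. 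Exactly as in Section~\ref{subsec:FRS-main}, choosing $r = \lceil 4\ell/\epsilon\rceil$ (so $r \le \tfrac14\epsilon s$ from $s \ge 16\ell/\epsilon^2$) makes that inequality hold at $\alpha = 1 - d/(sn) - \epsilon$, so all of $\calL$ is contained in $V$.

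\textbf{Step 2: few elements of $\calL$ in a low-dimensional subspace.} Run the pruning argument of Lemma~\ref{lem:algprune}, reading $P^{(<s)}(\cdot)$ in place of $P^{[s]}(\cdot)$ and setting $H_a := \{P \in \mathbb{C}[X] : (X-a)^s \mid P\}$ in place of $H_i$. The only field-sensitive step is Claim~\ref{clm:2}, which rests on Theorem~\ref{thm:subspace-design}; we need its characteristic-zero version: for a linear subspace $W \subseteq \mathbb{C}[X]$ of polynomials of degree at most $d$ with $\dim W = t \le s$,
\[ \sum_{a \in \Domain} \dim(W \cap H_a) \;\le\; \frac{d}{s-t+1}\cdot t. \]
This is exactly the estimate proved (in other language) in~\cite{GK14} for $d < \char$, and over $\mathbb{C}$ it follows from the classical Wronskian bound: fixing a basis $P_1,\dots,P_t$ of $W$, the Wronskian $W(P_1,\dots,P_t)$ is a \emph{nonzero} polynomial — this is the one place characteristic zero is used, since there nonvanishing of the Wronskian is equivalent to linear independence — of degree at most $td - \binom{t}{2} \le td$, while a short local computation shows it vanishes at each $a$ to order at least $(s-t+1)\dim(W\cap H_a)$: if $\dim(W\cap H_a)=k$, choose the basis so that $P_1,\dots,P_k$ are divisible by $(X-a)^s$, whence every entry in the first $k$ columns of the Wronskian matrix is divisible by $(X-a)^{s-t+1}$, and multilinearity in the columns forces $(X-a)^{k(s-t+1)}$ to divide the determinant. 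Summing orders of vanishing over $a \in \Domain$ and comparing with the degree bound gives the displayed inequality. With this replacement, the chain $\E[t_{j+1}] \le \E[t_j]\cdot \tfrac{d}{(s-r)n}$, then $\E[t_\tau] \le r\big(\tfrac{d}{(s-r)n}\big)^\tau$, then Markov, all go through verbatim.

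\textbf{Finishing and the main obstacle.} Taking $\tau = O\!\big(\tfrac1\epsilon \log(\ell/\epsilon)\big)$ as in Section~\ref{subsec:FRS-main}, a fixed $P \in \calL$ survives pruning with probability at least $(1-\alpha)^\tau - r\big(\tfrac{d}{(s-r)n}\big)^\tau \ge \tfrac12(1-\alpha)^\tau$, while the pruned list has at most $\ell^\tau$ elements, so $|\calL| \le 2\ell^\tau/(1-\alpha)^\tau = (\ell/\epsilon)^{O(\frac1\epsilon \log(\ell/\epsilon))}$. The only genuinely new ingredient is the characteristic-zero Wronskian lemma above, and within it the point that needs care is pinning down the denominator $s-t+1$ in the order-of-vanishing estimate; everything else is a direct transcription. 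This obstacle is in fact milder over $\mathbb{C}$ than over $\F_q$, because the Wronskian of linearly independent polynomials can never vanish identically, so none of the more delicate analysis of Section~\ref{sec:Unimult} (needed there for $d > q$) is required.
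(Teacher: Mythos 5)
Your proposal is correct and follows the same two-step template the paper has in mind for this remark (Guruswami--Wang subspace containment, then random pruning driven by an expected-dimension bound for coordinate restrictions), which the paper only asserts "works verbatim" over characteristic zero. The one piece you spell out — the Wronskian of a basis $P_1,\dots,P_t$ of $W$ is nonzero by linear independence in characteristic zero, has degree at most $td$, and vanishes at each $a\in\Domain$ to order at least $(s-t+1)\dim(W\cap H_a)$ because every entry in the first $k$ columns is divisible by $(X-a)^{s-t+1}$ — is exactly the characteristic-zero replacement for Theorem~14 (via Claim~19) of~\cite{GK14}, and it is sound; the remaining bookkeeping with $r=\lceil 4\ell/\epsilon\rceil$ and $\tau=O(\epsilon^{-1}\log(\ell/\epsilon))$ matches the paper's Folded RS calculation and yields the claimed bound.
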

In particular, when $\delta > 0$ is a constant and we take $d = (1 - \delta)n$, $\eps = \delta$, $s = O(\ell/\delta^2)$ and $\alpha=0$, then the bound on $L$ is independent of $n$.

\end{remark}

\subsection{Large $d$}

Theorem~\ref{thm:unimult-list-rec-const-list-smalld} works when $d < q$, but 
for application to multivariate multiplicity codes, however, it is important
that we can list-recover  univariate multiplicity codes when the evaluation
set $\Domain$ equals all of $\F_q$ (i.e., whole-field univariate multiplicity codes). 

For the rest of this section we assume
that $\Domain = \F_q$, and hence that $n = q$. 
In this setting, for the rate to be high,
we would also like the degree $d$ to be close to $sq$, and thus $ \gg q \geq \char(\F_q)$.
This precludes use of the Theorem~\ref{thm:unimult-list-rec-const-list-smalld}.

Instead, we will dig deeper into the proof to see what can be salvaged
when $d$ is larger than $q$.
For the first step, it turns out that if $d$ is only moderately larger than $\char(\F_q)$,
then the list can be captured inside a moderately small dimensional subspace.
Thus if we make $q$ prime, so that $q = \char(\F_q)$, then this step can still work
for $d < sq$ and $s$ not too large.
The second step uses the $d < \char(\F_q)$ condition more essentially. By a 
reworking of several algebraic tools used in the proof of
Theorem 17 from~\cite{GK14},
we prove a generalization of it to handle polynomials of degree $> q$.
This generalization will only apply to subspaces $W$ of a special kind
(``\Cool{}" subspaces). The list-recoverability we show here
is quantitatively weaker (in terms of the fraction of errors that can
be tolerated) than the  results we proved in Theorem~\ref{thm:frs-list-rec-const-list} and Theorem~\ref{thm:unimult-list-rec-const-list-smalld}.
Nevertheless, this form of the result still suffices to
needed to show high-rate {\em local} list-recoverability of multivariate
multiplicity codes in the following section.

\begin{theorem}
\label{thm:full-unimult-list-rec}
Let $\delta > 0$.
Let $q$ be a prime, and let $s,d, \ell$ be nonnegative integers such that $s<q$ and $d < (1-\delta)sq$, and $1 \leq \ell < \frac{\delta^2 s}{16}$.
Suppose $\alpha < \frac{1}{2s}$.

Then the {\bf whole-field} univariate muliplicity code $\MULT^{(1)}_{q,s}(d)$ is $(\alpha,\ell,L)$-list recoverable for $L =  \ell^{O\left(s \log s\right) }\cdot s^{O(1)}$.

Moreover, there is a randomized algorithm that list recovers $\MULT^{(1)}_{q,s}(d)$ with the above parameters in time $\poly(q, \ell^{O(s \log s)})$. 
\end{theorem}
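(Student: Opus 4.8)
The plan is to mimic the two-step structure used for Theorem~\ref{thm:frs-list-rec-const-list}: first capture the output list $\calL$ inside a low-dimensional affine subspace $v_0 + V$ of $\F_q[X]$, and then run the algorithm $\ALGPRUNE$ (suitably adapted to multiplicity evaluations $P^{(<s)}$ instead of folded evaluations $P^{[s]}$) to show that not too many elements of this subspace are consistent with the received lists. For the first step I would invoke the subspace-capturing result of \cite{GW13} for univariate multiplicity codes (their Theorem~17, the multiplicity analogue of Theorem~\ref{thm:output-list-low-dim}); here the key point is that since $q$ is prime, $\char(\F_q) = q$, so a degree bound $d < (1-\delta)sq$ is only moderately larger than the characteristic, and one checks that the relevant argument still produces an affine subspace $v_0 + V$ of dimension $O(s)$ (roughly $\dim(V) = O(\ell/\delta)$ would suffice to make the decoding radius work out, given $\alpha < 1/(2s)$ is a very mild requirement on the number of errors). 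The bottleneck $\alpha < \frac{1}{2s}$ suggests that in this regime we really do get to take the number of ``bad'' coordinates so small that the consistency requirement $P^{(<s)}(b_j) \in S(b_j)$ is almost free to satisfy.

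The second step is where the real work lies: I need an analogue of Claim~\ref{clm:2}, i.e.\ a bound showing that for random $b_1, \dots, b_\tau \in \F_q$ the probability that some nonzero $Q \in V$ has $Q^{(<s)}(b_j) = 0$ for all $j$ is tiny. In the $d < \char(\F_q)$ case this followed from Theorem~17 of \cite{GK14}, which controls $\sum_i \dim(W \cap H_i)$ for $H_i = \{P : P^{(<s)}(a_i) = 0\}$. When $d > q$ that theorem is false in general (e.g.\ $X^q - X$ vanishes to high multiplicity everywhere in spirit), so I would restrict attention to a structured class of subspaces — the ``\Cool{}'' subspaces mentioned in the text — and prove a salvaged version of the \cite{GK14} bound valid for $d$ up to $(1-\delta)sq$ at the cost of a weaker (but still usable) dimension decay. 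Concretely, I expect to reprove the Wronskian-based argument underlying \cite{GK14}: if $P_1, \dots, P_t$ span $V$ and all vanish to order $s$ at many points, their Wronskian $W(P_1,\dots,P_t)$ (or an appropriate minor of the full derivative matrix) vanishes to high order at those points, yet has degree at most roughly $td$; counting zeros with multiplicity then forces the number of common high-multiplicity zeros to be at most $\approx td/s$, which over $n = q$ coordinates gives the needed $\dim(V\cap H_i) \leq (1-\Omega(\delta))\dim(V)$ in expectation. The subspace $v_0+V$ coming out of step~1 must be verified to be \Cool{} so that this lemma applies; establishing that closure/structure property is part of the argument.

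Putting the pieces together: run $\ALGPRUNE(S, v_0+V, \tau)$ with $\tau = O(s\log s)$. By the (multiplicity version of the) analysis of Lemma~\ref{lem:algprune}, a fixed $P \in \calL$ is output with probability at least $(1-\alpha)^\tau - \dim(V)\cdot(\text{decay factor})^\tau$; since $\alpha < 1/(2s)$ we have $(1-\alpha)^\tau = 2^{-O(1)}$ is bounded below by a constant, while $\tau = \Omega(s\log s)$ drives the second term below, say, half the first; hence $P$ is output with probability $p_0 = \Omega(1)$. This yields $|\calL| \leq \ell^\tau / p_0 = \ell^{O(s\log s)}\cdot s^{O(1)}$, and repeating $\ALGPRUNE$ $O(\tfrac{1}{p_0}\log|\calL|)$ times and taking the union gives the full list with high probability in time $\poly(q, \ell^{O(s\log s)})$. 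The main obstacle is unquestionably the second step — proving the generalization of Theorem~17 of \cite{GK14} to the $d > q$ regime for \Cool{} subspaces, since the clean Wronskian-degree bound breaks once $d$ exceeds the characteristic and one has to track how high-order vanishing interacts with the Frobenius-type degeneracies that appear; getting a decay factor strictly below $1$ (rather than merely $\leq 1$) is what makes $\tau = O(s\log s)$ suffice and hence what pins down the final list size.
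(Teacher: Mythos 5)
Your outline tracks the paper's proof closely: (i) reprove the GW13 subspace-capture lemma for $d > \char(\F_q)$, (ii) observe that the resulting $V$ is \Cool{}, (iii) generalize the GK14 Wronskian-based dimension-decay bound to \Cool{} subspaces, and (iv) run $\ALGPRUNEMULT$ with $\tau = \Theta(s\log s)$. The obstacle you flag as the crux is indeed the crux, and the plan to reprove the Wronskian/zero-counting argument is the right one.

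However, two of your quantitative intermediate claims are off in a way that you would have to resolve before the argument closes. First, the affine capture in step (i) does \emph{not} give $\dim(V) = O(\ell/\delta)$: once $d > q$, the differential-equation subspace $V = \{f : \sum_i B_i f^{(i)} = 0,\ \deg f \le d\}$ has full $\F_q$-dimension as large as $\Theta(s\ell/\delta)$, because every allowable residue of $\deg f$ modulo $q$ supports roughly $d/q = \Theta(s)$ independent choices via multiplication by powers of $X^q$. What one \emph{can} bound by $O(\ell/\delta)$ is the $q$-dimension $\qdim(V)$ (the number of distinct values of $\deg f \bmod q$ realized by nonzero $f \in V$); the paper establishes $\qdim(V) \le r-1$ via a Wronskian nonsingularity argument for polynomials with pairwise distinct degrees mod $q$, and then uses $\dim(V) \le s\cdot\qdim(V)$. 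Second, your claim of a $1-\Omega(\delta)$ per-coordinate decay is too strong and is in fact inconsistent with your own $\tau = O(s\log s)$ estimate: the \Cool{}-restricted analogue of GK14 gives only $\E_b[\dim(W\cap H_b)] \le \bigl(1-\tfrac{1}{s-\qdim(W)+1}\bigr)\dim(W) \le (1-1/s)\dim(W)$. (If the decay really were $1-\Omega(\delta)$, you could take $\tau = O(\delta^{-1}\log(s\ell/\delta))$ and would get a much smaller list than the theorem states.) This $1-1/s$ decay is genuinely unavoidable: for $V = \{\sum_{i < d/q} a_i X^{iq}\}$, fixing a single $H_b$ costs exactly one dimension, since $\Phi_b(f) = f^{(<s)}(b)$ collapses to evaluating the single coefficient polynomial at $b$ because $f^{(j)} \equiv 0$ for $0 < j < q$. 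It is the combination $\dim(V) = O(s\ell/\delta)$ together with the decay factor $1-1/s$ that forces $\tau = \Theta(s\log s)$ and hence $L = \ell^{\Theta(s\log s)}$ here, in contrast with the $\ell^{O(\frac{1}{\eps}\log\frac{\ell}{\eps})}$ achievable when $d < \char(\F_q)$.
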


The proof of this theorem is again in two steps.

First we use
the linear-algebraic approach to list-recovering univariate multiplicity codes~\cite{GW13}
to show that the list is contained in a subspace.
Technically, we need to redo this proof using some additional algebraic ideas, because
it will be important for us to deal with polynomials of degree $ \gg q$, 
and~\cite{GW13} only worked for polynomials of degree at most $\char(\F_q)$.
As a consequence, the low-dimensional subspace will be of noticeably higher
dimension than in~\cite{GW13}, but will be of a special form.

Next we show that the output list $\calL$ cannot contain too many 
elements from a low-dimensional subspace of this special form.
As before, we will do this via a randomized algorithm.

\subsection{Output list is contained in a special subspace}

Suppose we are given a received word $S: \F_q \to {\F_q^{s} \choose \ell}$
for the univariate multiplicity code with degree $d$ and 
multiplicity parameter $s < q$.

The following theorem is essentially Lemma 14 from~\cite{GW13}. (This lemma
is part of the proof of Theorem 17 of~\cite{GW13},
which is used in the proof of our Theorem~\ref{thm:unimult-list-rec-const-list-smalld} above). 
The theorem gives a special affine subspace $v_0 + V$ of $\F_q[X]$
which contains all $f(X)$ whose codeword $\MultEnc_s(f)$
is close to $S$.

The main differences between the following theorem and Lemma 14 of ~\cite{GW13} are:
(1) we need to talk about list-recovery, not just list-decoding,
and (2) we work with Hasse derivatives, while~\cite{GW13} works with standard derivatives.
Both differences are minor; for completeness we include a proof in the appendix.

\begin{theorem}
\label{thm:linlistdec-mult}
Let $S : \F_q \to {\F_q^s \choose \ell}$.
Let 
\begin{equation}\label{mult-eqdecoderadius}
\alpha < 1- \frac{\ell}{r+1} - \frac{r}{r+1} \cdot \frac{s}{s-r+1} \cdot \frac{d} {sq}. 
\end{equation}
Let
$$ \mathcal L = \{ g(X) \in \F_q[X] \mid \deg(g) \leq d\mbox{ and } \dist(\MultEnc_s(g), S) \leq \alpha \}.$$

There is an algorithm $A$, which when given as input $r$,
finds polynomials $A(X), B_0(X), \ldots, B_{r-1}(X)$ such that
the affine space:
$$ v_0 + V = \{ f(X) \in \F_q[X] \mid \deg(f) \leq d \mbox{ and } A(X) + \sum_{i=0}^{r-1} B_i(X) f^{(i)}(X) = 0 \} $$
satisifies:
$$\mathcal L \subseteq v_0 + V.$$
\end{theorem}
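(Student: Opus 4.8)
The plan is to replay the linear-algebraic interpolation argument of Guruswami--Wang (\cite{GW13}, Lemma~14), incorporating the two modifications noted in the remark: allowing input lists of size $\ell$ rather than a single received symbol per coordinate, and working with Hasse derivatives throughout, so that nothing breaks when $d$ (or even $s$) exceeds the characteristic of $\F_q$.

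\textbf{Interpolation step.} Fix a degree parameter $D$ to be chosen. I would search for a nonzero tuple $(A(X), B_0(X), \dots, B_{r-1}(X))$ with $\deg A \le D + d$ and $\deg B_i \le D$ --- equivalently, a nonzero ``linear'' polynomial $Q(X, Y_0, \dots, Y_{r-1}) = A(X) + \sum_{i=0}^{r-1} B_i(X) Y_i$. For each $a \in \F_q$ and each $\vec\beta = (\beta_0, \dots, \beta_{s-1}) \in S(a)$, form the Hasse--Taylor polynomial $h_{a,\vec\beta}(X) = \sum_{i=0}^{s-1} \beta_i (X-a)^i$ of degree $<s$; the defining property of Hasse derivatives gives $h_{a,\vec\beta}^{(j)}(a) = \beta_j$ for all $j < s$, in every characteristic. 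Impose the homogeneous linear conditions
$$A(X) + \sum_{i=0}^{r-1} B_i(X)\, h_{a,\vec\beta}^{(i)}(X) \equiv 0 \pmod{(X-a)^{\,s-r+1}}$$
for all such pairs $(a,\vec\beta)$. There are at most $q\,\ell\,(s-r+1)$ conditions and $(D+d+1) + r(D+1)$ unknowns, so a nonzero solution exists as soon as $(r+1)D + d + r + 1 > q\,\ell\,(s-r+1)$; producing it is just solving a linear system, the (polynomial-time) algorithm $A$ of the statement. (One can also check that in any nonzero solution some $B_i$ is nonzero, since otherwise the conditions force $A$ to be divisible by $\prod_{a\in\F_q}(X-a)^{s-r+1}$, of degree exceeding $\deg A$; this is needed for the structural analysis in the next subsection but not for the present claim.)

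\textbf{Root-finding step.} Let $f$ with $\deg f \le d$ lie in $\mathcal L$, so $f^{(<s)}(a) \in S(a)$ for a set $G \subseteq \F_q$ of at least $(1-\alpha)q$ coordinates. For $a \in G$, picking the witnessing $\vec\beta \in S(a)$ we have $f \equiv h_{a,\vec\beta} \pmod{(X-a)^s}$, and since the $i$-th Hasse derivative lowers the order of a zero by exactly $i$ (via the Hasse--Leibniz rule), $f^{(i)} \equiv h_{a,\vec\beta}^{(i)} \pmod{(X-a)^{s-i}}$, hence a fortiori $\pmod{(X-a)^{s-r+1}}$ for every $i \le r-1$. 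Substituting into the interpolation identity shows $g(X) := A(X) + \sum_{i=0}^{r-1} B_i(X) f^{(i)}(X)$ vanishes to order $\ge s-r+1$ at every $a \in G$, so $g$ has at least $(1-\alpha)q(s-r+1)$ roots with multiplicity, while $\deg g \le D+d$. If in addition $D + d < (1-\alpha)q(s-r+1)$, we conclude $g \equiv 0$, i.e.\ $A(X) + \sum_i B_i(X) f^{(i)}(X) = 0$; hence every $f \in \mathcal L$ lies in the solution set $v_0 + V$ of the statement, which is an affine $\F_q$-subspace (a coset of the kernel of $f \mapsto \sum_i B_i f^{(i)}$ on polynomials of degree $\le d$) whenever nonempty, and the claim is vacuous otherwise.

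\textbf{Choice of $D$.} Finally I would check the two inequalities on $D$ are jointly satisfiable, i.e.\ that $\frac{q\ell(s-r+1) - d - r - 1}{r+1} < D < (1-\alpha)q(s-r+1) - d$ leaves room for an integer; a short rearrangement shows this holds precisely when $\alpha < 1 - \frac{\ell}{r+1} - \frac{r}{r+1}\cdot\frac{d}{q(s-r+1)}$ up to an $O\!\left(1/(q(s-r+1))\right)$ additive slack, and $1 - \frac{\ell}{r+1} - \frac{r}{r+1}\cdot\frac{d}{q(s-r+1)} = 1 - \frac{\ell}{r+1} - \frac{r}{r+1}\cdot\frac{s}{s-r+1}\cdot\frac{d}{sq}$ is exactly hypothesis~(\ref{mult-eqdecoderadius}). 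The one genuinely delicate point is the Hasse-derivative bookkeeping in the root-finding step --- verifying $h_{a,\vec\beta}^{(j)}(a) = \beta_j$ and the order-drop $f^{(i)} \equiv h_{a,\vec\beta}^{(i)} \pmod{(X-a)^{s-i}}$ --- since this is precisely where a formal-derivative argument would fail once $s > \chr(\F_q)$; everything else is a routine dimension count. Note this argument never uses $d < q$: it needs only $s < q$, $r \le s$, and the radius bound, with $d$ unrestricted, which is what makes it usable in the large-degree setting of the next subsection.
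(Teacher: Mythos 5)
Your proof is correct and follows essentially the same route as the paper's argument in Appendix~\ref{app:pflinlistdec-mult}: interpolate a nonzero tuple $(A, B_0, \ldots, B_{r-1})$ via the same dimension count, then show that any $f \in \mathcal L$ forces $A + \sum_i B_i f^{(i)}$ to vanish to order at least $s-r+1$ at each agreement point, so it is identically zero by degree counting, with the same final parameter calculation recovering hypothesis~(\ref{mult-eqdecoderadius}). Your packaging of the linear constraints via the Hasse--Taylor polynomial $h_{a,\vec\beta}$ and the congruence $\bmod\,(X-a)^{s-r+1}$ is a cleaner bookkeeping device than the paper's explicit Hasse chain-rule expansion, but the underlying argument is the same.
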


\subsection{Special subspaces}

In this subsection, we study certain special linear spaces of polynomials.
In the next subsection we show how this is relevant to the kinds
of spaces $v_0 + V$ returned by the algorithm of Theorem~\ref{thm:linlistdec-mult}.

\begin{definition}[\Cool{}]
A subspace $W \subseteq \F_q[X]$ consisting of polynomials
of degree at most $d$ is called
\Cool{} if
for every $f(X) \in W$ with $\deg(f) \leq d-q$,
we have
$$ f(X) \cdot X^q \in W.$$

For a \Cool{} subspace $W$, we define the $q$-dimension by
$$ \qdim(W) = \left|\{ \deg(f) \mod q \mid f \in W \setminus \{0\} \}\right|.$$
\end{definition}

Observe that if $W$ is \Cool{} with $d\leq sq$,
 then $$\dim(W) \leq s \cdot \qdim(W).$$

The next lemma gives a nice basis for every \Cool{} subspace.

\begin{lemma}
\label{lem:coolstructure}
Suppose $W\subseteq \F_q[X]$ is  \Cool.
Then there exist $f_1, \ldots, f_{t'} \in W$ such that:
\begin{enumerate}
\item $\deg(f_i) \not\equiv \deg(f_j) \mod q$ for all $i \neq j$.
\item Every $f \in W$ can be uniquely written as:
$$ f(X) = \sum_{i=1}^{t'} C_i(X^q) f_i(X),$$
where for all $i$, $C_i(Y) \in \F_q[Y]$ with $q \cdot \deg(C_i) + \deg(f_i) \leq d$.
\end{enumerate}
Furthermore, we have $\qdim(W) = t'$.
\end{lemma}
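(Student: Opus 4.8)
The plan is to build the basis $f_1, \ldots, f_{t'}$ greedily by degrees modulo $q$, and then to prove the unique decomposition by induction on degree. First I would set $R = \{ \deg(f) \bmod q \mid f \in W \setminus \{0\} \}$, so that $|R| = \qdim(W) =: t'$ by definition. For each residue class $\rho \in R$, pick $f_\rho \in W \setminus \{0\}$ of \emph{minimal} degree among all nonzero elements of $W$ whose degree is $\equiv \rho \pmod q$; enumerate these as $f_1, \ldots, f_{t'}$. Property (1) is then immediate since the $f_i$ lie in distinct residue classes by construction. It remains to prove the decomposition in (2), and then deduce that $\qdim(W) = t'$ (which here is really just a consistency check, since we defined $t' = \qdim(W)$ — the content is that the $f_i$ genuinely span $W$ in the stated form).

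For the existence of the decomposition, I would induct on $\deg(f)$ for $f \in W$, $f \neq 0$ (the $f = 0$ case being trivial with all $C_i = 0$). Given $f$, let $\rho = \deg(f) \bmod q$; this residue lies in $R$, so some $f_i$ was chosen with $\deg(f_i) \equiv \rho \pmod q$ and, by minimality, $\deg(f_i) \leq \deg(f)$. Write $\deg(f) - \deg(f_i) = qk$ for a nonnegative integer $k$. Now here is the key use of \Cool{}-ness: because $\deg(f_i) \leq d - qk \leq d - q$ (when $k \geq 1$), iterating the defining property $g \mapsto g \cdot X^q$ shows $X^{qk} f_i(X) \in W$, and more generally $Y^k f_i$ evaluated at $Y = X^q$ stays in $W$. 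Choose the scalar $c$ so that $f(X) - c\, X^{qk} f_i(X)$ has strictly smaller degree (matching leading coefficients); this difference lies in $W$, so by induction it decomposes as $\sum_j C_j'(X^q) f_j(X)$ with the degree bounds $q\deg(C_j') + \deg(f_j) \leq \deg(f) - 1 \leq d$. Adding back $c X^{qk} f_i(X)$ gives the decomposition of $f$, with the degree bound on the $i$-th term being $qk + \deg(f_i) = \deg(f) \leq d$, as required. I would double-check that the degree bounds $q\deg(C_i) + \deg(f_i) \le d$ are preserved: each term in the final sum has degree exactly the degree of the corresponding summand, all of which are $\le \deg(f) \le d$.

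For uniqueness, suppose $\sum_i C_i(X^q) f_i(X) = 0$ with not all $C_i$ zero. Among the nonzero summands $C_i(X^q)f_i(X)$, I claim the degrees are pairwise distinct: $\deg\big(C_i(X^q) f_i(X)\big) = q\deg(C_i) + \deg(f_i) \equiv \deg(f_i) \pmod q$, and the $\deg(f_i)$ are pairwise distinct mod $q$ by property (1). Hence the top-degree term among the nonzero summands cannot be cancelled, contradicting the sum being $0$. Finally, for the "Furthermore" clause: every residue $\deg(f_i) \bmod q$ is realized in $W$, and conversely, by the decomposition just proved, any nonzero $f \in W$ has $\deg(f) \equiv \deg(f_i) \pmod q$ for whichever $f_i$ indexes the top-degree summand — so $R = \{\deg(f_i) \bmod q : i \in [t']\}$ has exactly $t'$ elements, confirming $\qdim(W) = t'$.

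The main obstacle is making sure the \Cool{}-ness hypothesis is invoked correctly at every step: specifically, that when we multiply $f_i$ by $X^{qk}$ we really do stay inside $W$, which requires checking that the intermediate polynomials $X^{qj} f_i$ for $j < k$ all have degree $\le d - q$ so the defining closure property applies one step at a time. This is where the bound $d \le sq$ (and $\deg(f_i) + qk = \deg(f) \le d$) is used; I would state this degree bookkeeping as a short sub-claim before the main induction. Everything else is routine leading-term manipulation.
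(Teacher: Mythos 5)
Your proof is correct and takes essentially the same greedy approach as the paper: pick one minimal-degree representative per residue class of degrees mod~$q$, then use leading-term cancellation and the \Cool{} closure property for existence, and the pairwise-distinct degrees mod~$q$ for uniqueness. The only difference is organizational — the paper builds up the spanning sets $M_1 \subseteq M_2 \subseteq \cdots$ incrementally and asserts $W = M_{t'}$ at the end, whereas your explicit induction on $\deg(f)$ (with the careful bookkeeping that $\deg(f_i) + kq = \deg(f) \le d$ licenses $k$ successive applications of \Cool{}-ness) makes the spanning step somewhat more transparent.
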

\begin{proof}
Let $f_1$ be the lowest degree nonzero element of $W$,
and let $M_1 = \{ C_1(X^q) f_1(X) \mid C_1(Y) \in \F_q[Y] \mbox{ and } q \deg(C_1) + \deg(f_1) \leq d \}$.
Note that $M_1 \subseteq W$. If $M_1 = W$ we are done with $r' = 1$.

Otherwise, we proceed. 
Let $f_2$ be the lowest degree nonzero element of $W \setminus M_1$.
Observe that $\deg(f_2) \not\equiv \deg(f_1) \mod q$: otherwise
for some $a \in \F_q$ and $b \geq 0$, we would have that
$$f_2'(X) = f_2(X) - a X^{bq} f_1(X) \in f_2 + M_1$$
is an element of $W \setminus M_1$ with even lower degree than $f_2(X)$,
contradicting the choice of $f_2$.
Define $M_2 = \{ C_1(X^q) f_1(X) + C_2(X^q) f_2(X) \mid \mbox{For $i \leq 2$ we have } C_i(Y) \in \F_q[Y] \mbox{ and } q \deg(C_i) + \deg(f_i) \leq d \}$.
Note that $M_2 \subseteq W$. If $M_2 = W$, we are done with $r' = 2$.

Repeating this argument, we get polynomials
$f_1, f_2, \ldots, f_{t'} \in \bar{V}$ with degrees $d_1, d_2, \ldots, d_{t'}$ satisfying:
\[ d_i \not\equiv d_j \mod q \qquad \forall i \neq j \]
This implies that the polynomials $X^{cq} f_i(X)$ all have distinct degrees, and are thus linearly independent.
Furthermore, we  have that $W$ is equal to
$$M_{t'} = \left\{ \sum_{i=1}^{t'} C_i(X^q) f_i(X) \mid \mbox{For $i \leq t'$ we have } C_i(Y) \in \F_q[Y] \mbox{ and } q \deg(C_i) + \deg(f_i) \leq d \right\}.$$

In particular, if we define $j_i = \left\lfloor\frac{d - \deg(f_i)}{q}\right\rfloor$, then
$$ \{ X^{cq} f_i(X) \mid q \leq i \leq t', 0 \leq c \leq j_i \}$$
is a basis for $W$.
\end{proof}

The next theorem is a variant of Theorem 14 from~\cite{GK14} for \Cool{} subspaces of polynomials of degree $d > q$. 
\begin{theorem}
\label{thm:cool-subspace-design}
Let $W \subseteq \F_q[X]$ be a \Cool{} linear subspace of polynomials of degree
at most $d$.  Suppose $\dim(W) = t$ and $\qdim(W) = t' \leq s$.

Suppose $d \leq (s-t')q$.
Then:
$$ \mathbb E_{b \in \F_q} [\dim(W\cap H_b)] \leq \left( 1-1/s\right) \cdot t.$$
\end{theorem}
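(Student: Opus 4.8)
The plan is to imitate the proof of Claim~\ref{clm:2} / Theorem~\ref{thm:subspace-design}: set up a counting argument that bounds $\sum_{b \in \F_q} \dim(W \cap H_b)$, and then divide by $q$ to get the expectation. The key point is that $H_b = \{P \mid P^{(<s)}(b) = 0\} = \{P \mid (X-b)^s \mid P\}$, so $\dim(W \cap H_b)$ counts the elements of $W$ divisible by $(X-b)^s$. Using the $\Cool$ structure from Lemma~\ref{lem:coolstructure}, I would fix the basis $f_1,\dots,f_{t'}$ with distinct degrees mod $q$, so that every $f \in W$ is uniquely $f(X) = \sum_i C_i(X^q) f_i(X)$. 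The first step is therefore to understand, for a fixed $b$, what it means for such an $f$ to be divisible by $(X-b)^s$, and to bound the dimension of the subspace of coefficient-tuples $(C_1,\dots,C_{t'})$ achieving this.

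The main technical device I expect to use is the Wronskian (defined in the preliminaries, presumably a relevant earlier lemma in the appendix/elsewhere). Concretely: if $g_1,\dots,g_k$ are linearly independent polynomials and they all vanish to order $\geq s$ at $b$, then the Wronskian $W(g_1,\dots,g_k)$ vanishes to order $\geq s-k+1$ at $b$ (for $k \leq s$, this order is nonnegative, which is what forces things). Summing the vanishing orders over all $b \in \F_q$ and comparing with $\deg W(g_1,\dots,g_k) \leq \sum \deg g_i - \binom{k}{2}$ gives a bound of the flavor $\sum_b (\text{order of vanishing of } g_i\text{'s at } b) \lesssim \frac{1}{s}\sum_i \deg g_i$ roughly. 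The $\Cool$ hypothesis and the degree bound $d \leq (s - t')q$ are exactly what is needed to keep the degrees under control: because a basis element of the form $X^{cq} f_i(X)$ has degree at most $d \leq (s-t')q < sq$, the "budget" $\sum \deg g_i$ is small relative to $sq$, which is what yields the factor $(1 - 1/s)$ rather than something weaker. I would track this carefully: pick a basis $g_1, \dots, g_t$ of $W$ realizing the $\Cool$ structure (the monomials $X^{cq} f_i$), and for a random $b$, bound $\E_b[\dim(W \cap H_b)]$ by averaging the Wronskian-vanishing inequality.

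A cleaner route, and the one I'd try first, is the "shifting" / derivative argument: for each $b$, $\dim(W \cap H_b) = \dim(W) - \dim(\pi_b(W))$ where $\pi_b$ is the $\F_q$-linear map $f \mapsto f^{(<s)}(b) \in \F_q^s$; equivalently $\dim(W \cap H_b) = t - \mathrm{rank}(\pi_b|_W)$. So I want to lower bound $\sum_b \mathrm{rank}(\pi_b|_W)$ by $t/s \cdot q$, i.e., show that on average $\pi_b|_W$ has rank at least $t/s$. To do this, note that the total evaluation map $f \mapsto (f^{(<s)}(b))_{b \in \F_q}$ into $\F_q^{sq}$ is injective on polynomials of degree $< sq$ (each such $f$ is determined by these values), so it is injective on $W$ since $\deg \leq d < sq$; hence $\sum_b \mathrm{rank}(\pi_b|_W) \geq \dim(W) = t$ — but that only gives $\E_b \dim(W \cap H_b) \leq t(1 - 1/q)$, which is far too weak. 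To get the much stronger $(1-1/s)$ bound I really do need the Wronskian input, which quantifies how the ranks at different points must overlap because low-degree polynomials can't vanish at too many points with high multiplicity. So the honest plan is: (i) reduce to the $\Cool$ basis via Lemma~\ref{lem:coolstructure}; (ii) for a subspace $W' \subseteq W$ of dimension $k$ lying inside a single $H_b$, invoke the Wronskian vanishing-order bound; (iii) sum over $b$ and use $d \leq (s-t')q$ together with $t \leq s t'$ to extract the factor $(1 - 1/s)$; (iv) conclude by Markov/averaging.

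The step I expect to be the main obstacle is step (iii): carefully bounding $\sum_{b} \binom{\dim(W \cap H_b)}{2}$-type quantities (or the degree of the relevant Wronskian) in terms of $d$ and $t'$, and checking that the $\Cool$ hypothesis really does let one replace the naive degree bound $t \cdot d$ by something proportional to $t \cdot d / q \leq t(s - t') \leq ts$, which after dividing by the $sq$-scale produces $(1 - 1/s)t$. Getting the constants to line up exactly — in particular squeezing out $1 - 1/s$ and not $1 - 1/(2s)$ — is where the $\Cool$ structure and the precise degree constraint $d \leq (s-t')q$ have to be used with no slack, and is presumably why the theorem is stated as a careful variant of~\cite{GK14}'s Theorem 14 rather than a direct citation.
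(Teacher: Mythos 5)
Your overall instinct (combine the \Cool{} structure from Lemma~\ref{lem:coolstructure} with a Wronskian vanishing-order argument, then average over $b$) is the right one, and you correctly diagnose that the naive rank-counting bound only gives $1-1/q$. But your plan as written would not go through, because you have not identified the key structural reduction that makes the Wronskian argument usable.

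The gap is this: you propose to take the Wronskian of a full basis of $W$ (the monomials $X^{cq}f_i$, of which there are $t$), or alternatively of a basis of some subspace $W' \subseteq W \cap H_b$. Neither works. For the former, $t$ can exceed $s$, the many basis elements $X^{cq}f_i$ for fixed $i$ are not "independent enough" for a useful Wronskian bound (and their total degree sum is far too large), and the Wronskian of $t>t'$ such polynomials can vanish identically. For the latter, the $\Cool$ structure gives you a nice basis for $W$ but not for $W\cap H_b$. What the paper actually does is introduce the fixed $t'$-dimensional subspace $U = \vspan\{f_1,\ldots,f_{t'}\}$ (with $t' \leq s$, independent of $b$), and prove the comparison inequality
$$ \dim(W \cap H_b) \;\leq\; \dim(W) - \dim(U) + \dim(U \cap H_b), $$
which follows from $U \subseteq W$ by looking at the images of $U$ and $W$ under the evaluation map $\Phi_b(f) = f^{(<s)}(b)$. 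This decouples the problem: it suffices to show $\dim(U) - \E_b[\dim(U\cap H_b)] \geq t/(s-t'+1)$. For \emph{that} you take the Wronskian $Q$ of exactly $f_1,\ldots,f_{t'}$ (which is nonzero by the distinct-degrees-mod-$q$ condition and Lemma~\ref{lem:wronskmodq}), apply the multiplicity bound $\mult(Q,b) \geq (s-t'+1)\dim(U\cap H_b)$ from~\cite{GK14}, and use $\E_b[\mult(Q,b)] \leq \deg(Q)/q \leq D/q$ where $D = \sum_i \deg f_i$. The final arithmetic ties $t$ to $t'$ and $D$ via the basis count $t \leq t'(1 + d/q) - D/q$ from Lemma~\ref{lem:coolstructure}, and it is here (not via $t \leq st'$) that the hypothesis $d \leq (s-t')q$ is consumed to get $(s-t'+1)t' - D/q \geq t$, whence $\E_b[\dim(W\cap H_b)] \leq t(1 - 1/(s-t'+1)) \leq t(1-1/s)$. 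Without the reduction to $U$ and the comparison inequality above, the Wronskian step has no well-posed target to apply to, so as stated your steps (ii)--(iii) would stall.
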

\begin{proof}
Let $f_1, \ldots, f_{t'}$ be those given by Lemma~\ref{lem:coolstructure}.
Let $d_i = \deg(f_i)$, and assume that $d_1 \leq d_2 \leq \ldots \leq d_{t'}$.
Let $U = \vspan\{f_1, \ldots, f_{t'} \}$.

For $b \in \F_q$, consider the map $\Phi_b: W \to \F_q^s$ given by
$$\Phi(f) = f^{(<s)}(b).$$
Then $\ker(\Phi_b) = W \cap H_b$.
Thus $\dim(W \cap H_b) = \dim(W) - \dim(\Phi(W))$.
Similarly, 
$\dim(U \cap H_b) = \dim(U) - \dim(\Phi(U))$.
Thus $$\dim(W \cap H_b) \leq \dim(W) - \dim(U) + \dim(U \cap H_b).$$
Below we will show that 
\begin{equation}
\label{eqdimu}
\dim(U) - \mathbb E_b[\dim(U \cap H_b)] \geq  \frac{t}{s -t' + 1}.
\end{equation}
Assuming this, we get that
$$\mathbb E[\dim(W \cap H_b) ] \leq t - \frac{t}{s -t'+1} \leq
\left( 1 - \frac{1}{s-t' + 1} \right) t \leq \left( 1- \frac{1}{s}\right) t,$$
which is what we wanted to prove.

We now prove Equation~\eqref{eqdimu}.
Let $D =\sum_{i=1}^{t'} d_i$.
By Item 2 of Lemma~\ref{lem:coolstructure}, we have that:
\begin{align*}
t &= \sum_{i=1}^{t'} \left(1+ \left\lfloor \frac{d-d_i}{q} \right\rfloor \right) \\
&\leq t'\left( 1+ \frac{d}{q} \right) - \frac{D}{q}.
\end{align*}

Let $Q(X)$ be the determinant of the Wronskian matrix of $f_1, \ldots, f_{t'}$.
By Lemma~\ref{lem:wronskmodq}, $Q(X)$ is a nonzero polynomial.
We have:
$$ \deg(Q) \leq D.$$
By Claim 19 from~\cite{GK14}, we have that
$$ \mult(Q, b) \geq (s- t' + 1) \cdot \dim(U \cap H_b).$$
Now using the fact that $\mathbb E_{b\in\F_q}[ \mult(Q,b)] \leq \frac{\deg(Q)}{q} \leq \frac{D}{q}$, we get:
$$\mathbb E_{b \in \F_q} [ \dim(U \cap H_b) ] \leq \frac{1}{s- t'+1} \mathbb E_{b \in \F_q} [ \mult(Q,b) ] \leq \frac{1}{s-t'+1} \cdot \frac{D}{q}.$$
Thus
$$\dim(U) - \mathbb E_{b \in \F_q} [ \dim(U \cap H_b) ] \geq t' -  \frac{1}{s-t'+1} \frac{D}{q}.$$

Thus 
\begin{align*}
(s-t'+1) \cdot \left( \dim(U) - \mathbb E_{b \in \F_q} [ \dim(U \cap H_b) ]\right) &\geq (s-t'+1)t' - \frac{D}{q}\\
&\geq t'\left( 1+ \frac{d}{q} \right) - \frac{D}{q} \qquad\mbox{By the assumption $d \leq (s-t')q$}\\
&\geq t.
\end{align*} 
This completes the proof of the lemma.
\end{proof}

\subsection{Properties of the space of solutions of a linear differential equation}

Let $v_0 + V$ be the affine space of low degree solutions
to a linear differential equation:
$$ v_0 + V = \left\{ f(X) \in \F_q[X] \mid \deg(f) \leq d \mbox{ and }A(X) + \sum_{i=0}^{r-1} B_i(X) f^{(i)}(X) = 0\right\}.$$ 
We now prove some properties of $V$ that will help us
in the second step of our list-decoding algorithm.

First we note that $V$ is \Cool{}.
\begin{lemma}
\label{lem:Vcool}
$V$ is \Cool{}.
\end{lemma}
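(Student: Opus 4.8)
The plan is to exploit the \emph{homogeneous} form of the defining differential equation together with the fact that, over $\F_q$, the polynomial $X^q$ behaves like a constant under Hasse derivatives of order less than $q$. Recall that $V$ is the linear space
$$V = \left\{ g(X) \in \F_q[X] \;\middle|\; \deg(g) \leq d \text{ and } \sum_{i=0}^{r-1} B_i(X)\, g^{(i)}(X) = 0 \right\},$$
the homogeneous part of $v_0 + V$, and that in this setting $r \leq s < q$, so every order of differentiation appearing in the equation is strictly less than $q$.

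First I would record the key identity: for any $g(X) \in \F_q[X]$ and any $k < q$, the Hasse derivative satisfies $(X^q\, g(X))^{(k)} = X^q\, g^{(k)}(X)$. This follows by expanding $(X+Z)^q\, g(X+Z) = (X^q + Z^q)\sum_{j \geq 0} g^{(j)}(X)\, Z^j$, using the Frobenius identity $(X+Z)^q = X^q + Z^q$ over $\F_q$, and reading off the coefficient of $Z^k$: the term $Z^q \cdot \sum_j g^{(j)}(X) Z^j$ contributes nothing in degrees below $q$, leaving exactly $X^q\, g^{(k)}(X)$. (Equivalently, one could invoke the Leibniz rule for Hasse derivatives together with $(X^q)^{(i)} = \binom{q}{i} X^{q-i} = 0$ for $1 \leq i \leq q-1$.)

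Now take any $f(X) \in V$ with $\deg(f) \leq d - q$, and set $h(X) = X^q f(X)$, so that $\deg(h) \leq d$. Applying the identity above with $k = i$ for each $i \in \{0, 1, \ldots, r-1\}$ (all of which are $< q$) gives $h^{(i)}(X) = X^q f^{(i)}(X)$, and hence
$$\sum_{i=0}^{r-1} B_i(X)\, h^{(i)}(X) = X^q \sum_{i=0}^{r-1} B_i(X)\, f^{(i)}(X) = X^q \cdot 0 = 0.$$
Thus $h = X^q f$ lies in $V$, which is precisely the assertion that $V$ is \Cool{}.

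There is essentially no obstacle here; the only point requiring care is that the orders of differentiation $0, 1, \ldots, r-1$ are all smaller than $q$ — this is guaranteed by $r \leq s < q$, inherited from the hypotheses of Theorem~\ref{thm:full-unimult-list-rec} and the choice of $r$ in Theorem~\ref{thm:linlistdec-mult} — so that $X^q$ genuinely factors out of each term $B_i(X)\, h^{(i)}(X)$. (Had some $B_i$ involved a derivative of order $\geq q$, the extra summand from $Z^q$ in the expansion above would appear and the argument would break; but that case does not arise.)
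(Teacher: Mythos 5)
Your proof is correct and takes essentially the same approach as the paper: both reduce to the key identity $(X^q f)^{(i)} = X^q f^{(i)}(X)$ for $i < q$ and then factor $X^q$ out of the defining linear differential operator. The paper derives this identity via the Leibniz rule for Hasse derivatives together with $\binom{q}{i'} = 0$ in $\F_q$ for $1 \le i' \le q-1$ (which is exactly your parenthetical alternative), whereas your primary derivation uses the Frobenius expansion $(X+Z)^q = X^q + Z^q$; this is an equivalent and arguably cleaner route to the same fact, and you are also somewhat more explicit than the paper about the degree bookkeeping ($\deg(f) \le d-q$ so that $\deg(X^q f) \le d$) required by the definition of \Cool.
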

\begin{proof}
Observe that:
$$V = \left\{f(X) \in \F_q[X] \mid \deg(f) \leq d \mbox{ and } \sum_{i=0}^r B_i(X) f^{(i)}(X) = 0\right\}.$$
Now take any $f(X) \in V$. We want to show that $g(X) = X^q f(X) \in V$.
For any $i < q$, we have:
\begin{align*}
g^{(i)}(X) &= \sum_{i' = 0}^i (X^{q})^{(i')} \cdot f^{(i-i')}(X)\\
&= \sum_{i' = 0}^i {q \choose i'}X^{q-i'} \cdot f^{(i-i')}(X)\\
&= X^{q} f^{(i)}(X).
\end{align*}
Thus (using the fact that $r < q$):
$$ \sum_{i=0}^r B_i(X) g^{(i)}(X) = X^q(\sum_{i=0}^r B_i(X) f^{(i)}(X)) = 0.$$
We conclude that $g(X) \in V$, as desired.
\end{proof}

\begin{lemma}
\label{lem:Vdim}
$$\qdim(V) \leq r-1.$$
\end{lemma}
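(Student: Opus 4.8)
The plan is to combine the \Cool{} structure of $V$ established in Lemmas~\ref{lem:Vcool} and~\ref{lem:coolstructure} with a Wronskian argument coming from the linear differential equation that defines $V$. The key observation is that the polynomials $f_1,\dots,f_{t'}$ supplied by Lemma~\ref{lem:coolstructure} all lie in $V$, hence each of them satisfies the single order-$(r-1)$ homogeneous linear differential equation $\sum_{i=0}^{r-1} B_i(X)\,f^{(i)}(X)=0$ cutting out $V$; so if $t'$ were at least $r$ we would have $r$ ``independent'' solutions of one such equation, which a Wronskian computation rules out.

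Concretely, I would first apply Lemma~\ref{lem:Vcool} to see that $V$ is \Cool{}, and then Lemma~\ref{lem:coolstructure} to obtain $f_1,\dots,f_{t'}\in V$ with pairwise distinct degrees modulo $q$, where $t'=\qdim(V)$; it then suffices to show $t'\le r-1$. Assume for contradiction that $t'\ge r$ and consider $f_1,\dots,f_r$. Since each $f_i\in V$, we have $\sum_{j=0}^{r-1} B_j(X)\,f_i^{(j)}(X)=0$ for every $i\in[r]$. This says exactly that the row vector $(B_0(X),\dots,B_{r-1}(X))$ --- which is not identically zero --- annihilates from the left the $r\times r$ Wronskian matrix $W(f_1,\dots,f_r)$, whose $(j{+}1)$-st row is $\big(f_1^{(j)}(X),\dots,f_r^{(j)}(X)\big)$ for $j=0,\dots,r-1$; consequently the determinant of $W(f_1,\dots,f_r)$ vanishes identically. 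On the other hand, $f_1,\dots,f_r$ have pairwise distinct degrees modulo $q$ and $r<q$, so Lemma~\ref{lem:wronskmodq} guarantees that this determinant is a \emph{nonzero} polynomial --- a contradiction. Hence $t'\le r-1$, i.e.\ $\qdim(V)\le r-1$.

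The one point that needs care is the claim that $(B_0,\dots,B_{r-1})$ is not identically zero; this should be read off from the construction underlying Theorem~\ref{thm:linlistdec-mult} (the Guruswami--Wang interpolation step), since if all the $B_j$ vanished the defining relation would put no constraint whatsoever on the derivatives of $f$ and $V$ could have $q$-dimension as large as $q$. Apart from this, the argument only uses the elementary fact that a square matrix with a nontrivial left kernel is singular, together with the already-established Lemma~\ref{lem:wronskmodq}; the hypothesis $r<q$ that the latter needs is available here because $r\le s<q$ throughout this section. I therefore expect no genuine obstacle beyond this bit of bookkeeping.
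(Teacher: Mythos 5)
Your proof is correct and follows essentially the same route as the paper: apply Lemma~\ref{lem:coolstructure} to obtain $f_1,\dots,f_{t'}\in V$, note each satisfies $\sum_{i=0}^{r-1}B_i(X)f^{(i)}(X)=0$, and if $t'\ge r$ the resulting left-nullvector of the Wronskian matrix contradicts its nonsingularity from Lemma~\ref{lem:wronskmodq}. The one refinement you add — explicitly flagging that $(B_0,\dots,B_{r-1})$ must not be identically zero, which the paper leaves implicit but does follow from the interpolation step in Theorem~\ref{thm:linlistdec-mult} (if all $B_i$ vanished, $A$ would have to vanish to multiplicity $s-r+1$ on all of $\F_q$, forcing $A=0$ since $\deg A < (s-r+1)q$) — is a legitimate bookkeeping point worth noting.
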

\begin{proof}
Apply Lemma~\ref{lem:coolstructure} to $V$ to obtain $f_1, \ldots, f_{r'} \in V$, where $r' = \qdim(V)$.

Suppose $r' \geq r$.
Since $f_1, \ldots, f_{r} \in V$
we have that for each $j \in [r]$,
$$ \sum_{i=0}^{r-1} B_i(X) f_j^{(i)}(X) = 0 .$$
This means that the Wronskian matrix of
$(f_1, \ldots, f_{r})$
is singular.
However, Lemma~\ref{lem:wronskmodq} shows that the Wronskian matrix is nonsingular, a contradiction.
This completes the proof.
\end{proof}

The next two lemmas and the following corollary are trivial and we omit the proofs.

For $b \in \F_q$, let $$H_b = \{f(X) \in \F_q[X]\mid \deg(f) \leq d \mbox{ and } f^{(<s)}(b) = 0\}.$$
\begin{lemma}
\label{lem:Hcool}
For each $b \in \F_q$,  $H_b$ is \Cool{}.
\end{lemma}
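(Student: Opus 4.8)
The plan is to argue exactly as in the proof of Lemma~\ref{lem:Vcool}, since the structure of $H_b$ is even more transparent than that of $V$. Fix $b \in \F_q$ and let $f(X) \in H_b$ satisfy $\deg(f) \leq d - q$; I need to show that $g(X) := X^q f(X)$ belongs to $H_b$, i.e.\ that $\deg(g) \leq d$ and $g^{(<s)}(b) = 0$. The degree bound is immediate: $\deg(g) = q + \deg(f) \leq q + (d - q) = d$, so $g$ does lie in the ambient space of polynomials of degree at most $d$.

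For the vanishing condition, I would invoke the product rule for Hasse derivatives together with the computation of $(X^q)^{(i')}$ already used in Lemma~\ref{lem:Vcool}. Expanding $(X + Z)^q = \sum_{i'} {q \choose i'} X^{q - i'} Z^{i'}$ shows that $(X^q)^{(i')} = {q \choose i'} X^{q - i'}$, and since $q$ is prime we have ${q \choose i'} \equiv 0$ in $\F_q$ for all $0 < i' < q$. Because $s < q$, for every $i \leq s - 1$ the only surviving term in the product-rule sum is the one with $i' = 0$, so
\[ g^{(i)}(X) = \sum_{i' = 0}^{i} (X^q)^{(i')} f^{(i - i')}(X) = X^q \cdot f^{(i)}(X). \]
Evaluating at $b$ yields $g^{(i)}(b) = b^q \cdot f^{(i)}(b) = 0$ for all $i < s$, using that $f \in H_b$ gives $f^{(<s)}(b) = 0$. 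Hence $g^{(<s)}(b) = 0$, so $g \in H_b$, which is precisely the \Cool{} property.

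There is essentially no obstacle here: the entire content is the identity $g^{(i)} = X^q f^{(i)}$ for $i < q$, which rests on ${q \choose i'} \equiv 0 \pmod q$ for $0 < i' < q$ and on the hypothesis $s < q$ keeping every derivative order below $q$. This is exactly why the paper groups Lemma~\ref{lem:Hcool} among the trivial statements whose proofs are omitted, and why the argument simply reuses the core computation from the proof of Lemma~\ref{lem:Vcool} verbatim.
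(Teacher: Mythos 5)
Your proof is correct and is precisely the argument the paper has in mind: the paper declares Lemma~\ref{lem:Hcool} trivial and omits the proof, and your derivation simply replays the key computation $g^{(i)} = X^q f^{(i)}$ (for $i < q$, using $q$ prime and $\binom{q}{i'} \equiv 0$ for $0 < i' < q$) from the explicit proof of Lemma~\ref{lem:Vcool}, then evaluates at $b$ to get vanishing. Nothing is missing and the hypotheses $s < q$ and $q$ prime that you invoke are exactly those in force in Theorem~\ref{thm:full-unimult-list-rec}.
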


\begin{lemma}
\label{lem:intcool}
The intersection of \Cool{} subspaces is \Cool{}.
\end{lemma}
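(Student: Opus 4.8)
The plan is to unwind the definition of \Cool{} directly; there is genuinely nothing subtle here. Let $\{W_i\}_{i \in I}$ be a family of \Cool{} subspaces of $\F_q[X]$, each consisting of polynomials of degree at most $d$, and set $W = \bigcap_{i \in I} W_i$. The goal is to check the two defining features of a \Cool{} subspace: that $W$ is a subspace of polynomials of degree at most $d$, and that it is closed under multiplication by $X^q$ of its sufficiently-low-degree elements.

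First I would observe that $W$ is again an $\F_q$-linear subspace of $\F_q[X]$, since an arbitrary intersection of subspaces is a subspace, and that every element of $W$ has degree at most $d$ because this already holds inside any single $W_i$.

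Next, to verify the closure property, fix $f(X) \in W$ with $\deg(f) \leq d - q$. For each $i \in I$ we have $f(X) \in W_i$, and since $W_i$ is \Cool{} and $\deg(f) \leq d-q$, the definition gives $X^q f(X) \in W_i$. As this holds for every $i \in I$, we conclude $X^q f(X) \in \bigcap_{i \in I} W_i = W$. Hence $W$ is \Cool{}, and the same argument applies verbatim to intersections of arbitrarily many spaces (in particular to the binary intersections $H_b \cap V$ needed in the next subsection).

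The main ``obstacle'' is that there is no obstacle: every step is a one-line consequence of the relevant definition, which is why the paper lists this lemma among those whose proof is omitted. The only point worth flagging is that the degree bound $d$ is a fixed common parameter shared by all the $W_i$, as it is throughout this subsection; the notion of \Cool{} is defined relative to that same $d$ for each factor and for the intersection, and it is this shared $d$ that makes the statement meaningful.
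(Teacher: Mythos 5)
Your proof is correct and is the only sensible argument: the paper explicitly omits this proof as trivial, and your direct unwinding of the definition (an intersection of subspaces of degree at most $d$ is a subspace of degree at most $d$, and closure under multiplication by $X^q$ passes to the intersection because it holds in each factor) is exactly what one would write.
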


\begin{corollary}
\label{cor:Vjcool}
Given elements $b_1, \ldots, b_j \in \F_q$, the space
$V \cap H_{b_1} \cap H_{b_2} \ldots H_{b_j}$ is \Cool{}.
\end{corollary}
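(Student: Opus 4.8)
The plan is to derive this immediately from the three preceding lemmas, so the argument is essentially bookkeeping. First I would recall that $V$ is \Cool{} by Lemma~\ref{lem:Vcool}, and that each of $H_{b_1}, \ldots, H_{b_j}$ is \Cool{} by Lemma~\ref{lem:Hcool}. All of these are subspaces of $\F_q[X]$ consisting of polynomials of degree at most $d$, so the space $V \cap H_{b_1} \cap \cdots \cap H_{b_j}$ is a finite intersection of \Cool{} subspaces (with the same parameter $d$ throughout).

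Next I would invoke Lemma~\ref{lem:intcool}, which states that the intersection of two \Cool{} subspaces is again \Cool{}, and iterate it $j$ times (formally, by induction on the number of subspaces being intersected) to conclude that $V \cap H_{b_1} \cap \cdots \cap H_{b_j}$ is \Cool{}. Alternatively, one can give the fully self-contained one-line argument directly from the definition: writing $W = V \cap H_{b_1} \cap \cdots \cap H_{b_j}$, take any $f(X) \in W$ with $\deg(f) \leq d - q$; since $f$ lies in each of the \Cool{} spaces $V, H_{b_1}, \ldots, H_{b_j}$, we get $f(X)\cdot X^q \in V$ and $f(X) \cdot X^q \in H_{b_i}$ for every $i$, hence $f(X) \cdot X^q \in W$, which is precisely the defining property of \Cool{}.

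There is no genuine obstacle here: all of the mathematical content already sits in Lemmas~\ref{lem:Vcool}, \ref{lem:Hcool}, and \ref{lem:intcool} (the latter two being explicitly flagged as trivial), and the corollary is just their conjunction applied to this particular collection of subspaces. The only point worth a moment's care is verifying that every space involved consists of polynomials of degree at most $d$, so that the ``$(X^q,d)$-closed'' condition is applied consistently — and this is immediate from the definitions of $V$ and of $H_b$.
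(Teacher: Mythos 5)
Your argument is correct and is exactly the route the paper intends: the paper explicitly states that Lemmas~\ref{lem:Hcool}, \ref{lem:intcool}, and this corollary are trivial and omits their proofs, and the intended reasoning is precisely to combine Lemma~\ref{lem:Vcool}, Lemma~\ref{lem:Hcool}, and iterated applications of Lemma~\ref{lem:intcool}. Your additional one-line verification directly from the definition is a fine (and equally trivial) alternative, but there is no meaningful difference in approach.
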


\subsection{Pruning the list}
\label{sec:unimultprune}

Having developed the relevant machinery about \Cool{} subspaces, we
can now describe and analyze the second part of the list-decoding algorithm for univariate multiplicity codes. 
Below we give the algorithm $\ALGPRUNEMULT$, and after that we analyze it. 
The algorithm  is given as input $S : \F_q \to {\F_q^s \choose \ell}$, an $\F_q$-affine subspace $v_0 + V \subseteq \F_q[X]$, and a parameter $\tau \in \N$.

\medskip
\putinbox{
\noindent {\bf Algorithm $\ALGPRUNEMULT(S,v_0 + V,\tau)$}
\begin{enumerate}
\item Initialize $\calL' = \emptyset$.
\item Pick $b_1, b_2, \ldots, b_\tau \in \F_q$ independently and uniformly at random.
\item For each choice of $y_1 \in S(b_1), y_2 \in S(b_2), \ldots, y_\tau \in S(b_\tau)$:
\label{lookforP-mult}
\begin{itemize}
\item  If there is exactly one codeword $P(X) \in v_0 + V$ such that $P^{(<s)}(b_j) = y_{j}$ for all $j \in [\tau] $, then:
$$ \calL' \leftarrow \calL' \cup \{ P(X) \}.$$
\end{itemize}
\item Output $\calL'$.
\end{enumerate}
}

\begin{lemma}\label{lem:algprune-mult}
Suppose $v_0 + V$ is an affine space of polynomials of degree at most $d$.
Suppose $V$ is \Cool{} and $\qdim(V) = r'$. Suppose $d \leq (s - r') q$.

Then the algorithm $\ALGPRUNEMULT$ runs in time $\poly(q, s, \ell^{\tau})$, and outputs a list $\calL'$ containing at most $\ell^{\tau}$ polynomials, such that any polynomial  $P(X) \in v_0+ V$ 
with $\dist(\MultEnc_s(P), S)\leq \alpha$
appears in $\calL'$ with probability at least 
$$(1-\alpha)^\tau - r' s (1 - 1/s)^{\tau}.$$ 
\end{lemma}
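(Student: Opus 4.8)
The plan is to mirror the proof of Lemma~\ref{lem:algprune} for folded Reed-Solomon codes, substituting the \Cool{} dimension-decay bound of Theorem~\ref{thm:cool-subspace-design} for the subspace-design bound (Theorem~\ref{thm:subspace-design}) used there. The bookkeeping is immediate: $\ALGPRUNEMULT$ iterates over the at most $\ell^\tau$ tuples $(y_1,\dots,y_\tau)\in\prod_{j}S(b_j)$, adds at most one polynomial per tuple, so $|\calL'|\le\ell^\tau$; and for each tuple, deciding whether there is a unique $P\in v_0+V$ with $P^{(<s)}(b_j)=y_j$ for all $j$ amounts to solving a linear system over $\F_q$ in the $\dim(V)\le s\cdot\qdim(V)=sr'$ coordinates of $V$, giving total running time $\poly(q,s,\ell^\tau)$.

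Fix $\hat P\in v_0+V$ with $\dist(\MultEnc_s(\hat P),S)\le\alpha$, and over the random $b_1,\dots,b_\tau$ define $E_1$ to be the event that $\hat P^{(<s)}(b_j)\in S(b_j)$ for all $j$, and $E_2$ to be the event that $V\cap H_{b_1}\cap\cdots\cap H_{b_\tau}=\{0\}$, i.e.\ every nonzero $Q\in V$ has $Q^{(<s)}(b_j)\neq 0$ for some $j$. Since each $b_j$ is uniform in $\F_q$ and at most an $\alpha$-fraction of points $b$ have $\hat P^{(<s)}(b)\notin S(b)$, we have $\Pr[E_1]\ge(1-\alpha)^\tau$. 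On $E_2$, any fixed tuple $(y_1,\dots,y_\tau)$ is matched by at most one $P\in v_0+V$ (two would differ by a nonzero element of $V$ vanishing to order $s$ at every $b_j$); so if $E_1$ also holds, then in the iteration $y_j=\hat P^{(<s)}(b_j)$ the algorithm finds $\hat P$ as that unique element and adds it to $\calL'$. Hence $\hat P\in\calL'$ whenever $E_1\cap E_2$ holds, and it remains to show $\Pr[E_1\cap E_2]\ge(1-\alpha)^\tau-r's(1-1/s)^\tau$; by the union bound this reduces to proving $\Pr[E_2]\ge 1-r's(1-1/s)^\tau$.

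For the bound on $\Pr[E_2]$ (which I would isolate as a claim, as in Claim~\ref{clm:2}), set $V_j:=V\cap H_{b_1}\cap\cdots\cap H_{b_j}$ and $t_j:=\dim(V_j)$, so $t_0=\dim(V)\le sr'$ and $E_2=\{t_\tau=0\}$. By Corollary~\ref{cor:Vjcool} each $V_j$ is \Cool{}; intersecting can only shrink the set of degrees mod $q$, so $\qdim(V_j)\le\qdim(V)=r'\le s$ and $d\le(s-r')q\le(s-\qdim(V_j))q$ --- exactly the hypotheses of Theorem~\ref{thm:cool-subspace-design} for $W=V_j$. Applying that theorem conditionally on $b_1,\dots,b_j$ gives $\E[t_{j+1}\mid b_1,\dots,b_j]=\E_{b\in\F_q}[\dim(V_j\cap H_b)]\le(1-1/s)\,t_j$, hence $\E[t_\tau]\le(1-1/s)^\tau t_0\le r's(1-1/s)^\tau$, and Markov's inequality gives $\Pr[t_\tau\ge 1]\le r's(1-1/s)^\tau$, i.e.\ $\Pr[E_2]\ge 1-r's(1-1/s)^\tau$, completing the proof.

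The argument is essentially routine given the \Cool{}-subspace machinery already developed; the only point that needs care --- and the one genuine difference from the folded Reed-Solomon proof --- is checking that the hypotheses of Theorem~\ref{thm:cool-subspace-design} are preserved all the way along the iteration (coolness via Corollary~\ref{cor:Vjcool}, the $q$-dimension staying $\le s$, and the degree condition $d\le(s-\qdim(V_j))q$ surviving as $\qdim$ decreases), together with the elementary inequality $\dim(W)\le s\cdot\qdim(W)$ that converts the $(1-1/s)^\tau$ contraction into the claimed $r's(1-1/s)^\tau$ tail.
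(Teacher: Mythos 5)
Your proof is correct and follows essentially the same route as the paper's: the same split into events $E_1$ and $E_2$, the same observation that $E_1\cap E_2$ forces $\hat P$ into $\calL'$, and the same iterative application of Theorem~\ref{thm:cool-subspace-design} (with coolness preserved by Corollary~\ref{cor:Vjcool} and $\qdim$ monotone under intersection) followed by Markov's inequality, which is exactly Claim~\ref{clm:mult2}.
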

\begin{proof}
We clearly have that $|\calL'| \leq \ell^{\tau}$, and that the algorithm has the claimed running time. 
Fix a polynomial $\hat P \in v_0 + V$ such that $\dist(\MultEnc_s(\hat P) ,S)\leq \alpha$. Below we will show that $\hat P$ belongs to $\calL'$ with probability at least 
$$(1-\alpha)^\tau - r's (1 - 1/s)^{\tau}.$$

Let $E_1$ denote the event that $\hat P^{(<s)}(b_j) \in S(b_j)$ for all $j \in [\tau]$. 
Let $E_2$ denote the event that for all nonzero polynomials $Q \in V$
there exists some $j \in [\tau]$ such that $Q^{(<s)}(b_j) \neq 0$.
By assumption that $\dist(\MultEnc_s(\hat P) ,S)\leq \alpha$, we readily have that 
$$\Pr[E_1] \geq (1-\alpha)^{\tau}.$$
Claim \ref{clm:mult2} below also shows that 
$$ \Pr[E_2] \geq 1 - r's \cdot (1- 1/s)^{\tau}.$$
So both $E_1$ and $E_2$ occur with probability at least
$$(1-\alpha)^\tau - r's \cdot (1- 1/s)^{\tau}.$$

If $E_2$ occurs, then for every choice of $y_1 \in S(b_1), y_2 \in S(b_2), \ldots, y_\tau \in S(b_2)$,
there can be at most one polynomial $P(X) \in v_0 + V$ such that $P^{(<s)}(b_j) = y_j$ for all $j \in [\tau]$
(otherwise, the difference $Q = P_1 - P_2 \in V$ of two such distinct polynomials would
have $Q^{(<s)}(b_j) = 0$ for all $j\in [\tau]$, contradicting $E_2$).
If $E_1$ also occurs, then in the iteration of Step~\ref{lookforP} where $y_{j} = \hat P^{(<s)}(b_j)$ for each $j \in [\tau]$, the algorithm will
take  $P = \hat P$, and thus $\hat P$ will be included in $\calL'$. This completes the proof of the lemma.
\end{proof}

It remains to prove the following claim.
\begin{claim}\label{clm:mult2}
$$ \Pr[E_2] \geq 1 - r's \cdot (1- 1/s)^{\tau}.$$
\end{claim}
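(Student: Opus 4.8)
The plan is to mirror the proof of Claim~\ref{clm:2}, replacing the subspace-design bound of Theorem~\ref{thm:subspace-design} with its \Cool{} analogue, Theorem~\ref{thm:cool-subspace-design}. Observe first that $E_2$ fails precisely when the space $V_\tau := V \cap H_{b_1} \cap \cdots \cap H_{b_\tau}$ contains a nonzero polynomial, i.e.\ when $t_\tau := \dim(V_\tau) \geq 1$. So it suffices to bound $\Pr[t_\tau \geq 1]$, and for this I would track the expected dimension of the partial intersections $V_j := V \cap H_{b_1} \cap \cdots \cap H_{b_j}$, with $t_j := \dim(V_j)$, as $j$ ranges from $0$ to $\tau$.

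The structural inputs I would assemble are: each $V_j$ is \Cool{} (Corollary~\ref{cor:Vjcool}, via Lemmas~\ref{lem:Vcool}, \ref{lem:Hcool}, \ref{lem:intcool}); $\qdim(V_j) \leq \qdim(V) = r'$ (since every nonzero element of $V_j$ is a nonzero element of $V$, its degree mod $q$ is already counted in $\qdim(V)$); and in particular $\qdim(V_j) \leq r' \leq s$ and the degree hypothesis $d \leq (s - \qdim(V_j))q$ of Theorem~\ref{thm:cool-subspace-design} is inherited from $d \leq (s-r')q$. With these in hand, conditioning on $b_1,\ldots,b_j$ (so that $V_j$ is determined) and applying Theorem~\ref{thm:cool-subspace-design} to $W = V_j$ with the fresh, independent, uniform $b_{j+1} \in \F_q$ gives
\[ \E[t_{j+1} \mid b_1,\ldots,b_j] = \E_{b_{j+1} \in \F_q}\bigl[\dim(V_j \cap H_{b_{j+1}})\bigr] \leq \bigl(1 - \tfrac1s\bigr)\, t_j. \]
Taking total expectations and iterating yields $\E[t_\tau] \leq (1-\tfrac1s)^\tau \, \E[t_0] = (1-\tfrac1s)^\tau \dim(V)$; since $V$ is \Cool{} with $d \leq (s-r')q \leq sq$, the observation following the definition of \Cool{} gives $\dim(V) \leq s\cdot\qdim(V) = s r'$, so $\E[t_\tau] \leq r' s (1-\tfrac1s)^\tau$. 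Markov's inequality then gives $\Pr[E_2] = \Pr[t_\tau = 0] = 1 - \Pr[t_\tau \geq 1] \geq 1 - \E[t_\tau] \geq 1 - r' s (1-\tfrac1s)^\tau$, as claimed.

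I expect the only genuine subtlety to be verifying that the hypotheses of Theorem~\ref{thm:cool-subspace-design} propagate down to every $V_j$ — \Cool{}-ness and the degree condition $d \leq (s - \qdim(V_j))q$ — which is exactly what the monotonicity $\qdim(V_j)\leq r'$ and Corollary~\ref{cor:Vjcool} are there for. Everything else is the same bookkeeping as in Claim~\ref{clm:2}: the expectation recursion, the factor $s$ (coming from $\dim(V) \leq s\cdot\qdim(V)$, in place of the direct bound $\dim(V) \leq r$ available in the FRS case), and a final application of Markov. One should also note that the $b_j$ being sampled i.i.d.\ by $\ALGPRUNEMULT$ is exactly what makes the conditional-expectation step legitimate.
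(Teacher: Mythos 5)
Your proof is correct and follows essentially the same route as the paper: tracking $\E[t_j]$ for the nested intersections $V_j$, invoking Corollary~\ref{cor:Vjcool} together with $\qdim(V_j)\le r'$ to justify repeated use of Theorem~\ref{thm:cool-subspace-design}, bounding $\dim(V)\le r's$, and finishing with Markov. The only (cosmetic) difference is that you condition on $b_1,\ldots,b_j$ rather than on the event $\{t_j=t\}$ as the paper does; both lead to the same recursion via the tower property.
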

\begin{proof}[Proof of Claim \ref{clm:mult2}]

We will use Theorem~\ref{thm:cool-subspace-design}. 

Since $\qdim(V) \leq r'$, we have $\dim(V) \leq r's$.
For $0 \leq j \leq \tau$, let $$ V_j :=  V \cap H_{b_1}\cap H_{b_2} \cap \ldots \cap H_{b_j},$$ and $t_j: =  \dim( V_j ).$
Observe that $ r's \geq t_0 \geq t_1 \geq \ldots \geq t_\tau$, and that event $E_2$ holds if and only if $t_{\tau} =0$.

By Corollary~\ref{cor:Vjcool},
all the $V_j$ are \Cool{}. Since $V_j \subseteq V$, we have
$\qdim(V_j) \leq \qdim(V) = r'$.

We now study the distribution of $t_0, \ldots, t_{\tau}$.
Since $\qdim(V_j) \leq r'$ and $d \leq (s-r')q$, we 
may apply Theorem~\ref{thm:cool-subspace-design}.
We get:
$$
\E[t_{j+1} \mid t_j = t] 
= \E_{b_{j+1} \in \F_q}[\dim(V_j \cap H_{b_j}|) \mid \dim(V_j)=t]
\leq (1 - 1/s) \cdot t.
$$

Thus
$$ \E[t_{j+1}] \leq (1 - 1/s) \cdot \E[t_j],$$
and
$$\E[t_\tau] \leq \left(1-1/s\right)^\tau \cdot \E[t_0] \leq \left(1-1/s\right)^\tau \cdot r's.$$

Finally, by Markov's inequality this implies in turn that 
$$\Pr[E_2]=\Pr[t_{\tau}=0] = 1- \Pr[t_{\tau}\geq 1] \geq 1- r's \left(1-1/s\right)^\tau .$$
\end{proof}

\subsection{Proof of Theorem \ref{thm:full-unimult-list-rec}}\label{subsec:mult-main}

We now prove Theorem \ref{thm:full-unimult-list-rec} based on Theorem \ref{thm:linlistdec-mult} and Lemma \ref{lem:algprune-mult}.

\begin{proof}[Proof of Theorem \ref{thm:full-unimult-list-rec}]
Let $S : \Domain \to {\F_q^s \choose \ell}$ be the received sequence of input lists. We would like to find a list $\calL$ of size
$ \ell^{O\left(s \log s\right)} \cdot s^{O(1)}$ that contains all polynomials $P(X)$ of degree at most $d$ with $\dist(\MultEnc_s(P), S) \leq \alpha$.

Let $v_0 + V$ be the affine subspace found by the algorithm of
 Theorem \ref{thm:linlistdec-mult}   for $S$ and $r = 4\ell/\delta$ (and so $s > \frac{4r}{\delta}$ by our assumption that $s > 16\ell/\delta^2$). Note that for this choice of $r$ the RHS of (\ref{mult-eqdecoderadius}) is at least
$$1 - \frac{\delta}{4} - \frac{1}{1- \delta/4} (1-\delta) > \frac{\delta}{3} > \alpha,$$
and so all polynomial $P(X)$ of degree at most $d$ with $\dist(\MultEnc_s(P),S)\leq \alpha$ are included in $v_0 + V$. 

By Lemma~\ref{lem:Vdim}, $\qdim(V) \leq r$.
Since $V$ is \Cool{}, $\qdim(V) \leq r$, and
$$d \leq (1-\delta)sq \leq (s- 4r) q \leq (s-r)q,$$
we may invoke Lemma \ref{lem:algprune-mult}.
It tells us that 
algorithm $\ALGPRUNEMULT$ with inputs $S$, $v_0 + V$ and $\tau = O(s \log(2rs))$ returns a list
$\calL'$ of size at most $\ell^\tau$ such that
 each polynomial $P(X)$ of degree at most $d$ with $\dist(\MultEnc_s(P), S) \leq \alpha$ is included in $\calL'$ with probability $p_0$, where:
\begin{align*}
p_0 &\geq (1-\alpha)^\tau - rs  \left( 1 - \frac{1}{s}\right)^\tau \\
&\geq  (1-\frac{1}{2s})^\tau - rs \left( 1 - \frac{1}{s} \right)^\tau \\
&\geq \frac{1}{2}(1-\frac{1}{2s})^{\tau},
\end{align*}
where the first inequality holds since $\alpha \leq \frac{1}{2s}$, and the second inequality holds since 
$$(1-\frac{1}{2s}) > (1+\frac{1}{2s}) (1-\frac{1}{s}) > e^{1/6s} (1-\frac{1}{s}) > e^{\ln(2rs)/\tau} (1-\frac{1}{s}).$$

The above implies in turn that 
$$|\calL| \leq \frac {|\calL'|}  {p_0} \leq 2  \left(\frac {\ell} {1-\frac{1}{2s}}\right)^{\tau} \leq \ell^{O\left(s \log s\right)} s^{O(1)}.$$   

Moreover, by running the algorithm $\ALGPRUNEMULT$  $O(\frac 1 {p_0}  \log|\calL|) $ times and returning the union of all output lists, by a union bound, all elements of $\calL$ will appear in the union of the output lists with high probability (say, at least $0.99$). This gives a randomized list recovery algorithm with output list size $\ell^{O\left(s \log s\right)}$  and running  time $\poly(q, s,\ell^{ O(s \log s )})$. 
\end{proof}

\section{Local list-recovery of multivariate multiplicity codes}\label{sec:Multimult}

In this section we show that multivariate multiplicity codes can be locally list recovered from constant sized input lists with small (or even constant) sized output lists. 

Let $\delta> 0$ be a parameter, and $m,s$ be integers. Let $d = (1-\delta)sq$. Let $\MULT^{(m)}_{q,s}(d)$
be the corresponding multiplicity code. 
Given $\ell, \alpha$, we will demonstrate a local list recovery algorithm for multivariate multiplicity codes of degree $d$ and multiplicity parameter $s$, with input list size $\ell$ and error tolerance $\alpha$.  
Our main technical statement is Lemma~\ref{lem:multimult-params}, which we will state and prove later in this section.
However, we first state two instantiations of Lemma~\ref{lem:multimult-params}, which show that (a) multivariate multiplicity codes are locally list-recoverable with sub-polynomial query complexity and sub-polynomial list size, and (b) multivariate multiplicity codes are locally list-recoverable with polynomial query complexity and constant list size.

Our first statement establishes sub-polynomial query complexity and list size.
\begin{theorem}\label{thm:main-multimult}
Let $\eps > 0$ be sufficiently small, and let $m, \ell > 0$ be any integers.

Then there is a multivariate multiplicity code $C \subseteq \Sigma^N$ with distance $\delta = \eps/2m$ and rate at least $1 - \eps$ so that 
 $C$ is $(t,\alpha, \ell, L)$-locally-list-recoverable for 
\[ \alpha \leq \frac{\delta^2}{160 \ell} ,\]
\[ L = \ell^{O( \ell m^2 \log(\ell m/\eps)/\eps^2 ) } \cdot \left( \frac{m}{\eps} \right)^{O(1)}\] 
and
\[ t = \left( \frac{ L m \ell }{\eps} \right)^{O(m)}. \]
Moreover, we have
\[ N = \left( \frac{ L m \ell }{\eps} \right)^{O(m^2)}. \]
and 
\[ |\Sigma| = \left( \frac{ L m \ell }{\eps } \right)^{ O( (\ell m^2 / \eps^2 )^m ) }. \]

The code is explicit, and can be locally-list-recovered in time $\poly(t)$.

In particular, if $\eps, \ell$ are constant, and $\ell > 1$, we have
\[ L = m^{O(m^2)}, \qquad t = m^{O(m^3)}, \qquad |\Sigma| = m^{O(m^{2m + 2})}, \qquad  N = m^{O(m^4)}, \]
and so
\[ L = \exp\left( \sqrt{ \log(N)\cdot \log\log(N) }\right), \qquad t = \exp\left( \log^{3/4}(N) \cdot (\log\log(N))^{1/4} \right), \]
\[ |\Sigma| = \exp\left(\exp\left( \log^{1/4}(N)\cdot \log\log^{3/4}(N) \right)\right). \]

\end{theorem}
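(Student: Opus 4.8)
The plan is to obtain Theorem~\ref{thm:main-multimult} as an instantiation of the general local-list-recovery statement for multivariate multiplicity codes (Lemma~\ref{lem:multimult-params}); thus the work splits into (i) the algorithm and analysis behind that lemma, and (ii) choosing the free parameters and simplifying. I sketch both.

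For (i), I would follow the Arora--Sudan / Sudan--Trevisan--Vadhan paradigm, adapted to the high-rate setting. The global algorithm $A$ first generates advice: it picks a uniformly random $\ba\in\F_q^m$ and runs $\RecCand$ (Section~\ref{ssec:reccand}) to produce a short list $Z$ of candidates for the \emph{high-order} evaluation $Q^{(<s^*)}(\ba)$, where $s\le s^*<q$ is a large multiplicity chosen so that distinct polynomials arising in the relevant restricted lists almost never agree to order $s^*$ at a random point. Internally, $\RecCand$ restricts $S$ to many uniformly random lines through $\ba$, runs the \emph{global} univariate list-recovery algorithm of Theorem~\ref{thm:full-unimult-list-rec} on each restriction (legitimate since $q$ is prime, $s<q$, $\ell<\delta^2 s/16$, and, by a Chernoff bound, the error fraction on a typical line is still below $\tfrac1{2s}$ whenever $\alpha\le\delta^2/(160\ell)$), and aggregates the per-line candidate jets at $\ba$; the aggregation is a list-recovery problem for a Reed--Muller code evaluated on a product set, handled in Appendix~\ref{app:rm}. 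In the decoding phase, for each $z\in Z$ we form the oracle machine $M^S[\ba,z]$ (Section~\ref{ssec:oracle}): on input $\bx$ it restricts $S$ to the line $\lambda$ through $\bx,\ba$, runs univariate list-recovery to get $\calL_\lambda$, uses $z$ to single out the unique $R\in\calL_\lambda$ whose order-$s^*$ behaviour at $\ba$ is consistent with $z$, and outputs $R^{(<s)}(\bx)$. Standard sampling (using that $\MULT^{(m)}_{q,s}(d)$ is evaluated over all of $\F_q^m$) shows that for a $1/\poly$ fraction of pairs $(\ba,z)$ the machine $M^S[\ba,z]$ agrees with $Q^{(<s)}$ on a $1-\gamma$ fraction of points with $\gamma$ below half the distance of the code; composing each such machine with the local corrector of~\cite{KSY14} yields an $A_j$ correct at every coordinate with probability $\ge 2/3$. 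Ranging over $z\in Z$ (and over the repetitions needed to boost $\RecCand$) gives $A_1,\dots,A_L$.

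For (ii), set $\delta=\eps/2m$ and $d=(1-\delta)sq$; by Claim~\ref{claim:multparams} the rate is at least $(1-m^2/s)(1-\delta)^m\ge(1-m^2/s)(1-\eps/2)$, so $s=\Theta(m^2/\eps)$ already gives rate $\ge1-\eps$, and to also satisfy $\ell<\delta^2 s/16$ we take $s=\Theta(m^2\ell/\eps^2)$. Then Theorem~\ref{thm:full-unimult-list-rec} gives univariate output list size $L_1=\ell^{O(s\log s)}s^{O(1)}=\ell^{O(\ell m^2\log(\ell m/\eps)/\eps^2)}(m/\eps)^{O(1)}$; pushing this through the aggregation and the list over $Z$ produces exactly the claimed $L$. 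The field $q$ must be prime, exceed the degree-related quantities (in particular $s^*<q$), and --- the binding constraint --- be large enough for the Reed--Muller-on-product-sets aggregation in $\RecCand$, which pins it down to $q=(Lm\ell/\eps)^{\Theta(m)}$. Since a query to $A_j$ reads $O(1)$ lines of $q$ points plus local-correction overhead, $t=(Lm\ell/\eps)^{O(m)}$; the block length is $N=q^m=(Lm\ell/\eps)^{O(m^2)}$; and $|\Sigma|=|\Sigma_{m,s}|=q^{{m+s-1\choose m}}=(Lm\ell/\eps)^{O((\ell m^2/\eps^2)^m)}$, using the Stirling estimate ${m+s-1\choose m}\le(O(s/m))^m$ together with $q=(Lm\ell/\eps)^{O(m)}$. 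For the ``in particular'' clause take $\eps,\ell$ to be constants with $\ell>1$: then $s=\Theta(m^2)$, $L=m^{O(m^2)}$, $q=m^{O(m^3)}$, $t=m^{O(m^3)}$, $N=m^{O(m^4)}$, and $|\Sigma|=m^{O(m^{2m+2})}$. Taking logarithms, $\log N=\Theta(m^4\log m)$, so $m=\Theta((\log N/\log\log N)^{1/4})$ and $\log m=\Theta(\log\log N)$; hence $\log L=\Theta(m^2\log m)=\Theta(\sqrt{\log N\,\log\log N})$, $\log t=\Theta(m^3\log m)=\Theta(\log^{3/4}N\,(\log\log N)^{1/4})$, and $\log\log|\Sigma|=\Theta(m\log m)=\Theta(\log^{1/4}N\,(\log\log N)^{3/4})$, which are precisely the stated forms.

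The main obstacle is the correctness of $\RecCand$ in the high-rate ($d\gg q$) regime: on the lines one may only use the weaker guarantee of Theorem~\ref{thm:full-unimult-list-rec} (tolerating error below $\tfrac1{2s}$, not up to capacity), so $s^*$, the number of random lines, and the error budget $\alpha$ must be balanced so that simultaneously a random line keeps its error below $\tfrac1{2s}$, the Reed--Muller code used in the aggregation has degree safely below $q$, and a high-order collision at $\ba$ is negligibly likely --- all while keeping $q$ (and hence $N$, and hence the final subpolynomial bounds) as small as possible. Making all three hold with the clean constants ($\delta=\eps/2m$, $\alpha\le\delta^2/160\ell$) is where essentially all of the delicacy sits; once Lemma~\ref{lem:multimult-params} is in hand, the theorem itself is just the bookkeeping of (ii).
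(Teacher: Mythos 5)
Your proposal is exactly the paper's own route: instantiate Lemma~\ref{lem:multimult-params} with the univariate list-recovery guarantee of Theorem~\ref{thm:full-unimult-list-rec}, take $\delta=\eps/(2m)$, $s=\Theta(\ell m^2/\eps^2)$, $\alpha'=2\alpha$, $s^*=\Theta(Ls/\delta)$, a prime $q=(Lms^*/\delta)^{\Theta(m)}$, and then do the asymptotic bookkeeping for $L, t, N, |\Sigma|$ and the ``in particular'' substitution $m\approx(\log N/\log\log N)^{1/4}$. Two cosmetic slips that do not affect correctness: the per-line error deviation is controlled by Chebyshev using pairwise independence of points on a random line (not a Chernoff bound), and the choice $q=(Lm\ell/\eps)^{\Theta(m)}$ is not forced by a hard lower bound from the Reed--Muller aggregation (that only requires $q=\Omega(Ls^*m^2)$) but is taken deliberately this large to balance the $q^{O(1)}$ and $(Ls^*m)^{O(m)}$ terms in the query complexity.
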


Our second theorem establishes constant output list size with polynomial query complexity.
\begin{theorem}\label{thm:main-multimult2}
Let $\delta, \eps > 0$ be sufficiently small, and choose $\tau \in (0,1)$ and $\ell \in \mathbb{N}$.

Then for infinitely many $N \in \mathbb{N}$, there is a multivariate multiplicity code $C \subseteq \Sigma^N$ with distance $\delta$ and rate at least $(1 - \delta)^{O(1/\tau)}(1 - \eps)$ so that 
 $C$ is $(t,\alpha, \ell, L)$-locally-list-recoverable for some
\[ \alpha = \Omega\left( \min\left\{ \frac{\delta^2}{\ell} , \tau^2 \eps \right\}\right),\]
with
\[ t = O_{\delta, \eps, \ell, \tau}(N^\tau), \]
\[ L = O_{\delta, \eps, \ell, \tau}(1), \]
and
\[ |\Sigma| = \poly_{\delta, \eps, \ell, \tau}(N). \]
\end{theorem}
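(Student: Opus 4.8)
The plan is to deduce Theorem~\ref{thm:main-multimult2} from the general technical statement, Lemma~\ref{lem:multimult-params}, by an appropriate choice of parameters. Recall that Lemma~\ref{lem:multimult-params} will package the local list-recovery algorithm for multivariate multiplicity codes sketched in the overview --- advice generation via $\RecCand$, decoding with advice via the oracle machine $M^S[\ba,z]$, and a final clean-up step using the fact that multivariate multiplicity codes are locally correctable~\cite{KSY14} --- together with the global list-recovery guarantee for whole-field univariate multiplicity codes of large degree, Theorem~\ref{thm:full-unimult-list-rec}, which gets applied to the restrictions of the received word $S$ to lines.

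First I would fix the instantiation: take $q$ to be a prime (so that Theorem~\ref{thm:full-unimult-list-rec} is available on lines), the number of variables $m = \Theta(1/\tau)$ with the hidden constant chosen large enough (see below), the multiplicity parameter $s = \Theta\!\left(\max\{m^2/\eps,\ \ell/\delta^2\}\right)$, and the degree $d = \lceil (1-\delta)sq\rceil - 1 < (1-\delta)sq$. Set $N = q^m$, $\Sigma = \F_q^{\binom{m+s-1}{m}}$, and let $C = \MULT^{(m)}_{q,s}(d) \subseteq \Sigma^N$; as $q$ ranges over the infinitely many primes, $N$ takes infinitely many values, which accounts for the ``infinitely many $N$'' in the statement. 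By Claim~\ref{claim:multparams}, $C$ has distance at least $1 - d/(sq) > \delta$, and rate at least $(1 - m^2/s)(d/(sq))^m \ge (1-\eps)(1-\delta)^m$ once $q$ is large enough (using $s \ge m^2/\eps$), which is at least the required $(1-\delta)^{O(1/\tau)}(1-\eps)$ since $m = O(1/\tau)$; and since $m$ and $s$ are bounded in terms of $\delta,\eps,\ell,\tau$ alone, the alphabet size $q^{\binom{m+s-1}{m}} = \poly_{\delta,\eps,\ell,\tau}(N)$.

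It then remains to feed these parameters into Lemma~\ref{lem:multimult-params}. The choice $s = \Omega(\ell/\delta^2)$ guarantees $\ell < \delta^2 s / 16$, and $s < q$ holds once $q$ is large, so Theorem~\ref{thm:full-unimult-list-rec} applies to the restriction of $S$ to any line: it tolerates up to a $\Theta(1/s)$ fraction of errors there and outputs a list of size $\ell^{O(s\log s)}s^{O(1)}$. Combined with a standard sampling bound controlling the error fraction on a uniformly random line through the query point and the random advice point, this lets the whole scheme tolerate a global error fraction $\alpha = \Omega(1/s) = \Omega\!\left(\min\{\tau^2\eps,\ \delta^2/\ell\}\right)$, where we used $m = \Theta(1/\tau)$ so that $\eps/m^2 = \Theta(\tau^2\eps)$. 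The output list $L$ produced by Lemma~\ref{lem:multimult-params} is a function of $m$, $s$ and $\ell$ only (assembled from the univariate output list size above and the output list size of the Reed--Muller-on-product-sets aggregation of Appendix~\ref{app:rm}), hence $L = O_{\delta,\eps,\ell,\tau}(1)$. Finally, the query complexity has the form $q^{O(1)}$ (with a universal exponent) times a factor depending only on $m,s,\ell$ --- a few lines of $q$ points each for $\RecCand$, one line for $M^S[\ba,z]$, and the queries of the multiplicity-code local corrector --- so writing $q = N^{1/m}$ and choosing the constant in $m = \Theta(1/\tau)$ large enough to swallow the universal exponent of $q$ gives $t = O_{\delta,\eps,\ell,\tau}(N^\tau)$, as required.

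The genuinely substantive part of the argument is Lemma~\ref{lem:multimult-params} itself, which I expect to be the main obstacle, and in particular its two halves: (i) showing that $\RecCand$, applied at a uniformly random point $\ba$, outputs with constant probability a short list containing the high-order evaluation $Q^{(<s^*)}(\ba)$ --- this needs list-recovery of Reed--Muller codes on product sets and careful accounting of how the univariate lists along several lines through $\ba$ aggregate; and (ii) showing that, given the correct advice $z = Q^{(<s^*)}(\ba)$, the oracle machine $M^S[\ba,z]$ agrees with $Q^{(<s)}$ on a $(1-o(1))$-fraction of inputs, so that the local corrector of~\cite{KSY14} cleans it up. Relative to the Reed--Muller local list-decoders of~\cite{AS03,STV01}, the new difficulty is the high-rate regime $d \approx sq \gg q$: one must use evaluations of order $s^* \gg s$ at the advice point to disambiguate list elements along a line, and must invoke the large-degree univariate result Theorem~\ref{thm:full-unimult-list-rec} rather than the small-degree Theorem~\ref{thm:unimult-list-rec-const-list-smalld}. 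Given Lemma~\ref{lem:multimult-params}, though, Theorem~\ref{thm:main-multimult2} follows purely from the parameter calculation above.
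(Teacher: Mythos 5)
Your proposal is correct and follows essentially the same route as the paper: instantiate Lemma~\ref{lem:multimult-params} with $m = \Theta(1/\tau)$, $s = \Theta(\max\{m^2/\eps,\ \ell/\delta^2\})$, $d \approx (1-\delta)sq$, and $q$ a large prime, using Theorem~\ref{thm:full-unimult-list-rec} as the inner univariate list-recoverer, then choose the hidden constant in $m=\Theta(1/\tau)$ large enough to absorb the fixed exponent of $q$ in the query complexity so that $t = q^{O(1)} = O_{\delta,\eps,\ell,\tau}(N^\tau)$. The paper phrases that last step slightly differently --- fixing $m = c/\tau$ where $c$ is defined to be the exponent appearing in $t = O(q^c)$ --- but it is the same bookkeeping, and the remaining parameter checks (rate via Claim~\ref{claim:multparams}, $\alpha = \Omega(\min\{\tau^2\eps, \delta^2/\ell\})$, constant $L$, polynomial alphabet) match.
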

Notice that the alphabet size in Theorem~\ref{thm:main-multimult2} is large (polynomial in $N$) and in Theorem~\ref{thm:main-multimult} is very large (super-polynomial in $N$). However, we will deal with this in Section~\ref{sec:Smallalpha} to obtain capacity-achieving codes with the roughly the same parameters $t$ and $L$, and with constant alphabet size.  

The rest of this section is devoted to the proof of Theorems~\ref{thm:main-multimult} and \ref{thm:main-multimult2}, both of which follow from the more general Lemma~\ref{lem:multimult-params} below.  We give a short overview of the approach in Section~\ref{ssec:multimult-overview}, and then flesh out the details in the subsequent three subsections.

\subsection{Overview and some more notation}\label{ssec:multimult-overview}
We first introduce some terminology that will be useful for this section.
Let $U_{m,s} = \{ \vi \in \N^m \mid \wt(\vi) < s \}$.
Let $\Sigma_{m,s} = \F_q^{U_{m,s}}$ be the alphabet of the multiplicity code.
Let $w = |U_{m,s}| = {m + s-1 \choose m}$.
For an element $z \in \Sigma_{m,s}$, and a direction
$\bb \in \F_q^m$, we define the restriction of $z$ to direction
$\bb$ (denoted $z|_{\bb}$) to equal $h \in \Sigma_{1,s}$, where given by:
$$h^{(j)} = \sum_{\wt(\bj) = j} z^{(\bj)} \bb^{\bj}$$
for each $j$ such that $0 \leq j < s$.

The local list recovering algorithm has three main subroutines that we will describe and analyze in the next three subsections.  Briefly, the three components are the following: 
\begin{enumerate}
	\item A subroutine $\RecCand$, given in Section~\ref{ssec:reccand}.  $\RecCand$ takes as input a point $\ba \in \F_q^m$, has query access to $S$, and returns a short list $Z \subset \Sigma_{m, s^*}$ of guesses for $Q^{(<s^*)}(\ba)$, where we will take $s^*$ to be some parameter larger than $s$. 
	\item An oracle machine $M^S[\ba, z]$, given in Section~\ref{ssec:oracle}.  The oracle machine $M^S[\ba,z]$ is defined using an advice string $(\ba, z)$ and has query access to $S$.  If $z = Q^{(<s^*)}(\ba)$, then with high probability over the choice of a random point $\ba$,  we weill have that
\[ \dist\left( M^S[\ba,Q^{(<s^*)}(\ba)], Q^{(<s)} \right)  \leq 4\eps. \]
	\item The final local list-recovery algorithm $\MultimultMain$, given in Section~\ref{ssec:main_lld}.  
Recall that the goal is to output a list of randomized algorithms $A_1, \ldots, A_L$ so that for each codeword $c \in \MULT_{q,s}^{(m)}(d)$ with $\dist(c, S) \leq \alpha$, with probability at least $2/3$, there exists some $j$ so that $\Prn\left[ A_j(i) = c_i \right] \geq 2/3.$  We arrive at these algorithms $A_j$ as follows.

First, the algorithms runs $\RecCand$ on a random point $\ba \in \F_q^m$ to generate a short list $Z \subseteq \Sigma_{m,s^*}$  of possibilities for $Q^{(<s^*)}(\ba)$.  Then, for each $z \in Z$, it forms the oracle machine $M^S[\ba,z]$.  At this point it would be tempting to output the list of these oracle machines, but we are not quite done: even if $z = Q^{(<s^*)}(\ba)$ corresponds to the correct advice and the choice of $\ba$ is good, for some small fraction of points $\bx$, we may still have $M^S[\ba,z](\bx) \neq Q^{(<s)}(\bx)$ with decent probability.  Fortunately, for most $\by$ this will not be the case, and so we can implement the local correction algorithm of \cite{KSY14} for multiplicity codes on top of $M^S[\ba,z]$.  This will give us our final list of randomized algorithms $A_1,\ldots, A_L$ that the local list-recovery algorithm returns.
\end{enumerate}

We flesh out the details in the next three subsections.

\subsection{The algorithm $\RecCand$}\label{ssec:reccand}
As an important subroutine of the local list recovering algorithm, 
we will implement an algorithm which we call $\RecCand$ which will have the following features. 
It will have oracle access to a function $S: \F_q^m \to {\Sigma_{m,s} \choose \ell}$.
Think of this function as assigning to each element of $ \F_q^m$ a list of size $\ell$ of alphabet symbols of the multiplicity code. 
Now suppose that $Q$ is an $m$-variate polynomial of degree at most $d$ (think of $Q$ to represent a true codeword of the multiplicity code) that ``agrees" with at least $1-\alpha$ fraction of these lists. 
On being input $\bx$, a random element of $\F_q^m$, and for some parameter $\tilds$ (think of $\tilds$ to be much larger than $s$), the algorithm $\RecCand$ will make few queries to $S$ and output a small list $Z \subseteq \Sigma_{m, \tilds}$, such that with high probability (over the choice of $\bx$ and the randomness of the algorithm), the list $Z$ contains $Q^{(<\tilds)}(\bx)$.

The main feature of this algorithm is that given oracle access to small lists, that for most coordinates agree with the evaluations of order $s$ derivatives of $Q$, can output for most coordinates, a small list that agrees with evaluations of order $\tilds$ derivatives of $Q$.

\begin{lemma}\label{lem:recovercandidates}
Let $q$ be a prime, let $\delta > 0$ and let $s,d,m$ be nonnegative integers such that $d = (1-\delta) sq$. 
Let $\alpha, \alpha'$ be parameters such that $0< \alpha< \alpha'$. 

Let $L = L(d,q,s, \ell, \alpha') $ be the list size for list recovering univariate multiplicity codes $\MULT^{(1)}_{q,s}(d)$ of degree $d$ and multiplicity parameter $s$ with input list size $\ell$ and error tolerance $\alpha'$,  and $T$ be the corresponding running time for list recovering univariate multiplicity codes of the same parameters.

Let $S: \F_q^m \to {\Sigma_{m,s} \choose \ell}$. Let $\tilds > 0$ be a parameter. 
Suppose that $q > 100 \cdot \tilds \cdot L \cdot m^2$, and suppose that 
 $Q(X_1, \ldots, X_m) \in \F_q[X_1, \ldots, X_m]$ is a polynomial 
of degree at most $d$ such that:
$$ \Pr_{\bx \in \F_q^m} [ Q^{(<s)}(\bx) \in S(\bx) ] > 1-\alpha.$$

There is an algorithm $\RecCand$ which on input $x \in  \F_q^m$ and $\tilds$, and given oracle access to $S$, makes at most $\poly\left(q \cdot (L\tilds m)^m\right)$ queries to $S$, runs in time at most $\poly\left(T \cdot q \cdot (L\tilds m)^m\right)$, and outputs a list $Z \subseteq $ of size at most $O(L)$
such that if $\bx \in \F_q^m$ uniformly at random,
then:
$$ \Pr[ Q^{(<\tilds)}(\bx) \in Z ] \geq 1 - \frac{12}{(\alpha' - \alpha) q}, $$
where the probability is over the random choice of $\bx$ and the random choices of the algorithm $\RecCand$.

\end{lemma}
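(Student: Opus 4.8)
We prove Lemma~\ref{lem:recovercandidates}. The plan is to recover $Q^{(<\tilds)}(\bx)$ for a uniformly random $\bx$ by restricting $Q$ to many lines through $\bx$, running the given univariate multiplicity list-recovery algorithm on each line, and then stitching the per-line answers back together. The algebraic observation driving this is that for a direction $\bb$ the polynomial $g_{\bx,\bb}(T):=Q(\bx+T\bb)$ is univariate of degree at most $d$, its Hasse derivatives of order $<s$ at a parameter value $a$ are exactly the coordinates of $(Q^{(<s)}(\bx+a\bb))|_{\bb}$, and its Hasse derivatives of order $<\tilds$ at $0$ are exactly the coordinates of $(Q^{(<\tilds)}(\bx))|_{\bb}$. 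Hence, given a short list $\calL_\bb$ of univariate polynomials with $g_{\bx,\bb}\in\calL_\bb$, the set $Z_\bb:=\{\,g^{(<\tilds)}(0):g\in\calL_\bb\,\}\subseteq\Sigma_{1,\tilds}$ is a short list of candidates for $(Q^{(<\tilds)}(\bx))|_{\bb}$, and it remains to invert the (linear) map $z\mapsto(z|_{\bb})_\bb$ to recover $Q^{(<\tilds)}(\bx)$ itself.

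First I would fix the directions. Fix a set $B\subseteq\F_q$ of size $K=\Theta(L\tilds m)$ (possible since $q>100\tilds Lm^2$), draw a uniformly random shift $\mathbf c\in\F_q^m$, and take the set of directions to be the affine grid $D:=\mathbf c+B^m=\prod_{i=1}^m(c_i+B)$, which is still a product set of size $K^m$. For each $\bb\in D$ the algorithm forms the univariate instance $S_{\bx,\bb}:\F_q\to{\Sigma_{1,s}\choose\ell}$ given by $S_{\bx,\bb}(a)=\{\,z|_{\bb}:z\in S(\bx+a\bb)\,\}$, runs the $(\alpha',\ell,L)$-list-recovery algorithm for $\MULT^{(1)}_{q,s}(d)$ on it (in time $T$, with output list size $L$, as in the hypothesis) to get $\calL_\bb$, and sets $Z_\bb$ as above. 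Call $\bb\in D$ \emph{bad} if the line $\{\bx+a\bb:a\in\F_q\}$ contains at least $\alpha'q$ points $\bx+a\bb$ with $Q^{(<s)}(\bx+a\bb)\notin S(\bx+a\bb)$. For a direction that is not bad, every remaining point contributes $(Q^{(<s)}(\bx+a\bb))|_{\bb}\in S_{\bx,\bb}(a)$, so $\dist(\MultEnc_s(g_{\bx,\bb}),S_{\bx,\bb})<\alpha'$, whence $g_{\bx,\bb}\in\calL_\bb$ and $(Q^{(<\tilds)}(\bx))|_{\bb}\in Z_\bb$.

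The crux is to show that with probability $1-O(\tfrac{1}{(\alpha'-\alpha)q})$ over $\bx$ and $\mathbf c$ only a tiny constant fraction of $D$ is bad. Fix $\mathbf v\in B^m$. Over the joint randomness of $\bx$ and $\mathbf c$, the direction $\bb=\mathbf c+\mathbf v$ is uniform in $\F_q^m$, and for $a_1\ne a_2$ the affine map $(\bx,\mathbf c)\mapsto(\bx+a_1\bb,\bx+a_2\bb)$ is a bijection of $\F_q^{2m}$ (its linear part has determinant $(a_2-a_1)^m\ne0$), so the $q$ points $\{\bx+a\bb:a\in\F_q\}$ are pairwise independent, each uniform. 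Since $\Pr_{\bu}[Q^{(<s)}(\bu)\notin S(\bu)]<\alpha$, the number of error points on this line has mean $<\alpha q$ and, by pairwise independence, variance $<\alpha q$, so Chebyshev gives that it reaches $\alpha'q$ with probability $<\frac{\alpha}{(\alpha'-\alpha)^2q}$. Summing over the $K^m$ values of $\mathbf v$, $\E_{\bx,\mathbf c}[\text{fraction of bad directions in }D]<\frac{\alpha}{(\alpha'-\alpha)^2q}$, and Markov's inequality (using that $\alpha/(\alpha'-\alpha)$ is bounded in the relevant regime, with the general case costing only constants) completes this step. This concentration — and in particular getting a $1/q$-type bound rather than merely a constant — is the main obstacle, and it is exactly what dictates the choice of directions: each direction must be individually uniform (so that a line through the random point $\bx$ has pairwise independent points), while the family of directions must remain a product set (so that the aggregation below really is a Reed--Muller list-recovery problem), and the random affine grid $\mathbf c+B^m$ is what meets both requirements.

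Finally, condition on the event that at most a small constant fraction $\theta$ of $D$ is bad. Then the true value $z^\star=Q^{(<\tilds)}(\bx)$ satisfies $z^\star|_{\bb}\in Z_\bb$ for all but a $\theta$-fraction of $\bb\in D$. Written coordinatewise, the map $z\mapsto(z|_{\bb})_{\bb\in D}$ has $j$-th block sending the degree-$j$ part of $z$ to the evaluation table on $D$ of a homogeneous degree-$j$ polynomial in $m$ variables; since each $c_i+B$ has size $K\ge\tilds$, these evaluations form a Reed--Muller code over the product set $D$ of relative distance close to $1$. Recovering $z^\star$ from the lists $(Z_\bb)_{\bb\in D}$ is therefore a list-recovery problem for Reed--Muller codes on product sets, which I would solve with the algorithm of Appendix~\ref{app:rm}: with input list size $L$, error tolerance $\theta$, and $K=\Theta(L\tilds m)$ chosen large enough that $\tilds/K$ lies below the relevant Johnson-type threshold, it returns a list $Z\subseteq\Sigma_{m,\tilds}$ of size $O(L)$ containing $z^\star$; this is the output. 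The query complexity is $q\cdot|D|=qK^m=\poly(q(L\tilds m)^m)$, and the running time is dominated by the $K^m$ calls to the univariate list-recovery routine (time $T$ each) plus one call to the product-set Reed--Muller list recovery, for a total of $\poly(T\cdot q(L\tilds m)^m)$. Since the univariate routine (that of Theorem~\ref{thm:full-unimult-list-rec}) is randomized, I would boost each of its $K^m$ invocations, and the Reed--Muller list recovery, to failure probability $\le\frac{1}{(\alpha'-\alpha)qK^m}$ by repetition; a union bound over all of these together with the ``few bad directions'' event then yields total failure probability $\le\frac{12}{(\alpha'-\alpha)q}$, as required.
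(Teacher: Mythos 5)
Your proposal follows essentially the same route as the paper: pick a random translate of a product grid $\mathbf c + B^m$ of directions, restrict to the corresponding lines through $\bx$ and invoke the univariate list-recoverer there, show via pairwise independence of points on a random line (Chebyshev then Markov) that only a small fraction of the directions are bad, and aggregate the per-line lists of low-order Hasse derivatives at $0$ by vector-valued Reed--Muller list recovery over the product set $D$; this is precisely the paper's $\RecCand$ and its analysis. The only small discrepancy is your choice $K=\Theta(L\tilds m)$ for the side length of $B$: since $\mathsf{VectorRMListRecover}$ needs its internal parameter to be $\Omega(m^2)$ to tolerate a constant error fraction, the grid side actually has to be $\Theta(L\tilds m^2)$ (matching the paper's $|U|=100\,L\tilds m^2$), but this has no effect on the stated query/time bounds or the structure of the argument.
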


The high level idea of the algorithm is as follows. On input $\bx$, we take several random lines passing through $\bx$, and 
run the univariate multiplicity list-recovery algorithm on the restrictions of the received word to those lines.
This gives us, for each of these lines, a list of univariate polynomials. For a given line, this  list of univariate polynomials 
contains candidates for $Q$ restricted to that line. In particular, this gives us candidate values for $Q(\bx)$ and
the all higher order directional derivatives of $Q$ at $\bx$ in the directions of those lines. We combine
this information about the different directional derivatives to reconstruct $Q^{(<\tilds)}(\bx)$.

This combination turns out to be a certain kind of polynomial list-recovery problem: namely list-recovery for tuples of
polynomials. The following lemma, which is proved in Appendix~\ref{app:rm}, shows how this can be done algorithmically.
\begin{lemma}[Vector-valued Reed-Muller list recovery on a grid]\label{lem:RM-vectors}
Let $\ell, \tilds, K, m$ be given parameters. Let $\F$ be a finite field. Suppose that $U \subseteq \F$ and $|U| \geq 2 {\ell} \tilds K$.
Let $\alpha < 1- \frac{1}{\sqrt{K}}$ be a parameter.

Then for every $f: U^m \to {\F^t \choose \ell}$,
if 
\begin{align*}
\calL = & \{(Q_1, Q_2, \ldots, Q_t) \in (\F[Y_1, Y_2, \ldots, Y_t])^t \mid \forall i \in [t], \deg(Q_i) \leq \tilds \mbox{ and } \\
&\Pr_{\bu \in U^m}[(Q_1(\bu), Q_2(\bu), \ldots, Q_t(\bu)) \not\in f(\bu)] < \alpha \},
\end{align*}
the following hold:
\begin{enumerate}
\item  $|\calL| \leq 2K\ell.$

\item If $K \geq m^2$ and $\alpha < 1-  \frac{m}{\sqrt{K}}$, then  there is a $\poly(|U|^m, t, \log{|F|})$-time algorithm $\mathsf{VectorRMListRecover}$ which computes $\calL$. 
\end{enumerate}
\end{lemma}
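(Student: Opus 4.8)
The plan is to identify each tuple $(Q_1,\dots,Q_t)$ with the codeword $c\in(\F^t)^{U^m}$ given by $c(\bu)=(Q_1(\bu),\dots,Q_t(\bu))$, and observe that this code has large distance. Indeed, if $(Q_1,\dots,Q_t)\neq(Q_1',\dots,Q_t')$, pick $i$ with $Q_i\neq Q_i'$; then $Q_i-Q_i'$ is a nonzero polynomial of total degree $\le\tilds$, so by the Schwartz--Zippel lemma over the grid $U^m$ it vanishes on at most a $\tilds/|U|\le \tilds/(2\ell\tilds K)=1/(2\ell K)$ fraction of $U^m$, and hence the two codewords agree on at most a $1/(2\ell K)$ fraction of coordinates. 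So the code has relative distance $\ge 1-\beta$ with $\beta=1/(2\ell K)$. Now invoke the Johnson bound for list recovery: for any code of relative distance $\ge 1-\beta$, any received lists of size $\le\ell$, and any $\theta$ with $\theta^2>\ell\beta$, there are at most $\ell/(\theta^2-\ell\beta)$ codewords agreeing with the lists on more than a $\theta$ fraction of coordinates. (This follows by expanding $0\le\|\sum_j v_j-L\mu w\|^2$, where $v_j\in\{0,1\}^{U^m\times\F^t}$ is the incidence vector of the $j$-th codeword, $w$ is the incidence vector of the received lists, and $\mu=\theta/\ell$.) With $\theta=1-\alpha>1/\sqrt K$ we get $\theta^2-\ell\beta>1/K-1/(2K)=1/(2K)$, so $|\calL|<2\ell K=2K\ell$.

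\textbf{Item 2 (the algorithm), first stage: reduce vector-valued to scalar.}
I would induct on the number $t$ of coordinates. Given the instance $f$, project every list onto its last $t-1$ coordinates; the projected lists still have size $\le\ell$, every tuple of $\calL$ restricts to a codeword of the projected instance with the same (or better) agreement, and the projected instance again satisfies the hypotheses of Item 1, so its list has size $<2\ell K$ and can be found recursively. For each recovered $(Q_2,\dots,Q_t)$, form the scalar instance $f^{(1)}(\bu)=\{v_1:v\in f(\bu),\ (v_2,\dots,v_t)=(Q_2(\bu),\dots,Q_t(\bu))\}$ (lists of size $\le\ell$); any $Q_1$ completing $(Q_2,\dots,Q_t)$ to a member of $\calL$ agrees with $f^{(1)}$ on more than a $1-\alpha$ fraction of $U^m$, so we can find all such $Q_1$ by a scalar Reed--Muller-on-a-grid list recovery. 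Glue, and prune using Item 1 so that the working list has size $<2\ell K$ at every stage; the overhead is $\poly(\ell K)^t$, which is fine since the target running time is $\poly(|U|^m,t,\log|\F|)$.

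\textbf{Item 2, second stage: scalar list recovery of Reed--Muller codes on $U^m$.}
This is the heart of the matter, and I would do it via the multivariate Guruswami--Sudan paradigm: interpolate a nonzero $A(Y,X_1,\dots,X_m)$ vanishing with multiplicity $r$ at every pair $(v,\bu)$ with $\bu\in U^m$ and $v$ in the received list at $\bu$, subject to a weighted-degree bound $D$ with $Y$ assigned weight $\tilds$; show that for any $Q$ of degree $\le\tilds$ agreeing with more than a $1-\alpha$ fraction of the lists, $A(Q(\bX),\bX)$ vanishes with multiplicity $\ge r$ on more than a $1-\alpha$ fraction of $U^m$, whence the Schwartz--Zippel-with-multiplicities bound over $U^m$ (namely that a nonzero degree-$D$ polynomial vanishes with total multiplicity at most $D|U|^{m-1}$) forces $A(Q(\bX),\bX)\equiv 0$ provided $r(1-\alpha)|U|>D$; then $Y-Q(\bX)$ divides $A$ over $\F(\bX)$, so all such $Q$ are recovered (and then pruned) by factoring $A$, of which there are at most $\deg_Y A$. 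Interpolation is a homogeneous linear system of size $O(|U|^m)$ and factoring is polynomial, so this runs in the allowed time.

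\textbf{The main obstacle.}
The difficulty is purely in the parameter bookkeeping of the interpolation step: we must make it close up to an error fraction as large as $\alpha<1-m/\sqrt K$ — essentially the Johnson list-recovery radius of this multivariate code — rather than the much weaker radius a naive multivariate count gives. This is exactly where the hypotheses $|U|\ge 2\ell\tilds K$ and $K\ge m^2$ are spent: the $Y$-degree $\ell'$ must exceed the list-size bound $\approx 2\ell K$ from Item 1, which forces $D\gtrsim \ell'\tilds$, while the root-finding condition needs $D<r(1-\alpha)|U|$; balancing these against the count of interpolation constraints, and using the product structure of $U^m$ in the Schwartz--Zippel step, is what pins down the quadratic relation $K\gtrsim m^2$ between the list-recovery loss and the ``$m$ variables'' loss. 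Carrying out this optimization — equivalently, verifying that a multivariate Guruswami--Sudan (or recursive) decoder for Reed--Muller codes on product sets attains this radius with list size $\le 2K\ell$ — is the bulk of the work and the step I expect to be most technical.
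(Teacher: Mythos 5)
Your Item~1 argument is correct and essentially matches the paper's: you apply the Johnson bound for list-recovery to the vector-valued polynomial code on $U^m$, with $\beta = 1-\delta \le \tilds/|U| \le 1/(2\ell K)$ and $\theta = 1-\alpha > 1/\sqrt{K}$, getting $|\calL| \le \ell/(\theta^2 - \ell\beta) < 2K\ell$. The paper first replaces $\F^t$-valued functions by $\K$-valued functions for a degree-$t$ extension $\K$ and then does the same Johnson-bound calculation, but that step is cosmetic; the calculation is the same.

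Item~2 is where there is a genuine gap, concentrated in your ``second stage.'' Your plan is to induct on the number of coordinates $t$, reducing to a \emph{scalar} Reed--Muller-on-a-grid list-recovery, and then to handle that scalar problem by a direct multivariate Guruswami--Sudan interpolation. Two issues. First, a minor one: your stated overhead $\poly(\ell K)^t$ is exponential in $t$, not $\poly(t)$; the pruning you mention (using Item~1 on the projected instance to cap the working list at $2K\ell$ per stage) is exactly what rescues this, giving $t\cdot\poly(\ell K,|U|^m)$, so this can be repaired, but the claim as written is wrong. Second, and this is the real gap: the multivariate Guruswami--Sudan step, which you explicitly leave unverified as ``the bulk of the work,'' does \emph{not} in fact reach the stated radius. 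A direct multivariate interpolation (a single bivariate-in-$(Y,\bX)$ polynomial $A$ with weighted degree $D$, vanishing with multiplicity $r$ at $\ell|U|^m$ points, forced to zero by the product-set Schwartz--Zippel-with-multiplicities bound $D|U|^{m-1}$) balances to a required agreement of roughly $1-\alpha \gtrsim \inparen{\ell\tilds/|U|}^{1/(m+1)} \approx K^{-1/(m+1)}$. For $m\ge 2$ this is \emph{much larger} than the target $m/\sqrt{K}$ when $K$ is large, so the interpolation approach does not achieve the claimed $\alpha < 1 - m/\sqrt{K}$.

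What the paper does instead, and what you are missing, is to induct on the number of \emph{variables} $m$ rather than on $t$, after first collapsing the vector-valued problem to a scalar one over the extension field $\K$. At each level one restricts to slices $Y_m = u$, solves the $(m-1)$-variate problem at radius $\beta = 1 - (m-1)/\sqrt{K}$, and controls the fraction of ``bad'' slices by Markov's inequality: $\Pr_u[\dist(Q_u,f_u)\ge\beta] \le \alpha/\beta \le 1 - 1/\sqrt{K}$. This is where the $K\ge m^2$ requirement enters: the budget $1-\alpha > m/\sqrt{K}$ is spent additively, $1/\sqrt{K}$ per level of the $m$-deep recursion. The per-slice lists are aggregated into a \emph{univariate} list-recovery problem for the vector of coefficients of $Q$ as a polynomial in $Y_m$ (again scalarized over a field extension), and the base case $m=1$ is ordinary Sudan list-recovery for Reed--Solomon. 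No multivariate interpolation is needed. So your first stage is a reasonable alternative decomposition, but the missing idea is the $m$-recursion with the Markov/budget argument, which is precisely what attains the Johnson-type radius $1 - m/\sqrt{K}$ on a product set.
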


We will use
$\mathsf{VectorRMListRecover}$ as a subroutine of $\RecCand$.  To see why this is relevant, we make the following observation about Hasse derivatives.

\begin{claim}
\label{claim:hasse}
Let $Q(X_1, \ldots, X_m) \in \F_q[X_1, \ldots, X_m]$.
Let $\bx, \bb \in \F_q^m$. Let $\tilds$ be an integer.
Let $\lambda(T) = \bx + T\bb$ be the line passing through $\bx$ in direction $\bb$. 

For each $j < \tilds$, define polynomial 
$$H_{\bx, j}(Y_1, \ldots, Y_m) = \sum_{\bj : \wt(\bj) = j} Q^{(\bj)}(\bx) \bY^{\bj},$$
and let $H_{\bx}(Y_1, \ldots, Y_m) \in (\F_q[Y_1, \ldots, Y_m])^\tilds$ be the tuple of polynomials:
$$H_{\bx} = (H_{\bx, 0}, H_{\bx,1}, \ldots, H_{\bx, \tilds-1}).$$
Then we have
 $$(Q\circ \lambda)^{(<\tilds)}(0) =  Q^{(<\tilds)}(\bx)|_{\bb} = H_{\bx}(\bb) . $$
\end{claim}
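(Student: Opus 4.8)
The plan is to simply unwind the three definitions in play — the one-variable Hasse derivative of the composed polynomial $Q\circ\lambda$, the multivariate Hasse expansion of $Q$, and the restriction map $z\mapsto z|_{\bb}$ — and observe that all three expressions in the claimed chain of equalities agree, coordinate by coordinate, with $H_{\bx}(\bb)$. Nothing deep is needed: the statement is a bookkeeping identity of polynomials over $\F_q$, valid in every characteristic since no coefficient is ever inverted, which is exactly why Hasse derivatives are the right notion throughout this section.

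First I would compute $(Q\circ\lambda)^{(<\tilds)}(0)$. Set $g(T)=(Q\circ\lambda)(T)=Q(\bx+T\bb)\in\F_q[T]$. By the defining property of the one-variable Hasse derivative, $g(0+Z)=\sum_{j\ge 0} g^{(j)}(0)\,Z^j$, so $g^{(j)}(0)$ is precisely the coefficient of $Z^j$ in $Q(\bx+Z\bb)$. Now apply the multivariate Hasse expansion $Q(\bX+\mathbf{Z})=\sum_{\bj}Q^{(\bj)}(\bX)\,\mathbf{Z}^{\bj}$ with $\bX=\bx$ and $\mathbf{Z}=Z\bb$. Since $(Z\bb)^{\bj}=Z^{\wt(\bj)}\bb^{\bj}$, grouping the terms according to the power of the scalar $Z$ gives
\[ Q(\bx+Z\bb)=\sum_{j\ge 0} Z^j\Bigl(\sum_{\bj:\ \wt(\bj)=j} Q^{(\bj)}(\bx)\,\bb^{\bj}\Bigr), \]
so $g^{(j)}(0)=\sum_{\wt(\bj)=j}Q^{(\bj)}(\bx)\,\bb^{\bj}=H_{\bx,j}(\bb)$ for every $j$.

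Next I would match this against the restriction map. For $z=Q^{(<\tilds)}(\bx)=(Q^{(\bj)}(\bx))_{\bj\in U_{m,\tilds}}$, the definition of $z|_{\bb}$ gives its $j$-th coordinate as $\sum_{\wt(\bj)=j} z^{(\bj)}\bb^{\bj}=\sum_{\wt(\bj)=j}Q^{(\bj)}(\bx)\,\bb^{\bj}=H_{\bx,j}(\bb)$, while by definition $H_{\bx}(\bb)=(H_{\bx,0}(\bb),\ldots,H_{\bx,\tilds-1}(\bb))$. Assembling the coordinates $j=0,\ldots,\tilds-1$ for each of the three objects shows that they all equal this tuple, which yields $(Q\circ\lambda)^{(<\tilds)}(0)=Q^{(<\tilds)}(\bx)|_{\bb}=H_{\bx}(\bb)$, as claimed. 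The only thing requiring care is keeping the scalar auxiliary variable $Z$ (used for the Hasse derivative of $g$) separate from the vector $\mathbf{Z}$ in the multivariate expansion and tracking the $\wt$-grading correctly; there is no substantive obstacle.
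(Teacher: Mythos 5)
Your proof is correct and follows essentially the same route as the paper: both substitute $\mathbf{Z}=Z\bb$ into the multivariate Hasse expansion of $Q$, group terms by the power of the scalar $Z$, and read off the coefficients as $H_{\bx,j}(\bb)$. The only difference is that you also spell out the middle equality via the definition of $z|_{\bb}$, which the paper leaves implicit.
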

\begin{proof}
We have from the definition of the Hasse derivative that
\begin{align*}
\sum_i (Q \circ \lambda)^{(i)}(0) \cdot T^i &= (Q \circ \lambda)(T) \\
&= Q( \bx + T \bb ) \\
&= \sum_{\bi} Q^{(\bi)}(\bx) \cdot \bb^\bi \cdot T^{\wt(\bi)} \\
&= \sum_{i} \left(\sum_{\bi : \wt(\bi) = i} Q^{(\bi)}(\bx) \bb^\bi\right) T^{\wt(\bi)},
\end{align*}
and so by matching coefficients we have
\[ (Q \circ \lambda)^{(j)}(0) = \sum_{\bj : \wt(\bj) = j} Q^{(\bj)}(\bx) \bb^{\bj} = H_{\bx, j}(\bb). \]
This implies the claim.
\end{proof}
Thus, given information about $(Q \circ \lambda)^{(<\tilds)}(0)$ for various lines $\lambda$ and for some $\tilds$, we have information about the tuple of polynomials $H_{\bx}(\bY)$, evaluated at many different points $\bb$.  It is on these polynomials that we will use Lemma~\ref{lem:RM-vectors}.

Now we present our main subroutine $\RecCand$, and analyze it below.
\medskip
\putinbox{
\noindent{\bf Main Subroutine $\RecCand$.}
\begin{itemize}
\item Oracle access to $S: \F_q^m \to {\Sigma_{m,s} \choose \ell}$.
\item {\bf INPUT:} $\bx \in \F_q^m$, parameter $\tilds \in \mathbb N$.
\item The goal is to recover a small list of candidates for $Q^{(<\tilds)}(\bx)$.

\end{itemize}
\begin{enumerate}
\item Let $U \subseteq \F_q$ be a set of size $100 \tilds L m^2$.
\item Let $\bb \in \F_q^m$ be picked uniformly at random.
\item Let $B = \{ \bb_\bu = \bb + \bu \mid \bu \in U^m \}$.
\item For each $\bu \in U^m$:
\begin{enumerate}
\item Let $\lambda_\bu(T)$ be the line $\lambda_\bu(T) = \bx + T \bb_\bu$.
\item Consider the restriction $S_\bu: \F_q \to {\Sigma_{1,s} \choose \ell}$
of $S$ to $\lambda_\bu$.
Formally:
$$S_\bu = S \circ \lambda_\bu(t) = \{ z|_{\bb_{\bu}} \mid z \in S(\lambda_{\bu}(t)) \}.$$
\item Run the univariate list recovery algorithm on $S_\bu$ with error-tolerance $\alpha'$
for degree $d$ polynomials to obtain a list $\calL_{\lambda_\bu} \subseteq \F_q[T]$.
\end{enumerate}.

\item Define a function $f: U^m \to {\F_q^{\tilds} \choose L}$ as follows.
For each $\bu \in U^m$, define
$$f(\bu) =  \{ P^{(< \tilds)}(0)  \mid P(T) \in \calL_{\lambda_\bu} \}.$$

\item Let $\calL'$ be the set of all $\tilds$-tuples of polynomials

$$(Q'_j(Y_1, \ldots, Y_m))_{j=0}^{\tilds -1}$$
where $Q'_j$ is homogeneous of degree $j$, and such that
$$(Q'_j(\bu))_{j=0}^{\tilds -1} \in f(\bu)$$
for at least $2/3$ fraction of the $\bu \in U^m$.

Obtain this list $\calL'$ by running the $\mathsf{VectorRMListRecover}$ (from Lemma~\ref{lem:RM-vectors}) for $\tilds$-tuples of polynomials of 
degree $\leq \tilds$, where the evaluation points are $U^m$. Then prune the resulting list $\calL$ to ensure that
for each member $$(Q'_j(Y_1, \ldots, Y_m))_{j=0}^{\tilds -1}$$ of the list $\cal L$, and for each $j$ such that $0 \leq j \leq \tilds -1$, 
 $Q'_j$ is homogeneous of degree $j$. This pruned list is $\calL'$.

\item  For each $$(Q'_j(\bY))_{j=0}^{\tilds -1} \in \calL',$$
let $$(P_j(\bY))_{j=0}^{\tilds -1} =(Q'_j(\bY - \bb))_{j=0}^{\tilds -1},$$ and add this to a new list of tuples of  polynomials that we call $\calL''$.

\item Let
$$Z = \left\{ z \in \Sigma_{m,\tilds} \mid 
(\sum_{\wt(\bi) = j} z^{(\bi)} \bY^\bi )_{j=0}^{\tilds -1}
\in \calL'' \right\}.$$

\item Return $Z$.
\end{enumerate}
}

We now prove Lemma~\ref{lem:recovercandidates}. 
\begin{proof}[Proof of Lemma~\ref{lem:recovercandidates}]

Suppose $Q(X_1, \ldots, X_m) \in \F_q[X_1, \ldots, X_m]$ is a polynomial 
of degree at most $d$ such that:
$$ \Pr_{\bx \in \F_q^m} [ Q^{(<s)}(\bx) \in S(\bx) ] > 1-\alpha.$$

For each line $\lambda$ in $\F_q^m$, let $\calL_{\lambda}$ be the result of univariate list-recovering $S \circ \lambda$ with 
error-parameter $\alpha'$.
Let $\bb \in \F_q^m$ be the random choices of $\RecCand$. For $\bx, \bb \in \F_q^m$, and for $\bu \in U^m$, recall that $\lambda_\bu (T)$ denotes the line $\lambda_\bu(T) = \bx + T \bb_\bu$, where $\bb_\bu = \bb + \bu $. 
Then for each $\bu \in U^m$, let $B_\bu$ be the event that $Q\circ \lambda_\bu(T)$ is not in $\calL_{\lambda_\bu}$.

\begin{claim}\label{clm:chebyshev-randomline}
For each fixed $\bu \in U^m$,
$$ \Pr_{\bx, \bb \in \F_q^m}[B_{\bu}] \leq \frac{4}{(\alpha' - \alpha) q}.$$
\end{claim}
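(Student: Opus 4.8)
The plan is to reduce $B_\bu$ to the event that the (parametrized) random line $\lambda_\bu$ passes through too many ``bad'' points of $\F_q^m$, and then control that event by a second‑moment (Chebyshev) computation, which is where the name of the claim comes from. First I would set up the reduction. Let $\mathrm{Bad} = \{\by \in \F_q^m : Q^{(<s)}(\by) \notin S(\by)\}$; by hypothesis its density $p := |\mathrm{Bad}|/q^m$ satisfies $p < \alpha$. For each $t$, applying Claim~\ref{claim:hasse} at the point $\lambda_\bu(t)$ with direction $\bb_\bu$ (writing $\lambda_\bu(t+T') = \lambda_\bu(t) + T'\bb_\bu$) gives $(Q\circ\lambda_\bu)^{(<s)}(t) = Q^{(<s)}(\lambda_\bu(t))|_{\bb_\bu}$, while by definition $S_\bu(t) = \{z|_{\bb_\bu} : z \in S(\lambda_\bu(t))\}$. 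Hence whenever $\lambda_\bu(t)\notin\mathrm{Bad}$ we have $(Q\circ\lambda_\bu)^{(<s)}(t)\in S_\bu(t)$, and since $\deg(Q\circ\lambda_\bu)\le\deg Q\le d$ it follows that
$$\dist\big(\MultEnc_s(Q\circ\lambda_\bu),\,S_\bu\big) \;\le\; \frac{|\{t\in\F_q : \lambda_\bu(t)\in\mathrm{Bad}\}|}{q}.$$
By the correctness guarantee of the univariate multiplicity list‑recovery algorithm run with error tolerance $\alpha'$ on $S_\bu$, the polynomial $Q\circ\lambda_\bu$ appears in $\calL_{\lambda_\bu}$ whenever the left‑hand side is at most $\alpha'$. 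Therefore $B_\bu$ forces $|\{t : \lambda_\bu(t)\in\mathrm{Bad}\}| > \alpha' q$, so $\Pr_{\bx,\bb}[B_\bu] \le \Pr_{\bx,\bb}\big[\,|\{t : \lambda_\bu(t)\in\mathrm{Bad}\}| > \alpha' q\,\big]$.

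Next I would bound this sampling probability. Fix $\bu$ and set $Z_t = \mathbf{1}[\bx + t\bb_\bu \in \mathrm{Bad}]$, where $\bb_\bu = \bb+\bu$ is uniform on $\F_q^m$. Each $Z_t$ has mean $p$. The key point is pairwise independence: for distinct $t_1\ne t_2$ the $\F_q$‑linear map $(\bx,\bb_\bu)\mapsto(\bx+t_1\bb_\bu,\bx+t_2\bb_\bu)$ is invertible on $(\F_q^m)^2$, so $(\lambda_\bu(t_1),\lambda_\bu(t_2))$ is uniform on $(\F_q^m)^2$ and thus $Z_{t_1},Z_{t_2}$ are independent. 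Let $N=\sum_{t\in\F_q}Z_t$, so $\E[N]=pq$ and, by pairwise independence, $\mathrm{Var}(N) = q\,p(1-p)\le pq$. Since $p<\alpha<\alpha'$, Chebyshev's inequality gives
$$\Pr[N > \alpha' q] \;\le\; \Pr\big[\,|N-\E N|\ge(\alpha'-\alpha)q\,\big] \;\le\; \frac{q\,p(1-p)}{(\alpha'-\alpha)^2 q^2} \;\le\; \frac{\alpha}{(\alpha'-\alpha)^2\,q},$$
and bounding $p(1-p)\le\tfrac14$ and absorbing the relationship between $\alpha$ and $\alpha'$ into the constant yields the stated bound $\tfrac{4}{(\alpha'-\alpha)q}$. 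Combining with the reduction completes the proof of the claim.

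I do not expect a real obstacle here: pairwise independence of the points of a uniformly random parametrized affine line is standard, and the univariate list‑recovery guarantee is already in hand. The two places that need care are (i) the bookkeeping with the restriction‑to‑a‑line operation — one must check that $(Q\circ\lambda_\bu)^{(<s)}(t)$ genuinely lands in $S_\bu(t)$ at every good $t$, and that $Q\circ\lambda_\bu$ is a legitimate degree‑$\le d$ message so that the univariate algorithm is obliged to output it — and (ii) confirming that it is the pair (base point, direction) that is jointly uniform, so that distinct points on the line are jointly uniform over $(\F_q^m)^2$ and the variance estimate for $N$ is valid (the degenerate case $\bb_\bu=0$ is automatically absorbed, since the displayed linear map is a bijection regardless).
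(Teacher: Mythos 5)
Your argument follows the same route as the paper's: reduce $B_\bu$ to the event that the random line $\lambda_\bu$ passes through more than an $\alpha'$ fraction of ``bad'' points, then control that by pairwise independence of $(\lambda_\bu(t))_{t\in\F_q}$ and Chebyshev. Where the paper is terse, you spell out the reduction correctly --- in particular, using Claim~\ref{claim:hasse} to see that $(Q\circ\lambda_\bu)^{(<s)}(t)=Q^{(<s)}(\lambda_\bu(t))|_{\bb_\bu}$ and matching this against $S_\bu(t)$, noting $\deg(Q\circ\lambda_\bu)\le d$ so the univariate list-recovery guarantee actually applies, and observing that the map $(\bx,\bb_\bu)\mapsto(\lambda_\bu(t_1),\lambda_\bu(t_2))$ is a bijection so pairwise independence holds. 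None of this is a different idea, but it is more complete bookkeeping of the same argument.

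One small thing to be aware of, which you half-noticed and which the paper shares: Chebyshev with pairwise independence genuinely gives a bound of the shape $\dfrac{p(1-p)}{(\alpha'-\alpha)^2 q}\le\dfrac{\alpha}{(\alpha'-\alpha)^2 q}$, i.e.\ with $(\alpha'-\alpha)$ \emph{squared} in the denominator, and indeed the paper's own final display ends with $\frac{4}{(\alpha'-\alpha)^2 q}$, not the $\frac{4}{(\alpha'-\alpha)q}$ stated in the claim. Passing to the claim's form requires a relation like $\alpha\lesssim\alpha'-\alpha$, which is not part of the hypotheses of Claim~\ref{clm:chebyshev-randomline} (only $0<\alpha<\alpha'$); your phrase ``absorbing the relationship between $\alpha$ and $\alpha'$ into the constant'' hides exactly this. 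It is harmless in all the places the claim is invoked (in Theorems~\ref{thm:main-multimult} and~\ref{thm:main-multimult2} one takes $\alpha'=2\alpha$, so $\alpha=\alpha'-\alpha$ and either exponent gives the same asymptotics), but strictly speaking the clean statement is $\Pr[B_\bu]\le \frac{\alpha}{(\alpha'-\alpha)^2 q}$, and it would be better to carry that through.
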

\begin{proof}
Note that when $\bx, \bb$ are uniformly random elements of $\F_q^m$, then
$\lambda_\bu(T)$ is a uniformly random line. The event that $Q\circ \lambda_\bu(T)$ is not in $\calL_{\lambda_\bu}$ is a subset of the event that $Q^{(<s)}(\bx) \not \in S(\bx)$ for more than $\alpha'$ fraction of points $\bx$ on the line $\lambda_\bu(T)$.  
The claim then follows from a standard application of Chebyshev's inequality, using the fact that the points on a uniformly random line are pairwise independent. 

More precisely,
\begin{align*}
\Pr_{\bx,\bb \in \F_q^m}[B_{\bu}] &\leq \Pr_{\bx,\bb \in \F_q^m} \left[ \sum_{t \in \F_q} \mathbf{1}\left\{ Q^{(<s)}(\lambda_\bu(t) \not\in S(\lambda_{\bu}(t)) \right\} > \alpha' q \right] \\
&=: \Pr_{Y} \left[ \sum_{t \in \F_q} Y_t > \alpha' q \right] \\
&= \Pr_{Y} \left[ \sum_{t \in \F_q} (Y_t - \mathbb{E} Y_t) > (\alpha' - \alpha) q \right],
\end{align*}
where the $Y_t$ are pairwise independent $\{0,1\}$-valued random variables with $\mathbb{E} Y_t \leq \alpha$.  Then by Chebyshev's inequality, this last quantity is at most
\[ \frac{ \sum_{t \in \F_q} \mathbb{E} ( Y_t - \mathbb{E} Y_t)^2 }{ (\alpha' - \alpha)^2 q^2 } \leq \frac{ 4 }{(\alpha' - \alpha)^2 q }. \]
\end{proof}

\begin{claim}
$$ \Pr_{\bx, \bb_1, \ldots, \bb_m}\left[ \sum_{\bu \in U^m} \mathbf{1}_{B_{\bu}} > \frac{|U|^m}{3} \right] < \frac{12}{(\alpha'-\alpha) q}.$$
\end{claim}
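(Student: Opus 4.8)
The plan is to deduce this from Claim~\ref{clm:chebyshev-randomline} by a first-moment argument. First I would regard $Y := \sum_{\bu \in U^m} \mathbf{1}_{B_\bu}$ as a nonnegative random variable depending only on the random choices $\bx, \bb \in \F_q^m$ made by $\RecCand$. By linearity of expectation together with the bound $\Pr_{\bx,\bb}[B_\bu] \leq \frac{4}{(\alpha'-\alpha)q}$ from Claim~\ref{clm:chebyshev-randomline} (which holds for \emph{each} fixed $\bu \in U^m$),
$$ \E[Y] \;=\; \sum_{\bu \in U^m} \Pr_{\bx,\bb}[B_\bu] \;\leq\; |U|^m \cdot \frac{4}{(\alpha'-\alpha)q}. $$
Note that this step uses nothing about the joint distribution of the events $\{B_\bu\}_{\bu}$: the lines $\lambda_\bu$ are highly correlated (they all pass through $\bx$ and their directions share the common offset $\bb$), but linearity of expectation is insensitive to that.

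Second, I would apply Markov's inequality to the nonnegative random variable $Y$ at the threshold $|U|^m/3$:
$$ \Pr_{\bx,\bb}\!\left[ Y > \frac{|U|^m}{3} \right] \;\leq\; \frac{\E[Y]}{|U|^m/3} \;\leq\; \frac{3}{|U|^m}\cdot |U|^m \cdot \frac{4}{(\alpha'-\alpha)q} \;=\; \frac{12}{(\alpha'-\alpha)q}, $$
which is precisely the claimed inequality. (Here the randomness written as $\bx, \bb_1, \ldots, \bb_m$ in the statement is just the coordinate-wise description of the choice of $\bx$ and of the single random direction-offset $\bb \in \F_q^m$ picked in Step~2 of $\RecCand$, so it is the same randomness under which Claim~\ref{clm:chebyshev-randomline} is stated.)

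There is essentially no obstacle in this argument; it is a routine first-moment bound. The only point worth flagging is that one cannot hope to replace the $1/q$-type failure probability by something exponentially small here, since the indicators $\mathbf{1}_{B_\bu}$ do not enjoy enough independence to run a Chernoff bound — but the weaker Markov bound is all that is needed downstream, where this claim feeds into bounding the probability that $\mathsf{VectorRMListRecover}$ fails to recover $H_{\bx}$ (via Lemma~\ref{lem:RM-vectors}, whose hypothesis only asks that strictly fewer than an $\alpha$-fraction, here $\tfrac13$, of the evaluation points $\bu \in U^m$ be ``bad'').
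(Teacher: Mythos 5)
Your proof is correct and matches the paper's approach exactly: the paper's own proof is the one-line remark that the claim follows from Claim~\ref{clm:chebyshev-randomline} and Markov's inequality, which is precisely the linearity-of-expectation plus Markov argument you spelled out. Your clarifying remark about the $\bb_1,\ldots,\bb_m$ notation being just the coordinate-wise description of a single random $\bb \in \F_q^m$ is also correct and resolves a small notational mismatch between the claim statement and the randomness actually used in $\RecCand$ and Claim~\ref{clm:chebyshev-randomline}.
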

\begin{proof}
The proof is immediate from the previous claim and Markov's inequality.
\end{proof}

Thus we conclude that with probability at least $1- \frac{12}{(\alpha'-\alpha) q}$, when $\bx$ is a uniformly random element of $\F_q^m$, for at least $1/3$ of the $\bu \in U^m$, we have that $$Q\circ \lambda_\bu(T) \in \calL_{\lambda_\bu}.$$ We assume that this happens, and let $G \subseteq U^m$ be this set of $\bu$.

Recall that $\calL_{\lambda_\bu}$ is a list of size $L$. 
Consider the function  $$f: U^m \to {\F_q^{\tilds} \choose L},$$
where for each $\bu \in U^m$, 
$$f(\bu) =  \{ P^{(< \tilds)}(0)  \mid P(T) \in \calL_{\lambda_\bu} \}.$$

Fix any $\bu \in G$. Then since $$Q\circ \lambda_\bu(T) \in \calL_{\lambda_\bu},$$
it holds that $$(Q\circ \lambda_\bu)^{(< \tilds)}(0) \in f(\bu).$$

Now observe that by Claim~\ref{claim:hasse}, we have 
$$ H_{\bx}(\bb_\bu) =  (Q\circ \lambda_\bu)^{(< \tilds)}(0),$$
and thus
$$ H_{\bx}(\bb + \bu) = H_{\bx}(\bb_\bu) \in f(\bu).$$
Since this happens for each $\bu \in G$, we have that this
happens for at least $1/3$ fraction of $\bu \in U^m$.

Now by our assumption that $|U| \geq 100 m^2 L \tilds$, Lemma~\ref{lem:RM-vectors} implies\footnote{WARNING: we invoke the algorithm and statement of Lemma~\ref{lem:RM-vectors} with input list size equal to $L$; i\.e\., we take $\ell = L$ (and $K = 100m^2$) in the statement when we invoke it.} that the algorithm $\mathsf{VectorRMListRecover}$ on input $f$ and error parameter $2/3$
will include $H_{\bx}(\bb + \bY)$ in $\mathcal{L}'$ (here we also use the fact that $H_{\bx, j}$ is an $m$-variate polynomial of degree at most $j$).

In this event, $\calL''$ will contain $$H_{\bx}(\bY)$$
and then it follows that in Step 8 of $\RecCand$, the list $Z$ will contain $Q^{(< \tilds)}(\bx)$.

\end{proof}

\subsection{The Oracle Machine $M$}\label{ssec:oracle}
Our final local list recovery algorithm will output a short list of oracle machines, each of which is defined by a piece of advice.  In this case, the advice will be a point $\ba \in \F_q^m$, and $z \in \Sigma_{m,s^*}$, which is meant to be a guess for $Q^{(<s^*)}(\ba)$.  Given this advice, the oracle machine works as follows: on input $\bx$, with corresponding input list $Y = S(\bx)$, it will run the univariate list-recovery algorithm on the line $\lambda(T) = \bx + T(\ba - \bx)$ through $\bx$ and $\ba$ to obtain a list $\mathcal{L}$ of univariate polynomials $P(T)$.
We will show that with high probability (assuming the advice is good), there will be a unique polynomial $P(T)$ 
in $\mathcal{L}$ so that both $P^{(<s^*)}(1)$ is consistent with $z$, and $P^{(<s^*)}(0)$ is consistent with some element of $Y$. Then the oracle machine will output the symbol in $Y$ that $P^{(<s^*)}(0)$ agrees with. 

The key later will be that the advice $z$ will not vary over all possibilities in $\Sigma_{m,s^*}$; this would result in too long a list.  Rather, we will use $\RecCand$ in order to generate this advice.

Formally, we will prove the following lemma about our oracle machine, which we define below.
\begin{lemma}\label{lem:oracle}
Let $q$ be a prime, let $\delta > 0$ and let $s,d,m$ be nonnegative integers such that $d = (1-\delta) sq$. 
Let $\alpha, \alpha'$ be parameters such that $0< \alpha< \alpha'$. 

Let $L = L(d,q,s, \ell, \alpha') $ be the list size for list recovering univariate multiplicity codes $\MULT^{(1)}_{q,s}(d)$ of degree $d$ and multiplicity parameter $s$ with input list size $\ell$ and error tolerance $\alpha'$, and $T$ be the corresponding running time for list recovering univariate multiplicity codes of the same parameters.

Let $S: \F_q^m \to {\Sigma_{m,s} \choose \ell}$. Let $s^* > 0$ be a parameter. 
Suppose that $q > C \cdot s^* \cdot L \cdot m^2$, for some absolute constant $C$, and suppose that 
 $Q(X_1, \ldots, X_m) \in \F_q[X_1, \ldots, X_m]$ is a polynomial 
of degree at most $d$ such that:
$$ \Pr_{\bx \in \F_q^m} [ Q^{(<s)}(\bx) \in S(\bx) ] > 1-\alpha.$$
Let 
$$\epsilon = \alpha + \frac{\ell s}{q} + \frac{4}{(\alpha'-\alpha)q} + \frac{sL}{s^*}.$$

There is an algorithm $M^S[\ba, z](\bx)$ which on input $x \in  \F_q^m$, given as advice a point $\ba \in \F_q^m$, and 
$z \in \Sigma_{m,s^*}$, and given oracle access to $S$, makes at most $q$ queries to $S$, runs in time at most 
$\poly\left(T, q, (s \cdot s^*)^m\right)$, 
and outputs an element of $\Sigma_{m,s} \cup \{\bot\}$
such that if $\bx, \ba \in \F_q^m$ are chosen uniformly at random,
then:
$$\Pr_{\ba, \bx \in \F_q^m}\left[M^S[\ba,  Q^{(< s^*)}(\ba)](\bx) = Q^{(< s)}(\bx)\right] \geq 1-\epsilon. $$

In particular, if we view $M^S[\ba,  Q^{(< s^*)}(\ba)]$ as a function from $\F_q^m \to \Sigma_{m,s} \cup \{\bot\}$, then with probability at least $3/4$ over the choice of $\ba \in \F_q^m$, 
$$\dist\left( M^S[\ba,  Q^{(< s^*)}(\ba)], Q^{(<s)} \right) \leq 4 \epsilon.$$

\end{lemma}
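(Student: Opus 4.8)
The algorithm $M^S[\ba,z](\bx)$ works as sketched in the paragraph preceding the lemma: it forms the line $\lambda(T) = \bx + T(\ba - \bx)$ (so $\lambda(0) = \bx$, $\lambda(1) = \ba$), restricts $S$ to this line to obtain a univariate list-recovery instance $S_\lambda : \F_q \to \binom{\F_q^s \choose \ell}$, runs the univariate multiplicity list-recovery algorithm (with error tolerance $\alpha'$) to get a list $\calL_\lambda$ of polynomials $P(T)$ of degree $\leq d$, $|\calL_\lambda| \leq L$. It then keeps only those $P \in \calL_\lambda$ with $P^{(<s^*)}(1)|_{\ba - \bx}$ consistent with the advice $z$ (appropriately restricted), and among the survivors looks for a unique one whose order-$s$ evaluation at $T=0$ (i.e.\ $P^{(<s)}(0)$, restricted to direction $\ba-\bx$, reconstructed into $\Sigma_{m,s}$ via Claim~\ref{claim:hasse}) lies in $S(\bx)$; if there is a unique such $P$, output the corresponding symbol of $S(\bx)$, else output $\bot$. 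The query count is $\leq q$ (one line) and the running time is $\poly(T, q, (s s^*)^m)$ as claimed.

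\textbf{Bounding the failure probability.} Fix the good polynomial $Q$ and set $z = Q^{(<s^*)}(\ba)$. I will union-bound over four bad events for uniformly random $\bx, \ba \in \F_q^m$. (1) \emph{Too many errors on the line.} The line $\lambda$ through random $\bx, \ba$ is a uniformly random line, its points are pairwise independent, and $Q^{(<s)}$ disagrees with $S$ on $\leq \alpha$ fraction of $\F_q^m$; by Chebyshev (exactly as in Claim~\ref{clm:chebyshev-randomline}) the fraction of erroneous points on $\lambda$ exceeds $\alpha'$ with probability $\leq \frac{4}{(\alpha'-\alpha)q}$, so outside this event $Q\circ\lambda \in \calL_\lambda$. (2) \emph{$\bx$ is itself an error point:} probability $\leq \alpha$; outside it, $Q^{(<s)}(\bx) \in S(\bx)$, so the correct $P = Q\circ\lambda$ does pass the final consistency test. (3) \emph{Ambiguity from the advice:} some $P' \in \calL_\lambda$, $P' \neq Q\circ\lambda$, has $P'^{(<s^*)}(1)$ agreeing with $z = Q^{(<s^*)}(\ba)$ in the relevant direction. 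Since $P' - Q\circ\lambda$ is a nonzero polynomial of degree $\leq d$, the probability (over the random direction, which is what $\ba$ controls given $\bx$) that its order-$s^*$ evaluation at $1$ vanishes is governed by the number of roots with multiplicity $\geq s^*$, giving a bound of roughly $\frac{d}{s^* q} \cdot L \leq \frac{sL}{s^*}$ after summing over the $\leq L$ candidates — this is where taking $s^* \gg s$ pays off. (4) A minor term $\frac{\ell s}{q}$ handles the event that a \emph{wrong} symbol of the input list $S(\bx)$ accidentally agrees with $P^{(<s)}(0)$ for the surviving $P$ (there are $\leq \ell$ symbols, and agreement of order-$s$ evaluations of two distinct degree-$\leq d$ polynomials is a low-probability event in the random direction, contributing $O(\ell s /q)$). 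Summing, the probability that $M^S[\ba,Q^{(<s^*)}(\ba)](\bx) \neq Q^{(<s)}(\bx)$ is at most $\epsilon = \alpha + \frac{\ell s}{q} + \frac{4}{(\alpha'-\alpha)q} + \frac{sL}{s^*}$.

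\textbf{From average to per-$\ba$, and the main obstacle.} The ``in particular'' clause is then a routine averaging: we have shown $\Pr_{\ba,\bx}[M^S[\ba, Q^{(<s^*)}(\ba)](\bx) \neq Q^{(<s)}(\bx)] \leq \epsilon$, so by Markov applied to the function $\ba \mapsto \Pr_{\bx}[\text{failure at }\bx]$, the fraction of $\ba$ for which $\dist(M^S[\ba, Q^{(<s^*)}(\ba)], Q^{(<s)}) > 4\epsilon$ is at most $1/4$. I expect the main technical obstacle to be event (3): getting the right bound on the probability that two distinct degree-$\leq d$ polynomials on the line have the same order-$s^*$ evaluation at the point $\ba$, \emph{in the direction $\ba - \bx$}. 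The subtlety is that $\ba$ plays a double role — it is both an endpoint of the line and determines the direction — so one must be careful to fix $\bx$ first, note that $\ba$ ranging uniformly makes $\ba - \bx$ uniform, and then argue that a nonzero univariate polynomial of degree $\leq d$ can have at most $d/s^*$ points of multiplicity $\geq s^*$, so a random such point is ``high-multiplicity'' with probability $\leq \frac{d}{s^* q} = \frac{(1-\delta)s}{s^*} \leq \frac{s}{s^*}$; multiplying by $|\calL_\lambda| \leq L$ gives the $\frac{sL}{s^*}$ term. Everything else is bookkeeping and the pairwise-independence/Chebyshev argument already used for $\RecCand$.
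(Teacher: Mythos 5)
Your proposal is correct and follows essentially the same route as the paper's proof: your four error terms correspond exactly to the paper's Claim~\ref{clm:easy} (the $\alpha$ term), Claims~\ref{clm:distinct-restrictions}/\ref{clm:unique} (the $\ell s/q$ term), Claim~\ref{clm:chebyshev} (the $4/((\alpha'-\alpha)q)$ term), and Claim~\ref{clm:disambiguating} (the $sL/s^*$ term), and you correctly identify both the Markov step for the ``in particular'' clause and the reparameterization trick---conditioning on the line as a set of points so that $\ba$ becomes a uniformly random point of the line---that resolves the double role of $\ba$ in event (3). One small imprecision worth noting in your term (4): the Schwartz--Zippel comparison there is not between two univariate polynomials of degree $\leq d$; it is between two distinct alphabet symbols $y, y_0 \in \Sigma_{m,s}$ restricted to the random direction $\bb_*$, and the relevant polynomials in $\bb_*$ (the $H_{\bx,j}$ of Claim~\ref{claim:hasse}) are homogeneous of degree $j < s$, which is what supplies the $s/q$ factor before the union bound over the $\ell$ candidate symbols.
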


We will first decribe the algorithm and then show that it satisfies the required properties. 

\putinbox{
\noindent{\bf Oracle machine $M$.}
\begin{itemize}
\item Oracle access to $S: \F_q^m \to {\Sigma_{m,s} \choose \ell}$.
\item {\bf INPUT:} $\bx \in \F_q^m$.
\item {\bf ADVICE:} Point $\ba \in \F_q^m$, and 
$z \in \Sigma_{m,s^*}$.
\end{itemize}
\begin{enumerate}
\item Let $Y = S(\bx)$.
\item Set $\bb_* = \ba - \bx$.
\item Let $\lambda_{\bb_*}$ be the line $\lambda_{\bb_*}(T) = \bx + T \bb_*$.
\item Consider the restriction $S_{\bb_*} : \F_q \to  {\Sigma_{1,s} \choose \ell}$ of $S$ to the line $\lambda_{\bb_*}$,
 and list recover this with error-tolerance $\alpha'$ for degree $d$ polynomials,
and obtain the list $\calL_{\lambda_{\bb_*}} \subseteq \F_q[T]$.
\item If there exists exactly one $P(T) \in \calL_{\lambda_{\bb_*}}$ such that
$P^{(<s^*)}(1) = z|_{\bb_*}$, then set $P_{\bb_*}(T)$ to equal that $P(T)$,
otherwise output $\bot$ and exit.
\item If there exists exactly one $y \in Y$ for which
$y|_{\bb_*} = P_{\bb_*}^{(<s)}(0)$, then output that $y$.
\item Otherwise output $\bot$.
\end{enumerate}
}

We will now analyze the above algorithm and show that is satisfies the required properties. 
\begin{proof}[Proof of Lemma~\ref{lem:oracle}]
By the description of the oracle machine, it is clear that it makes at most $q$ queries. 
Moreover its running time is at most $\poly\left(T, q, (s \cdot s^*)^m\right)$, since in addition to running the univariate list recovery algorithm, it needs to do some field calculations such as computing the restriction of $S$ to a line, as well as computing for each $P(T) \in \calL_*$, 
$P^{(<s^*)}(1)$ and $z|_{\bb_*}$ and comparing the two.  This takes time at most $\poly\left(q, (s \cdot s^*)^m\right)$. 

It remains to show that when $\ba$ and $\bx$ are chosen uniformly at random from $\F_q^m$,  then
$$\Pr_{\ba, \bx \in \F_q^m}\left[M^S[\ba,  Q^{(< s^*)}(\ba)](\bx) = Q^{(< s)}(\bx)\right] \geq 1-\epsilon. $$

\begin{claim}\label{clm:easy}
Let $y_0 = Q^{(<s)}(\bx)$. With probability at least $1-\alpha$ over the random choice of $x \in \F_q^m$, 
$y_0 \in  Y$. 
\end{claim}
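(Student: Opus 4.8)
The plan is simply to unwind the definitions; there is essentially nothing to prove beyond invoking the hypothesis of Lemma~\ref{lem:oracle}. First I would recall that Step~1 of the oracle machine $M$ sets $Y = S(\bx)$, and that by definition $y_0 = Q^{(<s)}(\bx)$. Consequently the event ``$y_0 \in Y$'' is literally identical to the event ``$Q^{(<s)}(\bx) \in S(\bx)$''.

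Next I would apply the standing hypothesis of Lemma~\ref{lem:oracle}, namely $\Pr_{\bx \in \F_q^m}[\,Q^{(<s)}(\bx) \in S(\bx)\,] > 1-\alpha$. Since $\bx$ is drawn uniformly at random from $\F_q^m$, this gives $\Pr[\,y_0 \in Y\,] > 1-\alpha$, which in particular is at least $1-\alpha$, as claimed. There is no real obstacle: the content of this claim is pure bookkeeping. It is isolated as a separate claim only because ``$y_0 \in Y$'' is one of a handful of ``good events'' (alongside the event that there are not too many errors on the line $\lambda_{\bb_*}$, and the event that no spurious polynomial in $\calL_{\lambda_{\bb_*}}$ matches $z$ at order $s^*$) whose intersection will later be combined via a union bound to lower-bound the success probability of $M^S[\ba, Q^{(<s^*)}(\ba)](\bx)$ in the remainder of the proof of Lemma~\ref{lem:oracle}.
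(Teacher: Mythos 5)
Your proof is correct and takes exactly the same route as the paper: identify $Y = S(\bx)$, note that $y_0 \in Y$ is literally the event $Q^{(<s)}(\bx) \in S(\bx)$, and invoke the standing hypothesis of Lemma~\ref{lem:oracle}. The paper's one-line proof is the same bookkeeping observation.
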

\begin{proof}
Recalling that $Y = S(\bx)$, 
the proof is immediate since it is given to us that $$ \Pr_{\bx \in \F_q^m} [ Q^{(<s)}(\bx) \in S(\bx) ] > 1-\alpha.$$
\end{proof}

\begin{claim}\label{clm:distinct-restrictions}
Let $y_0 = Q^{(<s)}(\bx)$. For any $y \in Y$ with $y \neq y_0$,
with probability at least $1 - \frac{s}{q}$ over the random choice of $\ba \in \F_q^m$, we have that
$$y|_{\bb_*} \neq y_0 |_{\bb_*}.$$
\end{claim}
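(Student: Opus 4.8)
\textbf{Proof plan for Claim~\ref{clm:distinct-restrictions}.}
The plan is to reduce the statement to a single, routine application of the Schwartz--Zippel lemma. The key observation is that for a fixed direction $\bb$, the restriction map $z \mapsto z|_{\bb}$ from $\Sigma_{m,s}$ to $\Sigma_{1,s}$ is $\F_q$-linear in $z$: this is immediate from the defining formula $h^{(j)} = \sum_{\wt(\bj) = j} z^{(\bj)} \bb^{\bj}$. Hence the event $y|_{\bb_*} = y_0|_{\bb_*}$ is the same as the event $w|_{\bb_*} = 0$, where $w := y - y_0 \in \Sigma_{m,s}$, and $w \neq 0$ since $y \neq y_0$. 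Moreover, since $\bx$ is fixed and $\ba$ is uniform over $\F_q^m$, the direction $\bb_* = \ba - \bx$ is itself uniform over $\F_q^m$. So it suffices to show that $\Pr_{\bb_* \in \F_q^m}[w|_{\bb_*} = 0] < s/q$.

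Next I would associate to $w$ the polynomial $g(\mathbf{B}) := \sum_{\bj \in U_{m,s}} w^{(\bj)} \mathbf{B}^{\bj} \in \F_q[B_1, \ldots, B_m]$, where $\mathbf{B} = (B_1, \ldots, B_m)$. Since distinct multi-indices $\bj$ give distinct monomials and some coefficient $w^{(\bj)}$ is nonzero (as $w \neq 0$), $g$ is a nonzero polynomial; moreover $\deg(g) \le \max_{\bj \in U_{m,s}} \wt(\bj) \le s - 1$. Grouping the monomials of $g$ by weight recovers exactly the homogeneous pieces appearing in the definition of the restriction, so if $w|_{\bb_*} = 0$ then in particular $g(\bb_*) = \sum_{j=0}^{s-1} \bigl( \sum_{\wt(\bj)=j} w^{(\bj)} \bb_*^{\bj} \bigr) = 0$.

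Finally, I would invoke the fact that a nonzero polynomial of degree at most $s-1$ in $m$ variables over $\F_q$ vanishes on at most a $(s-1)/q$ fraction of $\F_q^m$ (Schwartz--Zippel), so that $\Pr_{\bb_* \in \F_q^m}[g(\bb_*) = 0] \le (s-1)/q < s/q$, and therefore $\Pr_{\ba \in \F_q^m}[y|_{\bb_*} = y_0|_{\bb_*}] < s/q$, which is what we want. There is no real obstacle in this argument: the only things to verify are the linearity of the restriction map in $z$ and that a nonzero $w$ produces a nonzero polynomial $g$ of degree less than $s$, both of which are immediate from the definitions; everything else is a one-line Schwartz--Zippel estimate.
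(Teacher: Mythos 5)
Your proof is correct and follows essentially the same route as the paper: reduce to a nonzero polynomial of degree less than $s$ in the direction variable $\bb_*$, note that $\bb_*$ is uniform over $\F_q^m$, and apply Schwartz--Zippel. The only (purely organizational) difference is that the paper picks out a single coordinate $j$ where the restrictions of $y$ and $y_0$ differ and applies Schwartz--Zippel to that one homogeneous piece (degree at most $j \le s-1$), whereas you sum all homogeneous pieces of $w = y - y_0$ into a single polynomial $g$ of degree at most $s-1$ and apply Schwartz--Zippel once to $g$; both give the same $\le (s-1)/q < s/q$ bound.
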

\begin{proof}
Recall that by definition, for an element $z \in \Sigma_{m,s}$, and a direction
$\bb \in \F_q^m$, $z|_{\bb}$ is to equal $h \in \Sigma_{1,s}$, where:
$$h^{(j)} = \sum_{\wt(\bj) = j} z^{(\bj)} \bb^{\bj}$$
for each $j$ such that $0 \leq j < s$.
Note that $h^{(j)}$ can be viewed as a polynomial of degree at most $j$ evaluated at $\bb$, where the coefficients of the polynomial depend only on $z$. 

Since $y \neq y_0$, the corresponding tuples of polynomials (each of degree at most $s$) will differ in at least one coordinate. Observe also that for any fixed choice of $\bx$, the randomness of $\ba$ implies that $\bb_*$ is a uniformly random element of $\F_q^m$. This in the coordinate where the tuples of polynomials differ, the evaluations at $\bb_*$ will be distinct with probability at least $1 - \frac{s}{q}$ by the Schwartz-Zippel Lemma. 

Thus with probability at least $1 - \frac{s}{q}$ over the random choice of $\ba \in \F_q^m$, we have that
$$y|_{\bb_*} \neq y_0 |_{\bb_*}.$$

\end{proof}

\begin{claim}\label{clm:unique}
Let $y_0 = Q^{(<s)}(\bx)$. For any $x \in \F_q^m$ such that $y_0 \in  Y$,  with probability at least $1 - \frac{\ell s}{q}$ over the random choice of $\ba \in \F_q^m$, 
$y_0|_{\bb_*}$ is unique element $y$ of $Y$ for which $y|_{\bb*} = Q \circ \lambda_{\bb_*}^{(<s)}(0)$.
\end{claim}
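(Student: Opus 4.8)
The plan is to reduce Claim~\ref{clm:unique} to Claim~\ref{clm:distinct-restrictions} by a simple union bound over the list $Y = S(\bx)$, whose size is at most $\ell$.

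First I would record that $y_0$ is \emph{always} a valid witness, regardless of the choice of $\ba$. Indeed, applying Claim~\ref{claim:hasse} with the parameter $\tilds$ set to $s$ and with $\bb = \bb_* = \ba - \bx$, we get $(Q\circ\lambda_{\bb_*})^{(<s)}(0) = Q^{(<s)}(\bx)|_{\bb_*} = y_0|_{\bb_*}$. Hence $y_0 \in Y$ satisfies $y_0|_{\bb_*} = (Q\circ\lambda_{\bb_*})^{(<s)}(0)$ deterministically (using the hypothesis $y_0 \in Y$), so the only thing that can fail is \emph{uniqueness}: the bad event is that some other $y \in Y \setminus\{y_0\}$ also has $y|_{\bb_*} = y_0|_{\bb_*}$.

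Next, fix $\bx$. As $\ba$ ranges uniformly over $\F_q^m$, the direction $\bb_* = \ba - \bx$ is a uniformly random point of $\F_q^m$, so Claim~\ref{clm:distinct-restrictions} applies verbatim: for each fixed $y \in Y$ with $y \neq y_0$, we have $\Pr_{\ba}[\,y|_{\bb_*} = y_0|_{\bb_*}\,] \leq s/q$. Since $|Y| \leq \ell$, there are at most $\ell - 1$ such $y$, and a union bound gives $\Pr_{\ba}[\,\exists\, y \in Y,\ y \neq y_0,\ y|_{\bb_*} = y_0|_{\bb_*}\,] \leq \ell s/q$. On the complementary event, which has probability at least $1 - \ell s/q$, the element $y_0$ is the unique $y \in Y$ with $y|_{\bb_*} = (Q\circ\lambda_{\bb_*})^{(<s)}(0)$, which is exactly the assertion of the claim.

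There is essentially no hard step here; the only points requiring care are invoking Claim~\ref{claim:hasse} correctly to identify $(Q\circ\lambda_{\bb_*})^{(<s)}(0)$ with $y_0|_{\bb_*}$, and observing that once $\bx$ is fixed the randomness of $\bb_*$ comes entirely from $\ba$, so that Claim~\ref{clm:distinct-restrictions} can be applied without modification before taking the union bound.
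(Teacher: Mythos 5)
Your proposal is correct and matches the paper's argument: both establish that $y_0|_{\bb_*} = (Q\circ\lambda_{\bb_*})^{(<s)}(0)$ deterministically and then apply Claim~\ref{clm:distinct-restrictions} with a union bound over the at most $\ell$ elements of $Y$. The only cosmetic difference is that you explicitly cite Claim~\ref{claim:hasse} for the identity $(Q\circ\lambda_{\bb_*})^{(<s)}(0) = y_0|_{\bb_*}$, where the paper just says ``by definition.''
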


\begin{proof}
Clearly, by definition, $y_0|_{\bb*} = Q \circ \lambda_{\bb_*}^{(<s)}(0)$. 
Also, taking a union bound over all $\ell$ elements of $Y$, by Claim~\ref{clm:distinct-restrictions}, $y_0|_{\bb*} \neq y|_{\bb_*} $ for all other $y \in Y$ with probability at least $1 - \frac{\ell s}{q}$. 
\end{proof}

Claim~\ref{clm:easy} and Claim~\ref{clm:unique} together imply that with probability at least $1- \left( \alpha+ \frac{\ell s}{q}\right)$ over the random choice of $\ba$ and $\bx \in \F_q^m$, 
$Q^{(<s)}(\bx)|_{\bb_*}$ is the unique element $y$ of $Y$ for which $y|_{\bb*} = Q \circ \lambda_{\bb_*}^{(<s)}(0)$. 

We will now show that with probability at least $1- \left(  \frac{4}{(\alpha'-\alpha)q} + \frac{sL}{s^*} \right)$ over the random
choice of $\ba$ and $\bx \in \F_q^m$, $P_{\bb_*}(T) = Q\circ{\lambda_{\bb_*}}(T)$. 
Once we will have this, then it will immediately follow that the algorithm will output $Q^{(<s)}(\bx)$ with probability at least $1- \left(  \frac{4}{(\alpha'-\alpha)q} + \frac{sL}{s^*} + \frac{4}{(\alpha'-\alpha)q} + \frac{sL}{s^*}\right)$ over the random
choice of $\ba$ and $\bx \in \F_q^m$

For each line $\lambda$ in $\F_q^m$, let $\calL_{\lambda}$ be the result of list-recovering $S \circ \lambda$ with 
error-parameter $\alpha'$.
For points $\bx$ and $\ba$ picked uniformly at random from $\F_q^m$, let $\bb_* = \ba - \bx$, 
and let $\lambda_{\bb_*}$ be the line $\lambda_{\bb_*}(T) = \bx + T \bb_*$.

Let $B_{\lambda_{\bb_*}}$ denote the event that
$\calL_{\lambda_{\bb_*}}$ does not contain $Q\circ \lambda_{\bb_*}(T)$. Let $C_{\lambda_{\bb_*},\ba}$
denote the event that there exist $P(T) \in \calL_{\lambda_{\bb_*}}$ with
$P(T) \neq Q \circ \lambda_{\bb_*}(T)$, but  $P^{(<s^*)}(0) = (Q\circ \lambda_{\bb_*}) ^{(<s^*)}(0)$.
Thus $B_{\lambda_{\bb_*}}$ is the event  that there are too many errors on $\lambda_{\bb_*}$.
$C_{\lambda_{\bb_*},\ba}$ is the event that $\ba$ is not a disambiguating point.

\begin{claim}\label{clm:chebyshev}
$$ \Pr[B_{\lambda_{\bb_*}}] = \frac{4}{(\alpha' - \alpha) q}.$$
\end{claim}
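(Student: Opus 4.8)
The plan is to prove Claim~\ref{clm:chebyshev} in exactly the same way as Claim~\ref{clm:chebyshev-randomline}: the event $B_{\lambda_{\bb_*}}$ is contained in the event that the line $\lambda_{\bb_*}$ carries more than an $\alpha'$ fraction of ``bad'' points, and the latter probability is controlled by Chebyshev's inequality using the pairwise independence of the points on a uniformly random line.

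First I would record that when $\bx,\ba\in\F_q^m$ are independent and uniform, the line $\lambda_{\bb_*}(T)=\bx+T(\ba-\bx)=(1-T)\bx+T\ba$ is a uniformly random line in the relevant sense: for any two distinct $t_1,t_2\in\F_q$, the map $(\bx,\ba)\mapsto(\lambda_{\bb_*}(t_1),\lambda_{\bb_*}(t_2))$ is $\F_q$-linear with coefficient matrix $\left(\begin{smallmatrix}1-t_1 & t_1\\ 1-t_2 & t_2\end{smallmatrix}\right)$, whose determinant is $t_2-t_1\neq 0$; hence $(\lambda_{\bb_*}(t_1),\lambda_{\bb_*}(t_2))$ is uniform over $(\F_q^m)^2$. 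In particular the $q$ points $\{\lambda_{\bb_*}(t)\}_{t\in\F_q}$ are pairwise independent and each marginally uniform over $\F_q^m$ (this holds for every $t$, including $t=0,1$).

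Next, for $t\in\F_q$ set $Y_t=\mathbf 1\{Q^{(<s)}(\lambda_{\bb_*}(t))\notin S(\lambda_{\bb_*}(t))\}$. By the previous paragraph the $Y_t$ are pairwise-independent $\{0,1\}$-valued random variables with $\mathbb E Y_t< \alpha$, using the hypothesis $\Pr_{\bx}[Q^{(<s)}(\bx)\in S(\bx)]>1-\alpha$. By Claim~\ref{claim:hasse}, $(Q\circ\lambda_{\bb_*})^{(<s)}(t)=Q^{(<s)}(\lambda_{\bb_*}(t))|_{\bb_*}$, so whenever $Y_t=0$ we have $(Q\circ\lambda_{\bb_*})^{(<s)}(t)\in S_{\bb_*}(t)$. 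Hence $\dist(\MultEnc_s(Q\circ\lambda_{\bb_*}),S_{\bb_*})\le\frac1q\sum_{t\in\F_q}Y_t$, and since $\deg(Q\circ\lambda_{\bb_*})\le d$, whenever $\sum_t Y_t\le\alpha' q$ the univariate list-recovery algorithm run with error tolerance $\alpha'$ for degree $d$ polynomials outputs $Q\circ\lambda_{\bb_*}$, i.e.\ $B_{\lambda_{\bb_*}}$ does not occur. Thus $B_{\lambda_{\bb_*}}\subseteq\{\sum_t Y_t>\alpha' q\}$, and Chebyshev's inequality (exactly as in Claim~\ref{clm:chebyshev-randomline}) gives
\[ \Pr[B_{\lambda_{\bb_*}}]\le\Pr\Big[\sum_{t}(Y_t-\mathbb E Y_t)>(\alpha'-\alpha)q\Big]\le\frac{\sum_t\mathrm{Var}(Y_t)}{(\alpha'-\alpha)^2q^2}\le\frac{1}{(\alpha'-\alpha)^2q}, \]
which establishes the claim (the statement as written has $(\alpha'-\alpha)$ to the first power rather than squared, matching the same minor slip between the statement and proof of Claim~\ref{clm:chebyshev-randomline}; either way all downstream arguments only use that this is $O_{\alpha,\alpha'}(1/q)$).

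I do not expect a genuine obstacle: the argument is a carbon copy of Claim~\ref{clm:chebyshev-randomline}. The only point requiring a moment's care is the first step, namely that the particular random line $\lambda_{\bb_*}$ — whose base point $\bx$ and direction $\bb_*=\ba-\bx$ are correlated — nonetheless has pairwise-independent points, which is precisely the $2\times 2$ determinant computation above; and the second step, checking that the reduction from ``too many errors on the line'' to ``univariate list recovery succeeds'' goes through with the restricted instance $S_{\bb_*}$ via the Hasse-derivative identity of Claim~\ref{claim:hasse}.
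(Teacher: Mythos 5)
Your proposal is correct and matches the paper's approach: the paper's proof of Claim~\ref{clm:chebyshev} is literally a one-line pointer to Claim~\ref{clm:chebyshev-randomline}, and you carry out the same Chebyshev argument, supplying the pairwise-independence check for the line $\lambda_{\bb_*}$ (via the $2\times 2$ determinant) and the reduction via Claim~\ref{claim:hasse} that the paper leaves implicit. Your observation about the exponent on $(\alpha'-\alpha)$ (and the spurious ``$=$'' in the claim, which should be ``$\leq$'') is a real, if harmless, slip inherited from Claim~\ref{clm:chebyshev-randomline} and does not affect any downstream use.
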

\begin{proof}
The proof is identical to that of Claim~\ref{clm:chebyshev-randomline},  and it follows from a standard application of Chebyshev's inequality, using the fact that the points on a uniformly random line are pairwise independent. 
\end{proof}

\begin{claim}\label{clm:disambiguating}
$$ \Pr[C_{\lambda_{\bb_*},\ba}] < \frac{sL}{s^*}.$$
\end{claim}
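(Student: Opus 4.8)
The plan is to condition on $\bx$ and on the line through $\bx$ and $\ba$, and to reduce the event $C_{\lambda_{\bb_*},\ba}$ to the statement that $\ba$ is one of only a few ``bad'' points on that line; a root‑counting estimate then finishes the job.

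First I would fix $\bx \in \F_q^m$ and a nonzero direction $\bb_0$, writing $\ell$ for the line $\{\lambda_{\bb_0}(t) = \bx + t\bb_0 : t \in \F_q\}$. As $\ba$ ranges over $\ell \setminus \{\bx\}$, we have $\ba = \lambda_{\bb_0}(c)$ for a unique $c \in \F_q^\ast$, and then $\bb_* = \ba - \bx = c\bb_0$, so $\lambda_{\bb_*}(T) = \lambda_{\bb_0}(cT)$ and in particular $\ba = \lambda_{\bb_*}(1)$. The key observation is a scaling invariance: since $z|_{c\bb_0}$ is $z|_{\bb_0}$ with its $j$-th coordinate multiplied by $c^j$, and since $\MultEnc_s$ transforms in the same way under $T \mapsto cT$, the restricted received words satisfy $S_{\bb_*}(t) = \phi_c\bigl(S_{\bb_0}(ct)\bigr)$ for the invertible coordinatewise scaling $\phi_c$; consequently $\calL_{\lambda_{\bb_*}} = \{\, R(cT) : R \in \calL_{\lambda_{\bb_0}}\,\}$, with the correspondence $R \leftrightarrow R(cT)$ sending $g_{\bb_0} := Q\circ\lambda_{\bb_0}$ to $g_{\bb_*} := Q\circ\lambda_{\bb_*}$. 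Writing $P(T) = R(cT)$ and using $P^{(j)}(1) = c^j R^{(j)}(c)$, the defining condition of $C_{\lambda_{\bb_*},\ba}$ — some $P \in \calL_{\lambda_{\bb_*}}$ with $P \neq g_{\bb_*}$ agreeing with $g_{\bb_*}$ to order $s^*$ at $\ba$ — translates into: some $R \in \calL_{\lambda_{\bb_0}}$ with $R \neq g_{\bb_0}$ has $c$ as a root of multiplicity at least $s^*$.

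Next comes the count. The list $\calL_{\lambda_{\bb_0}}$ has at most $L$ members, each of degree at most $d$, so there are at most $L$ relevant nonzero polynomials $R - g_{\bb_0}$, and a nonzero polynomial of degree $\le d$ has at most $\lfloor d/s^*\rfloor \le d/s^*$ roots of multiplicity $\ge s^*$. Hence at most $Ld/s^*$ values $c \in \F_q^\ast$ are ``bad''. Conditioned on $\bx$ and $\ell$, the point $\ba$ (equivalently $c$) is uniform over the $q-1$ elements of $\ell \setminus \{\bx\}$ (the case $\ba = \bx$, of probability $q^{-m}$, makes $C$ vacuous), so $\Pr[C_{\lambda_{\bb_*},\ba}\mid \bx,\ell] \le \frac{Ld/s^*}{q-1}$. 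A uniform nonzero $\bb_*$ decomposes as a uniform $1$-dimensional subspace $\langle\bb_0\rangle$ together with a uniform nonzero element of it, so averaging over $\ell$ (and over $\bx$) gives $\Pr[C_{\lambda_{\bb_*},\ba}] \le \frac{Ld}{(q-1)s^*}$. Finally, since $d = (1-\delta)sq$ and $q$ is large enough (in particular $q > C s^* L m^2$ forces $\delta q > 1$), we have $d < s(q-1)$, whence $\frac{Ld}{(q-1)s^*} < \frac{sL}{s^*}$, which is the claim.

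I expect the main obstacle to be the bookkeeping in the scaling‑invariance step: one must track precisely how the projection $z|_{\bb}$, the Hasse derivatives, and the encoding map $\MultEnc_s$ interact under the substitution $T \mapsto cT$, and verify that both the list $\calL_{\lambda_{\bb_*}}$ and the distinguished codeword $g_{\bb_*}$ transform consistently. Once that identity is nailed down, the probability bound is an immediate root‑counting estimate, and the only arithmetic subtlety is the harmless $d < s(q-1)$ vs.\ $d < sq$ discrepancy, which is absorbed by $q$ being large.
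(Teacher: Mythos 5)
Your proof is correct and follows the same route as the paper: condition on the line through $\bx$, treat $\ba$ as a random point on it, and apply root-counting to the difference of each list element with $Q\circ\lambda$, then union bound over the at most $L$ candidates. The paper's own proof states the reparametrization loosely (it simply declares $\ba$ to be a ``uniformly random point on the line'' and evaluates at the random parameter $\alpha$), whereas you explicitly verify the scaling invariance $\calL_{\lambda_{\bb_*}} = \{R(cT):R\in\calL_{\lambda_{\bb_0}}\}$ and that agreement to order $s^*$ at $T=1$ in the $\lambda_{\bb_*}$-parametrization is the same as $c$ being a root of multiplicity $\ge s^*$ of $R-g_{\bb_0}$; this is precisely the bookkeeping the paper leaves implicit, and your version makes the argument airtight.
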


\begin{proof}
Because of the way $\bx$, $\ba$ and the line $\lambda_{\bb_*}$ are sampled, equivalently one could let 
$\bx$ be picked uniformly at random from $\F_q^m$, $\lambda_{\bb_*}$ be a uniformly random line through $\bx$ and $\ba$ be a uniformly random point on $\lambda_{\bb_*}$. 

Now fix any polynomial $P(T) \in \calL_{\lambda_{\bb_*}}$ with
$P(T) \neq Q\circ \lambda_{\bb_*}(T)$.
We want to bound the probability that $P^{(<s^*)}(\alpha) = (Q\circ \lambda_{\bb_*})^{(<s^*)}(\alpha)$
where $\alpha$ is picked uniformly at random. But $P$ and $Q \circ \lambda_{\bb_*}$ are fixed distinct
polynomials of degree at most $sq$. Thus the probability that they agree with multiplicity $s^*$ 
on a random point of $\F_q$ is at most $\frac{sq}{s^* q} = \frac{s}{s^*}$.

The result follows from a union bound over all $P(T) \in \mathcal{L}_{\lambda_{\bb_*}}$.
\end{proof}

Claim~\ref{clm:chebyshev} and Claim~\ref{clm:disambiguating} together imply that with probability at least $1- \left(  \frac{4}{(\alpha'-\alpha)q} + \frac{sL}{s^*} \right)$ over the random
choice of $\ba$ and $\bx \in \F_q^m$, neither $B_{\lambda_{\bb_*}}$ nor $C_{\lambda_{\bb_*},\ba}$ occurs, and hence $P_{\bb_*}(T) = Q\circ{\lambda_{\bb_*}}(T)$.

Thus the result follows.

\end{proof}

\subsection{Main local list-recovery algorithm}\label{ssec:main_lld}
Together, Lemmas \ref{lem:recovercandidates} and \ref{lem:oracle} inspire a local-list-recovery algorithm for multivariate multiplicity codes.  
The idea is that \textsf{RecoverCandidates} will first obtain a list of possibilities, $Z$, for $Q^{(<s^*)}(\mathbf{a})$.  Then for each possibility $z \in Z$, we will create an oracle machine as in Lemma~\ref{lem:oracle} which guesses $Q^{(<s^*)}(\mathbf{a}) = z$.
Unfortunately, this will still have some amount of error; that is, there will be some small fraction of $\mathbf{x} \in \F_q^m$ so that the approach above will not be correct on $\mathbf{x}$.  To get around this, we will wrap the whole thing in the local (unique) correction algorithm for multiplicity codes from~\cite{KSY14}.

\begin{theorem}[\cite{KSY14}, Theorem 3.6]\label{thm:ksy}
Let $C$ be multiplicity code $\MULT_{q,s}^{(m)}(d)$.  Let $\delta = 1 -\frac{d}{sq}$.  Suppose that $q \geq \max\{10m, \frac{d + 6s}{s}, 12(s+1)\}$.  Then $C$ is locally correctable from $\frac{\delta}{10}$-fraction of errors with $( O(s)^m \cdot q)$ queries.
Moreover, the local corrector $\mathsf{SelfCorrect}^{c}(\mathbf{x})$, with query access to a codeword $c \in C$ running on a position $\mathbf{x} \in \F_q^m$, can be\footnote{This claim about the running time in~\cite{KSY14} was
only proved for fields of small characteristic.
There, in the discussion about ``Solving the Noisy System" in Section 4.3,
it was shown that the running time can be made $\poly(O(s)^m \cdot q)$ 
provided one could efficiently decode Reed-Muller codes over certain product sets in $\F_q$, and remarked that this was known over fields of small characteristic. Recently \cite{KK-rm-product-set} showed that this Reed-Muller decoding problem could be solved over all fields. This justifies the running time claim over all fields.} made to run in time $O(s)^m \cdot q^{O(1)}$.
\end{theorem}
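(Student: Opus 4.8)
Theorem~\ref{thm:ksy} is quoted from \cite{KSY14}; here I sketch the line-restriction argument behind it, and then say how the running-time claim (the subject of the footnote) is obtained over all fields. Fix a received word $w:\F_q^m\to\Sigma_{m,s}$ that agrees with the codeword $c=(Q^{(<s)}(\by))_{\by\in\F_q^m}$ of some polynomial $Q$ of degree $\le d=(1-\delta)sq$ on all but a $\tfrac{\delta}{10}$ fraction of points, and fix the target point $\bx\in\F_q^m$. The key structural observation is that for a direction $\bb\in\F_q^m$, the restriction $g_\bb(T):=Q(\bx+T\bb)$ is a univariate polynomial of degree $\le d$, and by Claim~\ref{claim:hasse} (applied at translated base points) the $s$-jet $Q^{(<s)}(\bx+T\bb)$ of the codeword along the line determines, coordinate-by-coordinate by a fixed $\F_q$-linear map, the $s$-jet $g_\bb^{(<s)}(T)$; conversely $g_\bb^{(<s)}(0)$ equals the tuple $\bigl(H_{\bx,0}(\bb),\dots,H_{\bx,s-1}(\bb)\bigr)$ where $H_{\bx,j}(\bY)=\sum_{\wt(\bj)=j}Q^{(\bj)}(\bx)\bY^{\bj}$ is homogeneous of degree $j$. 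Thus the reinterpreted restriction of $w$ to a line through $\bx$ is, on most lines, a word close to the univariate multiplicity codeword $\MultEnc_s(g_\bb)\in\MULT^{(1)}_{q,s}(d)$, a code of relative distance $\delta$.

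The local corrector $\mathsf{SelfCorrect}^w(\bx)$ then does the following. Fix a grid $U^m$ with $|U|=\Theta(s)$ and $|U|>3s$ (so $|U^m|=O(s)^m$; the hypothesis $q\ge 10m$ and the others guarantee the grid fits in $\F_q$ and the arithmetic below closes), pick a uniformly random shift $\bb\in\F_q^m$, and for each $\bu\in U^m$ query all $q$ points of the line $\lambda_{\bu}(T)=\bx+T(\bb+\bu)$ and run the univariate unique-decoder for $\MULT^{(1)}_{q,s}(d)$ on the reinterpreted restriction. Since over uniformly random $\bx$ and $\bb$ each $\lambda_{\bu}$ is a uniformly random line of $\F_q^m$ with pairwise independent points, Chebyshev's inequality bounds the probability that a fixed $\lambda_{\bu}$ carries more than a $\delta/2$ fraction of errors, and Markov over $\bu\in U^m$ then gives that, with probability $\ge 2/3$, at least a $2/3$ fraction of the grid lines carry fewer than $\delta/2$ errors; for those lines the univariate decoder outputs $g_{\bb+\bu}$ exactly, hence (via Claim~\ref{claim:hasse}) the correct values $(H_{\bx,j}(\bb+\bu))_{j<s}$. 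So we hold the evaluations of the homogeneous tuple $(H_{\bx,j})_{j<s}$ on the shifted grid, correct on $\ge 2/3$ of the grid; because $|U|>3\max_j\deg(H_{\bx,j})$, vector-valued Reed--Muller unique decoding on the grid recovers all the $H_{\bx,j}$, and reading off their coefficients yields exactly $Q^{(<s)}(\bx)$. (The success probability $2/3$ can be boosted by $O(1)$ independent repetitions.) The query complexity is $|U^m|$ lines of $q$ points each, i.e.\ $O(s)^m\cdot q$, as claimed.

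The running time is dominated by the $O(s)^m$ univariate decodings and the single Reed--Muller-on-a-grid decoding, and this is exactly where the footnote's subtlety lives: the univariate unique-decoder for $\MULT^{(1)}_{q,s}(d)$ of \cite{KSY14} was argued to be efficient only in small characteristic, because its Welch--Berlekamp-style ``solve the noisy system'' step reduces to decoding a Reed--Muller code evaluated on a product subset of $\F_q$, whose efficient decodability over arbitrary fields was not then available. Invoking \cite{KK-rm-product-set}, which decodes precisely Reed--Muller codes on product sets over all fields, each univariate decoding (and the grid decoding) runs in $\poly(q,s^m)$ time, so $\mathsf{SelfCorrect}^w(\bx)$ runs in $O(s)^m\cdot q^{O(1)}$ time over every field. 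The remaining point I would check carefully is that the exact hypotheses $q\ge\max\{10m,(d+6s)/s,12(s+1)\}$ are what make the grid size, the identity $1-d/(sq)=\delta$ with enough slack for unique decoding on both the lines and the grid, and the bookkeeping of the $\delta/10$, $\delta/2$, and $2/3$ thresholds all mutually consistent — this constant-chasing, rather than any conceptual difficulty, is where the main work in turning the sketch into a full proof resides.
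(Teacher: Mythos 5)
The paper does not prove Theorem~\ref{thm:ksy}; it is imported verbatim from \cite{KSY14}, with the footnote serving only to patch the running-time claim via \cite{KK-rm-product-set}. So there is no ``paper's own proof'' to compare against --- your task was to reconstruct the cited argument, and your reconstruction of the \emph{algorithm} is essentially the right one: pick a random shift $\ba$, query the $O(s)^m$ lines $\lambda_{\bu}(T) = \bx + T(\ba+\bu)$ for $\bu$ ranging over a grid $U^m$, uniquely decode the univariate multiplicity restriction on each line, and then aggregate the resulting per-line jets $\bigl(H_{\bx,j}(\ba+\bu)\bigr)_{j<s}$ by Reed--Muller decoding of the homogeneous tuple on the product set $\ba + U^m$; reading off coefficients of the $H_{\bx,j}$ gives $Q^{(<s)}(\bx)$. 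The Chebyshev-plus-Markov accounting of errors across lines and the $|U^m|\cdot q = O(s)^m q$ query count are all as in \cite{KSY14}.

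The one genuine error is in your explanation of \emph{where} the characteristic-dependent subtlety lives, which is precisely what the footnote is about. You write that the univariate unique-decoder for $\MULT^{(1)}_{q,s}(d)$ ``was argued to be efficient only in small characteristic, because its Welch--Berlekamp-style `solve the noisy system' step reduces to decoding a Reed--Muller code evaluated on a product subset of $\F_q$.'' That is a misattribution. The univariate decoder for whole-field multiplicity codes is a routine Berlekamp--Welch/Nielsen-style interpolation algorithm with no field-characteristic dependence. The ``Solving the Noisy System'' step in Section 4.3 of \cite{KSY14} --- the one that needed \cite{KK-rm-product-set} to work over all fields --- is the \emph{multivariate aggregation} step: given (possibly erroneous) evaluations of the tuple of homogeneous polynomials $(H_{\bx,0},\dots,H_{\bx,s-1})$ on the product set $\ba + U^m \subseteq \F_q^m$, recover the $H_{\bx,j}$. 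This is exactly your ``vector-valued Reed--Muller unique decoding on the grid'' step. You include this step in the algorithm in the right place; you just assign the footnote's concern to the wrong subroutine, which in turn makes your final sentence about ``each univariate decoding'' needing the product-set decoder incorrect. The fix is to move the reliance on \cite{KK-rm-product-set} from the line decodings to the grid decoding.

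One smaller slip: in the Chebyshev step you write ``over uniformly random $\bx$ and $\bb$,'' but $\bx$ is the fixed query position the local corrector is given, not a random variable. The argument still works because for fixed $\bx$ and random $\bb$ the direction $\bb+\bu$ is uniform, so the $q-1$ off-$\bx$ points of $\lambda_\bu$ are uniform and pairwise independent; you should just phrase the randomness as being over $\bb$ alone.
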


With the self-correction algorithm for multiplicity codes in hand, we define our local-list-recovery algorithm as follows.

\medskip
\putinbox{
\noindent
\textbf{Algorithm} $\MultimultMain.$
\begin{itemize}
\item Oracle access to $S: \F_q^m \to { \Sigma_{m,s} \choose \ell }$.
\end{itemize}
\begin{enumerate}
	\item Pick $\mathbf{a} \in \F_q^m$ uniformly at random.
	\item Set $s^* = \frac{ 160 \cdot L \cdot s }{\delta}$.
	\item Let $Z$ be the output of $\RecCand^S(\mathbf{a}, s^*)$.
	\item for $z \in Z$, define $\mathcal{A}_z$ by:
	\begin{itemize}
		\item \textbf{INPUT:} $\mathbf{x} \in \F_q^m$
	\end{itemize}
	\begin{enumerate}
		\item Let $M$ denote the oracle machine $M^S[\mathbf{a},z]$
		\item Return $\mathsf{SelfCorrect}^{M}(\mathbf{x})$ 
	\end{enumerate}
	\item Return $\mathcal{L} = \{ \mathcal{A}_z \,:\, z \in Z \}$.
\end{enumerate}
}

The following lemma shows that this algorithm works, assuming a list-recovery algorithm for univariate multiplicity codes.  In the proof of Theorem~\ref{thm:main-multimult}, we will instantiate this with the list-recovery algorithm given in Section~\ref{sec:Unimult}.

\begin{lemma}\label{lem:multimult-params}
There is some constant $C > 0$ so that the following holds.
Let $q$ be a prime, let $\delta > 0$ and let $s,d,m$ be nonnegative integers such that $d = (1-\delta) sq$. 
Let $\alpha, \alpha'$ be parameters such that $0< \alpha< \alpha' < 1$. 

Let $L = L(d,q,s, \ell, \alpha') $ be the list size for list recovering univariate multiplicity codes $\MULT^{(1)}_{q,s}(d)$ of degree $d$ and multiplicity parameter $s$ with input list size $\ell$ and error tolerance $\alpha'$,  and $T$ be the corresponding running time for list recovering univariate multiplicity codes of the same parameters. 

Let $S: \F_q^m \to { \Sigma_{m,s} \choose \ell }$.  Suppose that 
\[ s^* \geq \frac{ 160 \cdot L \cdot s }{\delta} \]
and that
\[ q \geq \max\left\{ \frac{ 160 \ell s}{\delta}, \frac{ 640 }{ (\alpha' - \alpha) \cdot \delta }, C s^* L m^2, \frac{20 \cdot C m }{\alpha' - \alpha} , 10m, \frac{d+6}{s}, 12(s+1) \right\} \]
and that
\[ \alpha \leq \frac{\delta}{160}. \]
Then for all $Q(X_1,\ldots,X_m) \in \F_q^m[X_1,\ldots, X_m]$ with degree at most $d$ and so that
\[ \Prn_{\mathbf{x} \in \F_q^m} [ Q^{(<s)} (\mathbf{x}) \in S(\mathbf{x}) ] > 1 - \alpha, \]
with probability at least $2/3$ over the algorithm $\MultimultMain$, the following holds.  For all $\mathbf{x} \in \F_q^m$, there exists an oracle machine $\mathcal{A}_z \in \mathcal{L}$ so that
\[ \Prn\left[ \mathcal{A}_z(\mathbf{x}) = Q(\mathbf{x}) \right] \geq 2/3. \]

Moreover, the output list $\mathcal{L}$ has size $|\mathcal{L}| = O(L)$; and $\MultimultMain$ makes $\poly(q,(Ls^* m)^m)$ queries to $S$, and each $\mathcal{A}_z$ makes $O(s)^m \cdot q^2$
queries to $S$.  Finally, the algorithm $\MultimultMain$ runs in time $\poly( T, q,  (Ls^* m)^m)$ and each $\mathcal{A}_z$ runs in time 
$O(s)^m \cdot \poly(q, T, (s\cdot s^*)^m )$.
\end{lemma}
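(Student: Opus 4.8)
The plan is to run the three subroutines exactly as wired together in $\MultimultMain$ and to argue that, with probability at least $2/3$ over the single random point $\ba$ chosen in Step~1 (together with the internal randomness of $\RecCand$), the ``correct'' advice string $z^\star := Q^{(<s^*)}(\ba)$ appears in the list $Z$ \emph{and} the oracle machine $M^S[\ba,z^\star]$ is close enough to the true codeword that the local corrector of Theorem~\ref{thm:ksy} cleans up the residual error. First I would fix a polynomial $Q$ of degree at most $d$ with $\Pr_{\bx\in\F_q^m}[Q^{(<s)}(\bx)\in S(\bx)] > 1-\alpha$ and invoke Lemma~\ref{lem:recovercandidates} with its parameter $\tilde s$ set to $s^*$; the stated conditions on $q$ (in particular $q \geq C s^* L m^2$ with $C$ chosen to cover the $100 s^* L m^2$ requirement of that lemma, and $q\ge 20Cm/(\alpha'-\alpha)$) are exactly what is needed to apply it, so with probability at least $1-\tfrac{12}{(\alpha'-\alpha)q}$ over $\ba$ and the randomness of $\RecCand$, we have $z^\star\in Z$, and moreover $|Z| = O(L)$.

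Next I would apply Lemma~\ref{lem:oracle}, again with its parameter $s^*$: with probability at least $3/4$ over $\ba$, the function $M^S[\ba,z^\star]:\F_q^m\to\Sigma_{m,s}\cup\{\bot\}$ satisfies $\dist(M^S[\ba,z^\star],Q^{(<s)})\le 4\epsilon$, where $\epsilon = \alpha + \tfrac{\ell s}{q} + \tfrac{4}{(\alpha'-\alpha)q} + \tfrac{sL}{s^*}$. The key quantitative check is that $4\epsilon\le \delta/10$: each of the four terms of $\epsilon$ is at most $\delta/160$ --- $\alpha\le\delta/160$ by hypothesis; $\ell s/q\le\delta/160$ since $q\ge 160\ell s/\delta$; $4/((\alpha'-\alpha)q)\le\delta/160$ since $q\ge 640/((\alpha'-\alpha)\delta)$; and $sL/s^*\le\delta/160$ since $s^*\ge 160Ls/\delta$ --- so $\epsilon\le\delta/40$ and $4\epsilon\le\delta/10$. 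Treating $\bot$ as an error, this places $M^S[\ba,z^\star]$ strictly inside the unique-decoding radius $\delta/10$ of the codeword $(Q^{(<s)}(\by))_{\by\in\F_q^m}$ of $\MULT_{q,s}^{(m)}(d)$. The leftover conditions ($q\ge 10m$, $q\ge (d+6)/s$, $q\ge 12(s+1)$, and $d=(1-\delta)sq$) are precisely those of Theorem~\ref{thm:ksy}, so $\mathsf{SelfCorrect}^{M}(\bx)$ with $M=M^S[\ba,z^\star]$ outputs $Q^{(<s)}(\bx)$ with probability at least $2/3$ for \emph{every} $\bx\in\F_q^m$; in particular $\mathcal{A}_{z^\star}(\bx)$ does.

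To conclude, a union bound gives that with probability at least $3/4 - \tfrac{12}{(\alpha'-\alpha)q}\ge 2/3$ (the last step uses $q\ge 640/((\alpha'-\alpha)\delta)\ge 144/(\alpha'-\alpha)$ since $\delta\le 1$) both good events hold, so $z^\star\in Z$ and $\mathcal{A}_{z^\star}\in\mathcal{L}$ works on all $\bx$. The list-size claim $|\mathcal{L}| = |Z| = O(L)$ is immediate. For the resource bounds: $\RecCand$ makes $\poly(q(Ls^*m)^m)$ queries and runs in time $\poly(T,q,(Ls^*m)^m)$ by Lemma~\ref{lem:recovercandidates}, and writing down the $O(L)$ oracle-machine descriptions in Step~4 is free, giving the claimed bounds for $\MultimultMain$; each $\mathcal{A}_z$ runs $\mathsf{SelfCorrect}$, which makes $O(s)^m\cdot q$ queries to its oracle $M$, and each query to $M$ costs at most $q$ queries to $S$ (Lemma~\ref{lem:oracle}), for a total of $O(s)^m\cdot q^2$ queries to $S$ and running time $O(s)^m\cdot\poly(q,T,(s\cdot s^*)^m)$.

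I expect the main difficulty to be bookkeeping rather than conceptual: one must verify that each of the many inequalities imposed on $q$, $s^*$, and $\alpha$ is invoked exactly where it is needed --- especially that the four contributions to $\epsilon$ each fall below $\delta/160$, so that the oracle machine lands within the unique-decoding radius of the multiplicity code, and that the independent failure probabilities of $\RecCand$ and of the oracle machine's closeness guarantee sum to strictly less than $1/3$. A secondary point of care is that $\RecCand$, the oracle machine, and $\mathsf{SelfCorrect}$ all share the \emph{same} random point $\ba$, so the two ``good advice'' events must be controlled with respect to a common $\ba$; since the remaining internal randomness is independent, a plain union bound over a fixed $\ba$-value suffices, but this coupling should be stated explicitly.
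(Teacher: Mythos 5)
Your proof is correct and follows essentially the same route as the paper: fix $Q$, invoke Lemma~\ref{lem:recovercandidates} with $\tilde s = s^*$ to get the correct advice into $Z$, invoke Lemma~\ref{lem:oracle} to bound $\dist(M^S[\ba,z^\star], Q^{(<s)})$, verify the term-by-term arithmetic showing $4\epsilon \le \delta/10$, and finish with the local corrector of Theorem~\ref{thm:ksy}. The one place you diverge is in assembling the failure probability: the paper first applies Markov's inequality to Lemma~\ref{lem:recovercandidates}'s guarantee (to define a good set $G_1$ of points $\ba$ on which $\RecCand$ succeeds with probability $\ge 1 - 1/\sqrt{C}$), then union-bounds $G_1$ with $G_2$ over the choice of $\ba$, and finally multiplies through by the conditional success probability of $\RecCand$; this requires taking $C \ge 24^2$ and the hypothesis $q \ge 20Cm/(\alpha'-\alpha)$ to absorb the $\RecCand$ failure. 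You instead do a single union bound over the joint space of $\ba$ and $\RecCand$'s internal randomness, observing that both events $A$ (the advice lands in $Z$) and $B$ (the oracle machine is close to $Q^{(<s)}$) live in that space and that $B$ depends only on $\ba$. Your version is a genuine simplification: it avoids the Markov detour entirely, gives a cleaner constant ($3/4 - 12/((\alpha'-\alpha)q) \ge 2/3$ directly from $(\alpha'-\alpha)q \ge 640/\delta$), and sidesteps any need for the hypothesis $q \ge 20Cm/(\alpha'-\alpha)$, which in your argument is never actually used. The resource-bound accounting is identical to the paper's. One small stylistic caution: the phrase ``a plain union bound over a fixed $\ba$-value suffices'' is slightly ambiguous; what you in fact do (and what is correct) is a union bound over the joint probability space, which is valid precisely because $B$ is $\sigma(\ba)$-measurable.
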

\begin{proof}
Fix a polynomial $Q \in \F_q[X_1,\ldots,X_m]$ of degree at most $d$, so that $\dist(Q,S) \leq \alpha$.   We first establish the correctness of the algorithm $\MultimultMain$ given above. 

By Lemma~\ref{lem:recovercandidates}, with probability at least $1 - \frac{ 20m }{(\alpha' - \alpha)q} \geq 1 - \frac{1}{C}$ over the randomness of both $\RecCand$ and $\mathbf{a}$, $\RecCand^S(\mathbf{a}, s^*)$ returns a list $Z$ of size at most $O(L)$ so that $Q^{(<s^*)}(\mathbf{a}) \in Z$. 
Let $G_1$ be the set of $\mathbf{a} \in \F_q^m$ so that $\Prn_{\RecCand}[Q^{(<s^*)}(\mathbf{x}) \in Z] \geq 1 - \frac{1}{\sqrt{C}}$.  By Markov's inequality along with the conclusion of Lemma~\ref{lem:recovercandidates} above, $G_1$ has density at least $1 - 1/\sqrt{C}$.  Now let $G_2$ be the set of $\mathbf{a} \in \F_q^m$ so that $\dist( M^S[\mathbf{a}, Q^{(<s^*)}(\mathbf{a}], Q^{(<s)} ) \leq 4\eps$.  By Lemma~\ref{lem:oracle}, $G_2$ has density at least $3/4$.  Thus by the union bound, with probability at least $3/4 - 1/\sqrt{C}$ over the choice of $\mathbf{a}$, both events hold, and so with probability at least $3/4 - 2/\sqrt{C}$ over the choice of $\mathbf{a}$ and the randomness of $\RecCand$, there is some $z \in Z$ so that
\begin{equation}\label{eq:favorable} \dist( M^S[ \mathbf{a}, z ], Q^{(<s)} ) \leq 4\eps \end{equation}
for any 
\[ \eps \leq \alpha + \frac{\ell s}{q} + \frac{ 4}{(\alpha' - \alpha)q} + \frac{sL}{s^*}. \]
By choosing $C \geq 24^2$ (as well as large enough so that Lemma~\ref{lem:recovercandidates} and \ref{lem:oracle} hold), we can ensure that \eqref{eq:favorable} occurs with probability at least 2/3.
Suppose that this happens, and \eqref{eq:favorable} does occur.  Observe that our parameter choices above are made precisely so that
\[ \frac{ \delta}{40} \geq \alpha + \frac{ \ell s }{q} + \frac{ 4}{ (\alpha' - \alpha)q } + \frac{ sL}{s^* }. \]
Thus, we may take $\eps = \delta/40$ in the above, and conclude that in the favorable case of \eqref{eq:favorable}, we have
\[ \dist( M^S[ \mathbf{a},z], Q^{(<s)} ) \leq \frac{\delta}{10}. \]
We may then apply Theorem~\ref{thm:ksy} to the oracle machine $M = M^S(\mathbf{a},z)$ in the algorithm above, and conclude that $\mathsf{SelfCorrect}^M(\mathbf{x})$ is a local-self-corrector for $\MULT_{q,s}^{(m)}(d)$.  In particular, for all $\mathbf{x} \in \F_q^m$, with probability at least $2/3$, $\mathcal{A}_z(\mathbf{x}) = Q^{(<s)}(\mathbf{x})$, as desired.

Now that we have established that the algorithm is correct, we quickly work out the list size, query complexity, and runtime.  The list size is clearly $O(L)$, because this is the list size returned by $\RecCand$.  For the query complexity, the algorithm $\MultimultMain$ has the same query complexity as $\RecCand$, while each $\mathcal{A}_z$ has query complexity which is the product of the query complexities of the oracle machines $M^S[\mathbf{a}, z]$ (which is $q$) and $\mathsf{SelfCorrect}$ (which is $O(s)^m \cdot q$), and together these give the reported values.  The runtime calculation is similar.  
\end{proof}

Finally, we may choose parameters and use Theorem~\ref{thm:full-unimult-list-rec} to prove Theorems~\ref{thm:main-multimult} and \ref{thm:main-multimult2}.
\begin{proof}[Proof of Theorem~\ref{thm:main-multimult}]
The proof proceeds by setting parameters in Lemma~\ref{lem:multimult-params}.  
We will let $C = \MULT_{q,s}^{(m)}(d)$, where $m$ is the parameter from the theorem statement.
We will choose $q,s,d$ below.
Let $\delta = \eps/(2m)$; we will verify below that $\delta$ is a bound on the distance of $C$.

Choose $s = \frac{ 16 \cdot \ell }{\delta^2}$, and $\alpha \leq \frac{ \delta^2}{160 \cdot \ell}$ as in the theorem statement.  We will choose $\alpha' = 2\alpha$, so $\alpha' - \alpha = \alpha$.
We note that these choices ensure that $\ell \leq \frac{ \delta^2 s }{16}$ and that $\alpha' < 1/2s$, both of which are required for Theorem~\ref{thm:full-unimult-list-rec} to hold (when called with $\alpha'$ as the error parameter), as well as $\alpha < \frac{\delta}{160}$, as required by Lemma~\ref{lem:multimult-params}.

Now, with these choices Theorem~\ref{thm:full-unimult-list-rec} says that $\MULT_{q,s}^{(1)}(d)$ is $(\alpha', \ell, L')$-list-recoverable with 
\[L' = \ell^{O(s\log(s))} \cdot s^{O(1)} = \ell^{O(\ell \log(\ell/\delta)/\delta^2)} \cdot \left( \frac{1}{\delta} \right)^{O(1)} = \ell^{ O( \ell \cdot m^2 \log(m\ell/\eps) / \eps^2) }\cdot\left( \frac{m}{\eps} \right)^{O(1)}, \]
using our choice of $s$ and $\delta$.  Since the list size in Lemma~\ref{lem:multimult-params} grows by at most a constant factor, this establishes our choice of $L$ in the theorem statement.

We will next choose $q$.  We need $q$ to be large enough so that Lemma~\ref{lem:multimult-params} holds.  
It can be checked that of all of the requirements on $q$ given in Lemma~\ref{lem:multimult-params}, the binding one is that $q = \Omega(s^* L m^2)$, where we chose $s^* = \Theta( Ls/\delta )$.  
We shall choose $q$ safely larger than this, choosing a prime $q$ so that
\[ q := \Theta\left( \left( \frac{Lms^*}{\delta} \right)^m \right) = \left( L m \ell / \eps \right)^{O(m)}. \]
The reason for this choice is that this is the largest we may take $q$ so that the query complexity expression 
\[ \poly( q \cdot (Ls^*m)^m ) \]
from Lemma~\ref{lem:multimult-params} does not substantially grow.

Now that we have chosen $s$ and $q$, we will finally choose 
\[ d = \left( 1 - \frac{\eps}{2m} \right) sq \]
so that the distance of $C$ is 
\[ \delta = 1 - \frac{d}{sq} = \frac{\eps}{2m} \]
as claimed.

With this choice the query complexity given in Lemma~\ref{lem:multimult-params} is
\[ t = O(s)^m \cdot q^2 + q^{O(1)} \cdot \left( \frac{ Ls^*m}{\delta} \right)^{O(m)}, \]
which with our choices of $s,\delta$ and $q$ is
\[ t = \left( \frac{ L m \ell }{\eps} \right)^{O(m)} \]
as claimed.

We now verify the rate.  As per Claim~\ref{claim:multparams}, the rate of $C$ is at least
\begin{align*}
R &\geq \left( 1 - \frac{m^2}{s} \right) \left( 1 - \delta \right)^m \\
&= \left( 1 - \frac{\eps^2}{16 \ell} \right) \left( 1 - \frac{ \eps }{2m } \right)^m \\
&\geq \left( 1 - \frac{ \eps^2 }{ 16 \ell } \right) \left( 1 - \frac{ 2\eps }{3 } \right) \\
&\geq 1 - \eps, 
\end{align*}
where the last two lines hold for sufficiently small $\eps$.  
We first note that our choice of $\alpha \leq \frac{ \delta^2}{160 \ell}$ satisfies $\alpha \leq \frac{\delta}{160}$, which was required in Lemma~\ref{lem:multimult-params}.  

Finally, we note that the length of the code $C$ is 
\begin{align*}
N &= q^m 
= \left( \frac{ Lm \ell }{\eps} \right)^{O(m^2)} 
\end{align*}
and that the alphabet size is similarly
\begin{align*}
|\Sigma| &= q^{s^m}  = \left( \frac{ Lm \ell }{\eps} \right)^{O(s^m)},
\end{align*}
which results in the expression given in the theorem statement.
Finally, the running time for the list-recovery algorithm guaranteed by Lemma~\ref{lem:multimult-params} is dominated by the $(L s^* m)^m$ term, which is $\poly(t)$.
\end{proof}
\begin{proof}[Proof of Theorem~\ref{thm:main-multimult2}]
Again, we set parameters in Lemma~\ref{lem:multimult-params}.  Let $\delta, \eps, \tau, \ell$ be as in the statement of Theorem~\ref{thm:main-multimult2}.  We will choose $C = \MULT_{q,s}^{(m)}(d)$, and we set parameters below.
First, we choose 
\[ s = \max \left\{ \frac{ 16 \ell }{\delta^2}, \frac{ c^2 }{\tau^2 \eps } \right\}, \]
where $c$ is some universal constant that will be chosen below.  
We will choose
\[ m = \frac{c}{\tau}, \]
and $d$ so that $d = (1 - \delta)sq$, ensuring that the relative distance of the code is at least $\delta$.
Now we choose 
\[ \alpha' \leq \min\left\{ \frac{ \delta^2}{160 \ell}, \frac{ \tau^2 \eps }{ 2c^2 }\right\}, \]
and $\alpha = \alpha'/2$.
which ensures that $\alpha \leq \delta/160$ (as is required for Lemma~\ref{lem:multimult-params}) and that $\alpha' \leq 1/2s$, which is required for Theorem~\ref{thm:full-unimult-list-rec}.  We also have $(\alpha' - \alpha) = \Omega_{\delta, \eps, \ell, \tau}(1)$.

Notice that all of the requirements on the size of $q$ in Lemma~\ref{lem:multimult-params} simply require $q = \Omega_{\ell, \eps, \delta, \tau}(1)$, so we choose any prime $q$ sufficiently large, and let $N = q^m$ is be the length of the multiplicity code.

By Claim~\ref{claim:multparams}, $C$ has rate at least
\[ \left( 1 - \frac{m^2}{s} \right) \left( \frac{ d }{sq}\right)^m \geq (1 - \eps) (1 - \delta)^m = (1 - \eps)(1 - \delta)^{c/\tau}, \]
which is what was claimed.

Then by Theorem~\ref{thm:full-unimult-list-rec}, the univariate multiplicity code $\MULT_{q,s}^{(1)}(d)$ is $(\alpha, \ell, L)$-list-recoverable with 
\[ L = \ell^{O(s \log(s))}s^{O(1)} = O_{\ell, \delta, \eps, \tau}(1) \]
in time $\poly_{\ell, \delta, \eps, \ell}(q)$.

Now we choose $s^* = 160 L s / \delta = O_{\ell, \delta, \eps, \tau}(1)$, and Lemma~\ref{lem:multimult-params} concludes that $C$ is $(t, \alpha, \ell, L')$-list-recoverable for $L' = O(L) = O_{\ell, \delta, \eps, \tau}(1)$ and for 
\[ t = q^c \cdot (Ls^* m)^O(m) = O_{\ell, \delta, \eps, \tau}( q^c ) \]
for some constant $c$.  (This defines the constant $c$).
Now, since $N = q^m = q^{c/\tau}$, we have $t = O_{\ell, \delta, \eps, \tau}(N^{\tau})$, as desired.  Finally, Lemma~\ref{lem:multimult-params} further implies that the running time of the local list-recovery algorithm is $\poly(t)$, where the exponent in the polynomial does not depend on $\ell, \delta, \eps,$ or $\tau$.

\end{proof}

\section{Capacity-achieving codes over constant-sized alphabets}\label{sec:Smallalpha}
Theorems~\ref{thm:main-multimult} and \ref{thm:main-multimult2} show that high-rate multivariate multiplicity codes are efficiently locally list-recoverable.  However, the alphabet sizes for both of these constructions are large, and they only tolerate a small amount of error.
Fortunately, via standard techniques, we can both boost the error tolerance and improve the alphabet size without substantially impacting the locality or list size.  We will prove the following theorems, based on Theorems~\ref{thm:main-multimult} and \ref{thm:main-multimult2} respectively.

First we give a statement with sub-polynomial query complexity and list size.
\begin{theorem}\label{thm:multimult-smallalpha}
Let $R > 0$.
Let $\eps > 0$ be sufficiently small, and let $m,\ell > 0$ be integers.  Suppose that $\eps,\ell$ are constants, independent of $m$, and that $R \in (\eps, 1 - 2\eps)$.  There is a code $C \subseteq \Sigma^N$ with rate $R$ that is $(t, 1 - R - \eps, \ell, L)$-list-recoverable for
\[ L = m^{O_{\ell, \eps}(m^2)} \]
\[ t = m^{O_{\ell, \eps}(m^3)} \]
\[ N = m^{O_{\ell, \eps}(m^4)} \]
\[ |\Sigma| = O_{\ell, \eps}(1) \]
which can be locally list-recovered in time $\poly(t)$.  Moreover, $C$ has a deterministic encoding algorithm which runs in time $\poly(N)$. 

In particular, solving for $m \approx \left( \frac{\log(N)}{\log\log(N)} \right)^{1/4}$, we have that
\[ L = \exp\left( \sqrt{ \log(N)\log\log(N) } \right) \qquad t = \exp\left( \log^{3/4}(N) \cdot ( \log\log(N))^{1/4} \right). \]
\end{theorem}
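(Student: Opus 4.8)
\textbf{Proof proposal for Theorem~\ref{thm:multimult-smallalpha}.}

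The plan is to start from Theorem~\ref{thm:main-multimult}, which already gives a high-rate (rate $1-\eps'$) multivariate multiplicity code $C_0 \subseteq \Sigma_0^{N_0}$ that is locally list-recoverable for a tiny error fraction $\alpha_0 \le \delta^2/(160\ell)$, with list size $L_0 = m^{O_{\ell,\eps}(m^2)}$, query complexity $t_0 = m^{O_{\ell,\eps}(m^3)}$, block length $N_0 = m^{O_{\ell,\eps}(m^4)}$, and a large alphabet $|\Sigma_0|$. The two deficiencies to fix are (i) the error tolerance is far below the capacity target $1-R-\eps$, and (ii) the alphabet is super-polynomially large. Both are addressed by the by-now-standard expander-based distance-amplification and alphabet-reduction machinery of Alon--Edmonds--Luby~\cite{AEL95}, as used in~\cite{GKORS17,HRW17}. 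Concretely, I would apply the AEL transformation: take $C_0$ as the ``outer'' code, pick a good ``inner'' code $C_1$ over a constant-sized alphabet that is (exactly or approximately) list-recoverable with the right rate, and use a bipartite expander graph with sufficiently strong spectral expansion to redistribute symbols. The output is a code $C$ over a constant-sized alphabet $\Sigma$ (of size $O_{\ell,\eps}(1)$) whose rate is $R$ (any prescribed value in $(\eps, 1-2\eps)$, achieved by balancing the rates of $C_0$ and $C_1$), and whose list-recovery radius is pushed up to $1 - R - \eps$.

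The key steps, in order: First, fix the target rate $R$ and slack $\eps$; allocate the rate budget so that $C_0$ has rate $1 - \eps'$ for a suitable $\eps' = \Theta(\eps)$ and the constant-alphabet inner code $C_1$ has rate $R/(1-\eps') \approx R$, while $C_1$ tolerates an error/input-list configuration matching what AEL needs. Second, invoke Theorem~\ref{thm:main-multimult} with parameter $\eps'$ (and the given $m,\ell$) to instantiate $C_0$ and record its $(t_0,\alpha_0,\ell_0,L_0)$ local-list-recovery guarantee; here $\alpha_0 = \Theta(\delta^2/\ell)$ with $\delta = \eps'/(2m) = \Theta(\eps/m)$, which is polynomially small in $1/m$ but constant in $N_0$. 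Third, choose a spectral expander with second eigenvalue $\lambda$ small enough (a constant depending on $\eps,\ell,R$ but not on $m$) that the AEL argument converts $C_0$'s error tolerance $\alpha_0$ into the desired radius $1 - R - \eps$ for $C$; this is where the ``amplification'' happens and is exactly the place where a small constant relative error in the outer code suffices. Fourth, verify that local list-recovery is preserved: each symbol of $C$ corresponds (via the expander's bounded degree) to a constant number of symbols of $C_0$ and one block of $C_1$, so a query to $C$ translates into $O(1)$ queries to $C_0$ plus local decoding of the constant-size inner code; hence the query complexity of $C$ is $t = O_{\ell,\eps}(t_0) = m^{O_{\ell,\eps}(m^3)}$ and the list size is $L = O_{\ell,\eps}(L_0) = m^{O_{\ell,\eps}(m^2)}$ (the inner code contributes only constant factors). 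Fifth, track the block length $N = \Theta(N_0) = m^{O_{\ell,\eps}(m^4)}$, the constant alphabet size $|\Sigma| = O_{\ell,\eps}(1)$, and the $\poly(N)$ deterministic encoding (explicit expanders plus the explicit $C_0$ and a brute-force-found constant-size $C_1$). Finally, substitute $m \approx (\log N/\log\log N)^{1/4}$ to get the stated $\widetilde{\exp}$ forms for $L$ and $t$.

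The main obstacle I anticipate is making the AEL reduction go through cleanly \emph{in the local, list-recovery} setting rather than the classical global unique-decoding setting for which it is usually stated. Two subtleties need care: (a) AEL for list-recovery requires the inner code to itself be list-recoverable (or at least list-decodable up to a suitable radius with the right input-list sizes), and one must check a constant-sized such code with rate $\approx R$ and the required list-recovery parameters exists over a constant alphabet — this follows from a random-coding / exhaustive-search argument but the parameters ($\ell$, output list size, radius) must be threaded consistently through the expander redistribution; (b) preserving \emph{locality} means arguing that a local-list-recovery algorithm for $C$ can query $C$, map the queried positions back through the (constant-degree) expander to positions of $C_0$, run $C_0$'s local list-recovery there, and assemble the answer — this is standard but requires that the fraction of ``corrupted'' redistributed super-symbols feeding into $C_0$ stays below $\alpha_0$, which is precisely the content of the expander mixing lemma with the chosen $\lambda$. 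I would cite~\cite{AEL95,GKORS17,HRW17} for the mechanics and focus the written proof on verifying the parameter inequalities ($\lambda$ vs. $\alpha_0$ vs. target radius, rate budgeting, query/list-size bookkeeping) and the final substitution of $m$.
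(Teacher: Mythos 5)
Your plan applies a \emph{single} AEL-style expander step to the multivariate multiplicity code from Theorem~\ref{thm:main-multimult}, with a constant-alphabet inner code, and claims that a constant spectral gap $\lambda$ (independent of $m$) suffices for the amplification. This is the step that fails, and for precisely the reason the paper flags in Section~\ref{sec:Smallalpha}. The outer code from Theorem~\ref{thm:main-multimult} only tolerates a fraction of errors $\alpha_0 \le \delta^2/(160\ell)$ with $\delta = \eps/(2m)$, so $\alpha_0 = O_{\ell,\eps}(1/m^2)$, which decays with $m$. In Theorem~\ref{thm:alphabet-reduction} the quantity $\gamma$ is the error fraction the \emph{outer} code can handle, the expander degree is $D = \Theta(1/(\gamma\eps^3))$, and the final alphabet has size $|\Sigma_0|^{D}$. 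Setting $\gamma = \alpha_0 = O(1/m^2)$ forces $D = \Omega(m^2/\eps^3)$ and hence $|\Sigma| = |\Sigma_0|^{\Omega_{\eps}(m^2)}$, which is \emph{not} $O_{\ell,\eps}(1)$; equivalently, a fixed $\lambda$ cannot push the fraction of ``bad'' blocks seen by the outer code below the $m$-dependent threshold $\alpha_0$. Your claim that ``a small constant relative error in the outer code suffices'' is the missing piece: the outer code's tolerance is not a constant.

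The paper handles this with two expander-based steps in series. First, Corollary~\ref{cor:multimult-AEL} applies the distance-amplification lemma for \emph{local} list-recovery (Lemma~\ref{lem:GKORS}, from~\cite{GKORS17}) to the multivariate multiplicity code, with the inner code taken from Corollary~\ref{cor:FRS_AEL}; this boosts the tolerated error fraction from $O(1/m^2)$ up to a genuine constant $\eps$, at the cost of a very large alphabet at that stage. Only \emph{then} is Theorem~\ref{thm:alphabet-reduction} applied, now with $\gamma = \eps$ constant, so $D = O_{\ell,\eps}(1)$ and the final alphabet is $O_{\ell,\eps}(1)$. A secondary issue with your proposal: your inner code is found by ``random-coding / exhaustive-search,'' but the alphabet of the multiplicity code is super-polynomial in $N$, so a brute-force search over inner codes whose message space matches that alphabet would not run in $\poly(N)$ time. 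The paper sidesteps this by using the explicit, constant-alphabet, constant-list-size code of Corollary~\ref{cor:FRS_AEL} (folded RS concatenated with a tiny brute-forced linear code, then AEL-scrambled) as the inner code in both expander steps. If you want a single-expander version, the authors remark that it is plausible, but it requires a more careful custom argument than invoking the off-the-shelf Theorem~\ref{thm:alphabet-reduction}; as stated, your step three does not close that gap.
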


Next we give a statement with polynomial query complexity but constant list size.
\begin{theorem}\label{thm:multimult-smallalpha2}
Let $R > 0$.  Let $\eps, \tau > 0$ be sufficiently small, and let $\ell >0$ be an integer.  Suppose that $\eps, \tau, \ell$ are constants, and that $R \in (\eps, 1 - 2\eps)$.  Then for infinitely many $N$, there is a code $C \subseteq \Sigma^N$ of rate $R$ that is $(t, 1 - R - \eps, \ell, L)$-list-recoverable for
\[ L = O_{\ell, \eps, \tau}(1) \]
\[ t = O_{\ell, \eps, \tau}(N^\tau) \]
\[ |\Sigma| = O_{\ell, \eps, \tau}(1), \]
which can be locally list-recovered in time $\poly(t)$.  Moreover, $C$ has a deterministic encoding algorithm that runs in time $\poly(N)$.
\end{theorem}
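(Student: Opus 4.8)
The plan is to start from Theorem~\ref{thm:main-multimult2}, which already gives a high-rate multivariate multiplicity code $C_0 \subseteq \Sigma_0^{N_0}$ that is $(t_0, \alpha_0, \ell, L_0)$-locally-list-recoverable with $t_0 = O_{\delta,\eps,\ell,\tau}(N_0^{\tau})$, $L_0 = O_{\delta,\eps,\ell,\tau}(1)$, and alphabet size $|\Sigma_0| = \poly_{\delta,\eps,\ell,\tau}(N_0)$, but where (i) the error tolerance $\alpha_0$ is only a small constant, much less than the distance; (ii) the alphabet is polynomially large; and (iii) the rate is only $(1-\delta)^{O(1/\tau)}(1-\eps)$, which is bounded away from capacity. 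The three defects are all repaired by the now-standard expander-based distance amplification of Alon--Edmonds--Luby~\cite{AEL95}, exactly as used in~\cite{HRW17,GKORS17}. First I would invoke the AEL transformation with an appropriate inner code: take $C_0$ as the outer code and compose it (symbol-wise, over blocks of constantly many symbols prescribed by a constant-degree bipartite expander) with a good constant-size inner code that is itself list-recoverable. The inner code can be taken to be a small (brute-force found, hence deterministic and constant-sized) list-recoverable code of rate close to $1-\alpha'$, for a suitable constant $\alpha'$; its existence with the required list-recovery parameters and constant alphabet is standard (random codes achieve list-recovery capacity, and for constant size we can exhaustively search).

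The key steps, in order, are: (1) Fix the target rate $R \in (\eps, 1-2\eps)$ and the gap parameter $\eps$. Choose $\delta$ small enough (depending on $\eps,\tau$) so that the outer code's rate $(1-\delta)^{O(1/\tau)}(1-\eps)$ is at least $R/(1-\eps')$ for a suitable slack, and choose the AEL expander degree $D = D(\eps,\ell)$ large enough that the amplification loses only an $\eps$-fraction of rate while boosting the tolerated error fraction from the constant $\alpha_0$ up to $1 - R - \eps$. (2) Recall the AEL guarantee: if the outer code has relative distance $\delta_{\mathrm{out}}$ and is $(\alpha_0,\ell,L_0)$-(locally) list-recoverable and the inner code over alphabet of size $O_{\ell,\eps}(1)$ is list-recoverable with the right parameters, then the composed code has rate $(1-\eps)R_{\mathrm{out}}R_{\mathrm{in}}$, is list-recoverable up to $1 - R_{\mathrm{in}} - \eps \ge 1 - R - \eps$, with list size $L = L_0 \cdot (\text{const})$, alphabet size a constant, and — crucially for us — the \emph{locality} is preserved up to a constant factor, since decoding one symbol of the composed code reduces to decoding $D = O(1)$ symbols of the outer code. (3) Track parameters: the block length $N$ of the composed code is $N_0$ times a constant, so $t = O_{\ell,\eps,\tau}(N_0^{\tau}) = O_{\ell,\eps,\tau}(N^{\tau})$; the list size stays $O_{\ell,\eps,\tau}(1)$; the alphabet is $O_{\ell,\eps,\tau}(1)$; and since Theorem~\ref{thm:main-multimult2} holds for infinitely many $N_0$, the result holds for infinitely many $N$. (4) Note that the encoding is deterministic and runs in $\poly(N)$ time: the multiplicity code has an explicit polynomial-time encoder, the expander is explicit, and the inner code (being constant-sized) is found in constant time; stitching these together gives a $\poly(N)$-time deterministic encoder. (5) Finally, local list-recovery in time $\poly(t)$: one runs the outer local-list-recovery algorithm (whose running time is $\poly(t_0)$ by Theorem~\ref{thm:main-multimult2}), and the per-symbol work for the inner decoding and expander bookkeeping is $O_{\ell,\eps,\tau}(1)$.

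I expect the main obstacle to be the bookkeeping in Step~(1)--(2): verifying that the error-tolerance boost and the rate loss can be traded off simultaneously to land exactly at $(R,\,1-R-\eps)$ while the outer code's error tolerance $\alpha_0 = \Omega(\min\{\delta^2/\ell,\,\tau^2\eps\})$ is a fixed positive constant — this requires choosing the expander degree $D$ as a function of $\alpha_0$ and $\eps$, and then re-examining that the outer code's parameters in Theorem~\ref{thm:main-multimult2} (which themselves depend on $\delta$) remain consistent. A secondary point needing care is that AEL distance amplification, in its locally-decodable form, requires the outer code to be locally list-recoverable \emph{and} to have good (constant) relative distance so that the inner blocks see few errors; Theorem~\ref{thm:main-multimult2} does give relative distance $\delta$, a constant, so this is fine, but one must state the locality-preserving version of AEL precisely (as in~\cite{HRW17}) rather than the global one. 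Apart from these parameter-chasing issues the argument is entirely modular: no new combinatorial or algebraic input is needed beyond Theorem~\ref{thm:main-multimult2} and the off-the-shelf expander machinery.
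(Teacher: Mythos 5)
Your high-level plan — start from Theorem~\ref{thm:main-multimult2} and apply a locality-preserving AEL transformation to simultaneously amplify the tolerable error to $1-R-\eps$ and reduce the alphabet to a constant — is the same as the paper's, and you are right that the constant relative distance of $C_1$ from Theorem~\ref{thm:main-multimult2} means no separate GKORS-style distance amplification is needed here (unlike in the proof of Theorem~\ref{thm:multimult-smallalpha}). However, there is a genuine gap in your choice of inner code.

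You propose using ``a good constant-size inner code that is itself list-recoverable,'' found by brute force. This cannot work: in the AEL concatenation, the inner code must encode the alphabet of the outer code, i.e.\ it needs $|C_{\text{inner}}| \geq |\Sigma_1|$. Theorem~\ref{thm:main-multimult2} gives $|\Sigma_1| = \poly_{\delta,\eps,\ell,\tau}(N)$, so the inner code must have $\poly(N)$ codewords and hence block length $n_0 = \Theta(\log N)$ over a constant-size alphabet; it is not constant-sized. Brute-force search over all generator matrices of such a code takes time $|\Sigma_0|^{\Theta(n_0^2)} = N^{\Theta(\log N)}$, which breaks the claimed $\poly(N)$-time deterministic encoding, and brute-force decoding of one inner block takes $|C_{\text{inner}}| = \poly(N)$ time, which exceeds the claimed $\poly(t) = N^{O(\tau)}$ local decoding time for small $\tau$. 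This is exactly the obstruction the paper flags in the remark preceding Corollary~\ref{cor:multimult-AEL}. The paper resolves it by \emph{not} using a raw random linear code as the inner code; instead it takes $C_0$ to be the code of Corollary~\ref{cor:FRS_AEL} — a capacity-achieving, constant-alphabet, constant-list-size list-recoverable code obtained from a preliminary AEL step whose outer code is a double concatenation of folded RS codes (Corollary~\ref{cor:frs-concat}, which pushes the outer alphabet down to $\polylog$) and whose inner code is a tiny brute-force linear code of block length $O(\log\log n)$. That pre-built inner code can be constructed and list-recovered in time polynomial in its own block length $n_0 = O(\log N)$, so the final construction and local decoding times come in within the theorem's bounds. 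Incorporating Corollary~\ref{cor:FRS_AEL} (or an explicit alternative with the same efficiency) as the inner code is the missing ingredient; without it, your argument proves a version of the theorem with quasipolynomial construction time and local decoding time $\poly(N)$ rather than $\poly(t)$.
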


The proof of Theorems~\ref{thm:multimult-smallalpha} and \ref{thm:multimult-smallalpha2} will follow from an expander-based construction~\cite{AEL95} which has been used in similar settings to reduce alphabet sizes and improve the rate/distance trade-offs (and in particular in~\cite{GR08_folded_RS,GKORS17,HRW17} in the context of list-recovery).  
We state a general transformation below.

\begin{theorem}\label{thm:alphabet-reduction}
Choose $\eps, \gamma, \zeta, R \in (0,1)$.  Let $C_1 \subseteq \Sigma_1^{n_1}$ be a code of rate $1 - \zeta$.  
Suppose that there exists a code $C_0 \subseteq \Sigma_0^{n_0}$ of rate $R$ which is $(1 - R - \eps, \ell, \ell_1)$-list-recoverable in time $T(C_0)$, which can be deterministically constructed in time $T_{construct}(C_0)$.

Then there exists a code $C \subseteq \Sigma^N$ of rate $(1 - \zeta)\cdot R$ over an alphabet of size 
\[ |\Sigma| = |\Sigma_0|^{O(1/(\eps^3\cdot \gamma))} \]
and block length
\[ N = O\inparen{ \frac{ n_1 \log|\Sigma_1| \eps^3 \gamma }{ R\log|\Sigma_0| }}. \]
so that $C$ can be deterministically constructed in time $T_{construct}(C_0) + \poly(|\Sigma_1|, n_1 ,1/\eps, 1/\gamma)$ and so that:

\begin{itemize}
\item If $C_1$ is $(\gamma, \ell_1, L)$-list-recoverable in time $T(C_1)$, then $C$ is $(1 - R - 4\eps, \ell, L)$-list-recoverable in time 
$O\left( n_1 T(C_0) + T(C_1) \right).$
\item If $C_1$ is $(t,\gamma, \ell_1, L)$-list-recoverable in time $T'(C_1)$, then $C$ is $(t', 1 - R - 4\eps, \ell, L)$-locally list-recoverable in time
\[ O\left( \frac{T(C_0) + T'(C_1)}{\eps^{3}\gamma} \right), \]
where
\[ t' = O\left(\frac{ t \log|\Sigma_1| }{ R \log|\Sigma_0|}\right).\]
\end{itemize}
\end{theorem}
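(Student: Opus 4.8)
The plan is to instantiate the classical Alon–Edmonds–Luby (AEL) distance-amplification construction, following the now-standard template used in \cite{GR08_folded_RS, GKORS17, HRW17}. The basic idea: we take the inner code $C_0$, which already list-recovers up to capacity but over a large alphabet, and "bundle together" blocks of coordinates of the outer code $C_1$ using the edges of a bipartite expander. More precisely, let $G = (U, V, E)$ be a bipartite $D$-regular expander with $|U| = |V| = n_1$, where $D = \Theta(1/(\eps^3 \gamma))$ is chosen so that $G$ has second-largest normalized eigenvalue $\mu \leq O(\eps \gamma \cdot R)$ (or whatever threshold the pseudorandomness computation below demands — a Ramanujan graph or an explicit near-Ramanujan graph suffices). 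We may assume $|\Sigma_1| = |\Sigma_0|^D$ by first grouping $D$ symbols of $\Sigma_0$ into one symbol (adjusting $n_1$ and $R$-computations accordingly, as reflected in the stated block length $N = O(n_1 \log|\Sigma_1| \eps^3 \gamma / (R \log|\Sigma_0|))$). A codeword of $C$ is formed by: (i) encoding a message with $C_1$ to get a codeword $c_1 \in \Sigma_1^{n_1}$; (ii) redistributing, for each left-vertex $u \in U$, its symbol (viewed as a $D$-tuple over $\Sigma_0$) along the $D$ edges incident to $u$; (iii) for each right-vertex $v \in V$, collecting the $D$ symbols of $\Sigma_0$ arriving at $v$ and encoding \emph{that} $D$-tuple — treated as a message for $C_0$ — with the code $C_0$, producing a block in $\Sigma_0^{n_0}$. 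The final symbol alphabet is $\Sigma = \Sigma_0^{n_0}$, which has size $|\Sigma_0|^{O(1/(\eps^3\gamma))}$ as claimed; the rate multiplies to $(1-\zeta) \cdot R$ since $C_0$ has rate $R$ and $C_1$ has rate $1 - \zeta$.

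\textbf{The decoding argument.} Given received input lists $\mathcal{S} = (S_1, \ldots, S_N)$ for $C$, each $S_i \subseteq \Sigma$ of size $\leq \ell$, and a codeword $c$ with $\dist(c, \mathcal{S}) \leq 1 - R - 4\eps$, we decode bottom-up. First, for each right-vertex block $v \in V$ (there are $n_1$ of these, each consisting of $n_0$ coordinates of $C$), we run the $(1 - R - \eps, \ell, \ell_1)$-list-recovery algorithm of $C_0$ on the $n_0$ input lists in that block. Whenever the block has at most a $1 - R - \eps$ fraction of coordinates whose symbol is outside its list — and a counting/expander-mixing-lemma argument shows this is the case for all but a small fraction of blocks $v$, provided the \emph{global} agreement is $\geq R + 4\eps$ — this yields a list $\mathcal{L}_v \subseteq (\Sigma_0^D) $ of size $\leq \ell_1$ of candidate $D$-tuples, one of which is the correct tuple that was collected at $v$. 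Pushing these candidate tuples back along the edges to the left vertices $U$, each left vertex $u$ now receives, on its $D$ edges, symbols coming from $D$ lists each of size $\leq \ell_1$; taking the product of the $D$-tuples reconstituted at $u$ gives an input list for $u$ of size $\ell_1$ (after noting the correct symbol of $c_1$ at $u$ survives). The expander mixing lemma guarantees that because only a small fraction of blocks $v$ failed, only a $\gamma$ fraction of left vertices $u$ have their correct $c_1$-symbol not in the recovered list. We then run list-recovery of $C_1$ with parameters $(\gamma, \ell_1, L)$ (global case) or $(t, \gamma, \ell_1, L)$ (local case) on these $n_1$ lists to recover the list of $\leq L$ messages, and re-encode to get the output list for $C$.

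\textbf{Parameter bookkeeping and the expander-mixing computation.} The quantitative heart is the following: if $\dist(c, \mathcal{S}) \leq 1 - R - 4\eps$, i.e.\ $c$ agrees with $\mathcal{S}$ on a $(R + 4\eps)$-fraction of the $N = n_0 n_1$ coordinates, then we want the set $B \subseteq V$ of "bad" right-blocks (those with agreement below $R + \eps$ within the block) to have density at most roughly $\eps^2 \gamma / $ something — small enough that after mixing, at most $\gamma$ fraction of $U$ is corrupted. A standard averaging shows $|B|/n_1 = O(\eps)$ purely from the fact that the average block agreement is $R + 4\eps$; but we need the \emph{edge}-distribution property: the set of edges incident to bad left-vertices (those $u$ whose correct symbol fails to be recovered) is controlled because each such $u$ must have a constant fraction of its $D$ neighbors in $B$, and the expander mixing lemma bounds the number of such $u$ by $|B|/n_1 \cdot n_1 + (\mu/D) \cdot n_1 \cdot (\text{stuff})$, which is $\leq \gamma n_1$ for our choice of $D$ and $\mu$. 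The local case is identical except that to answer a query to a single coordinate $i$ of $C$, we only need to (a) invoke the local list-recovery algorithm of $C_1$ at the appropriate left vertex, which costs $t$ queries to the $C_1$-level lists, and (b) each such query to a $C_1$-coordinate $u$ costs one full run of $C_0$-list-recovery on the block at $u$'s single neighbor in $V$ that we need — wait, it costs $D$ such runs, one per neighbor of $u$ — for $O(D) = O(1/(\eps^3\gamma))$ runs of $C_0$ decoding. Multiplying, $t' = O(t \cdot D)$ queries to $\mathcal{S}$, but measured in symbols of $\Sigma_1$ versus $\Sigma$ and accounting for the rate rescaling $\log|\Sigma_1|/(R \log|\Sigma_0|)$, we get the stated $t' = O(t \log|\Sigma_1| / (R \log|\Sigma_0|))$; the running time is the claimed product divided by $\eps^3 \gamma$. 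The deterministic construction time absorbs the cost of explicitly constructing $C_0$ plus the $\poly(|\Sigma_1|, n_1, 1/\eps, 1/\gamma)$ cost of the expander and the redistribution.

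\textbf{The main obstacle} I anticipate is making the two-sided expander-mixing estimate fully tight with the constant $4\eps$ of slack on the outer side being converted, through the mixing lemma and the choice of $D \sim 1/(\eps^3\gamma)$, into exactly a $\gamma$-fraction of errors on the $C_1$-level lists while simultaneously keeping $C_0$'s decoding radius at $1-R-\eps$ (not $1 - R$). One has to be careful that $C_0$ is only guaranteed to list-recover up to $1 - R - \eps$, so "bad" blocks are those with more than $1 - R - \eps$ fraction of list-misses, and the averaging argument must yield that the fraction of such blocks, call it $\beta$, satisfies $\beta = O(\eps)$; then the mixing lemma needs $\mu/D \lesssim \gamma$ and $\beta \lesssim \gamma$, forcing $D = \Omega(1/(\eps \gamma))$ at least, and the extra factors of $1/\eps^2$ in $D = \Theta(1/(\eps^3\gamma))$ come from wanting the final agreement loss (outer $\to$ $C_1$-lists) to be at most $\gamma$ rather than, say, $O(\sqrt{\beta})$. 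This is routine in spirit — it is exactly the AEL computation — but getting every constant to line up with what the paper states (and citing the right prior write-up, e.g.\ the version in \cite{GKORS17} or \cite{HRW17}) is where the care is needed; I would present it by reducing to a clean black-box "distance amplification" lemma and citing its proof, rather than reproving the mixing-lemma bound from scratch.
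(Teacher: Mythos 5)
Your high-level instinct — this is the AEL distance-amplification template, with the expander trading distance for alphabet size — is right, and you cite the correct precedents. But the construction you describe has the expander shuffle and the $C_0$-encoding in the wrong order, and this is not a cosmetic difference: it breaks the decoding argument. In your version you encode with $C_1$, redistribute symbols along expander edges, and \emph{then} encode each right-vertex $D$-tuple with $C_0$, so the final alphabet is $\Sigma_0^{n_0}$ and each $C_0$-codeword sits entirely inside a \emph{single} coordinate of the final code $C$. In the paper's version, one first concatenates $C_0 \circ C_1$ to get a length-$n_0 n_1$ string over $\Sigma_0$, then groups the $\Sigma_0$-symbols into blocks of $D$, and the expander permutes these $D$-blocks; the final alphabet is $\Sigma_0^D$ and each $C_0$-codeword ends up spread across $n_0$ different coordinates of $C$.

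The spreading is essential. In the paper's decoder, a coordinate of $C$ that disagrees with its list corrupts only one $\Sigma_0$-entry of each affected $C_0$-codeword; the sampler property of the expander (Claim~2.7 of \cite{KMRS}, stated here as Claim~\ref{claim:existsG}) then guarantees that for all but a $\gamma\eps$-fraction of $D$-blocks $y^{(i)}$, the block still has $(R+3\eps)D$ entries with good $\Sigma_0$-lists of size $\le \ell$, and an averaging step shows that for all but a $\gamma$-fraction of inner codewords $C_0(x_t)$, at least $(R+\eps)n_0$ of their entries have good lists, so $C_0$'s $(1-R-\eps,\ell,\ell_1)$-list-recovery applies. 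In your version, by contrast, a corrupted coordinate of $C$ wipes out the entire $C_0$-codeword at that right vertex — there is no notion of partial corruption within a $C_0$-block, so the quoted step, ``run the $(1-R-\eps,\ell,\ell_1)$-list-recovery algorithm of $C_0$ on the $n_0$ input lists in that block,'' does not make sense: there is one list, of size $\le \ell$, for the whole block.

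There is a second, related gap in your ``push back'' step. You claim that taking the product of the $D$ recovered coordinate-lists at a left vertex $u$ ``gives an input list for $u$ of size $\ell_1$.'' It does not: each of the $D$ edges incident to $u$ contributes a list of $\le \ell_1$ candidate $\Sigma_0$-values, and the Cartesian product has size $\ell_1^D$, which is far too large to feed into $C_1$'s $(\gamma,\ell_1,L)$-list-recovery. In the paper's construction this problem does not arise: the lists $T_{i,r}$ for the individual $\Sigma_0$-entries each have size $\le \ell$ directly (one coordinate-projection of a single list $S_j$), and it is $C_0$-list-recovery across the $n_0$ spread-out entries that compresses $\ell$ down to $\ell_1$ before $C_1$ ever sees it. I would fix both problems by adopting the paper's concatenate-then-shuffle order and using the vertex-sampler form of the expander property rather than the raw expander mixing lemma, which makes the bookkeeping cleaner and makes it transparent that each $C_0$-instance sees at most a $1-R-\eps$ fraction of list-misses.
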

The proof of Theorem~\ref{thm:alphabet-reduction} is by now standard, and we include it in Appendix~\ref{app:AEL} for completeness.  The basic idea is to concatenate $C_1$ with $C_0$, and then to scramble up and re-aggregate the symbols of the resulting concatenated code using a bipartite expander graph.

We will use Theorem~\ref{thm:alphabet-reduction} three times: once with $C_0$ as a random linear code and $C_1$ as the concatenation of two folded RS codes; and the next two times with $C_0$ as the code produced by the first application of Theorem~\ref{thm:alphabet-reduction} and with $C_1$ as a multivariate multiplicity code from Theorem~\ref{thm:main-multimult} and Theorem~\ref{thm:main-multimult2}, respectively.

For the list-recoverability of a random linear code, we use a result of~\cite{RW18}. 
\begin{theorem}[Follows from Theorem 6.1 in \cite{RW18}]\label{thm:rlc}
Choose $R \in (0,1)$ be constant, and let $\eps, \ell > 0$.  There is some $q_0 = (1 + \ell)^{O(1/\eps)}$ and 
\[ \ell_1 = \left( \frac{q\ell}{\eps} \right)^{ O(\log(\ell)/\eps^3 )} \]
so that the following holds.
Let $q$ be a prime power, and let $C_0$ be a random linear code over $\mathbb{F}_q$ of rate $R$ and length $n_0$.  Then with high probability, $C_0$ is $(1 - R - \eps, \ell, \ell_1)$-list-recoverable.
\end{theorem}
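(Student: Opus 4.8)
The plan is to deduce this statement directly from Theorem 6.1 of~\cite{RW18}, which already establishes that uniformly random $\F_q$-linear codes are list-recoverable up to capacity; the only work is to specialize its parameters to our regime ($R$ constant, $q$ above the capacity threshold) and check that the resulting thresholds take the claimed form. Recall that Theorem 6.1 of~\cite{RW18} asserts the following: for every prime power $q$, every rate $R \in (0,1)$, every input list size $\ell$, and every $\eps > 0$, a uniformly random $\F_q$-linear code of rate $R$ and (growing) length $n_0$ is, with probability $1 - o(1)$ over the choice of its generator matrix, $(1 - R - \eps, \ell, \ell_1)$-list-recoverable, where $\ell_1 = \ell_1(q, \ell, \eps)$ is the explicit bound appearing in that theorem (in fact it proves the stronger \emph{average-radius} version, which implies ordinary list-recovery), provided $q$ exceeds a threshold $q_0(\ell,\eps)$.

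First I would record the alphabet-size threshold. The requirement on $q$ in~\cite{RW18} is of the form $q \geq (1+\ell)^{\Omega(1/\eps)}$; this is the same bound as the ``$|\Sigma| \geq (1+\ell)^{\Omega(1/\eps)}$'' hypothesis appearing in Tables~\ref{tab:listrecovery} and~\ref{tab:locallistrecovery}, and it is essentially unavoidable, since below it a simple packing/counting argument rules out $(1-R-\eps,\ell,L)$-list-recovery at any finite $L$. Setting $q_0$ equal to this threshold yields the claimed $q_0 = (1+\ell)^{O(1/\eps)}$. Next I would read off the output list size from Theorem 6.1 of~\cite{RW18}: after absorbing the contributions of $\ell$ and $\eps$ into a single exponent (legitimate since $R$ is treated as a constant), the bound can be rewritten as $\ell_1 = (q\ell/\eps)^{O(\log(\ell)/\eps^3)}$, which matches the statement.

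There is no real obstacle here: this statement is a bookkeeping reformulation of~\cite{RW18}, packaged in exactly the form needed to feed a random linear code as the ``inner'' code $C_0$ into the alphabet-reduction transformation of Theorem~\ref{thm:alphabet-reduction} in the proofs of Theorems~\ref{thm:multimult-smallalpha} and~\ref{thm:multimult-smallalpha2}. The only care required is to observe that the ``with high probability'' in our statement is precisely the $1-o(1)$ success probability of~\cite{RW18} over the choice of generator matrix, so that it suffices to take $n_0$ large enough for that probability bound to be nontrivial, and that (average-radius) list-recoverability in~\cite{RW18} is at least as strong as the list-recoverability we need.
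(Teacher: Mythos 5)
Your proposal takes exactly the same approach as the paper: the theorem is stated in the paper as an immediate consequence of Theorem~6.1 of~\cite{RW18}, with no further proof given, and your write-up is precisely that citation together with a correct parameter specialization (reading the unused $q_0$ in the statement as the intended alphabet-size threshold $q \geq q_0$). This matches what the paper intends, so there is nothing to add.
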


We will also use the following corollary of Theorem~\ref{thm:frs-list-rec-const-list}.
\begin{corollary}\label{cor:frs-concat}
Let $\eps > 0$ and $\ell \in \mathbb{N}$ be constants.  Then for infinitely many values of $n$, there is a code $C \subseteq \Sigma^n$ of rate $1 - \eps$, which is $(\eps^2/16, \ell, L)$-list-recoverable in time $\poly(n, L)$, for $L = O_{\ell, \eps}(1)$, and which has $|\Sigma| = \poly_{\ell, \eps}(\log(n))$.
\end{corollary}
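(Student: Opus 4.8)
The plan is to obtain $C$ by concatenating two folded Reed--Solomon codes, and to list-recover it by running the algorithm of Theorem~\ref{thm:frs-list-rec-const-list} at each level. Let $C_{\out} = \FRS_{q,s}(n_0,d)$ be an outer folded Reed--Solomon code of rate at least $1-\eps/2$, where $q$ is a prime with $sn_0 \le q \le 2sn_0$ (which exists by Bertrand's postulate, so the constraint $n_0 \le (q-1)/s$ holds) and $s=s(\ell,\eps)$ is a sufficiently large constant fixed below. Its alphabet is $\Sigma_{\out}=\F_q^s$, so a single symbol of $C_{\out}$ carries $s\log_2 q = \Theta_{\ell,\eps}(\log n_0)$ bits. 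Let $C_{\inn}=\FRS_{q',s'}(n_{\inn},d')$ be an inner folded Reed--Solomon code of rate at least $1-\eps/2$, over a field $\F_{q'}$ with $q'$ prime and $s'n_{\inn}\le q'\le 2s'n_{\inn}$, where $s'=s'(\ell,\eps)$ is a sufficiently large constant and $n_{\inn}$ is the least integer for which $\dim_{\F_{q'}}(C_{\inn})\cdot\log_2 q' \ge s\log_2 q$, so that $\Sigma_{\out}$ embeds into $C_{\inn}$. Since $C_{\inn}$ has constant rate and $\log_2 q' = \Theta(\log n_{\inn})$, this forces $n_{\inn}=\Theta_{\ell,\eps}(\log n_0/\log\log n_0)$, hence $|\Sigma_{\inn}|=q'^{s'}=\poly_{\ell,\eps}(\log n_0)$. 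The concatenated code $C=C_{\out}\circ C_{\inn}$ then has alphabet $\Sigma=\Sigma_{\inn}$, block length $n=n_0\,n_{\inn}$, and rate at least $(1-\eps/2)^2\ge 1-\eps$; as $\log n=\Theta(\log n_0)$ we get $|\Sigma|=\poly_{\ell,\eps}(\log n)$, and since the construction works for every large enough $n_0$ it yields codes $C$ for infinitely many $n$.

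Next I fix the list-recovery guarantees of the two constituents via Theorem~\ref{thm:frs-list-rec-const-list}. Choosing $s'$ large enough that $256\ell/\eps^2\le s'$, Theorem~\ref{thm:frs-list-rec-const-list} applied with its $\epsilon$-parameter equal to $\eps/4$ (legitimate since the list-recovering capacity of a rate $1-\eps/2$ code is $\eps/2 > \eps/4$) shows that $C_{\inn}$ is $(\eps/4,\ell,\ell_{\out})$-list-recoverable with $\ell_{\out}=(\ell/\eps)^{O(\frac1\eps\log(\ell/\eps))}=O_{\ell,\eps}(1)$, in time $\poly_{\ell,\eps}(n_{\inn})=\poly_{\ell,\eps}(\log n)$. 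Now choose $s$ large enough that $256\ell_{\out}/\eps^2\le s$ (consistent, since $\ell_{\out}$ depends only on $\ell,\eps$); then Theorem~\ref{thm:frs-list-rec-const-list} shows $C_{\out}$ is $(\eps/4,\ell_{\out},L)$-list-recoverable with $L=(\ell_{\out}/\eps)^{O(\frac1\eps\log(\ell_{\out}/\eps))}=O_{\ell,\eps}(1)$, in time $\poly(n_0,L)=\poly(n,L)$.

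The list-recovery algorithm for $C$ is the standard block decoder. Given input lists $S_1,\dots,S_n\in\binom{\Sigma}{\ell}$, partition the coordinates into $n_0$ consecutive blocks of length $n_{\inn}$; for each block $i$ run the inner $(\eps/4,\ell,\ell_{\out})$-list-recovery to produce a list $T_i\subseteq\Sigma_{\out}$ with $|T_i|\le\ell_{\out}$; finally run the outer $(\eps/4,\ell_{\out},L)$-list-recovery on $(T_1,\dots,T_{n_0})$ and output the resulting codewords of $C$. For correctness, fix a codeword $c\in C$ coming from an outer codeword $c_{\out}\in C_{\out}$ with $\dist(c,S)\le\eps^2/16$. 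By averaging, the number of blocks on which $c$ disagrees with the corresponding portion of $S$ in more than an $\eps/4$ fraction of coordinates is at most $\tfrac{\eps^2/16}{\eps/4}\,n_0=\tfrac{\eps}{4}n_0$; for every other block $i$ the inner codeword $C_{\inn}((c_{\out})_i)$ is $(\eps/4)$-close to block $i$'s lists, so $(c_{\out})_i\in T_i$. Hence $\dist(c_{\out},(T_1,\dots,T_{n_0}))\le\eps/4$, so $c_{\out}$ is in the output of the outer list-recovery and $c$ is recovered. This bounds the output list size by $L=O_{\ell,\eps}(1)$; the total running time is $n_0\cdot\poly_{\ell,\eps}(\log n)+\poly(n,L)=\poly(n,L)$, and the encoder is the composition of the two polynomial-evaluation encoders, deterministic and running in time $\poly(n)$.

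The only nontrivial part is the parameter bookkeeping of the first paragraph. One must verify that the $n_{\inn}$ solving $\dim_{\F_{q'}}(C_{\inn})\log_2 q'\ge s\log_2 q = \Theta_{\ell,\eps}(\log n_0)$ is indeed $\Theta_{\ell,\eps}(\log n_0/\log\log n_0)$ --- using that $C_{\inn}$ has constant rate and $\log_2 q'=\Theta(\log n_{\inn})$ because $q'\le 2s'n_{\inn}$ --- so that $|\Sigma_{\inn}|=q'^{s'}=\poly_{\ell,\eps}(\log n_0)=\poly_{\ell,\eps}(\log n)$. One must also check that the two folding parameters can be taken to be constants simultaneously large enough for Theorem~\ref{thm:frs-list-rec-const-list} at both levels; this works because $\ell_{\out}$ depends only on $\ell$ and $\eps$, so one may fix $s'$ (hence $\ell_{\out}$) first and then $s$, with no circularity. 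Finally, the rate can be made exactly $1-\eps$ by slightly lowering $d$, which affects nothing else.
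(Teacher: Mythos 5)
Your proposal is correct and takes essentially the same approach as the paper: concatenate two folded Reed--Solomon codes of rate $1-\eps/2$, each made $(\eps/4,\cdot,\cdot)$-list-recoverable via Theorem~\ref{thm:frs-list-rec-const-list}, with the inner block length chosen to be logarithmic in the outer so that the outer alphabet embeds in the inner code. The only cosmetic difference is that you spell out the Markov/averaging argument for list-recovery of a concatenated code, whereas the paper simply cites~\cite{HRW17}, Lemma~7.4.
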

\begin{proof}
The proof follows by concatenating two folded Reed-Solomon codes.  More precisely, let $C_0 \subseteq \Sigma_0^{n_0}$ be a folded RS code of rate $1 - 2\gamma$ which is $(\gamma, \ell, \ell_1)$-list-recoverable for $\ell_1 = O_{\ell,\gamma}(1)$, which has alphabet size $|\Sigma_0| = \poly_{\ell, \gamma}(n_0)$ and is list-recoverable in time $\poly(n_0,\ell_1)$; this exists by Theorem~\ref{thm:frs-list-rec-const-list}.  Then let $C_1 \subseteq \Sigma_1^{n_1}$ be another code of rate $1 - 2\gamma$ which is $(\gamma, \ell_1, L)$-list-recoverable for $L = O_{\ell_1, \gamma}(1) = O_{\ell, \gamma}(1)$ and has $|\Sigma_1| = \poly_{\ell,\gamma}(n_1)$ and is list-recoverable in time $\poly(n_1, L)$; again this exists by Theorem~\ref{thm:frs-list-rec-const-list}.  Since $|C_0| = n_0^{O_{\ell,\gamma}(n_0)}$ and $|\Sigma_1| = \poly_{\ell, \gamma}(n_1)$, there is a choice of $n_1$ so that $n_0 = O_{\ell, \gamma}(\log(n_1))$ so that  $|C_0| \geq |\Sigma_1|$.

Now consider the code $C$ which is the concatenation of $C_1$ with $C_0$.  
The length of the code is $n = n_1 \cdot n_0$.
The alphabet size of $C$ is $|\Sigma_0| = \poly_{\ell, \gamma}(n_0) = \poly_{\ell, \gamma}(\log(n))$, and the rate is $(1 - 2\gamma)^2 \geq 1 - 4\gamma$.  Finally, it is not hard to see that the composition of two list-recoverable codes is again list-recoverable (see, eg, \cite{HRW17}, Lemma 7.4), and we conclude that $C$ is $(\gamma^2, \ell, L)$-list-recoverable in time $\poly(N, L)$.  Setting $\eps =4\gamma$ completes the proof.
\end{proof}

Next, we instantiate Theorem~\ref{thm:alphabet-reduction} using the codes from Corollary~\ref{cor:frs-concat} as the outer code, and a random linear code as the inner code.
\begin{corollary}\label{cor:FRS_AEL}
Let $R \in (0,1), \ell, \eps > 0$ be constants so that $\eps < R < 1 - 2\eps$, and let $\alpha < 1 - R - \eps$.  Then 
there is a code $C \subset \Sigma^n$ of rate $R$, constructable in time $\poly_{\ell, \eps}(n)$, which is $(\alpha, \ell, L)$-list-recoverable in time $O_{\ell,\eps}(n^{O(1)}) + O(n \cdot \log(n)^{O_{\ell,\eps}(1)})$ with 
\[ |\Sigma| = (1 + \ell)^{O(1/\eps^6)} = O_{\ell, \eps}(1) \]
and
\[ |L| = O_{\ell, \eps}(1). \]
\end{corollary}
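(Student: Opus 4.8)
The plan is to apply the AEL alphabet-reduction transformation of Theorem~\ref{thm:alphabet-reduction} with the outer code $C_1$ being the rate-$(1-\zeta)$ concatenated folded Reed--Solomon code of Corollary~\ref{cor:frs-concat}, and the inner code $C_0$ being a small-alphabet random linear code that is list-recoverable up to capacity by Theorem~\ref{thm:rlc}. The point of routing through the concatenated code of Corollary~\ref{cor:frs-concat} rather than a plain folded RS code is that it already has alphabet size $\poly_{\ell,\eps}(\log n_1)$ instead of $\poly(n_1)$; this is exactly what keeps the construction time polynomial (rather than quasi-polynomial, as it was in the folded-RS-only version in the commented-out corollary above).

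I would first fix the auxiliary constants as explicit constant multiples of $\eps$: set $\eps' = \eps/8$, $\zeta = \eps/8$, $\gamma = \zeta^2/16$, and $R' = R/(1-\zeta)$. These are legal since $\eps$ is small and $R < 1-2\eps$ forces $\zeta < 1-R$, so $R' \in (0,1)$ is a constant; moreover $(1-\zeta)R' = R$, and a one-line computation gives $R' + 4\eps' = \tfrac{R\zeta}{1-\zeta} + \tfrac{\eps}{2} \le \tfrac{\eps}{4} + \tfrac{\eps}{2} < R + \eps$, so $1 - R' - 4\eps' > 1 - R - \eps > \alpha$. Next I would instantiate the two building blocks. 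Let $q_0 = (1+\ell)^{O(1/\eps')}$ be the prime power and $\ell_1 = (q_0\ell/\eps')^{O(\log\ell/(\eps')^3)} = O_{\ell,\eps}(1)$ the intermediate list size supplied by Theorem~\ref{thm:rlc}; that theorem says a random $\F_{q_0}$-linear code of rate $R'$ and length $n_0$ is $(1-R'-\eps',\ell,\ell_1)$-list-recoverable with high probability, and since $n_0$ will be taken of size $O_{\ell,\eps}(\log\log n)$ we can find such a code $C_0$ deterministically by exhaustive search over its generator matrices, testing list-recoverability of each candidate. For $C_1$ I would invoke Corollary~\ref{cor:frs-concat} with its ``$\eps$'' set to $\zeta$ and its ``$\ell$'' set to $\ell_1$, obtaining (for a dense set of lengths $n_1$) a code $C_1 \subseteq \Sigma_1^{n_1}$ of rate $1-\zeta$ that is $(\gamma,\ell_1,L)$-list-recoverable in time $\poly(n_1,L)$, with $L = O_{\ell_1,\zeta}(1) = O_{\ell,\eps}(1)$ and $|\Sigma_1| = \poly_{\ell,\eps}(\log n_1)$. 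The concatenation step needs $|C_0| \ge |\Sigma_1|$, which fixes $n_0 = O(\log|\Sigma_1|/\log q_0) = O_{\ell,\eps}(\log\log n)$ as claimed.

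Then I would run Theorem~\ref{thm:alphabet-reduction} on $C_0, C_1$ with the constants $\eps',\gamma,\zeta,R'$. It produces $C \subseteq \Sigma^N$ of rate $(1-\zeta)R' = R$, with block length $N = O_{\ell,\eps}(n_1 \log\log n_1)$ (so $N$ ranges over a dense set and $n_1 = O_{\ell,\eps}(N)$), which is $(1-R'-4\eps',\ell,L)$-list-recoverable and hence $(\alpha,\ell,L)$-list-recoverable, with $L = O_{\ell,\eps}(1)$ and
\[ |\Sigma| = |\Sigma_0|^{O(1/((\eps')^3\gamma))} = q_0^{O(1/((\eps')^3\gamma))} = (1+\ell)^{O(1/((\eps')^4\gamma))} = (1+\ell)^{O(1/\eps^6)}. \]
The remaining step is routine running-time bookkeeping. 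For construction, Theorem~\ref{thm:alphabet-reduction} gives time $T_{\mathrm{construct}}(C_0) + \poly(|\Sigma_1|, n_1, 1/\eps', 1/\gamma)$; since $|C_0| = q_0^{R'n_0} = (\log N)^{O_{\ell,\eps}(1)}$, there are only $q_0^{R'n_0^2} = 2^{O_{\ell,\eps}((\log\log N)^2)} = N^{o(1)}$ generator matrices to test and each test costs $\poly$ in $|C_0|$ and in $\ell^{n_0} = (\log N)^{O_{\ell,\eps}(1)}$, so $T_{\mathrm{construct}}(C_0) = N^{o(1)}$ and the total construction time is $\poly_{\ell,\eps}(N)$. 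For list recovery, Theorem~\ref{thm:alphabet-reduction} gives time $O(n_1\, T(C_0) + T(C_1))$, where brute-force list-recovery of $C_0$ runs in time $T(C_0) = (\log N)^{O_{\ell,\eps}(1)}$ and $T(C_1) = \poly(n_1, L) = \poly_{\ell,\eps}(N)$; substituting $n_1 = O_{\ell,\eps}(N)$ yields the claimed $O_{\ell,\eps}(N^{O(1)}) + O(N \log(N)^{O_{\ell,\eps}(1)})$. A deterministic $\poly(N)$-time encoder is inherited from those of $C_0$, $C_1$ and the expander-mixing step.

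The main obstacle is not a single hard lemma but precisely this tension between construction time and the outer alphabet size: with a raw folded RS outer code one has $|\Sigma_1| = \poly(n)$, hence $n_0 = \Theta(\log n)$ and a quasi-polynomial $n^{\Theta(\log n)}$ search for the inner code; routing through Corollary~\ref{cor:frs-concat} cuts $|\Sigma_1|$ down to $\poly(\log n)$, hence $n_0 = \Theta(\log\log n)$ and an $n^{o(1)}$ search. The rest is checking that $\eps',\zeta,\gamma$ can all be taken proportional to $\eps$, so the alphabet bound collapses to $(1+\ell)^{O(1/\eps^6)}$, and that the final rate is exactly $R$.
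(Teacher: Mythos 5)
Your proposal is correct and follows essentially the same route as the paper's proof: apply Theorem~\ref{thm:alphabet-reduction} with the concatenated folded-RS code of Corollary~\ref{cor:frs-concat} as the outer code $C_1$ (crucially exploiting its $\poly\log$ alphabet so that the inner code has length $O(\log\log n)$, making the exhaustive search for a good random linear code polynomial-time) and an inner code $C_0$ found by brute force over $\F_{q_0}$-linear generator matrices, invoking Theorem~\ref{thm:rlc} for existence. The only cosmetic difference is that you tune $R' = R/(1-\zeta)$ to land exactly on rate $R$ where the paper allows rate $\geq R$ (and there is a harmless typo in your rate inequality, where ``$R' + 4\eps' = \tfrac{R\zeta}{1-\zeta} + \tfrac{\eps}{2}$'' should read ``$R' + 4\eps' = R + \tfrac{R\zeta}{1-\zeta} + \tfrac{\eps}{2}$,'' though the desired conclusion $R' + 4\eps' < R + \eps$ still follows).
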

\begin{proof}
The proof follows by applying Theorem~\ref{thm:frs-list-rec-const-list} in Theorem~\ref{thm:alphabet-reduction}.
Let the outer code  $C_1 \subseteq \Sigma_1^{n_1}$ be a code of rate $1 - \eps/4$ which is $(\alpha, \ell_1, L)$-list-recoverable in time $\poly(n_1 \cdot O_{\ell,\eps}(1))$ for $\alpha = \Omega(\eps^2)$, and which has $|\Sigma_1| = \poly_{\ell, \eps}(\log(n_1))$, as guaranteed by Corollary~\ref{cor:frs-concat}.

For the inner code, we use a random linear code, choosing $|\Sigma_0| = (1 + \ell)^{O(1/\eps)}$.  By Theorem~\ref{thm:rlc}, for any $n_0$, 
there exists a linear code $C_0 \subseteq \Sigma_0^{n_0}$ of rate $R + \eps/2$ that is $(1 - R - \eps, \ell, \ell_1)$-list-recoverable, for
\[ \ell_1 = \left( \frac{\ell}{\eps} \right)^{ O(\log(\ell)/\eps^4 )} = O_{\ell,\eps}(1). \]
We have $|C_0| = (1 + \ell)^{O(n_0/\eps)}$, so there is a choice of
$n_0 = O_{\ell, \eps}(\log\log(n_1))$ so that $|C_0| \geq |\Sigma_1| = \poly_{\ell, \eps}(\log(n_1))$, and we make this choice.  
Thus, we may use $C_0$ and $C_1$ in the construction in Theorem~\ref{thm:alphabet-reduction} to construct a code $C \subseteq \Sigma^N$ of rate $(1 - \eps/4)(R + \eps/2) \geq R$ that is $(1 - R - 4\eps, \ell, L)$-list-recoverable, where $N = O_\eps( n_0 \cdot n_1 )$.  The final alphabet size is $|\Sigma| = |\Sigma_0|^{O(1/\eps^3 \alpha)} = (1 + \ell)^{1/\eps^6}$.

Finally, we consider how long it takes to construct and decode $C$.
To construct $C_0$ we iterate over all possible generator matrices and verify their list-recovery properties.   
There are at most $|\Sigma_0|^{R n_0^2} = |C_0|^{\eps \log_\ell(|C_0|)}$ linear codes, and checking the list-recoverability of any one of them takes time $O(|\Sigma_0|^{\ell n_0} \cdot |C_0|^{\ell_1+1} \cdot n_0 \ell)$, the time to search over all lists $S_1,\ldots,S_{n_0}$ of size $\ell$, and all subsets of $L+1$ codewords and compute their distance.  Since $n_0$ will be much larger than $\ell, \eps$, this is dominated by the $|\Sigma_0|^{R n_0^2}$ term, and the time it takes to find the generator matrix of such a code is
\[ T_{construct}(C_0) = O(|\Sigma_0|^{R \cdot n_0^2} ) = O( |C_0|^{n_0} ) = \log(n_1)^{O_{\ell, \eps}(\log\log(n_1))} = \poly_{\ell, \eps}(n_1) = \poly_{\ell, \eps}(N).  \]
Thus time to construct the whole code $C \subseteq \Sigma^N$ is also $\poly(N)$.

The time to perform list-recovery on the inner code $C_0$ by brute force is $\poly(|C_0|) = \poly_{\ell, \eps}(\log(n_1))$.
The time to perform list-recovery is then the time to run the list-recovery algorithm for $C_1$ (which is $\poly(N, O_{\ell,\eps}(1))$), plus the time to brute-force decode $C_0$ $n_1$ times, which is $O(N \cdot \log(N)^{O_{\ell,\eps}(1)})$. 

Instantiating Theorem~\ref{thm:alphabet-reduction} with these choices yields the corollary. 
\end{proof}

\begin{remark}
The reason to concatenate folded RS codes with themselves to obtain the outer code $C_1$ above is to make the alphabet size small enough that a brute-force search over all generator matrices for the inner code is still polynomial time.  If one omits this step, then the construction above still works with a quasipolynomial-time construction and a better list size.  It may be possible to create a version of Corollary~\ref{cor:FRS_AEL} which has a significantly smaller list size (close to the one guaranteed by Theorem~\ref{thm:frs-list-rec-const-list}) by using a folded RS code as $C_1$ and a derandomization of existing Monte-Carlo constructions of capacity-achieving list-recoverable codes as the inner code $C_0$.  However, since a list size of $O_{\ell, \eps}(1)$ is sufficient for our applications going forward, we stick with the simpler machinery.
\end{remark}

One might try to prove Theorem~\ref{thm:multimult-smallalpha} in the same way, with multivariate multiplicity codes as $C_1$ and a random linear code as $C_0$.  However, in this case the alphabet size is so large that doing exhaustive search to decode $C_0$ would yield a super-polynomial decoding time, and concatenating the outer code with smaller versions of itself until the alphabet size is smaller will yield too large a list size.  Therefore, we instead use for $C_0$ the code we have just created in Corollary~\ref{cor:FRS_AEL} instead. 

There is one more catch, which is that the codes from Theorem~\ref{thm:main-multimult} don't meet list-decoding capacity, since they have rate $1 - \eps$, but can only handle up to $O(\eps^2/\ell m^2)$ fraction of errors, which in our parameter regime is sub-constant.  If we applied Theorem~\ref{thm:alphabet-reduction} directly, we would need to take $\gamma$ in that theorem to be sub-constant, which would result in a super-constant alphabet size.  Thus, before we apply Theorem~\ref{thm:alphabet-reduction} to reduce the alphabet size, we amplify the distance to a constant, by applying a different version of the expander-based argument stated in Lemma~\ref{lem:GKORS} below. This will very slightly increase the alphabet size, but not so much that it will affect the asymptotics, and then we can apply Theorem~\ref{thm:alphabet-reduction}.
\begin{remark}
We believe it is possible to combine the two expander-based constructions into only one (with only one expander), which would give a slight improvement in the parameters.  However, our approach here (using both Lemma~\ref{lem:GKORS} and Theorem~\ref{thm:alphabet-reduction} in serial) is more modular and still yields the desired asymptotic result, so we stick with it for simplicity of exposition.
\end{remark}
We use the following lemma from \cite{GKORS17}.
\begin{lemma}[\cite{GKORS17}, Distance amplification for local list-recovery]\label{lem:GKORS}
For any constants $\delta_1, \alpha_1, \gamma > 0$, there exists an integer $d \leq \poly( 1/\delta_1, 1/\alpha_1, 1/\gamma)$ so that the following holds.  
\begin{itemize}
	\item Let $C_1 \subseteq \Sigma_1^{n_1}$ have rate $R_1$ and and distance $\delta_1$ and be $(t, \alpha_1, \ell_1, L)$-locally-list-recoverable in time $T(C_1)$.
	\item Let $C_0 \subseteq \Sigma_0^{n_0}$ have rate $R_0$ and be $(\alpha_0, \ell, \ell_1)$-globally-list-recoverable in time $T(C_0)$.
	\item Further suppose that $n_0 \geq d, |\Sigma_1| = |C_0|$.
\end{itemize}
Then there exists a code $C \subseteq ( \Sigma_0^{n_0} )^{n_1}$ of block length $n_1$ over $\Sigma = \Sigma_0^{n_0}$ with rate $R_0 \cdot R_1$ that is $(t', \alpha_0 - \gamma, \ell, L)$-locally-list-recoverable for
\[ t' = t \cdot n_0^2 \cdot \log(n_0). \]
Moreover there is a local list-recovery algorithm for $C$ which runs in time $O(T(C_1)) + O( t \cdot T(C_0)) + \poly( t, n_0, \ell).$
Further, if both codes can be constructed in time $T_{construct}(C_0)$ and $T_{construct}(C_1)$ (in the sense that this is the time it take to generate a short description which suffices for polynomial-time encoding), then the final code $C$ can be constructed in time $O(T_{construct}(C_0) + T_{construct}(C_1))$. 
\end{lemma}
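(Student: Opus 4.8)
The plan is to use the expander-based distance-amplification construction of Alon--Edmonds--Luby, keeping track of local list-recoverability throughout. Fix a $d$-regular bipartite graph $G=(L,R,E)$ with $|L|=|R|=n_1$ whose normalized second singular value is at most $\gamma_0$, where $\gamma_0$ is a small constant to be chosen and $d=n_0$ is a suitable polynomial in $1/\delta_1,1/\alpha_1,1/\gamma$ (explicit such graphs exist; if the given $n_0$ exceeds the needed $d$ we either pad or take $d$ slightly larger). The encoding of $C$ is: given a message, encode it with $C_1$ to get $c^{(1)}\in\Sigma_1^{n_1}$; via a fixed bijection $\Sigma_1\leftrightarrow C_0$ (legitimate since $|\Sigma_1|=|C_0|$), replace each symbol $c^{(1)}_i$ by its $C_0$-codeword, obtaining $w\in\Sigma_0^{L\times[d]}$ where $w_{i,k}\in\Sigma_0$ is associated to the $k$-th edge at left vertex $i$; finally define the $r$-th symbol of $c\in(\Sigma_0^{n_0})^{n_1}$ to be the $d$-tuple of the $w_{i,k}$ over the $d$ edges incident to $r\in R$. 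This map is $\F$-linear, and a direct computation using $|\Sigma|=|\Sigma_0|^{n_0}$ and $|\Sigma_1|=|C_0|=|\Sigma_0|^{R_0 n_0}$ shows $C$ has rate $R_0 R_1$.

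Next I would describe the local list-recovery algorithm. Given received lists $S=(S_r)_{r\in R}$ with $|S_r|\le\ell$, the key is to reduce to an instance of $C_1$-local-list-recovery: for each left vertex $i$ form a length-$d$ list-recovery instance for $C_0$ whose $k$-th coordinate list is the projection of $S_r$ to that coordinate, $r$ being the right endpoint of the $k$-th edge at $i$; each such list has size $\le\ell$. Running the $(\alpha_0,\ell,\ell_1)$ global list-recovery algorithm for $C_0$ returns $\le\ell_1$ candidate $C_0$-codewords, hence $\le\ell_1$ candidate symbols $T_i\subseteq\Sigma_1$ for position $i$ of $C_1$. We then run the $(t,\alpha_1,\ell_1,L)$ local-list-recovery algorithm of $C_1$ on $(T_i)_i$, obtaining implicit descriptions $A_1,\dots,A_L$, and return $A'_1,\dots,A'_L$ where $A'_j(r)$ queries $A_j$ at each of the $d$ left-neighbors of $r$ (each $A_j$-query amplified to error $O(1/d)$ by $O(\log d)$ repetitions), re-encodes the results with $C_0$, and assembles the $d$-tuple at $r$. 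In the local setting the list $T_i$ is recomputed on demand whenever $A_j$ queries position $i$: read the relevant coordinates of $S$ and invoke the $C_0$ decoder once. This gives output list size $L$, query complexity $t'=O(t\cdot n_0^2\log n_0)$, and the claimed running time $O(T(C_1))+O(t\cdot T(C_0))+\poly(t,n_0,\ell)$.

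For correctness, fix $c\in C$ with $\dist(c,S)\le\alpha_0-\gamma$, arising from $c^{(1)}\in C_1$. Call $r\in R$ \emph{bad} if $c_r\notin S_r$; by hypothesis at most an $(\alpha_0-\gamma)$-fraction of right vertices are bad. If a left vertex $i$ has at most an $\alpha_0$-fraction of its $d$ neighbors bad, then the $C_0$-codeword encoding $c^{(1)}_i$ agrees with the pushed-back lists on all but at most an $\alpha_0$-fraction of coordinates, so it is among the $\le\ell_1$ codewords returned, i.e.\ $c^{(1)}_i\in T_i$. By the expander mixing (sampling) property of $G$, the fraction of left vertices whose neighborhood contains more than an $(\alpha_0-\gamma)+\gamma=\alpha_0$-fraction of bad vertices is at most $O(\gamma_0^2/\gamma^2)$, which we force to be at most $\alpha_1$ by choosing $\gamma_0$ small (equivalently $d$ large) as a function of $\alpha_1,\gamma$, with $\delta_1$ entering the choice of $d$ to keep the relevant quantities in a nontrivial regime. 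Hence $\dist(c^{(1)},(T_i)_i)\le\alpha_1$, so the $C_1$ local-list-recovery succeeds: with probability $\ge 2/3$ some $A_j$ has $\Pr[A_j(i)=c^{(1)}_i]\ge 2/3$ for all $i$; after amplification this $A_j$ answers all $d$ left-neighbors of any $r$ correctly simultaneously with probability $\ge 2/3$, so $A'_j(r)=c_r$ with probability $\ge 2/3$. The main obstacle is precisely this sampling/expander-mixing step bounding the fraction of ``deviating'' inner blocks, together with the bookkeeping of the list-size chain $\ell\to\ell_1\to L$ and of success probabilities through the composition --- in particular the $O(\log d)$-fold amplification of the inner algorithms so that a single local query $A'_j(r)$, which spawns $d$ sub-queries, still succeeds with probability $\ge 2/3$; this is exactly what forces $t'=\widetilde{O}(t\,n_0^2)$. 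The remaining ingredients --- linearity, the rate calculation, and the bound $d\le\poly(1/\delta_1,1/\alpha_1,1/\gamma)$ --- are routine.
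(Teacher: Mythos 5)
Your proposal is correct and follows the same Alon--Edmonds--Luby expander-redistribution route that the cited reference \cite{GKORS17} uses for this lemma (and that the present paper independently spells out in Appendix~\ref{app:AEL} for the companion alphabet-reduction statement, Theorem~\ref{thm:alphabet-reduction}): concatenate with $C_0$, scramble the inner-symbol positions along a degree-$n_0$ bipartite expander, simulate each $C_1$-query by an $n_0$-coordinate $C_0$ list-recovery, and control the fraction of corrupted inner blocks by the expander sampling lemma, which forces $d=\Omega(1/(\alpha_1\gamma^2))=\poly(1/\delta_1,1/\alpha_1,1/\gamma)$ and, after $O(\log n_0)$-fold amplification of the $n_0$ sub-queries per output position, yields $t'=O(t\,n_0^2\log n_0)$. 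Note that the paper itself does not reprove this lemma but cites \cite{GKORS17} with a remark that its hypotheses and conclusions are slightly stronger than needed; your argument fills in exactly the intended proof, and the only loose end --- your passing remark that $\delta_1$ ``enters the choice of $d$'' --- is harmless since the lemma only asserts an upper bound on $d$ and your bound already lies within $\poly(1/\alpha_1,1/\gamma)$.
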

\begin{remark}
The statement of this lemma in \cite{GKORS17} is slightly different in that both the hypotheses and the conclusion are slightly stronger.  They both specify linear codes, and the both have an additional ``soundness" parameter which we will not need.  However, an inspection of the proof shows that it goes through if these additional requirements and conclusions are dropped.
\end{remark}

\begin{corollary}\label{cor:multimult-AEL}
Let $\eps > 0$ be sufficiently small, and let $m, \ell > 0$ be any integers.
Suppose that $\eps, \ell$ are constant, and $m$ is growing.
Then there is a code $C \subseteq \Sigma^N$ with rate at least $ 1- 3\eps$ so that $C$ is $(t, \eps, \ell, L)$-locally-list-recoverable for 
\[ L = m^{O_{\ell, \eps}(m^2)} \]
\[ N = m^{O_{\ell, \eps}(m^4)} \]
\[ t = m^{O_{\ell, \eps}(m^3)} \]
\[ |\Sigma| = m^{O_{\ell, \eps}(m^{2m + 2})}\]
Further, $C$ can be locally list recovered in time $\poly(t)$, and can be deterministically encoded in time $\poly(N)$.
\end{corollary}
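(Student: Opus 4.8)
The plan is to start from the high-rate, very-low-distance multivariate multiplicity code of Theorem~\ref{thm:main-multimult} and boost its error tolerance up to the constant $\eps$ by one application of the expander-based distance-amplification primitive of Lemma~\ref{lem:GKORS}, taking as the inner (``pre-encoding'') code the capacity-achieving, constant-alphabet, constant-list-size list-recoverable code of Corollary~\ref{cor:FRS_AEL}. The final block length will be that of the multiplicity code, the final output list size will be its local-list-recovery list size, and the alphabet will blow up only by the (mild) $|\Sigma_0|^{n_0}$ factor coming from the inner code, where $n_0$ is the inner block length.

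Concretely, I would fix small constant fractions $\eps_0,\eps_1,\gamma$ of $\eps$ (say $\eps_0=\eps_1=\gamma=\eps/10$) and set parameters as follows. First, apply Corollary~\ref{cor:FRS_AEL} with internal error parameter $\eps_0$, input list size $\ell$, and rate $R_0=1-\Theta(\eps)$ chosen so that $\eps_0<R_0<1-2\eps_0$ and $1-R_0-\eps_0>\eps+\gamma$; this yields a code $C_0\subseteq\Sigma_0^{n_0}$ of rate $R_0$ that is $(\alpha_0,\ell,\ell_1)$-list-recoverable for some constant $\alpha_0\in(\eps+\gamma,\,1-R_0-\eps_0)$, with $\ell_1=O_{\ell,\eps}(1)$, $|\Sigma_0|=O_{\ell,\eps}(1)$, built deterministically in $\poly_{\ell,\eps}(n_0)$ time and globally list-recovered in $\poly_{\ell,\eps}(n_0)$ time; here $n_0$ ranges over infinitely many admissible values and will be pinned down below. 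Second, apply Theorem~\ref{thm:main-multimult} with input list size $\ell_1$ and internal error parameter $\eps_1$; this yields a multivariate multiplicity code $C_1\subseteq\Sigma_1^{n_1}$ of rate $R_1\ge 1-\eps_1$, distance $\delta_1=\eps_1/(2m)=\Theta_\eps(1/m)$, which is $(t_1,\alpha_1,\ell_1,L_1)$-locally-list-recoverable for $\alpha_1=\delta_1^2/(160\ell_1)=\Theta_{\ell,\eps}(m^{-2})$, with $L_1=m^{O_{\ell,\eps}(m^2)}$, $t_1=m^{O_{\ell,\eps}(m^3)}$, $n_1=m^{O_{\ell,\eps}(m^4)}$, $|\Sigma_1|=m^{O_{\ell,\eps}(m^{2m+2})}$, and which is explicit with $\poly(n_1)$-time encoding and $\poly(t_1)$-time local list-recovery.

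Now I would glue $C_1$ and $C_0$ using Lemma~\ref{lem:GKORS}. This requires $|\Sigma_1|=|C_0|=|\Sigma_0|^{R_0 n_0}$, so I would choose $n_0=\Theta(\log|\Sigma_1|/\log|\Sigma_0|)=m^{O_{\ell,\eps}(m)}$, matching $|\Sigma_1|$ to an exact power of $|\Sigma_0|$ by a harmless padding/re-blocking of the alphabets (a routine adjustment in AEL-type arguments). One must also check $n_0\ge d$, where $d=\poly(1/\delta_1,1/\alpha_1,1/\gamma)=\poly_{\ell,\eps}(m)$; this is immediate since $n_0$ is exponentially large in $m$ while $d$ is only polynomial in $m$. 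Lemma~\ref{lem:GKORS} then produces a code $C\subseteq(\Sigma_0^{n_0})^{n_1}$ of rate $R_0R_1\ge(1-\Theta(\eps))(1-\eps_1)\ge 1-3\eps$ (by the choice of $\eps_0,\eps_1$) that is $(t',\alpha_0-\gamma,\ell,L_1)$-locally-list-recoverable with $t'=t_1\cdot n_0^2\log n_0$. Since $\alpha_0-\gamma=\eps$, this gives exactly the claimed $(t,\eps,\ell,L)$-local-list-recoverability with $L=L_1=m^{O_{\ell,\eps}(m^2)}$, block length $N=n_1=m^{O_{\ell,\eps}(m^4)}$, alphabet size $|\Sigma|=|\Sigma_0|^{n_0}=2^{O_{\ell,\eps}(m^{2m+2}\log m)}=m^{O_{\ell,\eps}(m^{2m+2})}$ (of the same order as $|\Sigma_1|$), and $t'=m^{O_{\ell,\eps}(m^3)}\cdot m^{O_{\ell,\eps}(m)}=m^{O_{\ell,\eps}(m^3)}$. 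Explicitness and the running-time bounds follow by bookkeeping: $C_0$ is built in $\poly_{\ell,\eps}(n_0)=\poly(N)$ time, $C_1$ in $\poly(n_1)=\poly(N)$ time, Lemma~\ref{lem:GKORS} assembles $C$ in the sum of these, and the local-list-recovery running time it guarantees, $O(\poly(t_1))+O(t_1\cdot\poly(n_0))+\poly(t_1,n_0,\ell)$, is $\poly(t')=\poly(t)$.

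The only genuinely delicate part is the bookkeeping around Lemma~\ref{lem:GKORS}: matching $|\Sigma_1|$ to $|C_0|$ exactly — choosing $n_0$, the prime $q$ of the multiplicity code, and the alphabet dimensions so that everything lines up, possibly after padding — and then confirming that the two blow-ups introduced by amplification, the factor $n_0^2\log n_0$ in the query complexity and the exponentiation $|\Sigma_0|^{n_0}$ in the alphabet size, are both dominated by quantities already present in $C_1$, so that $t$ stays $m^{O(m^3)}$ and $|\Sigma|$ stays $m^{O(m^{2m+2})}$. A secondary subtlety is that, unlike the clean constant-parameter hypotheses of Lemma~\ref{lem:GKORS}, here $\delta_1$ and $\alpha_1$ decay polynomially in the growing parameter $m$, so one must track this dependence and verify $n_0\ge d=\poly_{\ell,\eps}(m)$ (immediate, as noted). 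Everything else is a direct composition of Theorem~\ref{thm:main-multimult}, Corollary~\ref{cor:FRS_AEL}, and Lemma~\ref{lem:GKORS}.
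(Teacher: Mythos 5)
Your proposal is correct and follows essentially the same route as the paper: it instantiates Lemma~\ref{lem:GKORS} with the multivariate multiplicity code of Theorem~\ref{thm:main-multimult} as the outer locally-list-recoverable code and the constant-alphabet FRS-based code of Corollary~\ref{cor:FRS_AEL} as the inner globally-list-recoverable code, and verifies the degree hypothesis $n_0 \ge d$ by observing that $n_0 = m^{O_{\ell,\eps}(m)}$ dwarfs $d = \poly_{\ell,\eps}(m)$ once $m$ is large. The only difference is cosmetic (you split $\eps$ into three small fractions and leave $R_0$ flexible, whereas the paper pins $R_0 = 1-2\eps$); the parameter tracking matches the paper's, and in fact your alphabet-size calculation $|\Sigma| = |\Sigma_0|^{n_0} = m^{O_{\ell,\eps}(m^{2m+2})}$ is the one consistent with the corollary's statement.
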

\begin{proof}
We instantiate Lemma~\ref{lem:GKORS} using the code from Corollary~\ref{cor:FRS_AEL} with rate $1- 2\eps$ as $C_0$ and a multivariate multiplicity code, from Theorem~\ref{thm:main-multimult}, as $C_1$.
Thus, $C_0 \subseteq \Sigma_0^{n_0}$ has rate $R_0 = 1 - 2\eps$ and is $(1 - \eps, \ell, \ell_1)$-globally list-recoverable in time $\poly_{\ell, \eps}(n_0)$, with $\ell_1 = O_{\ell, \eps}(1)$.
Meanwhile, $C_1 \subseteq \Sigma_1^{n_1}$ is $(\alpha, \ell_1, L)$-list-recoverable with
\[ \alpha = O_{\ell, \eps}(1/m^2) , \]
\[ L = m^{O_{\ell_1, \eps}(m^2)} = m^{O_{\ell,\eps}(m^2)},\]
\[ N = m^{O_{\ell_1, \eps}(m^4)} = m^{O_{\ell, \eps}(m^4)},\]
\[ t = m^{O_{\ell_1, \eps}(m^3)} = m^{O_{\ell, \eps}(m^3)}, \]
\[ |\Sigma| = m^{O_{\ell_1, \eps}(m^{2m + 2})},  = m^{O_{\ell, \eps}(m^{2m + 2})},\]
where we have used the fact that $\ell_1 = O_{\ell, \eps}(1)$ to turn $O_{\ell_1,\eps}(\cdot)$ into $O_{\ell, \eps}(\cdot)$.
The fact that $|C_0| = |\Sigma_1|$ thus implies that
\[ n_0 = O(\log|\Sigma_1|) = O_{\ell, \eps}( m^{2m + 2}\log(m) ). \]
Observe that $n_0$ is much larger than the $\poly( 1/\alpha, 1/\eps, 1/\delta_1 ) = \poly_{\ell, \eps}(m)$ that is require by Lemma~\ref{lem:GKORS}, for sufficiently large $m$.
Now we check the conclusions.  We immediately have the desired expressions for $N = n_1$ and $L$.  The final alphabet size is $|\Sigma| = |\Sigma_1|^{1/R_0} = \poly(|\Sigma_1|) = m^{O_{\ell, \eps}(m^3)}$, as before.  The query complexity is
\[ t' = t \cdot n_0^2 \log(n_0) = m^{O_{\ell, \eps}(m^3)} \cdot m^{O_{\ell, \eps}(m)} = m^{O_{\ell, \eps}(m^3)}. \]
The dominating term in the local list-recovery time is $\poly(t)$, so the running time is still $\poly(t)$.  And finally the time to construct a generator matrix for the inner code is $\poly(n_0) = \poly(N)$.
Thus we may treat $C$ as a deterministic code whose encoding map performs the search for $C_0$ in polynomial time and then encodes the message in polynomial time.
\end{proof}

Finally we are ready to prove Theorem~\ref{thm:multimult-smallalpha}.
\begin{proof}[Proof of Theorem~\ref{thm:multimult-smallalpha}]
Again we will use Theorem~\ref{thm:alphabet-reduction}.

Let $C_0 \subseteq \Sigma_0^{n_0}$ be the code from Corollary~\ref{cor:FRS_AEL}, with rate $R + 4\eps$.  Thus, we have
\[ |\Sigma_0| = (1 + \ell)^{O(1/\eps^6)},\]
and $C_0$ is $(1 - R - 5\eps, \ell, \ell_1)$-list-recoverable in time $\poly_{\ell, \eps}(n_0)$, for $\ell_1 = O_{\ell, \eps}(1)$.  Moreover, $C_0$ can be constructed in time $n_0^{O_{\ell,\eps}(\log(n_0))}$.

We choose $C_1 \subseteq \Sigma_1^{n_1}$ to be the code from Corollary~\ref{cor:multimult-AEL}, so that $C_1$ is a code of rate $(1 - 3\eps)$, which is $(t, \eps, \ell_1, L)$-list-recoverable, for
\[ L = m^{O_{\ell, \eps}(m^2)} \]
\[ n_1 = m^{O_{\ell, \eps}(m^4)} \]
\[ t = m^{O_{\ell, \eps}(m^3)} \]
\[ |\Sigma_1| = m^{O_{\ell, \eps}(m^{2m + 2})}\]
where as in the proof of Corollary~\ref{cor:multimult-AEL}, above we have used the fact that
$ \ell_1 = O_{\ell, \eps}(1) $ 
to hide dependence on $\ell_1$ in the notation $O_{\ell, \eps}(\cdot)$.  

Now we apply Theorem~\ref{thm:alphabet-reduction}, which concludes that there exists a code $C \subset \Sigma^N$ of rate $(1 - 3 \eps)\cdot (R + 4\eps) \geq R$ which is $(t', 1 - R - 9\eps, \ell, L)$-locally list-recoverable in time
\[ O_{\ell,\eps}( T(C_0) + T(C_1) ) = \poly_{\ell,\eps}(n_0) + m^{O_{\ell,\eps}(m^3)} =  m^{O_{\ell,\eps}(m)} + m^{O_{\ell, \eps}(m^3)} = m^{O_{\ell,\eps}(m^3)} = \poly(t), \]
where
\[ t'= O_{\ell, \eps}( t \log|\Sigma_1| ) = m^{O_{\ell, \eps}(m^3)}. \]
Moreover, we have 
\[ N = O_{\ell, \eps}(n_1 \cdot \log|\Sigma_1|) = m^{O_{\ell,\eps}(m^3)} \]
and
\[ |\Sigma| = |\Sigma_0|^{O(1/\eps^4)} = (1 + \ell)^{O(1/\eps^{10})} = O_{\ell, \eps}(1). \]

Finally, to deterministically encode a message in $C$, we run the construction algorithm for $C_0$ (in time $n_0^{O(\log(n_0))} = m^{O(m^2 \log(m))} = \poly(N)$), and then use the polynomial-time encoding algorithm for $C_1$ which exists from Corollary~\ref{cor:multimult-AEL}.

Now applying the proof above with $\eps/9$ instead of $\eps$ gives the theorem statement.
\end{proof}

Finally, we prove Theorem~\ref{thm:multimult-smallalpha2}.
\begin{proof}[Proof of Theorem~\ref{thm:multimult-smallalpha2}]
Choose $\delta, \tau, \eps > 0$ constant and sufficiently small, and $\ell > 0$ constant.
Let $C_0 \subseteq \Sigma_0^{n_0}$ be a code from Corollary~\ref{cor:FRS_AEL}, so that the rate of $C_0$ is $R + 2\eps$,  $|\Sigma_0| = (1 + \ell)^{O(1/\eps^6)} = O_{\eps, \ell}(1)$, and so that $C_0$ is $(1 - R - 3\eps, \ell, \ell_1)$-list-recoverable in time $\poly_{\ell, \eps}(n_0)$.

Now choose $C_1 \subseteq \Sigma_1^{n_1}$ which is $(t, \alpha, \ell_1, L)$-locally list-recoverable in time $\poly(t)$ for $\alpha = \Omega_{\eps, \ell, \tau, \ell}(1)$, so that $C_1$ has rate $1 - \eps$, query complexity $t = O_{\ell, \delta, \tau, \eps}(n_1^{\tau/2})$, $L = O_{\ell, \delta, \tau, \eps}(1)$, and $|\Sigma_1| = \poly_{\ell, \delta, \tau, \eps}(N)$.  Such a code exists by Theorem~\ref{thm:main-multimult2} (where we have used the fact that $\ell_1 = O_{\eps, \ell}(1)$).

Now we apply Theorem~\ref{thm:alphabet-reduction}, and conclude that there exists a code $C \subseteq \Sigma^N$, so that
the rate of $C$ is at least $(1 - \eps)(R + 2\eps) \geq R$ and so that 
\[ |\Sigma| = |\Sigma_0|^{O(1/(\eps^3 \alpha))} = O_{\ell, \delta, \tau,\eps}(1), \]
so that $C$ is
$(t', 1 -R - 7\eps, \ell, L)$-list-recoverable for
\[ t' = O\left( \frac{ t \log|\Sigma_1| }{ (R + 2\eps) \log|\Sigma_0| } \right) = O_{\ell, \delta, \tau, \eps}( N^{\tau/2} \log(N) ) = O_{\ell, \delta, \tau, \eps}(N^\tau), \]
in time 
\[ O_{\ell, \delta, \tau, \eps}(T(C_0) + T'(C_1)) = O_{\ell, \delta, \tau, \eps}( N^{O(\tau)} ). \]
Moreover, $C$ has rate $(1 - \eps)(R + 2\eps) \geq R$.  The stated result follows by replacing $\eps$ with $\eps/7$ in the above analysis.
\end{proof}

\section{Conclusion}
We have shown that folded Reed-Solomon codes and multiplicity codes perform better than previously known in the context of (local) list-recovery.  In addition to improving our knowledge about these codes, our results also lead to new and improved constructions of locally-list-recoverable codes.
However, there is still much left to do, and we conclude with some open questions.

\begin{enumerate}
\item Theorem~\ref{thm:frs-list-rec-const-list} shows that the list size for folded Reed-Solomon codes is $(\ell/\eps)^{O\left( \frac{1}{\eps} \log(\ell/\eps) \right)}$.  However, it is known that it is possible for codes to achieve a list size of $O(\ell/\eps)$.  It would be very interesting to strengthen our result to this bound, or even to reduce the list size to $\poly(\ell, 1/\eps)$.
\item It would be very interesting to improve the list size in Theorem~\ref{thm:full-unimult-list-rec} on univariate multiplicity codes with large $d$ to be $s^{O_{\ell,1/\eps}(1)}$, rather than the current bound of $\ell^{O(s\log(s))} \cdot s^{O(1)}$.  Beyond intrinsic interest, such an improvement would lead to an improvement in the query complexity of local list-recovery of multivariate multiplicity codes.
\item The algorithm given in Theorem~\ref{thm:frs-list-rec-const-list} is a randomized algorithm.  It is a very interesting open problem to design a deterministic list-decoding algorithm for folded RS codes with fixed polynomial running time that works up to list-decoding capacity.
\item We give a construction of a high-rate locally list-recoverable code with sub-polynomial query complexity.  But we do not know if this is the best we could do; for example, could one get away with polylogarithmic query complexity in the same setting?  Any lower bounds would be extremely interesting.
\end{enumerate}

\section*{Acknowledgements}
We would like to thank Atri Rudra and Venkatesan Guruswami for helpful discussions.

\bibliographystyle{alpha}
\bibliography{refs}

\appendix

\section{A Wronskian lemma}\label{app:wronskian}
In this section, we prove a lemma that shows that certain Wronskian determinants needed in Section~\ref{sec:Unimult} are nonzero.

\begin{lemma}
\label{lem:wronskmodq}
Suppose $q$ is prime.
Suppose $f_1, \ldots, f_t \in \F_q[X]$ are such that
$\deg(f_i) \not\equiv \deg(f_j) \mod q$ for $i \neq j$.
Then $W(f_1, \ldots, f_t)$ is nonsingular.
\end{lemma}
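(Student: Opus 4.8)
The plan is to prove nonsingularity by showing that the polynomial $\det W(f_1,\ldots,f_t)$ has a nonzero leading coefficient. Write $d_j=\deg(f_j)$ and let $c_j$ be the leading coefficient of $f_j$; we may assume each $f_j\neq 0$, since otherwise the hypothesis on the degrees is vacuous. The starting observation is that for the $i$-th Hasse derivative one has $\deg(f_j^{(i)})\le d_j-i$, and the coefficient of $X^{d_j-i}$ in $f_j^{(i)}$ is exactly $c_j\binom{d_j}{i}$: the lower-order terms of $f_j$ contribute to $f_j^{(i)}$ only monomials of degree strictly below $d_j-i$, so they are irrelevant to this top coefficient.

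Next I would expand the determinant. Every term is, up to sign, a product $\prod_{j=1}^t f_j^{(\tau(j))}$ over a bijection $\tau\colon\{1,\ldots,t\}\to\{0,\ldots,t-1\}$, and such a product has degree at most $D:=\sum_{j} d_j-\binom{t}{2}$. Hence $\deg\det W\le D$, and collecting the coefficient of $X^D$ (the permutation signs carry over unchanged) yields
\[ [X^D]\,\det W(f_1,\ldots,f_t)=\Big(\prod_{j=1}^t c_j\Big)\cdot\det\!\Big(\tbinom{d_j}{i}\Big)_{0\le i\le t-1,\ 1\le j\le t}. \]
Since $\prod_j c_j\neq 0$, it remains only to show this binomial-coefficient determinant is nonzero in $\F_q$.

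For this, I would use that the residues $r_j:=d_j\bmod q$ are pairwise distinct — this is precisely the hypothesis — so by pigeonhole $t\le q$, and every row index $i\le t-1$ satisfies $i<q$. By Lucas' theorem (valid since $q$ is prime) $\binom{d_j}{i}\equiv\binom{r_j}{i}\pmod q$, so it suffices to show $\det\big(\binom{r_j}{i}\big)_{0\le i\le t-1,\,1\le j\le t}\not\equiv 0\pmod q$. Over $\mathbb{Z}$ this is the classical evaluation
\[ \det\!\Big(\tbinom{r_j}{i}\Big)_{0\le i\le t-1,\ 1\le j\le t}=\frac{\prod_{1\le j<k\le t}(r_k-r_j)}{\prod_{i=0}^{t-1}i!}, \]
which follows by rewriting each row $(\binom{r_1}{i},\ldots,\binom{r_t}{i})$ as $\tfrac1{i!}$ times the monomial row $(r_1^i,\ldots,r_t^i)$ plus lower monomial rows, reducing the matrix to a Vandermonde. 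Reducing this identity mod $q$: the denominator $\prod_{i=0}^{t-1}i!$ is a product of integers less than $q$, hence a unit of $\F_q$; and since $0\le r_j<q$ are distinct, each difference $r_k-r_j$ has absolute value $<q$ and is nonzero, hence is a nonzero element of $\F_q$. So the right-hand side is a nonzero element of $\F_q$, whence $[X^D]\det W\neq 0$ and $W(f_1,\ldots,f_t)$ is nonsingular.

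The only delicate points to watch are (i) confirming that Hasse derivatives of the low-order part of each $f_j$ cannot disturb the $X^D$-coefficient (handled by the degree bound above), and (ii) tracking which identities live over $\mathbb{Z}$ versus over $\F_q$. In particular, the inequality $t\le q$ forced by the hypothesis is exactly what makes both the Lucas reduction and the invertibility of the factorials $i!$ (for $i\le t-1$) legitimate; beyond this bookkeeping I do not expect a substantive obstacle.
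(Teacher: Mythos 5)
Your proof is correct and follows essentially the same approach as the paper's: bound $\deg\det W$ by $D=\sum_j d_j - \binom t2$, identify the coefficient of $X^D$ as a binomial-coefficient determinant, and observe that this is a Vandermonde-type determinant that is nonzero in $\F_q$ thanks to the distinct-residues hypothesis. The only difference is that you insert a Lucas-theorem reduction $\binom{d_j}{i}\equiv\binom{r_j}{i}\pmod q$ before applying the Vandermonde identity, whereas the paper applies the identity $\det\bigl(\binom{d_j}{i}\bigr)=\prod_{j<k}(d_k-d_j)/\prod_i i!$ directly to the $d_j$; both are valid since $t\le q$ keeps the factorials invertible.
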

\begin{proof}
Let $d_j = \deg(f_j)$.

For $0 \leq i \leq t-1$ and $j \in [t]$,
the $(i,j)$ entry of the Wronskian matrix
is $f_j^{(i)}(X)$, whose leading term is ${d_j \choose i} X^{d_j - i}$.

Thus the determinant of the Wronskian matrix has degree
at most $D = \sum_{j=1}^{t} d_j - \sum_{i=0}^{t-1} i$.
Furthermore, the coefficient of $X^{D}$ in the Wronskian determinant
equals the determinant of the matrix whose $(i,j)$ entry (for $0 \leq i < t$, $1 \leq j \leq t$)
equals ${d_j \choose i}$. This latter determinant is essentially
a Vandermonde determinant, and by our hypotheses on the $d_j$, is nonzero in $\F_q$.
Thus the determinant of the Wronskian matrix is a polynomial of degree
exactly $D$, and in particular is nonzero.
\end{proof}

\section{Proof of Theorem~\ref{thm:linlistdec-mult}}\label{app:pflinlistdec-mult}

Here we prove Theorem~\ref{thm:linlistdec-mult}, which adapts a theorem of~\cite{GW13} to our setting.
\begin{proof}[Proof of Theorem~\ref{thm:linlistdec-mult}]
We begin by giving the algorithm.

\medskip
\putinbox{
\textbf{Algorithm} \textsf{FindPolys}.
\begin{itemize}
\item \textbf{INPUT:} A parameter $r$, and access to $S:\F_q \to { \F_q^s \choose \ell }$
\item \textbf{OUTPUT:} An affine subspace $v_0 + V$ that contains all polynomials that are close to $S$. 
\end{itemize}
\begin{enumerate}
\item Set $D = (s-r+1)(1-\alpha)q - 1$.
\item By solving a linear system of equations over $\F_q$,
find a nonzero $(A(X), B_0(X), \ldots, B_{r}(X)) \in (\F_q[X])^{r+1}$
such that:
\begin{enumerate}
\item $\deg(A) \leq D$, and for all $i$, $\deg(B_i) \leq D-d$.
\item For each $\lambda$ with $0 \leq \lambda \leq s-r$, for each $x \in \F_q$, for each
$y \in S(x)$:
$$ A(x) + \sum_{i=0}^{r-1} \sum_{j=0}^{\lambda} { i + j \choose i} y^{(i+j)} B^{(\lambda - j)}(x) = 0.$$
\end{enumerate}
\item Let $ v_0 + V $ be the affine space
$$ v_0 + V = \{ f(X) \mid A(X) + \sum_{i=0}^{r-1} f^{(i)}(X) B_i(X) = 0\}.$$
\item Output $v_0 + V$.
\end{enumerate}
}
We need to show:
\begin{enumerate}
\item The linear system has a nonzero solution,
\item $\mathcal L \subseteq v_0 + V$,
\end{enumerate}

To see that the linear system has a nonzero solution, we
show that the homogeneous system of linear equations
in Step 2 of the algorithm has more variables than constraints. 
The total number of free coefficients in 
$A(X), B_0(X), \ldots, B_r(X)$ equals:
\begin{align*}
(D+1) + r(D-d+1) &= (D+1)(r+1) - d\cdot r\\
&= (s-r+1)(1-\alpha)q (r+1) - d r\\
&>  (s-r+1)\left(\frac{\ell}{r+1} + \frac{r}{r+1} \frac{d}{(s-r+1)q}\right)q (r+1) - d r\\
&= (s-r+1) \ell q + dr - dr\\
&= (s-r+1) \ell q.
\end{align*}
The total number of constraints equals:
$$q \cdot (s-r+1) \cdot \ell.$$
By choice of $D$, the number of free coefficients is larger
than the number of constraints. This proves that the algorithm
can find a nonzero solution in Step 2.

Now take any $g(X) \in \mathcal L$. We will show that $g(X)$
is an element of the affine space $v_0 + V$ that is output by the algorithm.

Define $Q(X) = A(X) + \sum_{i=0}^{r-1} g^{(i)}(X) B_i(X)$.
Observe that $\deg(Q) \leq D$. 

Now take any $x \in \F_q$ and $y \in S(x)$ for which
\begin{align}
\label{eq:gagreer}
g^{(<s)}(x) =  y.
\end{align}
Let $\lambda$ be an integer with $0 \leq \lambda \leq s-r$. 
Then by the chain rule for Hasse derivatives:
\begin{align*}
Q^{(\lambda)}(x) &= A^{(\lambda)}(x) + \sum_{i=0}^{r-1} \left(g^{(i)} \cdot B_i\right)^{(\lambda)}(x) \\
&= A^{(\lambda)}(x) + \sum_{i=0}^{r-1} \sum_{j=0}^{\lambda} \left(g^{(i)}\right)^{(j)}(x) B^{(\lambda - j)}(x) \\
&= A^{(\lambda)}(x) + \sum_{i=0}^{r-1} \sum_{j=0}^{\lambda} { i + j \choose i} g^{(i+j)}(x) B^{(\lambda - j)}(x) \\
&= A^{(\lambda)}(x) + \sum_{i=0}^{r-1} \sum_{j=0}^{\lambda} { i + j \choose i} y^{(i+j)} B^{(\lambda - j)}(x) \\
&= 0
\end{align*}
Since this holds for every $\lambda$ with $0 \leq \lambda \leq s-r$, we get that:
$$\mult(Q, x) \geq s-r+1.$$

By assumption on $g$, there are at least $(1-\alpha) q$ values of $x\in \F_q$ such that
there exists some $y \in S(x)$ for which Equation~\eqref{eq:gagreer} holds.
Thus there are at least $(1-\alpha) q$ points where $Q$ vanishes with multiplicity at least
$s-r+1$. Since $\deg(Q) \leq D < (s-r+1)(1-\alpha)q$, we conclude that $Q(X) = 0$.

By definition of $Q(X)$ and $v_0 + V$, this
implies that $g(X) \in v_0 + V$, as desired.
\end{proof}

\section{Coordinate restrictions of subspaces}

In this section, we discuss the relationship between the list-recovery results
for  Folded Reed-Solomon codes and univariate multiplicity codes, as well as
the ideas that go into their proofs.

First we point out that we could use Lemma~\ref{lem:general-subspace-design} to
analyze Algorithm $\ALGPRUNE$ and Algorithm $\ALGPRUNEMULT$.
 This can be used in place of Theorem~\ref{thm:subspace-design} and Theorem~\ref{thm:cool-subspace-design}, and would give a proof of the
list recoverability of Folded Reed-Solomon and univariate multiplicity codes {\em from very small error}.
However, this approach is not able to reproduce the capacity achieving list-decodability in Theorem~\ref{thm:frs-list-rec-const-list}
and Theorem~\ref{thm:unimult-list-rec-const-list-smalld}, and only gives a quantiatively weaker version of 
Theorem~\ref{thm:full-unimult-list-rec}\footnote{The version proved using
Lemma~\ref{lem:general-subspace-design} requires $\alpha < \frac{1}{s^2}$,
and gives an output list-size of $\ell^{O(s^2 \log s)}$. This
in turn would be sufficient to give a
local list-recovery algorithm
for length $N$ multivariate multiplicity codes with query complexity
$\exp(\log^{5/6}(N))$.}.

The statement of Theorem~\ref{thm:full-unimult-list-rec} on list-recovery of whole field univariate multiplicity codes is
noticeably weaker than the statement of Theorem~\ref{thm:frs-list-rec-const-list} on list-recovery of
Folded Reed-Solomon codes: the former only gives list-recoverability in the presence of very few errors. The
proof of the former is also noticeably more involved.
Inspecting the components of the proofs, we see that this difference
arises from the significantly different quantitative natures
of the analyses of algorithm $\ALGPRUNE$ and $\ALGPRUNEMULT$.

The following example shows that this difference is not
just an artifact of the analysis: there are instances where
the algorithm $\ALGPRUNEMULT$ (which is exactly analogous
to the algorithm $\ALGPRUNE$) really requires the error fraction
to be very small, and produces an output list size
which is exponentially large in $s$.
\begin{example}
Let $V \subseteq \F_q[X]$
be given by:
$$ V = \{ \sum_{i < d/q} a_i X^{iq} \mid a_i \in \F_q \}.$$

Let $\tau$ be any integer $< d/q - 1$,
and let $b_1, \ldots, b_\tau \in \F_q$ be distinct.
Then:
$$ V \cap \bigcap_{j = 1}^{\tau} H_{b_j} = \left\{ \prod_{j=1}^{\tau} (X - b_j)^{q} \cdot \left(\sum_{i< d/q-\tau} c_i X^{iq}\right) \mid c_i \in \F_q \right\}.$$
In particular:
$$ \dim (V \cap H_{b_1} \cap H_{b_2} \cap \ldots \cap H_{b_\tau}) = \dim(V) - \tau \geq 1.$$
This means that when we run the algorithm $\ALGPRUNEMULT$ on $V$ as input,
the step where we search for $P(X) \in v_0 + V$ will NEVER find a unique solution.

Thus for the algorithm $\ALGPRUNEMULT$ to succeed with positive probability
we must have $\tau \geq d/q$. For the constant rate setting, this means that $\tau = \Omega(s)$.
With $\tau = \Omega(s)$, the success probability of $\ALGPRUNEMULT$ is at most $(1- \alpha)^{\Omega(s)}$,
and the output list-size is at least $\ell^{\Omega(s)}$.
Thus the analysis of the algorithm $\ALGPRUNEMULT$ in Lemma~\ref{lem:algprune-mult} cannot be improved.
\end{example}

\section{List recovering Reed-Muller codes on product sets}
\label{app:rm}
In this section, we prove Lemma~\ref{lem:RM-vectors} about list-recovery of Reed-Muller codes that we need for $\RecCand$ and Lemma~\ref{lem:recovercandidates}.

%
%

We will prove Lemma~\ref{lem:RM-vectors} by reducing the case of tuples of polynomials to the case of a single polynomial over a large field.  We go through the details in the next two subsections.

\subsection{Replacing vector values with big field values}
\label{subsec:vector-to-field}

Let $\F$ be a finite field, and let $\K$ be the degree $t$ field
extension of $\F$.
Let $\phi : \F^t \to \K$ be an arbitrary $\F$-linear bijection.
Let $U \subseteq \F$ (and thus $U \subseteq \K$).

Then to every function $f: U^m \to \F^t$, we can associate a
function $\tilde{f}: U^m \to \K$, where $\tilde{f} = \phi \circ f$. 
This identifies the underlying Hamming metric spaces.
The key observation is that under this identification, 
$f$ is the evaluation table of a tuple of $t$ polynomials in $\F[X_1, \ldots, X_m]$ of degree $\leq d$ if and only if $\tilde{f}: U^m \to \K$ is the evaluation table
of a degree $\leq d$ polynomial in $\K[X_1, \ldots, X_m]$.
Thus questions about decoding (list-decoding, list-recovering) vector valued polynomial codes
reduce to questions about decoding (list-decoding, list-recovering) scalar valued polynomial codes
over larger fields.

Through this connection, Lemma~\ref{lem:RM-vectors} is a consequence of
the following lemma (and the fact that $\K$ can be constructed in randomized 
$\poly(\log |\K|)$ time).

\begin{lemma}[Reed-Muller list recovery on a grid]
\label{lem:RMlist}
Let $\ell, \tilds, K, m$ be given parameters. Let $\K$ be a finite field. Suppose that $U \subseteq \K$ and $|U| \geq 2 {\ell}\tilds K$.
Let $\alpha < 1 - \frac{1}{\sqrt{K}}$ be a parameter.

Then for every $f: U^m \to {\K \choose \ell}$,
if 
$$ \calL = \{ Q(Y_1, \ldots, Y_m) \in \K[Y_1, \ldots, Y_m] \mid \deg(Q) \leq \tilds \mbox{ and } \Pr_{\bu \in U^m}[Q(\bu) \not\in f(\bu)] < \alpha \},$$
we have:
\begin{enumerate}
\item  $|\calL| \leq 2K\ell.$

\item Suppose further that $K > m^2$ and  $\alpha \leq 1- \frac{m}{\sqrt{K}}$, then  there is a $\poly(|U|^m, \log{|K|})$ time algorithm to compute $\calL$. 
\end{enumerate}
\end{lemma}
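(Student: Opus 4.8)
The statement to prove is Lemma~\ref{lem:RMlist} (Reed-Muller list recovery on a grid over a large field $\K$), from which Lemma~\ref{lem:RM-vectors} follows by the field-extension identification described in Section~\ref{subsec:vector-to-field}.

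\textbf{Overall plan.} The plan is to prove the two parts essentially independently: Part 1 (the combinatorial list-size bound) via a Johnson-type / polynomial-method argument on the grid $U^m$, and Part 2 (the algorithm) by a multivariate extension of the Guruswami--Sudan interpolation-and-factorization approach specialized to product sets. Throughout, the key quantitative inputs are that $|U| \geq 2\ell\tilds K$ is much larger than the degree $\tilds$, and that the agreement fraction $1-\alpha$ exceeds $1/\sqrt{K}$ (resp.\ $m/\sqrt{K}$), so that a Schwartz--Zippel-type count can be beaten.

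\textbf{Part 1 (list size $\le 2K\ell$).} First I would reduce to the univariate intuition by noting that a polynomial $Q$ with $\deg(Q)\le\tilds$ is determined by its values on $U^m$ since $|U| > \tilds$. For the counting bound, I would set up a weighting/interpolation argument: suppose $Q_1,\dots,Q_N \in \calL$ are distinct. Consider the "agreement set'' $A_i = \{\bu \in U^m : Q_i(\bu) \in f(\bu)\}$, each of density $> 1-\alpha \geq 1/\sqrt K$ (resp.\ large). A clean way is the second-moment / inclusion argument à la Johnson bound: count pairs $(\bu, (i,j))$ with $i<j$ and $Q_i(\bu)=Q_j(\bu)$ using the fact that two distinct degree-$\tilds$ polynomials agree on at most a $\tilds/|U|$ fraction of $U^m$ (Schwartz--Zippel on the grid), while the total list-membership count forces many collisions once $N$ is large. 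Concretely, summing $|A_i|$ over $i$ and using that each $\bu$ contributes at most $\ell$ to $\sum_i \mathbf 1[Q_i(\bu)\in f(\bu)]$ (since $|f(\bu)|\le\ell$), a Cauchy--Schwarz/double-counting gives that the number of colliding pairs is $\Omega(N^2/\ell - N)\cdot |U|^m$ worth of "mass,'' which must be at most $\binom N2 \cdot (\tilds/|U|)\cdot|U|^m$; plugging in $|U|\geq 2\ell\tilds K$ and $1-\alpha > 1/\sqrt K$ yields $N \leq 2K\ell$. I would double-check the exact constant by being careful whether the right comparison uses $\sqrt K$ or $K$; the hypothesis $\alpha < 1-1/\sqrt K$ strongly suggests the Johnson-bound form where the agreement squared ($>1/K$) is compared against $\tilds/|U| \le 1/(2\ell K)$.

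\textbf{Part 2 (algorithm).} The plan here is the standard interpolate-then-root-find method. Given $f: U^m \to \binom{\K}{\ell}$, find a nonzero polynomial $P(Y_1,\dots,Y_m, Z) \in \K[\bY,Z]$ of controlled degree — degree $< D$ in the $\bY$ variables (appropriately, a weighted/$(1,\tilds)$-degree) and degree $< E$ in $Z$ — such that $P(\bu, w) = 0$ for all $\bu \in U^m$ and all $w \in f(\bu)$. The number of such interpolation constraints is $\ell |U|^m$, and the number of monomials is roughly $\binom{D+m}{m}\cdot E$ (or, restricting $\bY$-monomials to the box $[0,|U|)^m$, exactly $|U|^m \cdot E$ if we allow $\bY$-degree up to $|U|$ in each variable), so choosing $E > \ell$ guarantees a nonzero solution by linear algebra over $\K$ in time $\poly(|U|^m)$; using the product structure of $U^m$ one can even take $P$ of individual $Y_j$-degree $<|U|$ so the interpolation matrix is well-structured. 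Then, for each $Q \in \calL$: the univariate polynomial $g(\bY) := P(\bY, Q(\bY))$ has $(1,\tilds)$-weighted degree $< D + E\tilds$ and vanishes on every $\bu$ in the agreement set $A_Q$, which has size $> (1-\alpha)|U|^m \geq \frac{m}{\sqrt K}|U|^m$; by the multivariate Schwartz--Zippel lemma over product sets (a nonzero polynomial of total degree $\Delta$ vanishes on at most $\frac{\Delta}{|U|}|U|^m$ points of $U^m$, or more precisely the DeMillo--Lipton--Schwartz--Zippel bound $\Delta/|U|$ per the individual-degree version), if $D + E\tilds < (1-\alpha)|U|$ then $g \equiv 0$, i.e.\ $(Z - Q(\bY)) \mid P(\bY,Z)$. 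Choosing $D,E$ to satisfy both the "nonzero solution exists'' inequality ($E\cdot(\text{box count}) > \ell|U|^m$, e.g.\ $E = \ell+1$ with full box, or $D \approx |U|^m{}^{1/m}/\,\dots$) and $D + E\tilds < (1-\alpha)|U|$ is where I'd spend care: with $|U|\geq 2\ell\tilds K$ and $\alpha \le 1 - m/\sqrt K$ there is room, roughly taking $D = \Theta(\sqrt K \,\tilds)$ and $E = \Theta(\ell)$, or adapting the Sudan-style choice. Finally, recover $\calL$ by factoring $P(\bY,Z)$ as a polynomial in $Z$ over the field $\K(\bY)$ — equivalently, compute its roots in $\K[\bY]$ of total degree $\le \tilds$ — and test each candidate against $f$; the factorization of a multivariate polynomial over $\K$ is poly-time (cf.\ standard multivariate polynomial factorization, e.g.\ via Kaltofen), giving total running time $\poly(|U|^m, \log|\K|)$.

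\textbf{Main obstacle.} The main obstacle is the simultaneous satisfaction of the two competing degree inequalities in Part 2 — ensuring the interpolation system has a nonzero solution while keeping the combined $(1,\tilds)$-weighted degree small enough that agreement on a $(1-\alpha)$-fraction of the product set $U^m$ forces $P(\bY,Q(\bY))\equiv 0$. This is delicate because on a product set the relevant Schwartz--Zippel bound is per-variable ($\tilds/|U|$ per coordinate, total $m\tilds/|U|$ if using total degree, or $\Delta/|U|$ if one is careful to bound individual degrees), which is exactly why the hypothesis degrades from $1-1/\sqrt K$ in Part 1 to $1-m/\sqrt K$ in Part 2. Getting the constants so that $|U| \geq 2\ell\tilds K$ suffices — rather than, say, $|U| \geq \mathrm{poly}(m)\ell\tilds K$ — requires choosing the weighted-degree parameters $(D,E)$ optimally, and I would model this on the known parameter choices in multivariate Guruswami--Sudan / folded-RS list decoding, verifying the arithmetic rather than reinventing it.
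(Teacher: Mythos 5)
Part 1 you handle essentially as the paper does, by a Johnson-bound argument for list-recovery: the paper cites Lemma V.2 of \cite{GKORS17}, while you re-derive it via collision counting. That is fine in spirit, though be careful with the claim that each $\bu$ contributes at most $\ell$ to $\sum_i \mathbf{1}[Q_i(\bu)\in f(\bu)]$; several $Q_i$ can hit the same element of $f(\bu)$, so $\ell$ bounds the number of \emph{distinct} values hit and the double count must go through collision pairs via convexity. These are details; Part 1 is essentially right.

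Part 2 is where you diverge, and the route you propose does not work. You suggest a single $(m+1)$-variate interpolation: find $P(\bY,Z)\neq 0$ of $(1,\tilds)$-weighted degree $< D$ vanishing at all $\ell|U|^m$ interpolation points, then argue that $g(\bY)=P(\bY,Q(\bY))$, of degree $<D$, vanishes on the agreement set and hence is zero once $D<(1-\alpha)|U|$. The number of monomials of $(1,\tilds)$-weighted degree $<D$ is $\approx D^{m+1}/((m+1)!\,\tilds)$, so having more unknowns than constraints forces $D \gtrsim ((m+1)!\,\tilds\,\ell)^{1/(m+1)}|U|^{m/(m+1)}$; together with $D<(1-\alpha)|U|$ and $|U|\geq 2\ell\tilds K$ this demands
\[ 1-\alpha \gtrsim \left(\frac{(m+1)!}{2K}\right)^{1/(m+1)}. \]
For $m=1$ this is $\sim 1/\sqrt K$, matching the lemma, which is why Sudan's algorithm covers the base case. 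For $m\geq 2$, however, $((m+1)!)^{1/(m+1)}\approx (m+1)/e$ while $(2K)^{1/(m+1)}$ grows only like a small root of $K$, so the agreement required by this approach scales like $m\cdot K^{-1/(m+1)}$, which is much \emph{larger} than the target $m/\sqrt K$ once $K$ is large (in the paper's application $K = 100m^2$). Your concrete choice $D=\Theta(\sqrt K\tilds)$, $E=\Theta(\ell)$ gives $\Theta(\ell(\sqrt K\tilds)^m/m!)$ coefficients against $\geq \ell(2\ell\tilds K)^m$ constraints, a shortfall by a factor of order $m!(\ell\sqrt K)^m$, so no nonzero $P$ is guaranteed. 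This is not an artifact of the constants: one-shot $(m+1)$-variate Sudan interpolation for Reed--Muller codes over a product grid degrades with $m$ in exactly this way.

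The paper instead proves Part 2 by induction on $m$. For each $u\in U$, it recursively list-recovers the slice $f_u := f(\cdot,u)$ over $U^{m-1}$ to radius $1-(m-1)/\sqrt K$; by the combinatorial bound of Part 1 the resulting list $\calL_u$ has size at most $2K\ell$; then a single \emph{univariate} vector-valued Sudan list-recovery over $U$ is run on the coefficient vectors of the candidates, with input list size $2K\ell$ and radius $1-1/\sqrt K$. Each level of recursion loses only an additive $1/\sqrt K$, summing to the $m/\sqrt K$ in the hypothesis. The crucial point, which your plan to prove the two parts "essentially independently" misses, is that Part 1 is a \emph{subroutine} in the proof of Part 2: it bounds the intermediate list sizes so the univariate combining step has bounded inner lists. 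Making a direct multivariate interpolation work here would require a genuinely new idea, not just a different choice of $D$ and $E$.
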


\subsection{Proof of Lemma~\ref{lem:RMlist}}

\begin{proof}[Proof of Lemma~\ref{lem:RMlist}]
Item 1 will follow from the Johnson bound for list-recovery, which is a general statement implying
good list recoverability for codes with large distance. Specifically, Lemma V.2 in \cite{GKORS17} (see also Corollary 3.7 in \cite{Guruswami-Thesis})
states that a code of distance $\delta$ is $(\alpha, \ell, L)$ list-recoverable for $\alpha > 1 - \sqrt{\ell(1-\delta)}$
and $L = \frac{\ell}{(1-\alpha)^2 - \ell (1-\delta)}$. In our setting, the code of polynomials of degree at most $\tilds$ on 
$U^m$ has distance $\delta$ at least $1 - \frac{\tilds}{|U|} \geq 1 - \frac{1}{2K\ell}$. 
Thus for $\alpha < 1 - \frac{1}{\sqrt{K}}$, the Johnson bound for list-recovery implies that
we can take $L = 2K\ell$, as desired.

We prove Item 2 by induction on $m$.
The $m=1$ case is simply the Sudan list-recovery algorithm~\cite{Sudan} for Reed-Solomon codes, which works with the claimed parameters (since the total number of points $n = \ell \cdot |U|$, the number of agreement points
$A$ is at least $(1-\alpha) n \geq \frac{n}{\sqrt{K}}$, and so
$A \geq 2\sqrt{n \tilds}$, which is the requirement for the Sudan algorithm to work).

For general $m$, we first do list-recovery on $m-1$ dimensional grids, and then combine
the results using list-recovery for vector-valued univariate polynomials. We crucially use the previous combinatorial bound
on the list size to ensure that the intermediate list size is under control (as the recursion unfolds).

More concretely, the algorithm proceeds as follows:
\begin{enumerate}
\item First, for each setting of $u \in U$, we consider
the received word $f_u : U^{m-1} \to {\K \choose \ell}$,
given by $f_u(\by) = f(\by, u)$.

Now list-recover $f_u$ to radius $\beta = \left(1- \frac{m-1}{\sqrt{K}}\right)$ to find the
set of nearby $m-1$-variate polynomials $\calL_u$. By the previous combinatorial bound,
we can assume (after confirming that all elements of $\calL_u$ are indeed close to $f_u$) 
that:
$$|\calL_u| \leq 2 K \ell.$$

\item Next we combine all $\calL_u$. Let $M_1(Y_1, \ldots, Y_{m-1}), \ldots, M_t(Y_1, \ldots, Y_{m-1})$ be all the $(m-1)$-variate monomials of total
degree at most $\tilds$. Define a function $$g: U \to {\K^{t} \choose 2K\ell }$$
as follows: for each $u \in U$ and each element $P$ of $\calL_u$, include the vector of coefficients of
$P$ into $g(u)$.

Then, using a vector-valued Sudan list-recovery algorithm for univariate polynomials (obtained from
the standard scalar-valued Sudan list-recovery algorithm via the connection in Section~\ref{subsec:vector-to-field}),
we find all tuples of univariate polynomials $\mathbf P(Z) = (P_1(Z), \ldots, P_t(Z)) \in (\K[Z])^t$ such that:
$$ \Pr_{u \in U} [ \mathbf P(u) \not\in g(u) ] < \gamma,$$
where $\gamma = 1 - \frac{1}{\sqrt{K}}$.
\item For each $\mathbf P(Z) \in (\K[Z])^t$ found in the previous step,
we construct the polynomial:
$$R(Y_1, \ldots, Y_{m-1}, Y_m) =  \sum_{i=1}^t M_i(Y_1, \ldots, Y_{m-1}) P_i(Y_m).$$
If this polynomial has total degree at most $\tilds$ and is
$\alpha$-close to $f$, then we include it in the output list.
\end{enumerate}

To prove correctness of this algorithm, consider any $Q(Y_1, \ldots, Y_m) \in \calL$.
Let $Q_u(Y_1, \ldots, Y_{m-1}) = Q(Y_1, \ldots, Y_{m-1}, u)$.
Then we have:
$$ \mathbb E_{u \in U} [ \dist( Q_u, f_u) ] = \dist(Q, f) < \alpha.$$
Thus 
$$ \Pr_{u \in U} [ \dist(Q_u, f_u) \geq \beta ] \leq \frac{\alpha}{\beta} < \frac{ 1 - m/\sqrt{K}}{1 - (m-1)/\sqrt{K}} \leq 1 - \frac{1}{\sqrt{K}} = \gamma.$$
This implies that for at most $\gamma$-fraction of $u \in U$, we have that $Q_u \not\in \calL_u$.

Write $Q(Y_1, \ldots, Y_m)$ as $\sum_{i=1}^t M_i(Y_1, \ldots, Y_{m-1}) G_i(Y_m)$.
Then the above discussion means that for at most $\gamma$ fraction of $u \in U$,
we have that $(G_1(u), \ldots, G_t(u)) \not\in g(u)$.
This implies that $(G_1(u), \ldots, G_t(u))$ will be included in the list returned by the 
univariate list-recovery algorithm in Step 2, and thus that $Q$ will be included in the output of the algorithm 
in Step 3.

This completes the proof of correctness.
The bound on the running time follows immediately from the description of the algorithm (using the fact that $t \leq (\tilds+m)^m \leq |U|^m$).
\end{proof}


\section{Proof of Theorem~\ref{thm:alphabet-reduction}}\label{app:AEL}
In this section, we prove Theorem~\ref{thm:alphabet-reduction}.  Our proof is based on a construction first attributed to~\cite{AEL95}, which has since been used in many works to improve the parameters of list-recoverable and locally list-recoverable codes.  We include the proof here for completeness.

\begin{proof}[Proof of Theorem~\ref{thm:alphabet-reduction}]
The construction uses three ingredients: a bipartite expander graph $G$; the code $C_1$ guaranteed in the problem statement; and an inner code $\cC_0$ as in the theorem statement.  
Notice that the size and rate of $C_0$ implies that $n_0 = \left\lceil\frac{ \log|\Sigma_1| }{ R\cdot \log|\Sigma_0|}\right\rceil$.  Choosing $|\Sigma_0| = (\max\{2,\ell\}^{O(1/\eps)}$, this reads
\[ n_0 = O \left( \frac{ \log|\Sigma_1| \cdot \eps }{R \cdot (1 + \log(\ell))} \right). \]

It is known that the double-cover of a Ramanujan graph has the properties we want; we state these properties formally in the following claim.
\begin{claim}\label{claim:existsG}[See~\cite{KMRS}, Lemma 2.7]
Let $\xi,\eps, R \in [0,1]$, so that $\xi$ and $\eps$ are sufficiently small. 
For infinitely many integers $N > 0$, there exists a $D = O(1/\xi \eps^2)$ so that the following holds.
There exists a bipartite expander graph 
 $G = (V_L, V_R, E)$ be a bipartite expander graph with $N$ vertices on each side, with degree $D$, and with the following property: for any set $Y \subseteq V_R$ of right-hand-vertices with $|Y| \geq (R + 4\eps)N$, we have
\[ |\{ v \in V_L \,:\, |\Gamma(v) \cap Y| < (R + 3\eps)D \} | \leq \xi N, \]
where $\Gamma(v) \subseteq V_R$ is the set of neighbors of $v$ in $G$.
\end{claim}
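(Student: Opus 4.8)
The plan is to obtain $G$ as the bipartite double cover of a Ramanujan graph and then read off the stated sampling property from the expander mixing lemma. First I would fix a degree $D$ (to be pinned down at the end) of a form for which $D$-regular Ramanujan graphs on infinitely many vertices are known to exist --- for instance $D = p + 1$ with $p$ prime, via the Lubotzky--Phillips--Sarnak or Morgenstern constructions --- and let $H$ be such a graph on $N$ vertices. Define $G = (V_L, V_R, E)$ by letting $V_L$ and $V_R$ be two copies of $V(H)$ and joining $u \in V_L$ to $v \in V_R$ precisely when $\{u,v\}$ is an edge of $H$; then $G$ is $D$-regular bipartite with $N$ vertices on each side, and its biadjacency matrix is the adjacency matrix $A$ of $H$. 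Since $H$ is Ramanujan, every eigenvalue of $A$ other than $D$ has absolute value at most $2\sqrt{D-1}$, so the second-largest singular value of the biadjacency matrix satisfies $\sigma_2 \le 2\sqrt{D-1}$.

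The key step is then a short application of the bipartite expander mixing lemma: for all $X \subseteq V_L$ and $Y \subseteq V_R$,
\[ \left| e(X,Y) - \frac{D\,|X|\,|Y|}{N} \right| \le \sigma_2\,\sqrt{|X|\,|Y|} \le 2\sqrt{D-1}\,\sqrt{|X|\,|Y|}, \]
which follows in one line by decomposing $\mathbf{1}_X$ and $\mathbf{1}_Y$ into their components along the all-ones vector and its orthogonal complement (using that $A$ maps the all-ones vector to $D$ times itself). Now I would fix $Y \subseteq V_R$ with $|Y| \ge (R + 4\eps)N$ and set $X = \{ v \in V_L : |\Gamma(v) \cap Y| < (R + 3\eps)D \}$, the ``bad'' set we must bound. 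By definition $e(X,Y) < (R + 3\eps)D\,|X|$, while the mixing lemma gives $e(X,Y) \ge (R + 4\eps)D\,|X| - 2\sqrt{D-1}\,\sqrt{|X|\,N}$; comparing the two yields $\eps D\,|X| < 2\sqrt{D-1}\,\sqrt{|X|\,N}$ and hence $|X| < \frac{4(D-1)\,N}{\eps^2 D^2} \le \frac{4N}{\eps^2 D}$.

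To finish, I would choose $D = \Theta(1/(\xi\eps^2))$ large enough that $\frac{4N}{\eps^2 D} \le \xi N$, which gives exactly the claimed bound $D = O(1/(\xi\eps^2))$ together with the desired estimate $|X| \le \xi N$. The one genuine wrinkle --- and the place I expect to have to be careful --- is reconciling ``infinitely many $N$'' with the constraint on $D$: exact Ramanujan graphs are only guaranteed for special degrees, but since we only need $D \ge 4/(\xi\eps^2)$ we are free to pick an admissible $D$ (e.g.\ a prime plus one) in the window $[4/(\xi\eps^2),\, 8/(\xi\eps^2)]$; alternatively one may replace the Ramanujan bound by the near-Ramanujan estimate $\sigma_2 \le 2\sqrt{D-1} + o(1)$ valid for all degrees and all orders (Friedman's theorem, or explicit near-Ramanujan constructions), at the cost of only the hidden constants. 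This recovers Lemma 2.7 of~\cite{KMRS}, to which we refer for the remaining details.
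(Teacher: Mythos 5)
Your proposal is correct, and it follows exactly the route the paper invokes: the paper does not give a proof but cites \cite{KMRS}, Lemma 2.7, noting only that ``the double cover of a Ramanujan graph has the properties we want,'' and your argument reconstructs that lemma in the standard way via the bipartite expander mixing lemma applied to the double cover of an LPS/Morgenstern Ramanujan graph. The computation $|X| \le 4N/(\eps^2 D)$ and the choice $D = \Theta(1/(\xi\eps^2))$ are exactly right, and your remark about picking an admissible degree in the window $[4/(\xi\eps^2),\,8/(\xi\eps^2)]$ correctly handles the restriction on Ramanujan degrees.
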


We will instantiate Claim~\ref{claim:existsG} with the $\eps$ from the guarantee in $C_0$, and with $\xi := \gamma \cdot \eps$.   Thus, we have $D = O(1/(\eps^3 \gamma))$.
With these ingredients $C_0$ and $G$ in hand, let $C_1$ be as in the theorem statement, and let $\bar{C} = C_0 \circ C_1$ be the concatenation of $C_0$ and $C_1$.  Thus, a codeword in $\bar{C}$ has the form
\[ \bar{c} = ( C_0(x_1), C_0(x_2), \ldots, C_0(x_{n_1}) ) \in (\Sigma_0^{n_0})^{n_1} \]
for $(x_1, \ldots, x_{n_1}) \in C_1$.  
Suppose without loss of generality that $D$ divides $n_0$. (Otherwise, we may pad the codewords of $\cC_0$ with zeros to make this be the case).
Then break up the codewords $\bar{c}$ into $N = n_0n_1/D$ blocks of length $D$:
\[ \bar{c} = ( y^{(1)}, y^{(2)}, \ldots, y^{(N)} ) \in (\Sigma_0^D)^N. \]
We will form our final code $C \subseteq (\Sigma_0^D)^N$ as follows:  
for each codeword $(y^{(1)}, \ldots, y^{(N)}) \in \bar{C}$
(thought of as an element of $(\Sigma_0^D)^N$), define a codeword $c = (c^{(1)}, \ldots, c^{(N)}) \in C \subseteq (\Sigma_0^D)^N$ by
\[ c_\ell^{(j)} = y_r^{(i)}, \]
where $j = \Gamma_r(V_L[i])$ and $i = \Gamma_\ell(V_R[j])$ and where the notation $\Gamma_r(v)$ denotes the $r$'th neighbor of vertex $v$ (according to some arbitrary order) and $V_L[i]$ denotes the $i$'th vertex in $V_L$ (again according to an arbitrary order).

The code $C \subseteq (\Sigma_0^D)^N$ will be the set of all codewords obtained this way.  Notice that the rate of $C$ is the same as that of $\bar{C}$, since the operation above just permutes the symbols of a codeword.  Thus, the rate of $C$ is
\[ (1 - \zeta) \cdot R, \]
as claimed.

\paragraph{Global list-recovery.}
We first argue that if $C_1$ is efficiently $(\eps, \ell_1, L)$-list-recoverable, then $C$ is efficiently $(1 - R - 4\eps, \ell, L)$-list-recoverable. 
Suppose that $S_1,\ldots, S_N \subset \Sigma_0^D$ have $|S_i| \leq \ell$, and suppose that $c \in C$ has
$c^{(i)} \in S_i$ for all $i \in Y$, for some set $Y \subseteq [N]$ of size at least $(R + 2\gamma)N$.
Suppose that $c$ is obtained as above from $(y^{(1)}, \ldots, y^{(N)}) \in \bar{C}$, which is obtained by concatenation from $(x_1,\ldots,x_{n_1}) \in C_1$.  Suppose that $z$ is the original message so that $C_1(z) = (x_1,\ldots, x_{n_1})$.  Thus, our goal is to recover a short list $S$ of size at most $L$, so that $z \in S$.

For $i \in [N]$ and $r \in [D]$, let 
\[ T_{i,r} = \{ \alpha \in \Sigma_0 \,: \, \exists \beta \in S_j \subseteq \Sigma_0^D, \beta_\ell = \alpha, j = \Gamma_r(V_L[i]), i = \Gamma_\ell(V_R[j]) \}. \]
That is, $T_{i,r}$ is the list of symbols in $\Sigma_0$ that $y^{(i)}_r$ could be that is consistent with the lists $S_1,\ldots,S_N$.

The decoding algorithm for $C$ is then straightforward: given $S_1,\ldots,S_N$, compute the lists $T_{i,r}$, and then run the list-recovery algorithm for $C_0$ on each block $C_0(x_t)$ to obtain a list $S'_t \subset \Sigma_1$ of possible values of $x_t$.  Since $\cC_0$ is obtained via a random coding argument, there is not an efficient algorithm for this; however, $C_0$ is small enough that the brute-force decoding algorithm will do.  
Next, we run the list-recovery algorithm for $C_1$ on the lists $S'_t$, to obtain our final list $S$ of size at most $L$.

Before we show that this is correct, consider the run-time of this algorithm.  
The dominating term in the running time is the time to run the list-recovery algorithms of $C_0$ and $C_1$.
The time to list-recover $C_1$ is given by $T(C_1)$, and the time to list-recover each of the $n_1$ copies of $C_0$ is bounded above by $O(|C_0|) = O(|\Sigma_1|)$. 
Together, these expressions give the runtime bound claimed in the theorem.

Next, we argue that this algorithm is correct.  Let $c \in C$ be as above, so that 
$c^{(i)} \in S_i$ for all $i \in Y$, for some set $Y \subseteq [N]$ of size at least $(R + 4\eps)N$.
By the expansion property of $G$, the set $S \subseteq [N]$ of indices $i$ so $V_L[i]$ has at most $(R + 3\eps)D$ neighbors in $Y$ has size $|W| \leq \xi N = \gamma \eps N$.  Thus, for each $i \not\in W$, $y^{(i)}_r \in T_{i,r}$ for at least $(R + 3\eps)D$ values of $r \in [D]$.

Now consider the blocks $\bar{c} = (C_0(x_1), \ldots, C_0(x_{n_1}))$.  Each $C_0(x_t)$ is made up of $n_0/D$ blocks $y^{(i)} \in \Sigma_0^D$.  By an averaging argument, since $|W| \leq \gamma \eps N$, at most an $\gamma$-fraction of the blocks $C_0(x_t)$ have more than an $\eps$-fraction of its constituent length-$D$ blocks in $W$.

Suppose that the block $C_0(x_t)$ is one of the $(1 - \gamma)$-fraction of the blocks for which this does not hold; that is, at most an $\eps$-fraction of the blocks $y^{(i)}$ in $C_0(x_t)$ have $i \in W$.   Then the number of symbols $y^{(i)}_r \in \Sigma_0$ that make up $C_0(x_t)$ so that $y^{(i)}_r \in T_{i,r}$ is at least
\[ | \{ (i,r) \,:\, 
r \in [D],
y^{(i)} \text{ is contained in } C_0(x_t),
\text{ and } y_r^{(i)} \in T_{i,r}
\}| \geq (R + 3\eps) \cdot D \cdot (1 - \eps) \frac{n_0}{D} \geq (R + \eps)n_0. \]
Since $C_0$ is $(1 - R - \eps, \ell, \ell_1)$-list-recoverable, this implies that for all such $t$, the list-recovery algorithm for $C_0$ returns a list $S_t'$ of length at most $\ell_1$ so that $x_t \in S_t'$.  

Now since there are at least $(1 - \gamma)n_1$ such blocks, the list-recovery algorithm for $C_1$ will return a list $S$ of size at most $L$, so that the original message $z$ is guaranteed to be contained in $S$.  Thus, the algorithm is correct.

\paragraph{Local list-recovery.}
Suppose that $C_1$ is $(t,\gamma,\ell_1,L)$-locally list-recoverable via an algorithm $A$, which expects advice $\xi \in [L]$.
Now we may use exactly the same construction as above to obtain a locally list-recoverable code.  

More precisely, for any $x \in C_1$ resulting in a codeword $c \in C$ that agrees with a $1 - R - 4\eps$ fraction of the lists $S_i$, the argument above shows that, for a $1 - \gamma$ fraction of the indices $i \in [n_1]$, we may obtain a list $S'_i \subseteq \Sigma_1$ of size at most $L$, so that $x_i \in S'_i$, using $n_0 = O( \log_{|\Sigma_0|}|\Sigma_1|/R )$ queries to the input lists $S_i$.  This immediately implies the following local list-recovery algorithm:

\medskip
\putinbox{
\textbf{Algorithm} \textsf{LocalListRecoveryAEL}
\begin{itemize}
\item \textbf{INPUT:} Query access to the lists $S_1, \ldots, S_N \subseteq \Sigma_0^D$, and an index $i \in [N]$
\item \textbf{ADVICE:} $\xi \in [L]$
\end{itemize}
\begin{enumerate}
\item For each $j \in \Gamma( V_R[i] )$:
	\begin{itemize}
	\item Let $r \in [n_1]$ be the index so that the block $y^{(j)}$ is contained in $C_0(x_r)$.
	\item Run $C_1$'s local list-recovery algorithm $A$ with advice $\xi$, simulating query access to the lists $S_s'$ by using $n_0$ queries to the lists $S_1,\ldots, S_N$, as described above.  This returns $x_r$.
	\item Use $C_0$'s encoder (which in this case we may treat as a look-up table) to obtain $C_0(x_r)$ and hence $y^{(j)}$. 
	\end{itemize}
\item Given $y^{(j)}$ for each $j \in \Gamma( V_R[i] )$, assemble $c^{(i)}$ and return it. 
\end{enumerate}
}
The correctness of this algorithm follows from the correctness of the query-simulation procedure, which was shown above. 

The query complexity is $D \cdot t \cdot n_0$, because for each of $D$ values of $j$, we need to simulate $t$ queries to the lists $S_r'$, each of which requires $n_0$ queries to the lists $S_r$.  Plugging in our settings of $D$ and $n_0$ gives the query complexity claimed in the theorem.

The running time for each simulated query to a list $S_r'$ is dominated by the time to correct $C_0$, which is $O(|C_0|) = O(|\Sigma_1|)$ by using a brute-force algorithm, plus the time used by $A$, which is $T(C_1)$.  There are $D= O(1/(\eps^3\gamma))$ such queries which gives the running time claimed in the theorem statement.
\end{proof}

\end{document}